\let\pr@chap=\pr@cha
\newcommand{\noun}[1]{\textsc{#1}}
\providecommand{\tabularnewline}{\\}
\numberwithin{equation}{section}
\numberwithin{figure}{section}
\theoremstyle{plain}
\newtheorem{thm}{\protect\theoremname}[section]
\theoremstyle{definition}
\newtheorem{defn}[thm]{\protect\definitionname}
\theoremstyle{plain}
\newtheorem{lem}[thm]{\protect\lemmaname}
\newenvironment{proof}[1][\protect\proofname]{\par
\normalfont\topsep6\p@\@plus6\p@\relax
\trivlist
\itemindent\parindent
\item[\hskip\labelsep\scshape #1]\ignorespaces
}{%
\endtrivlist\@endpefalse
}
\providecommand{\proofname}{Proof}
\theoremstyle{remark}
\newtheorem{rem}[thm]{\protect\remarkname}
\theoremstyle{plain}
\newtheorem{cor}[thm]{\protect\corollaryname}
\theoremstyle{definition}
\newtheorem{example}[thm]{\protect\examplename}
\theoremstyle{plain}
\newtheorem{prop}[thm]{\protect\propositionname}
\setlist{itemsep=0pt}
\setlist[1]{leftmargin=2em}
\setlist[2]{leftmargin=2em}
\newlength{\LyXMinipageIndent}
\newtheoremstyle{plain}{1em}{1em}{\itshape}{}{\bf}{}{.5em}{}
\newtheoremstyle{definition}{1em}{1em}{}{}{\bf}{}{.5em}{}
\let\ps@plain\ps@empty
\newcommand{\cloneFont}[1]{\mathsf{#1}}
\newcommand{\CloneBF}{\protect\ensuremath{\cloneFont{BF}}}
\newcommand{\CloneM}{\protect\ensuremath{\cloneFont{M}}}
\newcommand{\CloneL}{\protect\ensuremath{\cloneFont{L}}}
\newcommand{\CloneR}{\protect\ensuremath{\cloneFont{R}}}
\newcommand{\CloneD}{\protect\ensuremath{\cloneFont{D}}}
\newcommand{\CloneN}{\protect\ensuremath{\cloneFont{N}}}
\newcommand{\CloneS}{\protect\ensuremath{\cloneFont{S}}}
\newcommand{\CloneV}{\protect\ensuremath{\cloneFont{V}}}
\newcommand{\CloneE}{\protect\ensuremath{\cloneFont{E}}}
\newcommand{\CloneI}{\protect\ensuremath{\cloneFont{I}}}
\definecolor{g1}{gray}{0.7}
\definecolor{g3}{gray}{0.85}
\colorlet{WSAT}{g1}
\colorlet{nWSAT}{g3}
\tikzset{
WSAT/.style={fill=WSAT},
nWSAT/.style={fill=nWSAT},
}
\providecommand{\corollaryname}{Corollary}
\providecommand{\definitionname}{Definition}
\providecommand{\examplename}{Example}
\providecommand{\lemmaname}{Lemma}
\providecommand{\propositionname}{Proposition}
\providecommand{\remarkname}{Remark}
\providecommand{\theoremname}{Theorem}
\begin{document}
\begin{onehalfspace}
\noindent \begin{center}
\pagenumbering{roman}\thispagestyle{empty}\foreignlanguage{ngerman}{\vspace*{1cm}
}\textsf{\textbf{\Huge{}Connectivity of Boolean Satisfiability}}\textsf{\vspace{3.5cm}
}\foreignlanguage{ngerman}{\textsf{\large{}}}\\
\foreignlanguage{ngerman}{\textsf{\large{}Von der}\textsf{\vspace{0.6cm}
}\textsf{\Large{}}}\\
\foreignlanguage{ngerman}{\textsf{\Large{}Fakultät für Elektrotechnik
und Informatik}\textsf{\vspace{0.3cm}
}\textsf{\Large{}}}\\
\foreignlanguage{ngerman}{\textsf{\Large{}der Gottfried Wilhelm Leibniz
Universität Hannover}\textsf{ \vspace{0.6cm}
}}\\
\foreignlanguage{ngerman}{\textsf{\large{}zur Erlangung des Grades}\textsf{
\vspace{0.6cm}
}\textsf{\Large{}}}\\
\foreignlanguage{ngerman}{\textsf{\Large{}Doktor der Naturwissenschaften}\textsf{\vspace{0.3cm}
}\textsf{\Large{}}}\\
\foreignlanguage{ngerman}{\textsf{\Large{}Dr. rer. nat.}\textsf{ \vspace{0.6cm}
}\textsf{\large{}}}\\
\foreignlanguage{ngerman}{\textsf{\large{}genehmigte Dissertation}\textsf{\vspace{2cm}
}}\\
\foreignlanguage{ngerman}{\textsf{\large{}von}\textsf{\vspace{0.6cm}
}\textsf{\Large{}}}\\
\foreignlanguage{ngerman}{\textsf{\LARGE{}Dipl.-Physiker $\;$Konrad
W. Schwerdtfeger}\textsf{\vspace{0.6cm}
}\textsf{\large{}}}\\
\foreignlanguage{ngerman}{\textsf{\large{}geboren am 6. August 1985
in Hildesheim}}\textsf{\vspace{3.5cm}
}\textsf{\LARGE{}}\\
\textsf{\LARGE{}2015}\newpage{}
\par\end{center}
\end{onehalfspace}

\begin{flushleft}
\thispagestyle{empty}\phantom{}\vspace{19cm}

\par\end{flushleft}

\begin{flushleft}
\begin{tabular}{ll}
Referent: & Heribert Vollmer, Leibniz Universität Hannover\tabularnewline
Korrefferent: & Olaf Beyersdorff, University of Leeds\tabularnewline
Tag der Promotion: & \tabularnewline
\end{tabular}\newpage{}\thispagestyle{empty}\phantom{}\textsf{\vspace{1cm}
}
\par\end{flushleft}

\selectlanguage{ngerman}%
\begin{center}
Für meinen Vater\\
Gerhard Schwerdtfeger\\
1926 - 2014
\par\end{center}

\selectlanguage{english}%
\begin{center}
\vspace{13cm}

\par\end{center}

\selectlanguage{ngerman}%
\begin{center}
Mein herzlicher Dank gilt meinem Doktorvater Heribert Vollmer\\
 für seine Unterstützung bei der Arbeit an dieser Dissertation.
\par\end{center}

\selectlanguage{english}%
\begin{center}
\textsf{\vspace{2cm}
}\phantom{}
\par\end{center}

\begin{center}
\newpage{}
\par\end{center}

\begin{center}
\thispagestyle{empty}\phantom{}
\par\end{center}

\begin{center}
\textsf{\vspace{3.5cm}
}
\par\end{center}

\begin{center}
\emph{The first principle is that you must not fool yourself, }\\
\emph{and you are the easiest person to fool.}
\par\end{center}

\begin{center}
{\small{}\hspace*{38ex}}Richard Feynman
\par\end{center}

\begin{center}
\vspace{10cm}

\par\end{center}

\begin{center}
\newpage{}
\par\end{center}

\section*{\thispagestyle{empty}Zusammenfassung}

\selectlanguage{ngerman}%
In dieser Dissertation befassen wir uns mit der Lösungsraum-Struktur
Boolescher Erfüllbarkeits-Probleme, aus Sicht der theoretischen Informatik,
insbesondere der Komplexitätstheorie.

Wir betrachten den \emph{Lösungs-Graphen} Boolescher Formeln; dieser
Graph hat als Knoten die Lösungen der Formel, und zwei Lösungen sind
verbunden wenn sie sich in der Belegung genau einer Variablen unterscheiden.
Für diesen implizit definierten Graphen untersuchen wir dann das Erreichbarkeitsproblem
und das Zusammenhangsproblem.

Die erste systematische Untersuchung der Lösungs-Graphen Boolescher
Constraint-Satisfaction-Probleme wurde 2006 von Gopalan et al.\ durchgeführt,
motiviert hauptsächlich von Forschung für Erfüllbarkeits-Algorithmen.
Insbesondere untersuchten sie CNF\textsubscript{\selectlanguage{english}%
C\selectlanguage{ngerman}%
}($\mathcal{S}$)-Formeln, d.h. Konjunktionen von Bedingungen, welche
sich aus dem Einsetzen von Variablen und Konstanten in Boolesche Relationen
einer endlichen Menge $\mathcal{S}$ ergeben.

Gopalan et al.\ bewiesen eine Dichotomie für die Komplexität des
Erreichbarkeitsproblems: Entweder ist es in Polynomialzeit lösbar
oder PSPACE-vollständig, Damit übereinstimmend fanden sie auch eine
strukturelle Dichotomie: Der maximale Durchmesser der Zusammenhangskomponenten
ist entweder linear in der Zahl der Variablen, oder er kann exponentiell
sein, Weiterhin vermuteten sie eine Trichotomie für das Zusammenhangsproblem:
entweder ist es in P, coNP-vollständig oder PSPACE-vollständig. Zusammen
mit Makino et al.\ bewiesen sie schon Teile dieser Trichotomie.

Auf diesen Arbeiten aufbauend vervollständigen wir hier den Beweis
der Trichotomie, und korrigieren auch einen kleineren Fehler von Gopalan
et al, was in einer leichten Verschiebung der Grenzen resultiert.

Anschließend untersuchen wir zwei wichtige Varianten: CNF($\mathcal{S}$)-Formeln
ohne Konstanten, und partiell quantifizierte Formeln. In beiden Fällen
beweisen wir für das Erreichbarkeitsproblem und den Durchmesser Dichotomien
analog jener für CNF\textsubscript{\selectlanguage{english}%
C\selectlanguage{ngerman}%
}($\mathcal{S}$)-Formeln. Für das Zusammenhangsproblem zeigen wir
eine Trichotomie im Fall quantifizierter Formeln, während wir im Fall
der Formeln ohne Konstanten Fragmente identifizieren in denen das
Problem in P, coNP-vollständig, und PSPACE-vollständig ist.

Schließlich betrachten wir die Zusammenhangs-Fragen für $B$-Formeln,
d.h. geschachtelte Formeln, aufgebaut aus Junktoren einer endlichen
Menge $B$, und für $B$-Circuits, d.h. Boolesche Schaltkreise, aufgebaut
aus Gattern einer festen Menge $B$. Hier nutzen wir Emil Post's Klassifikation
aller geschlossener Klassen Boolescher Funktionen. Wir beweisen eine
gemeinsame Dichotomie für das Erreichbarkeitsproblem, das Zusammenhangsproblem
und den Durchmesser: Auf der einen Seite sind beide Probleme in P
und der Durchmesser ist linear, während auf der anderen Seite die
Probleme PSPACE-vollständig sind und der Durchmesser exponentiell
sein kann. Für partiell quantifizierte $B$-Formeln zeigen wir eine
analoge Dichotomie.\\

\paragraph*{Schlagworte}

Komplexität \foreignlanguage{english}{$\cdot$} Erfüllbarkeit \foreignlanguage{english}{$\cdot$
}Zusammenhang in Graphen \foreignlanguage{english}{$\cdot$ }Boolesche
CSPs \foreignlanguage{english}{$\cdot$} Boolesche Schaltkreise \foreignlanguage{english}{$\cdot$}
Post'scher Verband \foreignlanguage{english}{$\cdot$} Dichotomien 

\newpage{}

\selectlanguage{english}%

\section*{Abstract}

In this thesis we are concerned with the solution-space structure
of Boolean satisfiability problems, from the view of theoretical computer
science, especially complexity theory.

We consider the \emph{solution graph} of Boolean formulas; this is
the graph where the vertices are the solutions of the formula, and
two solutions are connected iff they differ in the assignment of exactly
one variable. For this implicitly defined graph, we then study the
$st$-connectivity and connectivity problems.

The first systematic study of the solution graphs of Boolean constraint
satisfaction problems was done in 2006 by Gopalan et al., motivated
mainly by research for satisfiability algorithms. In particular, they
considered CNF\textsubscript{C}($\mathcal{S}$)-formulas, which are
conjunctions of constraints that arise from inserting variables and
constants in relations of some finite set $\mathcal{S}$.

Gopalan et al.\ proved a computational dichotomy for the $st$-connectivity
problem, asserting that it is either solvable in polynomial time or
PSPACE-complete, and an aligned structural dichotomy, asserting that
the maximal diameter of connected components is either linear in the
number of variables, or can be exponential. Further, they conjectured
a trichotomy for the connectivity problem: That it is either in P,
coNP-complete, or PSPACE-complete. Together with Makino et al., they
already proved parts of this trichotomy.

Building on this work, we here complete the proof of the trichotomy,
and also correct a minor mistake of Gopalan et al., which leads to
slight shifts of the boundaries.

We then investigate two important variants: CNF($\mathcal{S}$)-formulas
without constants, and partially quantified formulas. In both cases,
we prove dichotomies for $st$-connectivity and the diameter analogous
to the ones for CNF\textsubscript{C}($\mathcal{S}$)-formulas. For
for the connectivity problem, we show a trichotomy in the case of
quantified formulas, while in the case of formulas without constants,
we identify fragments where the problem is in P, where it is coNP-complete,
and where it is PSPACE-complete. 

Finally, we consider the connectivity issues for $B$-formulas, which
are arbitrarily nested formulas built from some fixed set $B$ of
connectives, and for $B$-circuits, which are Boolean circuits where
the gates are from some finite set $B$. Here, we make use of Emil
Post's classification of all closed classes of Boolean functions.
We prove a common dichotomy for both connectivity problems and the
diameter: on one side, both problems are in P and the diameter is
linear, while on the other, the problems are PSPACE-complete and the
diameter can be exponential. For partially quantified $B$-formulas,
we show an analogous dichotomy.\\

\paragraph*{Keywords}

Computational complexity$\cdot$ Boolean satisfiability $\cdot$ Graph
connectivity $\cdot$ Boolean CSPs $\cdot$ Boolean circuits $\cdot$
Post's lattice $\cdot$ Dichotomy theorems

\let\lstd\clearpage \let\lstdd\cleardoublepage \let\clearpage\relax \tableofcontents{}\textsf{\vspace{1.5cm}
}\let\cleardoublepage\relax

\listoffigures
\foreignlanguage{ngerman}{\textsf{\vspace{0.6cm}
}}

\listoftables

\mainmatter\let\clearpage\lstd \let\cleardoublepage\lstdd

\chapter{Introduction}

\section{Boolean Satisfiability and~Solution~Space~Connectivity}

The Boolean satisfiability problem (SAT) asks for a propositional
formula if there is an assignment to the variables such that it evaluates
to true. It is of great importance in many areas of theoretical and
applied computer science: In complexity theory, it was one of the
first problems proven to be NP-complete, and still is the most important
standard problem for reductions. In propositional logic, many important
reasoning problems can be reduced to SAT, e.g. checking entailment:
For any two sentences $\alpha$ and $\beta$, $\alpha\models\beta$
if and only if $\alpha\wedge\overline{\beta}$ is unsatisfiable. These
connections are used for example in artificial intelligence for reasoning,
planning, and automated theorem proving, and in electronic design
automation (EDA) for formal equivalence checking. 

SAT is only the most basic version of a multitude of related problems,
asking questions about a relation given by some short description.
In one direction, one may look at constraint satisfaction problems
over higher domains, or at multi-valued logics. In another direction,
one may consider other tasks like enumerating all solutions, counting
the solutions, checking the equivalence of formulas or circuits, or
finding the optimal solution according to some measure. In this thesis,
we follow the second direction and focus on the solution-space structure:
For a formula $\phi$, we consider the \emph{solution graph} $G(\phi)$,
where the vertices are the solutions, and two solutions are connected
iff they differ in the assignment of exactly one variable. For this
implicitly defined graph, we then study the connectivity and $st$-connectivity
problems.

Since any propositional formula over $n$ variables defines an $n$-ary
Boolean relation $R$, i.e.\ a subset of $\{0,1\}^{n}$, another
way to think of the solution graph is the subgraph of the $n$-dimensional
hypercube graph induced by the vectors in $R$. The figures below
give an impression of how solution graphs may look like.

\begin{figure}[!h]
\begin{centering}
\includegraphics[scale=0.8]{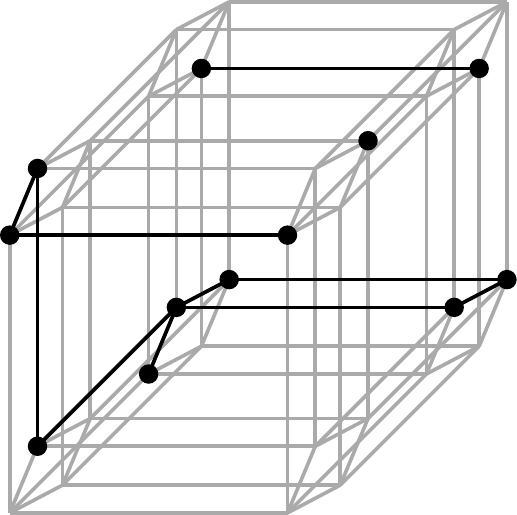}$\qquad$\includegraphics[scale=0.41]{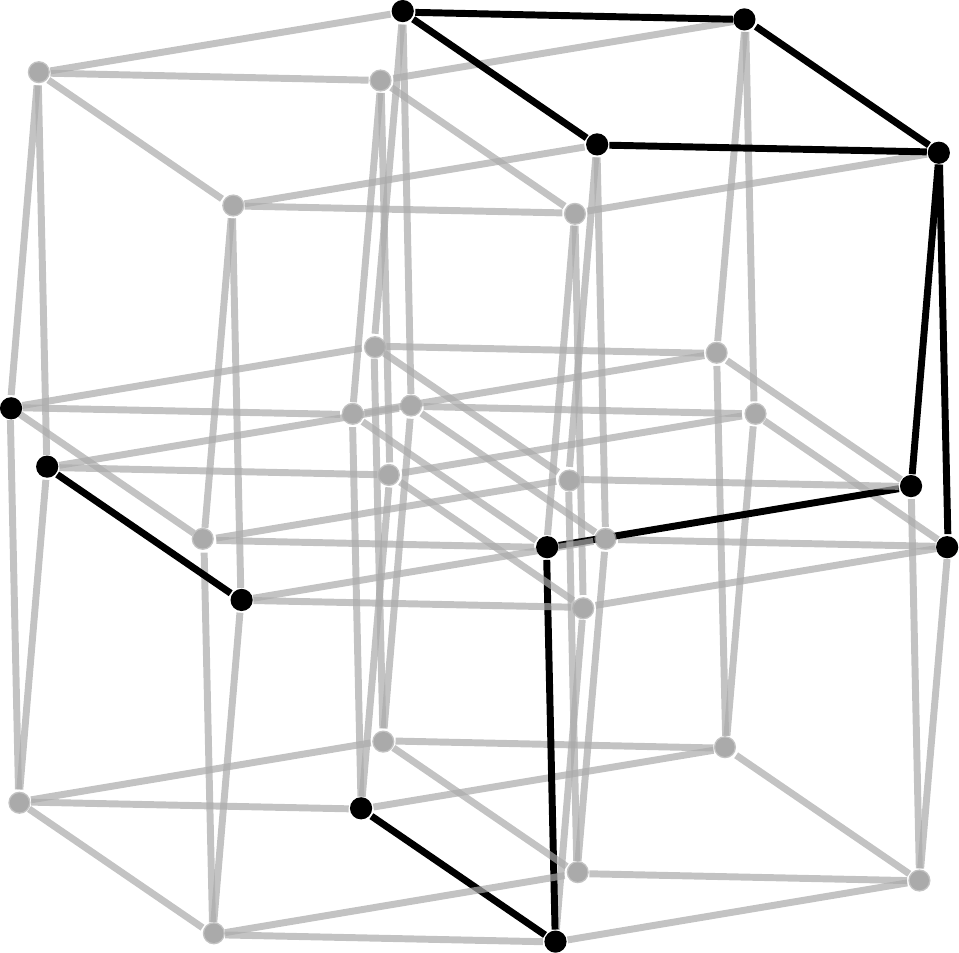}$\qquad$\includegraphics[scale=0.5]{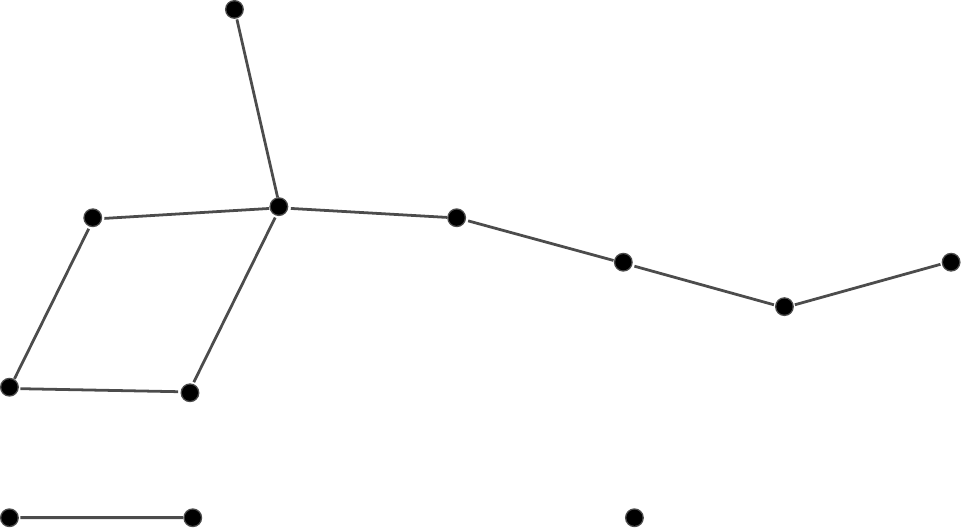}
\par\end{centering}

\protect\caption[Depictions of the subgraph of the 5-dimensional hypercube graph induced
by a typical random Boolean relation with 12 elements.]{\emph{}Depictions of the subgraph of the 5-dimensional hypercube
graph induced by a typical random Boolean relation with 12 elements.
Left: highlighted on an orthographic hypercube projection by our \noun{SatConn}-tool.
Center: highlighted on a ``Spectral Embedding'' of the hypercube
graph by \noun{Mathematica}. Right: the sole subgraph, arranged by
\noun{Mathematica.}}
\end{figure}

\begin{figure}[!h]
\begin{centering}
\includegraphics[scale=0.42]{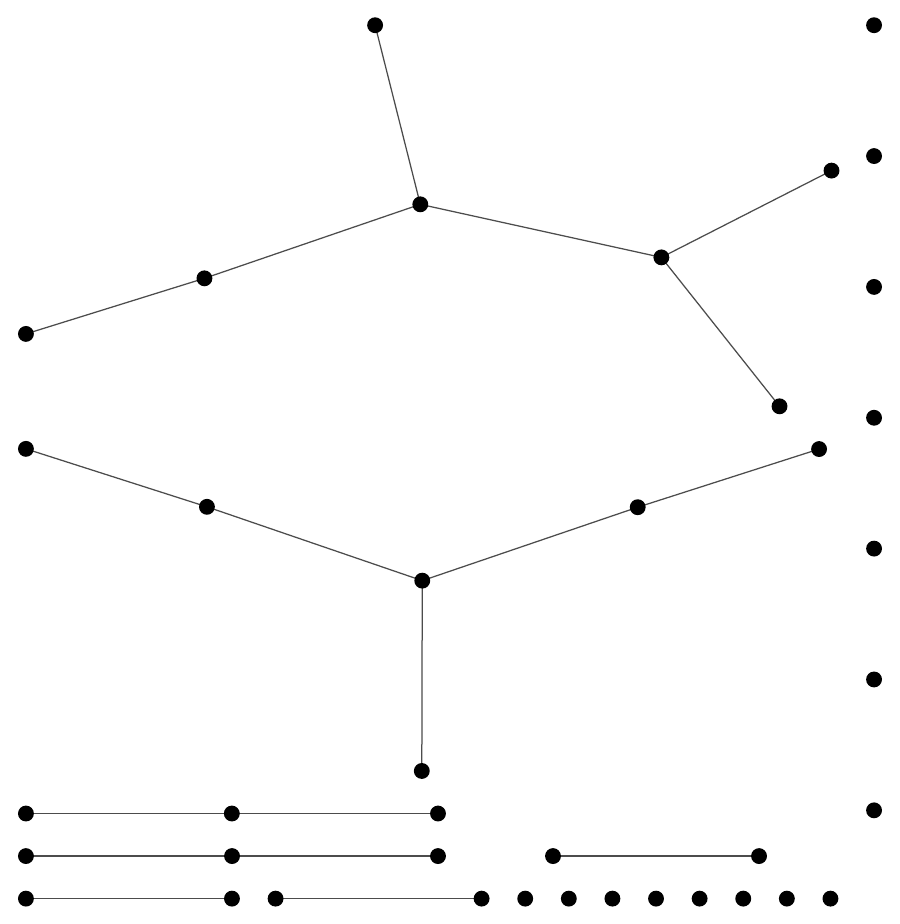}$\qquad$\includegraphics[scale=0.48]{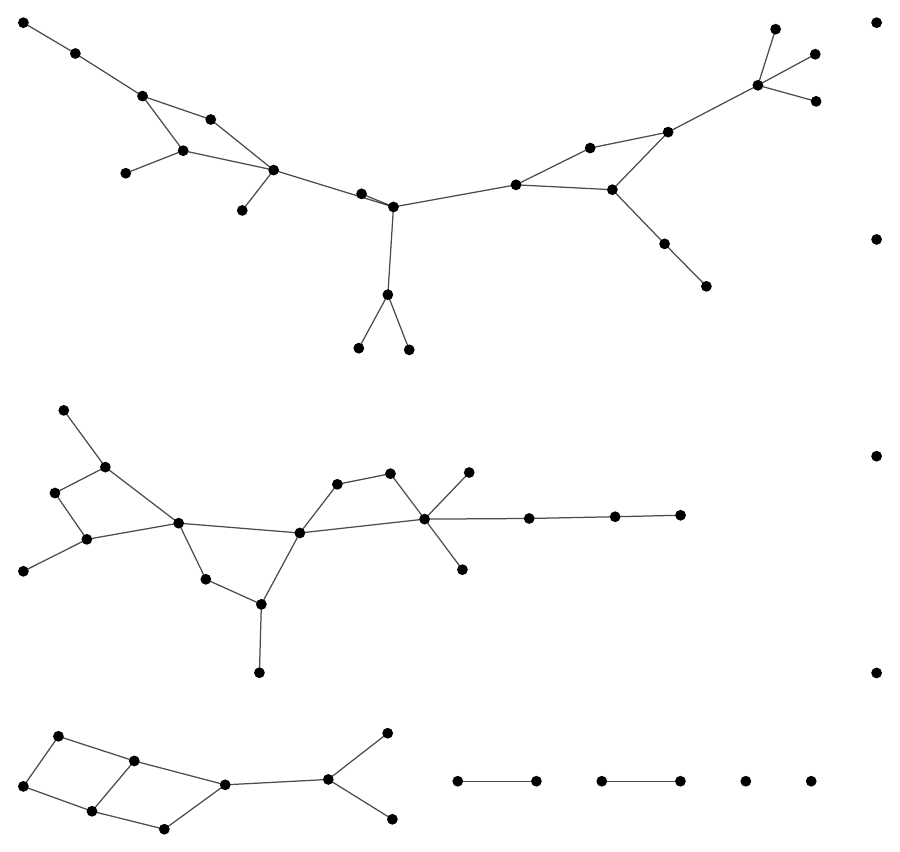}$\qquad$\includegraphics[scale=0.6]{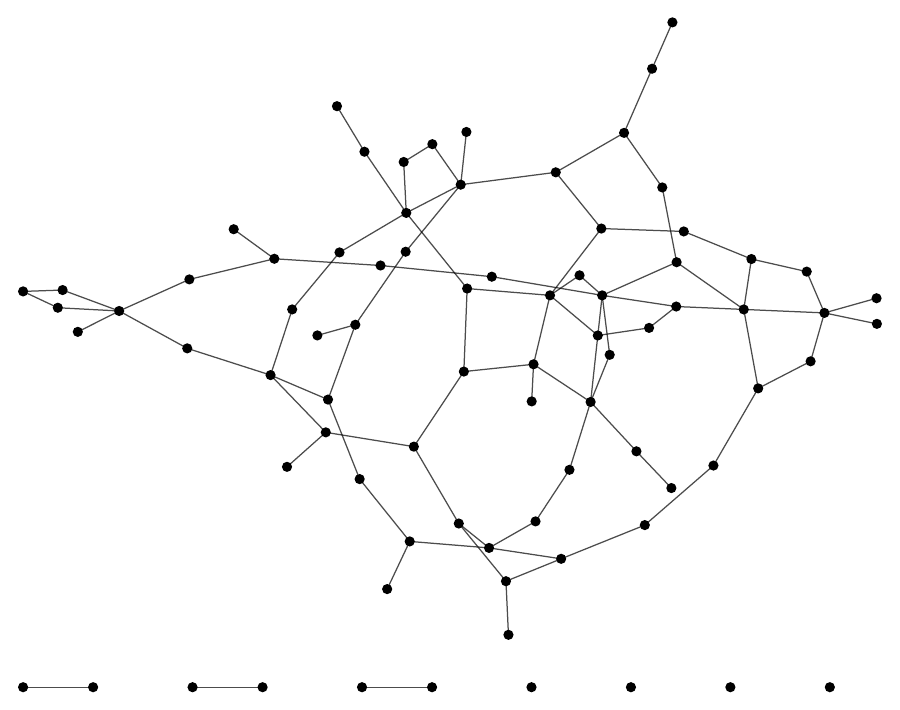}
\par\end{centering}

\protect\caption[Subgraphs of the 8-dimensional hypercube graph induced by typical
random relations]{\emph{}Subgraphs of the 8-dimensional hypercube graph (with 256 vertices)
induced by typical random relations with 40, 60 and 80 elements, arranged
by \noun{Mathematica.}}
\end{figure}

Our perspective is mainly from complexity theory: As it was done for
SAT and many of the related problems, we classify restrictions of
the connectivity problems by their worst-case complexity. Along the
way, we will also examine structural properties of the solution graph,
and devise efficient algorithms for solving the connectivity problems.

Besides the usual propositional formulas with the connectives $\wedge$,
$\vee$ and $\neg$, there are many alternative representations of
Boolean relations; we will consider the following:
\begin{itemize}
\item \emph{Boolean constraint satisfaction problems} (\emph{Boolean CSP}s,
here \emph{CSPs} for short), specifically

\begin{itemize}
\item \emph{CNF\textsubscript{C}($\mathcal{S}$)-formulas}, i.e.\ conjunctions
of constraints that arise from inserting variables and constants in
relations of some finite set $\mathcal{S}$,
\item \emph{CNF($\mathcal{S}$)-formulas}, where no constants may be used,
\end{itemize}
\item \emph{$B$-formulas}, i.e.\ arbitrarily nested formulas built from
some finite set $B$ of connectives,
\item \emph{$B$-circuits}, i.e.\ Boolean circuits where the gates are
from some finite set $B$.
\end{itemize}
For CNF\textsubscript{C}($\mathcal{S}$)-formulas and $B$-formulas,
we also consider versions with quantifiers.

\section{Relevance of Solution Space Connectivity}

A direct application of $st$-connectivity in solution graphs are
\emph{reconfiguration problems}, that arise when we wish to find a
step-by-step transformation between two feasible solutions of a problem,
such that all intermediate results are also feasible. Recently, the
reconfiguration versions of many problems such as \noun{Independent-Set},
\noun{Vertex-Cover}, \noun{Set-Cover} \noun{Graph-$k$-Coloring},
\noun{Shortest-Path} have been studied (see e.g. \citep{recon,short}).

The connectivity properties of solution graphs are also of relevance
to the problem of \emph{structure identification}, where one is given
a relation explicitly and seeks a short representation of some kind
(see e.g. \citep{creignou2008structure}); this problem is important
especially for learning in artificial intelligence.

Further, a better understanding of the solution space structure promises
advancement of SAT algorithms: It has been discovered that the solution
space connectivity is strongly correlated to the performance of standard
satisfiability algorithms like WalkSAT and DPLL on random instances:
As one approaches the\emph{ satisfiability threshold} (the ratio of
constraints to variables at which random $k$-CNF-formulas become
unsatisfiable for $k\geq3$) from below, the solution space (with
the connectivity defined as above) fractures, and the performance
of the algorithms deteriorates \citep{mezard2005clustering,maneva2007new}.
These insights mainly came from statistical physics, and lead to the
development of the \emph{survey propagation algorithm}, which has
superior performance on random instances \citep{maneva2007new}.

While current SAT solvers normally accept only CNF-formulas as input,
in EDA the instances mostly derive from digital circuit descriptions
\citep{csat2}, and although many such instances can easily be encoded
in CNF, the original structural information, such as signal ordering,
gate orientation and logic paths, is lost, or at least obscured. Since
exactly this information can be very helpful for solving these instances,
considerable effort has been made recently to develop satisfiability
solvers that work with the circuit description directly \citep{csat2},
which have far better performance in EDA applications, or to restore
the circuit structure from CNF \citep{extracting}. This is a reason
for us to study the solution space also for Boolean circuits.

\section{Related Work, Prior Publications, and this Thesis}

Research has focused on the solution space structure only quite recently:
Complexity results for the connectivity problems in the solution graphs
of CSPs have first been obtained in 2006 by P.~Gopalan, P.~G.~Kolaitis,
E.~Maneva, and C.~H.~Papadimitriou \citep{Gopalan:2006,gop}. In
particular, they investigated CNF\textsubscript{C}($\mathcal{S}$)-formulas
and studied
\begin{itemize}
\item the $st$-connectivity problem \noun{st-Conn}\textsubscript{C}($\mathcal{S}$),
that asks for a CNF\textsubscript{C}($\mathcal{S}$)-formula $\phi$
and two solutions $\boldsymbol{s}$ and $\boldsymbol{t}$ whether
there a path from $\boldsymbol{s}$ to $\boldsymbol{t}$ in $G(\phi)$,
\item the connectivity problem \noun{Conn}\textsubscript{C}($\mathcal{S}$),
that asks for a CNF\textsubscript{C}($\mathcal{S}$)-formula $\phi$
whether $G(\phi)$ is connected,
\end{itemize}
and
\begin{itemize}
\item the maximal diameter of any connected component of $G(\phi)$ for
a CNF\textsubscript{C}($\mathcal{S}$)-formula $\phi$, where the
diameter of a component is the maximal shortest-path distance between
any two vectors in that component.
\end{itemize}
They found a common structural and computational dichotomy: On one
side, the maximal diameter is linear in the number of variables, $st$-connectivity
is in P and connectivity is in coNP, while on the other side, the
diameter can be exponential, and both problems are PSPACE-complete.
Moreover, they conjectured a trichotomy for connectivity: That it
is in P, coNP-complete, or PSPACE-complete. Together with Makino et
al.\ \citep{Makino:2007:BCP:1768142.1768162}, they already proved
parts of this trichotomy.

In \citep{csp}, we completed the proof of the trichotomy, and also
corrected minor mistakes in \citep{gop}, which lead to a slight shift
of the boundaries towards the hard side. So for CNF\textsubscript{C}($\mathcal{S}$)-formulas,
we now have a quite complete picture, which we present in \prettyref{chap:2}.
In \citep{csp}, we explained in detail the mistakes of Gopalan et
al.\ and their implications, here we just give the correct statement
and proofs.

In \prettyref{chap:3}, we investigate two important variants: CNF($\mathcal{S}$)-formulas
without constants, and partially quantified CNF\textsubscript{C}($\mathcal{S}$)-formulas.
In both cases, we prove a dichotomy for $st$-connectivity and the
diameter analogous to the one for CNF\textsubscript{C}($\mathcal{S}$)-formulas.
For for the connectivity problem, we prove a trichotomy in the case
of quantified formulas, while in the case of formulas without constants,
we have no complete classification, but identify fragments where the
problem is in P, where it is coNP-complete, and where it is PSPACE-complete.
Of this chapter, only a preprint with preliminary results appeared
on ArXiv \citep{schwerdtfeger2014connectivity}.

Finally, in \prettyref{chap:4}, we look at $B$-formulas and $B$-circuits.
Here, we find a common dichotomy for the diameter and both connectivity
problems: on one side, the diameter is linear and both problems are
in P, while on the other, the diameter can be exponential, and the
problems are PSPACE-complete. For quantified $B$-formulas, we prove
an analogous dichotomy. The work in this chapter has been published
in \citep{csr}.

\section{Associated Software}

As part of the research for this thesis, several programs were written,
some of which may be useful for future work on related problems. All
software is written in Java (version 8) and provided in the \noun{SatConn
}package at \href{https://github.com/konradws/SatConn}{https://github.com/konradws/SatConn},
including a graphical tool to draw the solution graphs on hypercube
projections, used for several graphics in this thesis.

After downloading the complete repository, the folder can be opened
resp. imported in \noun{Netbeans} or \noun{Eclipse} as a project.
The graphical tool is also provided as executable (\textsf{SatConnTool.jar}).

The most useful functions are declared \texttt{public} and equipped
with Javadoc comments, where helpful. The \texttt{main}-functions
provide usage examples and can be executed by running the respective
file.

\section{General Preliminaries}

\paragraph*{Prerequisites }

We assume familiarity with some basic concepts from theoretical computer
science, especially complexity theory, and its mathematical foundations:
\begin{itemize}
\item From mathematics, we require propositional logic, and basics about
graphs, hypergraphs, and lattices,
\item From theoretical computer science, we require Turing machines, the
common complexity classes P, NP, coNP, PSPACE, and polynomial-time
reductions.
\end{itemize}

\paragraph{Notation}

We use $\boldsymbol{a},\boldsymbol{b},\ldots$ or $\boldsymbol{a}^{1},\boldsymbol{a}^{2},\ldots$
to denote vectors of Boolean values and $\boldsymbol{x},\boldsymbol{y},\ldots$
or $\boldsymbol{x}^{1},\boldsymbol{x}^{2},\ldots$ to denote vectors
of variables, $\boldsymbol{a}=(a_{1},a_{2},\ldots)$ and $\boldsymbol{x}=(x_{1},x_{2},\ldots)$.

$\phi[\boldsymbol{x}^{i}/\boldsymbol{a}]$ denotes the formula resulting
from $\phi$ by substituting the constants $a_{j}$ for the variables
$x_{j}^{i}$.

The symbol $\leq_{m}^{p}$ is used for polynomial-time many-one reductions.

\paragraph*{Central concepts}

In the following definition, we formally introduce some concepts related
to solution space connectivity in general. At the beginning of the
next chapter, we define notions specific to CSPs. A reader only interested
in \emph{$B$-}formulas and \emph{$B$-}circuits may read \prettyref{sec:ge}
after the next definition, and then skip to \prettyref{chap:4}.
\begin{defn}
An $n$-ary\emph{ Boolean relation} (or\emph{ logical relation, relation}
for short) is a subset of $\{0,1\}^{n}$ for some integer $n\ge1$.

The set of solutions of a propositional formula $\phi$ over $n$
variables defines in a natural way an $n$-ary relation $[\phi]$,
where the variables are taken in lexicographic order. We will often
identify the formula $\phi$ with the relation it defines and omit
the brackets.

The \emph{solution graph} $G(\phi)$ of $\phi$ then is the subgraph
of the $n$-dimensional hypercube graph induced by the vectors in
$[\phi]$. We will also refer to $G(R)$ for any logical relation
$R$ (not necessarily defined by a formula).

The \emph{Hamming weight }$|\boldsymbol{a}|$ of a Boolean vector
\textbf{$\boldsymbol{a}$} is the number of 1's in \textbf{$\boldsymbol{a}$}.
For two vectors \textbf{$\boldsymbol{a}$} and $\boldsymbol{b}$,
the \emph{Hamming distance }$|\boldsymbol{a}-\boldsymbol{b}|$ is
is the number of positions in which they differ\emph{.}

If \textbf{$\boldsymbol{a}$} and $\boldsymbol{b}$ are solutions
of $\phi$ and lie in the same connected component (\emph{component}
for short) of $G(\phi)$, we write $d_{\phi}(\boldsymbol{a},\boldsymbol{b})$
to denote the shortest-path distance between \textbf{$\boldsymbol{a}$}
and $\boldsymbol{b}$. The \emph{diameter} \emph{of a component} is
the maximal shortest-path distance between any two vectors in that
component. The \emph{diameter of }$G(\phi)$\emph{ }is the maximal
diameter of any of its connected components.
\end{defn}

\chapter{\label{chap:2}Connectivity of Constraints}

We start our investigation with\emph{ }constraint satisfaction problems.
A constraint is a tuple of variables together with a Boolean relation,
restricting the assignment of the variables. A CSP then is the question
whether there is an assignment to all variables of a set of constraints
such that all constraints are satisfied.

\section{Preliminaries}

\subsection{\label{sub:CSPs}CNF-Formulas and Schaefer's Framework}

In line with Gopalan et al., we define CSPs by CNF($\mathcal{S}$)-formulas,\emph{
}which were introduced in 1978 by Thomas Schaefer as a generalization
of CNF (conjunctive normal form) formulas \citep{Schaefer:1978:CSP:800133.804350}.
\begin{defn}
\label{def:cnf}A \emph{CNF-formula} is a propositional formula of
the form $C_{1}\wedge\cdots\wedge C_{m}$ ($1\leq m<\infty$), where
each $C_{i}$ is a \emph{clause}, that is, a finite disjunction of
\emph{literals} (variables or negated variables). A\emph{ $k$-CNF-formula}
($k\geq1$) is a CNF-formula where each $C_{i}$ has at most $k$
literals. A \emph{Horn (dual Horn)} formula is a CNF-formula where
each $C_{i}$ has at most one positive (negative) literal.
\end{defn}

\begin{defn}
\label{def:cnf-1}For a finite set of relations $\mathcal{S}$, a\emph{
CNF\textsubscript{C}($\mathcal{S}$)-formula} over a set of variables
$V$ is a finite conjunction $C_{1}\wedge\cdots\wedge C_{m}$, where
each $C_{i}$ is a \emph{constraint application} (\emph{constraint
}for short), i.e., an expression of the form $R(\xi_{1},\ldots,\xi_{k})$,
with a $k$-ary relation $R\in\mathcal{S}$, and each $\xi_{j}$ is
a variable from $V$ or one of the constants \{0, 1\}. A\emph{ CNF($\mathcal{S}$)-formula
}is a\emph{ }CNF\textsubscript{C}($\mathcal{S}$)-formula where each
$\xi_{j}$ is a variable in $V$, not a constant.

By $\mathrm{Var}(C_{i})$, we denote the set of variables occurring
in $\xi_{1},\ldots,\xi_{k}$. With the\emph{ relation corresponding
to $C_{i}$} we mean the relation $[R(\xi_{1},\ldots,\xi_{k})]$ (that
may be different from $R$ by substitution of constants, and identification
and permutation of variables).

A \emph{$k$-clause} is a disjunction of $k$ variables or negated
variables. For $0\leq i\leq k$, let $D_{i}$ be the corresponding
to the $k$-clause whose first $i$ literals are negated, and let
$S_{k}=\{D_{0},\ldots,D_{k}\}$, e.g., $\mathcal{S}_{3}=\{[x\vee y\vee z],\:[\overline{x}\vee y\vee z],\:[\overline{x}\vee\overline{y}\vee z],\:[\overline{x}\vee\overline{y}\vee\overline{z}]\}$.
Thus, CNF($S_{k}$) is the collection of $k$-CNF-formulas.
\end{defn}
Thomas Schaefer introduced CNF($\mathcal{S}$)-formulas for expressing
variants of Boolean satisfiability; in his dichotomy theorem, Schaefer
then classified the complexity of the satisfiability problem for CNF\textsubscript{C}($\mathcal{S}$)-\emph{
}and\emph{ }CNF($\mathcal{S}$)-formulas \citep{Schaefer:1978:CSP:800133.804350};
we will do so here for the connectivity problems. We use the following
notation:
\begin{itemize}
\item \emph{\noun{Sat}}($\mathcal{S}$) for the \emph{satisfiability problem:}
Given a CNF($\mathcal{S}$)-formula $\phi$, is $\phi$ satisfiable?
\item \emph{\noun{st-}}\noun{Conn}($\mathcal{S}$) for the\emph{ $st$-connectivity
problem}\emph{\noun{:}}\noun{ }Given a CNF($\mathcal{S}$)-formula
$\phi$ and two solutions $\boldsymbol{s}$ and $\boldsymbol{t}$,
is there a path from $\boldsymbol{s}$ to $\boldsymbol{t}$ in $G(\phi)$?
\item \noun{Conn}($\mathcal{S}$) for the\emph{ connectivity problem:} Given
a CNF($\mathcal{S}$)-formula $\phi$, is $G(\phi)$ connected? (if
$\phi$ is unsatisfiable, we consider $G(\phi)$ connected)
\end{itemize}
The respective problems for CNF\textsubscript{C}($\mathcal{S}$)-formulas
are marked with the subscript \textsubscript{C}. Note that Gopalan
et al.\ considered the case with constants, but omitted the \textsubscript{C}.

\subsection{Classes of Relations }

In the following definition, we introduce the types of relations needed
for the classifications. Some are already familiar from Schaefer's
dichotomy theorem\emph{\noun{,}} some were introduced by Gopalan et
al., and the ones starting with ``safely'' we defined in \citep{csp}
to account for the shift of the boundaries resulting from Gopalan
et al.'s mistake; \emph{IHSB }stands for\emph{ }``implicative hitting
set-bounded'' and was introduced in \citep{Creignou:2001:CCB:377810}.
\begin{defn}
\label{def:typ}Let $R$ be an $n$-ary logical relation.
\begin{itemize}
\item $R$ is\emph{ 0-valid (1-valid)} if $0^{n}\in R$ ($1^{n}\in R$).
\item $R$ is \emph{complementive} if for every vector $(a_{1},\ldots,a_{n})\in R$,
also $(a_{1}\oplus1,\ldots,a_{n}\oplus1)\in R$.
\item $R$ is \emph{bijunctive} if it is the set of solutions of a 2-CNF-formula.
\item $R$ is \emph{Horn (dual Horn)} if it is the set of solutions of a
Horn (dual Horn) formula.
\item $R$ is \emph{affine} if it is the set of solutions of a formula $x_{i_{1}}\oplus\ldots\oplus x_{i_{m}}\oplus c$
with $i_{1},\ldots,i_{m}\in\{1,\ldots,n\}$ and $c\in\{0,1\}$.
\item $R$ is \emph{componentwise bijunctive} if every connected component
of $G(R)$ is a bijunctive relation. $R$ is \emph{safely componentwise
bijunctive} if $R$ and every relation $R'$ obtained from $R$ by
identification of variables is componentwise bijunctive.
\item $R$ is \emph{OR-free} (\emph{NAND-free}) if the relation OR = $\left\{ 01,10,11\right\} $
(NAND = $\left\{ 00,01,10\right\} $) cannot be obtained from $R$
by substitution of constants. $R$ is \emph{safely OR-free} (\emph{safely
NAND-free}) if $R$ and every relation $R'$ obtained from $R$ by
identification of variables is OR-free (NAND-free).
\item $R$ is \emph{IHSB$-$ (IHSB$+$)} if it is the set of solutions of
a Horn (dual Horn) formula in which all clauses with more than 2 literals
have only negative literals (only positive literals).
\item $R$ is \emph{componentwise IHSB$-$ (componentwise IHSB$+$)} if
every connected component of $G(R)$ is\emph{ }IHSB$-$ (IHSB$+$).
$R$ is \emph{safely} \emph{componentwise IHSB$-$ (safely componentwise
IHSB$+$) }if $R$ and every relation $R'$ obtained from $R$ by
identification of variables is componentwise\emph{ }IHSB$-$ (componentwise
IHSB$+$).
\end{itemize}
\end{defn}
If one is given the relation explicitly (as a set of vectors), the
properties 0-valid, 1-valid, complementive, OR-free\emph{ }and\emph{
}NAND-free can be checked straightforward, while bijunctive, Horn,
dual Horn, affine, IHSB$-$ and IHSB$+$ can be checked by \emph{closure
properties}:
\begin{defn}
A relation $R$ is \emph{closed} under some $n$-ary operation $f$
iff the vector obtained by the coordinate-wise application of $f$
to any $m$ vectors from $R$ is again in $R$, i.e., if 
\[
\boldsymbol{a}^{1},\ldots,\boldsymbol{a}^{m}\in R\Longrightarrow\left(f(a_{1}^{1},\ldots,a_{1}^{m}),\ldots,f(a_{n}^{1},\ldots,a_{n}^{m})\right)\in R.
\]
\pagebreak{}\end{defn}
\begin{lem}
\label{lem:clos}A relation $R$ is
\begin{enumerate}
\item bijunctive, iff it is closed under the ternary majority operation\\
 maj($x,y,z$)=$\left(x\vee y\right)\wedge\left(y\vee z\right)\wedge\left(z\vee x\right)$,
\item Horn (dual Horn), iff it is closed under $\wedge$ (under $\vee$,
resp.),
\item affine,\emph{ }iff it is closed under $x\oplus y\oplus z$,
\item IHSB$-$ (IHSB$+$), iff it is closed under $x\wedge(y\vee z)$ (under
$x\vee(y\wedge z)$, resp.).
\end{enumerate}
\end{lem}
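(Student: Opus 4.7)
My plan is to prove each of the four equivalences by separately verifying the two directions, with the easy direction being a direct preservation check and the hard direction being a separation argument. The backward implications are classical closure results (essentially coming from the Galois theory of Post's lattice), so I will stress the strategy and highlight where the real content lies.

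For the direction \emph{$R$ is syntactically of the specified form $\Rightarrow$ $R$ is closed under $f$}, since $R$ is by assumption the set of solutions of a conjunction of basic formulas of the stated shape, it suffices to verify that $f$ preserves each basic formula individually. For bijunctive this amounts to checking that if three vectors in $\{0,1\}^{2}$ all lie in $\{01,10,11\}$, so does their coordinate-wise majority (short case analysis); for Horn and dual Horn, that if three vectors satisfy a Horn clause $\neg x_{i_{1}}\vee\cdots\vee\neg x_{i_{m}}\vee x_{j}$, then so does their coordinate-wise $\wedge$ (by considering when the conclusion fails); for affine, that XOR equations are preserved by $x\oplus y\oplus z$, which is immediate from the associativity and commutativity of $\oplus$; and for IHSB$-$ (IHSB$+$), that $x\wedge(y\vee z)$ preserves each admitted clause type, handled by a short case distinction on unit, Horn-$2$, and all-negative clauses.

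The direction \emph{$R$ is closed under $f \Rightarrow R$ has the syntactic form} is where the real work sits. The uniform strategy is: let $\phi_{R}$ be the conjunction of \emph{all} basic formulas of the required shape that are implied by $R$. Trivially $R\subseteq[\phi_{R}]$; the goal is the reverse inclusion. Assuming $\boldsymbol{a}\in[\phi_{R}]\setminus R$, I derive a contradiction with the closure of $R$ under $f$. The affine case is cleanest: closure under $x\oplus y\oplus z$ makes $R$ a coset of an $\mathbb{F}_{2}$-linear subspace of $\{0,1\}^{n}$, and hence the set of solutions of a system of XOR equations, any of which violated by $\boldsymbol{a}$ produces the contradiction. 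The Horn case uses that closure under $\wedge$ makes $R$ a meet-sub\-semi\-lattice, so the unique minimal element of $\{\boldsymbol{b}\in R:\boldsymbol{b}\ge\boldsymbol{a}\}$ (non-empty whenever $R$ has a tuple dominating $\boldsymbol{a}$) can be used to read off a Horn clause satisfied by $R$ but violated by $\boldsymbol{a}$; the dual Horn case is symmetric.

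The main obstacle will be the bijunctive case. The classical majority-closure argument proceeds as follows: assuming $\boldsymbol{a}\notin R$ satisfies every 2-clause consequence of $R$, one shows that for every pair of coordinates $(i,j)$ there is a witness $\boldsymbol{b}^{ij}\in R$ agreeing with $\boldsymbol{a}$ on both $i$ and $j$ (otherwise the projection of $R$ onto $\{i,j\}$ omits $(a_{i},a_{j})$, giving a 2-clause that separates $R$ from $\boldsymbol{a}$); then iterated applications of the ternary majority to triples of these witnesses produce a tuple in $R$ agreeing with $\boldsymbol{a}$ in an ever-growing set of coordinates, ultimately yielding $\boldsymbol{a}\in R$, a contradiction. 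The IHSB$\pm$ cases combine the Horn and bijunctive ideas: starting from $\boldsymbol{a}\in[\phi_{R}]\setminus R$, separate $\boldsymbol{a}$ from $R$ by a unit, a Horn 2-clause, or an all-negative (all-positive) clause of arbitrary arity, using closure under $x\wedge(y\vee z)$ (respectively $x\vee(y\wedge z)$) to manufacture the necessary witness tuples by combining a minimal-element argument as in the Horn case with the 2-coordinate witness argument as in the bijunctive case.
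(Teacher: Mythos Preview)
Your proposal is essentially correct and self-contained, whereas the paper does not actually prove this lemma: for items 1--3 it simply cites \citep{Creignou:2001:CCB:377810}, and for item 4 it invokes the Galois correspondence between clones and co-clones, reading off from published tables that the IHSB$-$ relations form a base of $\mathrm{INV}(\mathsf{S}_{10})$ and that $x\wedge(y\vee z)$ is a base of $\mathsf{S}_{10}$. Your direct separation arguments are thus a genuinely different route; they are more elementary and make the thesis independent of those external references, at the cost of more work. Two small points: in the Horn forward direction you wrote ``three vectors'' where you mean two ($\wedge$ is binary); and in the Horn backward direction you should explicitly dispose of the case where $\{\boldsymbol{b}\in R:\boldsymbol{b}\ge\boldsymbol{a}\}$ is empty (then the all-negative clause $\bigvee_{j:a_{j}=1}\overline{x}_{j}$ does the separation). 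Your IHSB$\pm$ sketch is vague but can be made precise: since $x\wedge(y\vee z)$ specializes to $x\wedge y$ (set $y=z$), $R$ is already Horn, and by iterating one obtains closure under $x\wedge(y_{1}\vee\cdots\vee y_{k})$ for all $k$; if the minimal $\boldsymbol{c}\ge\boldsymbol{a}$ in $R$ has $c_{i}=1>a_{i}=0$ and no single implication $\overline{x}_{j}\vee x_{i}$ separates, pick for each $j$ with $a_{j}=1$ a witness $\boldsymbol{b}^{j}\in R$ with $b^{j}_{j}=1$, $b^{j}_{i}=0$, and observe that $\boldsymbol{c}\wedge\bigl(\bigvee_{j}\boldsymbol{b}^{j}\bigr)\in R$ dominates $\boldsymbol{a}$ yet is strictly below $\boldsymbol{c}$ at coordinate $i$, contradicting minimality.
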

\begin{proof}
1. See \citep[Lemma 4.9]{Creignou:2001:CCB:377810}.

2. See \citep[Lemma 4.8]{Creignou:2001:CCB:377810}.

3. See \citep[Lemma 4.10]{Creignou:2001:CCB:377810}.

4. This can be verified using the Galois correspondence between closed
sets of relations and closed sets of Boolean functions (see \citep{bohler2005bases}):
From the table (Fig. 1) in \citep{bohler2005bases}, we find that
the IHSB$-$ relations are a base of the co-clone INV($\mathsf{S}_{10}$),
and the IHSB$+$ ones a base of INV($\mathsf{S}_{00}$), and from
the table (Figure 1) in \citep{bloc}, we see that $x\wedge(y\vee z)$
and $x\vee(y\wedge z)$ are bases of the clones $\mathsf{S}_{10}$
and $\mathsf{S}_{00}$, resp.\end{proof}
\begin{rem}
The class \texttt{Check} of \noun{SatConn} provides functions to check
the properties of \prettyref{def:typ}, and the class \texttt{Clones}
provides functions to calculate the clone and co-clone of a relation.
\end{rem}
The closure properties carry over from a relation to its connected
components, as shown by Gopalan et al.:
\begin{lem}
\label{lem:l41}\emph{\citep[Lemma 4.1]{gop}} ~If a logical relation
$R$ is closed under an operation $\alpha:\{0,1\}^{k}\rightarrow\{0,1\}$
such that $\alpha(1,\ldots,1)=1$ and $\alpha(0,\ldots,0)=0$ (a.k.a.
an idempotent operation), then every connected component of $G(R)$
is closed under $\alpha$.
\end{lem}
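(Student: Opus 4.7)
The plan is to show by induction on $i \in \{1,\ldots,k\}$ that the vector
\[
\boldsymbol{v}^{i} := \alpha(\boldsymbol{a}^{1},\ldots,\boldsymbol{a}^{i},\boldsymbol{a}^{1},\ldots,\boldsymbol{a}^{1})
\]
lies in the component $C$ of $G(R)$ containing a fixed collection of vectors $\boldsymbol{a}^{1},\ldots,\boldsymbol{a}^{k}\in C$. For $i=k$ this is exactly the claim. The key structural observation is that the hypotheses $\alpha(0,\ldots,0)=0$ and $\alpha(1,\ldots,1)=1$ imply, applied coordinate-wise, that $\alpha(\boldsymbol{a},\ldots,\boldsymbol{a})=\boldsymbol{a}$ for every $\boldsymbol{a}\in\{0,1\}^{n}$; this makes the base case $\boldsymbol{v}^{1}=\boldsymbol{a}^{1}\in C$ immediate.

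For the inductive step, suppose $\boldsymbol{v}^{i-1}\in C$ and pick a path $\boldsymbol{a}^{1}=\boldsymbol{p}^{0},\boldsymbol{p}^{1},\ldots,\boldsymbol{p}^{m}=\boldsymbol{a}^{i}$ in $C$ (such a path exists because $\boldsymbol{a}^{1}$ and $\boldsymbol{a}^{i}$ lie in the same component), where consecutive $\boldsymbol{p}^{j}$ differ in exactly one coordinate. Define the interpolating sequence
\[
\boldsymbol{q}^{j} := \alpha(\boldsymbol{a}^{1},\ldots,\boldsymbol{a}^{i-1},\boldsymbol{p}^{j},\boldsymbol{a}^{1},\ldots,\boldsymbol{a}^{1}), \qquad j=0,\ldots,m.
\]
Each $\boldsymbol{q}^{j}$ lies in $R$ because $R$ is closed under $\alpha$. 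Moreover, in every coordinate where $\boldsymbol{p}^{j}$ and $\boldsymbol{p}^{j+1}$ agree, the inputs to $\alpha$ are identical, so the outputs coincide; hence $\boldsymbol{q}^{j}$ and $\boldsymbol{q}^{j+1}$ differ in at most one coordinate, i.e.\ they are either equal or adjacent in $G(R)$. Consequently $\boldsymbol{q}^{0},\ldots,\boldsymbol{q}^{m}$ all lie in the same component of $G(R)$.

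Now $\boldsymbol{q}^{0}=\boldsymbol{v}^{i-1}\in C$ by the inductive hypothesis, so every $\boldsymbol{q}^{j}$, and in particular $\boldsymbol{q}^{m}=\boldsymbol{v}^{i}$, lies in $C$. Iterating $i=1,2,\ldots,k$ yields $\alpha(\boldsymbol{a}^{1},\ldots,\boldsymbol{a}^{k})=\boldsymbol{v}^{k}\in C$, which is the desired closure of $C$ under $\alpha$.

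There is no genuine obstacle here; the proof is purely a bookkeeping argument. The only point that needs a moment's care is to verify that adjacency in $G(R)$ is preserved after applying $\alpha$ componentwise, which follows because $\alpha$ is a function (so identical inputs produce identical outputs), not from any special algebraic property beyond idempotence. The idempotence assumption is essential both for the base case and for ensuring that the starting point $\boldsymbol{v}^{1}=\boldsymbol{a}^{1}$ sits in the correct component to begin with.
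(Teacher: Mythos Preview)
Your proof is correct. The paper itself does not supply a proof of this lemma; it merely cites \cite[Lemma~4.1]{gop}. Your argument---inducting on the number of ``real'' arguments, and for the inductive step walking one argument along a path in $C$ while holding the others fixed so that the image under $\alpha$ moves by at most one coordinate at a time---is precisely the standard proof of this fact as it appears in the cited reference. There is nothing to add.
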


\subsection{Classes of Sets of Relations}

The classes in the following definition demarcate the structural and
computational boundaries for the solution graphs of CNF\textsubscript{C}($\mathcal{S}$)-formulas.
\begin{defn}
\label{def:cpss}A set $\mathcal{S}$ of logical relations is \emph{safely
tight} if at least one of the following conditions holds:
\begin{enumerate}
\item Every relation in $\mathcal{S}$ is safely componentwise bijunctive.
\item Every relation in $\mathcal{S}$ is safely OR-free .
\item Every relation in $\mathcal{S}$ is safely NAND-free.
\end{enumerate}
A set $\mathcal{S}$ of logical relations is \emph{Schaefer} if at
least one of the following conditions holds:
\begin{enumerate}
\item Every relation in $\mathcal{S}$ is bijunctive.
\item Every relation in $\mathcal{S}$ is Horn.
\item Every relation in $\mathcal{S}$ is dual Horn.
\item Every relation in $\mathcal{S}$ is affine.\pagebreak{}
\end{enumerate}
A set $\mathcal{S}$ of logical relations is \emph{CPSS} if at least
one of the following conditions holds:
\begin{enumerate}
\item Every relation in $\mathcal{S}$ is bijunctive.
\item Every relation in $\mathcal{S}$ is Horn and safely componentwise
IHSB$-$.
\item Every relation in $\mathcal{S}$ is dual Horn and safely componentwise
IHSB$+$.
\item Every relation in $\mathcal{S}$ is affine.
\end{enumerate}
A single logical relation $R$ is safely tight, Schaefer, or CPSS,
if $\{R\}$ has that property. Vice versa, we say that a set $\mathcal{S}$
of logical relations has one of the properties from \prettyref{def:typ}
if every relation in $\mathcal{S}$ has that property, e.g., $\mathcal{S}$
is 0-valid if every relation in $\mathcal{S}$ is 0-valid.
\end{defn}
The term \emph{tight} was introduced by Gopalan et al.\ because of
the structural properties of the formulas built from tight (actually,
only safely tight) relations, see \prettyref{lem:l43} and \prettyref{lem:l45}.
We introduced the CPSS class in \citep{csp}; CPSS stands for ``constraint-projection
separating Schaefer'', which will become clear in \prettyref{sec:cpss}
from \prettyref{def:dcp}, \prettyref{lem:prjs} and \prettyref{lem:sc}.

From the definition we see that every CPSS set of relations is also
Schaefer, and we can show that it also holds that every Schaefer set
is safely tight, by modifying a lemma of Gopalan et al.:
\begin{lem}
\label{lem:l42}\emph{\citep[modified from][Lemma 4.2]{gop} }~Let
$R$ be a logical relation.
\begin{enumerate}
\item If $R$ is bijunctive, then it is safely componentwise bijunctive.
\item If $R$ is Horn, then it is safely OR-free.
\item If $R$ is dual Horn, then it is safely NAND-free.
\item If $R$ is affine, then it is safely componentwise bijunctive, safely
OR-free, and safely NAND-free.
\end{enumerate}
\end{lem}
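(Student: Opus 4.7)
The plan rests on the closure characterization of \prettyref{lem:clos}: $R$ is bijunctive, Horn, dual Horn, or affine precisely when it is closed under the ternary majority $\mathrm{maj}$, under $\wedge$, under $\vee$, or under $x\oplus y\oplus z$, respectively. All four operations are idempotent, so the closure of $R$ under any of them is preserved both by identification of variables (coordinate-wise application stays inside the identified diagonal) and by substitution of constants (since an idempotent $f$ satisfies $f(c,\ldots,c)=c$, plugged-in constants survive the coordinate-wise combination). This uniform invariance principle is the engine behind every part of the lemma.

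For part~1, let $R$ be bijunctive. Any relation $R'$ obtained from $R$ by identifying variables is still closed under $\mathrm{maj}$; since $\mathrm{maj}$ is idempotent, \prettyref{lem:l41} yields that every connected component of $G(R')$ is closed under $\mathrm{maj}$, and hence bijunctive by \prettyref{lem:clos}. So $R$ is safely componentwise bijunctive. For parts~2 and~3, I would rule out obtaining $\mathrm{OR}$ (resp.\ $\mathrm{NAND}$) from $R$ by first identifying variables and then substituting constants: both operations preserve closure under $\wedge$ (resp.\ under $\vee$), whereas $\mathrm{OR}=\{01,10,11\}$ is not closed under $\wedge$ because $01\wedge 10 = 00\notin\mathrm{OR}$, and $\mathrm{NAND}=\{00,01,10\}$ is not closed under $\vee$ because $01\vee 10 = 11\notin\mathrm{NAND}$. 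Hence no such two-step procedure can produce $\mathrm{OR}$ from a Horn $R$, or $\mathrm{NAND}$ from a dual Horn $R$.

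For part~4, the $\mathrm{OR}$-free and $\mathrm{NAND}$-free claims are handled exactly as above, using that neither $\mathrm{OR}$ nor $\mathrm{NAND}$ is closed under $x\oplus y\oplus z$: indeed $01\oplus 10\oplus 11 = 00\notin\mathrm{OR}$ and $00\oplus 01\oplus 10 = 11\notin\mathrm{NAND}$. For the componentwise bijunctive part, let $R'$ be any variable-identified image of $R$; it remains affine, so it is a coset $v+V$ of an $\mathbb{F}_{2}$-linear subspace $V\subseteq\mathbb{F}_{2}^{n}$. From any $u\in R'$, the moves available in $G(R')$ are exactly the additions of those standard basis vectors $e_i$ that happen to lie in $V$, so the connected component of $u$ is $u+W$, where $W$ is the $\mathbb{F}_{2}$-span of $\{e_i:e_i\in V\}$. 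Each component is therefore a coordinate face of the Boolean hypercube; on such a face every coordinate is either fixed or free, so coordinate-wise $\mathrm{maj}$ stays within the face, and the component is bijunctive by \prettyref{lem:clos}.

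The main obstacle is this last structural observation that connected components of an affine relation are exactly coordinate faces; applying \prettyref{lem:l41} directly with $\mathrm{maj}$ would fail, since an affine relation need not be closed under $\mathrm{maj}$ (e.g.\ $\{000,011,101,110\}$ is affine but $\mathrm{maj}(011,101,110)=111$). Everything else reduces uniformly to the principle that idempotent-closure properties survive both identification of variables and substitution of constants, which one checks directly from the definition of closure.
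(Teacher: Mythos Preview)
Your proof is correct and follows essentially the same strategy as the paper: both rely on the closure characterizations of \prettyref{lem:clos}, the invariance of these closures under identification of variables (the paper's observation~(*)), and \prettyref{lem:l41} for passing to components. The only notable difference is in part~4's componentwise-bijunctive claim: the paper argues that connected components of an affine relation, being OR-free and NAND-free, must be hypercubes and hence bijunctive, whereas you derive the hypercube structure directly from the coset description $R'=v+V$ by showing that each component is a coordinate face $u+W$ with $W=\mathrm{span}\{e_i:e_i\in V\}$. Your route is more explicit here and does not need to justify separately why OR-free plus NAND-free forces a hypercube; your remark that \prettyref{lem:l41} cannot be applied directly with $\mathrm{maj}$ (since affine relations need not be closed under $\mathrm{maj}$) is a useful clarification of why some additional argument is needed at this point.
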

\begin{proof}
We first note that
\begin{itemize}[label= ({*})]
\item  any relation obtained from a bijunctive (Horn, dual Horn, affine)
one by identification of variables is itself bijunctive (Horn, dual
Horn, affine),
\end{itemize}
which is obvious from the definitions.

If $R$ is bijunctive, it is closed under maj, which is idempotent,
so by Lemma \ref{lem:l41}, $R$ is also componentwise bijunctive,
and by ({*}), it is safely componentwise bijunctive as well.

The cases of Horn and dual Horn are symmetric. Suppose a $r$-ary
Horn relation $R$ is not OR-free. Then there exist $i,j\in\{1,\dots,r\}$
and constants $t_{1},\dots,t_{r}\in\{0,1\}$ such that the relation
$R(t_{1},\dots,t_{i-1},x,t_{i+1},\dots,t_{j-1},y,t_{j+1},\dots,t_{r})$
on variables $x$ and $y$ is equivalent to $x\vee y$, i.e. 
\[
R(t_{1},\dots,t_{i-1},x,t_{i+1},\dots,t_{j-1},y,t_{j+1},\dots,t_{r})=\{01,11,10\}.
\]
Thus the tuples $\boldsymbol{t^{00}},\boldsymbol{t^{01}}\boldsymbol{t^{10}},\boldsymbol{t^{11}}$
defined by $(t_{i}^{ab},t_{j}^{ab})=(a,b)$ and $t_{k}^{ab}=t_{k}$
for every $k\not\in\{i,j\}$, where $a,b,\in\{0,1\}$ satisfy $\boldsymbol{t^{10}},\boldsymbol{t^{11}},\boldsymbol{t^{01}}\in R$
and $\boldsymbol{t^{00}}\not\in R$. However, since every Horn relation
is closed under $\wedge$, it follows that $\boldsymbol{t^{01}}\wedge\boldsymbol{t^{10}}=\boldsymbol{t^{00}}$
must be in $R$, which is a contradiction. So $R$ is OR-free, and
again by ({*}), it must be safely OR-free as well.

For the affine case, a small modification of the last step of the
above argument shows that an affine relation also is OR-free; therefore,
dually, it is also NAND-free. Namely, since a relation $R$ is affine
if and only if it is closed under ternary $\oplus$, it follows that
$\boldsymbol{t^{01}}\oplus\boldsymbol{t^{11}}\oplus\boldsymbol{t^{10}}=\boldsymbol{t^{00}}$
must be in $R$. Since the connected components of an affine relation
are both OR-free and NAND-free the subgraphs that they induce are
hypercubes, which are also bijunctive relations. Therefore an affine
relation is also componentwise bijunctive. With this, it must also
be safely OR-free, safely OR-free and safely componentwise bijunctive
by ({*}).
\end{proof}

\section{Results}

We are now ready to state the results for CNF\textsubscript{C}($\mathcal{S}$)-formulas;
in the subsequent sections we will prove them. The following two theorems
give complete classifications up to polynomial-time isomorphisms.
They are summarized in the table below.

\begin{table}[!h]
\noindent \begin{centering}
\renewcommand{\arraystretch}{1.1}
\begingroup\tabcolsep=3pt
\par\end{centering}

\noindent \begin{centering}
\begin{tabular}{|c||c|c|c|c|}
\hline 
{\footnotesize{}$\mathcal{S}$} & \noun{\footnotesize{}Sat}{\footnotesize{}\textsubscript{C}($\mathcal{S}$) } & \noun{\footnotesize{}Conn}{\footnotesize{}\textsubscript{C}($\mathcal{S}$)} & \emph{\noun{\footnotesize{}st-}}\noun{\footnotesize{}Conn}{\footnotesize{}\textsubscript{C}($\mathcal{S}$)} & {\footnotesize{}Diameter}\tabularnewline
\hline 
\hline 
{\footnotesize{}not safely tight} & \multirow{2}{*}{{\footnotesize{}NP-complete}} & {\footnotesize{}PSPACE-complete} & {\footnotesize{}PSPACE-complete} & {\footnotesize{}$2^{\Omega(\sqrt{n})}$}\tabularnewline
\cline{3-5} 
{\footnotesize{}safely tight, not Schaefer} &  & \multirow{2}{*}{{\footnotesize{}coNP-complete}} & \multirow{3}{*}{{\footnotesize{}in P}} & \multirow{3}{*}{{\footnotesize{}$O(n)$}}\tabularnewline
\cline{2-2} 
{\footnotesize{}Schaefer, not CPSS} & \multirow{2}{*}{{\footnotesize{}in P}} &  &  & \tabularnewline
\cline{3-3} 
{\footnotesize{}CPSS} &  & {\footnotesize{}in P} &  & \tabularnewline
\hline 
\end{tabular}
\par\end{centering}

\noindent \begin{centering}
\endgroup
\par\end{centering}

\protect\caption[Our classifications for CNF\protect\textsubscript{C}($\mathcal{S}$)-formulas,
in comparison to \noun{Sat}]{\emph{}\label{tab:cc}Our classifications for CNF\protect\textsubscript{C}($\mathcal{S}$)-formulas,
in comparison to \noun{Sat}.}
\end{table}

\begin{thm}[Dichotomy theorem for \noun{st-Conn}\textsubscript{C}($\mathcal{S}$)
and the diameter]
\label{thm:dich}Let $\mathcal{S}$ be a finite set of logical relations.
\begin{enumerate}
\item If $\mathcal{S}$ is safely tight, \noun{st-Conn}\textsubscript{\noun{C}}\noun{($\mathcal{S}$)}
is in \noun{P}, and for every \emph{CNF\textsubscript{C}($\mathcal{S}$)}-formula
$\phi$, the diameter of $G(\phi)$ is linear in the number of variables.
\item Otherwise,\emph{ }\noun{st-Conn}\textsubscript{\noun{C}}\noun{($\mathcal{S}$)}
is \noun{PSPACE}-complete, and there are \emph{CNF\textsubscript{C}($\mathcal{S}$)}-formulas
$\phi$, such that the diameter of $G(\phi)$ is exponential in the
number of variables.
\end{enumerate}
\end{thm}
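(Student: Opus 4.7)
The proof naturally splits along the two sides of the dichotomy, and most of the argument can be arranged to follow the blueprint of Gopalan et al., with the modification that the \emph{tight} hypothesis is strengthened to \emph{safely tight} to account for variable identifications that are available inside CNF\textsubscript{C}($\mathcal{S}$)-formulas.

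\textbf{Easy side.} Assume $\mathcal{S}$ is safely tight, and split according to the three cases of \prettyref{def:cpss}. In the safely componentwise bijunctive case, \prettyref{lem:clos} and \prettyref{lem:l41} tell us that every component of $G(\phi)$ is closed under $\mathrm{maj}$; I would then show (as Gopalan et al.~do for $G(R)$) that within such a component, any two solutions $\boldsymbol{s},\boldsymbol{t}$ can be connected by flipping the variables on which they differ in any order, so $d_\phi(\boldsymbol{s},\boldsymbol{t}) \leq |\boldsymbol{s}-\boldsymbol{t}| \leq n$, yielding linear diameter and a trivial polynomial algorithm. For the safely OR-free case, I would argue that each component has a unique componentwise minimum (the bit-wise AND of all its elements), and that from any $\boldsymbol{s}$ one can reach this minimum by flipping $1$'s to $0$'s one at a time, again giving paths of length $\leq n$; dually for safely NAND-free. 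The ``safely'' qualifier is needed precisely because a CNF\textsubscript{C}($\mathcal{S}$)-clause may plug constants into a relation $R \in \mathcal{S}$ and also repeat variables, producing relations obtained from $R$ both by substitution of constants and by identification of variables; the structural arguments above must therefore apply not just to the members of $\mathcal{S}$ but to all such derived relations.

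\textbf{Hard side.} Assume $\mathcal{S}$ is not safely tight, i.e.\ some $R_1 \in \mathcal{S}$ is not safely componentwise bijunctive, some $R_2 \in \mathcal{S}$ is not safely OR-free, and some $R_3 \in \mathcal{S}$ is not safely NAND-free. By the definition of ``safely'', each failure is witnessed either by the relation itself or by one obtained from it by identifying variables; in either case the obstruction can be produced by a CNF\textsubscript{C}($\mathcal{S}$)-formula. From these three witnesses I would construct CNF\textsubscript{C}($\mathcal{S}$)-gadgets that define $\mathrm{OR}$, $\mathrm{NAND}$, and a bijunctive-violating configuration. Plugging these gadgets into the classical reductions yields a polynomial-time reduction from $st$-connectivity of $3$-CNF-formulas (shown PSPACE-complete via a reduction from the acceptance problem of a polynomial-space Turing machine, by encoding a machine's configuration graph as a solution graph) to \noun{st-Conn}\textsubscript{C}($\mathcal{S}$), giving PSPACE-hardness; membership in PSPACE follows from the standard guess-and-verify argument using that intermediate configurations have polynomial size. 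For the exponential-diameter part, I would exhibit a specific CNF\textsubscript{C}($\mathcal{S}$)-formula family whose solution graph is, up to a local gadget transformation, a Hamiltonian path through an $\{0,1\}^n$-like structure of length $2^{\Omega(\sqrt{n})}$, as in Gopalan et al.'s construction transferred through the gadgets.

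\textbf{Main obstacle.} The delicate part is the hard direction and specifically the verification that the strengthening from \emph{tight} to \emph{safely tight} lines up exactly with the computational threshold. Concretely, one must check that whenever any one of the three safely-properties fails, the witness of failure is itself realisable inside a CNF\textsubscript{C}($\mathcal{S}$)-formula without breaking polynomial size; and conversely, that on the easy side the algorithmic arguments extend uniformly to all relations arising by constant-substitution and variable-identification. Once this bookkeeping is done carefully, the remaining structural lemmas on monotone paths in safely OR-free/NAND-free components and majority-closed paths in safely componentwise bijunctive components, together with the standard PSPACE reductions, combine to give both parts of the theorem.
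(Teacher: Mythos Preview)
Your outline tracks the paper's, but two steps need real work that your proposal glosses over.

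\textbf{Easy side, componentwise bijunctive case.} Your invocation of \prettyref{lem:l41} does not apply: that lemma requires $[\phi]$ itself to be closed under $\mathrm{maj}$, but when $\mathcal{S}$ is only \emph{safely componentwise} bijunctive, neither the relations in $\mathcal{S}$ nor their conjunction need be closed under $\mathrm{maj}$ (e.g.\ $R=\{100,010,001\}$ is safely componentwise bijunctive, yet $\mathrm{maj}(100,010,001)=000\notin R$). The paper's argument (\prettyref{lem:l43}) is different and is where ``safely'' actually bites: for a fixed component $F$ of $G(\phi)$, the projection of $F$ onto the variables of each constraint $C_i$ is connected and hence lies in a single bijunctive piece of the (componentwise bijunctive) relation corresponding to $C_i$; replacing each $C_i$ by that piece produces a genuinely bijunctive formula whose solution set contains $F$, and only then does the $\mathrm{maj}$-based shortest-path argument apply.

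\textbf{Hard side.} The key notion you omit is \emph{structural expressibility}. Schaefer-style expressibility ($R=\{\boldsymbol a:\exists\boldsymbol y\,\phi(\boldsymbol a,\boldsymbol y)\}$) is not enough for connectivity: when each $3$-clause is replaced by a gadget with auxiliary variables $\boldsymbol y$, paths in the original solution graph need not lift to paths in the new one (and conversely) unless the gadget additionally satisfies that $G(\phi(\boldsymbol a,\boldsymbol y))$ is connected for every $\boldsymbol a\in R$ and that Hamming-adjacent $\boldsymbol a,\boldsymbol b\in R$ share a common witness $\boldsymbol w$. The paper's \prettyref{lem:l32}/\prettyref{cor:c33} show these conditions yield the reduction and the diameter transfer; the substantial work is \prettyref{lem:l34}, which from any not-safely-tight $\mathcal{S}$ \emph{structurally} expresses all of $\mathcal{S}_3$ (this is where the not-safely-componentwise-bijunctive witness is used to build a path whose distance expands, eventually yielding the relation $M$ and then all $3$-clauses). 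Your ``construct gadgets for $\mathrm{OR}$, $\mathrm{NAND}$, and a bijunctive-violating configuration and plug into the classical reductions'' skips exactly this connectivity-preserving structure, which is where both the difficulty and the need for the ``safely'' correction really live.
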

\begin{proof}
1. See \prettyref{lem:c47}.

2. See \prettyref{cor:pc}.\end{proof}
\begin{thm}[Trichotomy theorem for \noun{Conn}\textsubscript{C}($\mathcal{S}$)]
\label{thm:trich}Let $\mathcal{S}$ be a finite set of logical relations.
\begin{enumerate}
\item If $\mathcal{S}$ is CPSS,\emph{ }\noun{Conn}\textsubscript{\noun{C}}\noun{($\mathcal{S}$)}
is in \noun{P}.
\item Else if $\mathcal{S}$ is safely tight, \noun{Conn}\textsubscript{\noun{C}}\noun{($\mathcal{S}$)}
is \noun{coNP}-complete.
\item Else,\emph{ }\noun{Conn}\textsubscript{\noun{C}}\noun{($\mathcal{S}$)}
is \noun{PSPACE}-complete.
\end{enumerate}
\end{thm}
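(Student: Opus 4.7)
The plan is to handle the three cases of the trichotomy separately, leveraging the dichotomy for \noun{st-Conn}$_C(\mathcal{S})$ (Theorem~\ref{thm:dich}). Two of the four containments fall out essentially for free. For case~(3), PSPACE membership is immediate: iterate over all pairs of solutions and test st-connectivity for each in PSPACE, e.g.\ by Savitch applied to the NPSPACE procedure that nondeterministically walks a path in $G(\phi)$. For case~(2), $\mathcal{S}$ is safely tight, so Theorem~\ref{thm:dich} gives that \noun{st-Conn}$_C(\mathcal{S})$ is in P; a certificate for $G(\phi)$ being disconnected is therefore just a pair of solutions whose non-st-connectivity is verifiable in polynomial time, placing \noun{Conn}$_C(\mathcal{S})$ in coNP.

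For the lower bounds, case~(3) calls for a reduction from \noun{st-Conn}$_C(\mathcal{S})$, which is PSPACE-complete by Theorem~\ref{thm:dich}. I would use a gadget construction that adjoins a fresh flag variable and a locally connected subformula to the source and target solutions, so that the augmented formula has a connected solution graph iff $\boldsymbol{s}$ and $\boldsymbol{t}$ already lie in a common component of $G(\phi)$. Case~(2) is the subtler lower bound and naturally splits along the CPSS refinement of Schaefer: one sub-regime is $\mathcal{S}$ safely tight but not Schaefer; the other is $\mathcal{S}$ Schaefer but lacking safely componentwise IHSB$-$ (respectively IHSB$+$, by symmetry). In each sub-regime the idea is to extract---via substitution of constants and identification of variables---an obstruction sub-relation that forces non-trivial component structure, and to reduce from the complement of an appropriate satisfiability or reachability problem along the lines already employed by Gopalan et al.\ and Makino et al.

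The P-algorithm in case~(1) will be proved case by case over the four CPSS alternatives. For bijunctive or affine $\mathcal{S}$, Lemmas~\ref{lem:clos} and~\ref{lem:l41} imply that every component of $G(\phi)$ is itself bijunctive or affine, hence definable by a 2-CNF or a linear system extractable in polynomial time; connectivity then reduces to a routine 2-SAT or linear-algebra test that the global description yields a single component. The Horn with safely componentwise IHSB$-$ case (and its dual) is the most involved. Here each component is IHSB$-$, and the CPSS---that is, \emph{constraint-projection separating Schaefer}---property will be used to certify component membership by a polynomial family of coordinate projections, reducing connectivity of the whole graph to a polynomial number of local tests. The main obstacle, I expect, is the coNP-hardness for the Schaefer-but-not-CPSS regime: one must pinpoint why failure of safely componentwise IHSB$\pm$ already forces coNP-hardness of connectivity while satisfiability and \noun{st-Conn}$_C(\mathcal{S})$ remain in P, a delicate task because the hardness must be encoded entirely within the Schaefer world.
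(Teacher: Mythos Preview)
Your high-level decomposition into the three cases and your identification of the coNP-hardness split (safely tight non-Schaefer versus Schaefer non-CPSS) are both correct, and the upper bounds for cases (2) and (3) are fine. However, there is a genuine gap in your lower bound for case (3), and your plan for case (1) differs from the paper's in a way worth noting.

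\textbf{The gap in case (3).} You propose to obtain PSPACE-hardness of \noun{Conn}$_C(\mathcal{S})$ by reducing from \noun{st-Conn}$_C(\mathcal{S})$ via a ``flag variable'' gadget. But a generic reduction of this kind is not at all obvious: given an \emph{arbitrary} instance $(\phi,\boldsymbol{s},\boldsymbol{t})$, the formula $\phi$ may have many connected components besides those of $\boldsymbol{s}$ and $\boldsymbol{t}$, and your $\phi'$ must be connected precisely when $\boldsymbol{s}$ and $\boldsymbol{t}$ lie in the same component of $G(\phi)$. That forces you to \emph{collapse all the other components} of $G(\phi)$ into one of the two distinguished ones---without knowing what those components are. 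Your one-sentence description does not address this, and I do not see how a single flag variable accomplishes it. The paper avoids this difficulty entirely: in Lemma~\ref{lem:l36}, the Turing-machine reduction for 3-CNF is arranged so that \emph{every} solution of the constructed formula is connected to either $\boldsymbol{s}$ or $\boldsymbol{t}$; hence $G(\phi)$ has at most two components, and connectivity coincides with $st$-connectivity for that specific instance. Both hardness results are then transported in parallel to arbitrary not-safely-tight $\mathcal{S}$ via the structural expressibility machinery (Corollary~\ref{cor:c33} and Theorem~\ref{thm:se}), yielding Corollary~\ref{cor:pc}. So the PSPACE-hardness of \noun{Conn}$_C(\mathcal{S})$ does \emph{not} pass through \noun{st-Conn}$_C(\mathcal{S})$ at all.

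\textbf{Case (1).} The paper does not run separate algorithms for the four CPSS alternatives. Instead, Lemma~\ref{lem:prjs} shows that \emph{all} four types (safely componentwise bijunctive, safely componentwise IHSB$\pm$, affine) are constraint-projection separating, and Lemma~\ref{lem:alg} then gives a single uniform algorithm: for each constraint $C_i$, compute the projection of $\phi$ to $\mathrm{Var}(C_i)$ (in polynomial time, since $\mathcal{S}$ is Schaefer) and check whether its solution graph is disconnected. Your plan to extract a 2-CNF or linear system for the bijunctive/affine cases is plausible but more work than needed; you already have the right idea for the Horn/IHSB case, and that idea in fact covers all four cases uniformly.

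\textbf{Case (2).} Your outline is correct in spirit but thin on content. For safely tight non-Schaefer $\mathcal{S}$, the paper (Lemma~\ref{lem:l48}) reduces from \textsc{Another-Sat}$(\mathcal{S})$, using that $x\neq y$ is expressible. For Schaefer non-CPSS $\mathcal{S}$, the paper first proves \noun{Conn}$_C(\{M\})$ coNP-hard for $M=(x\vee\overline{y}\vee\overline{z})\wedge(\overline{x}\vee z)$ by a direct reduction from \noun{Sat}$(\{x\vee y\vee z,\,\overline{x}\vee\overline{y}\})$ (Lemma~\ref{lem:m har}), and then shows that $M$ is expressible as a CNF$_C(\{R\})$-formula for every Horn $R$ that is not safely componentwise IHSB$-$ (Lemma~\ref{lem:exp m}). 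Both arguments are substantially more intricate than ``extract an obstruction sub-relation''; in particular, Lemma~\ref{lem:exp m} requires a careful multi-step normalisation of Horn formulas.
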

\begin{proof}
1. See \prettyref{cor:alg-1}.

2. See \prettyref{cor:co}.

3. See \prettyref{cor:pc}.
\end{proof}

\section{\label{sec:ge}The General Case: Reduction~from~a~Turing~Machine}

We start with the general case. Gopalan et al.\ showed that for 3-CNF-formulas,\emph{\noun{
st-}}\noun{Conn}\textsubscript{C} and \noun{Conn}\textsubscript{C}
are PSPACE-complete, and the diameter can be exponential:
\begin{lem}
\label{lem:l36}\emph{\citep[Lemma 3.6]{gop}} ~For general CNF-formulas,
as well as for 3-CNF-formulas,  \noun{st-Conn}\textsubscript{\emph{C}}
and \noun{Conn}\textsubscript{\emph{C}} are $\mathrm{PSPACE}$-complete.
\end{lem}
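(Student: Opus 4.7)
The plan is to separately establish the PSPACE upper bound for both \noun{st-Conn}\textsubscript{C} and \noun{Conn}\textsubscript{C}, and then a matching PSPACE-hardness lower bound already in the 3-CNF case, which also automatically yields the exponential diameter. Membership is the easy direction: for a CNF-formula $\phi$ and solutions $\boldsymbol{s},\boldsymbol{t}$, one nondeterministically guesses a path in $G(\phi)$, storing only the current assignment, a step counter bounded by $2^n$, and the previously flipped variable. This gives \noun{st-Conn}\textsubscript{C} $\in\mathrm{NPSPACE}=\mathrm{PSPACE}$, and hence also $\overline{\text{\noun{st-Conn}}}_{\text{C}}\in\mathrm{PSPACE}$; then \noun{Conn}\textsubscript{C} is in PSPACE by enumerating all pairs of satisfying assignments (reusing polynomial space) and calling this oracle.

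For the lower bound I would reduce from the standard PSPACE-complete problem of deciding whether a deterministic $p(n)$-space-bounded Turing machine $M$ accepts an input $w$. The construction produces a 3-CNF-formula $\phi_{M,w}$ together with two designated satisfying assignments $\boldsymbol{s}$ and $\boldsymbol{t}$ that respectively encode the initial and a unique accepting configuration of $M$ on $w$. The variables split into blocks representing the tape contents, head position and internal state of a TM configuration, plus auxiliary \emph{phase} (or ``token'') variables that serialise one TM transition into several single-bit flips. The clauses enforce that a satisfying assignment is either (a) a legal configuration with all phase bits in a resting state, or (b) a legal \emph{intermediate} micro-state along the decomposition of exactly one transition. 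This way, edges of $G(\phi_{M,w})$ correspond precisely to single micro-steps of the simulation, so reachability in $G(\phi_{M,w})$ from $\boldsymbol{s}$ to $\boldsymbol{t}$ is equivalent to $M$ accepting $w$. After building the formula in general CNF, a Tseitin-style rewriting of every wide clause into a conjunction of 3-clauses via fresh auxiliary variables yields a 3-CNF-formula; these auxiliaries are made functionally determined by the original variables so that no new connectivity is introduced.

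Because $M$ on space $p(n)$ may have $2^{\Omega(p(n))}$ distinct configurations and the simulation advances by at most a constant number of bits per TM step, any $\boldsymbol{s}$-to-$\boldsymbol{t}$ path in $G(\phi_{M,w})$ must be of length $2^{\Omega(\sqrt{N})}$ in the number $N$ of variables of $\phi_{M,w}$, giving the diameter bound as a by-product. Since 3-CNF is a special case of general CNF, hardness transfers to the general case, completing both classifications. The main obstacle is the design of the phase gadget: one needs the micro-states of a single transition to form a chain in the solution graph whose only exits back into ``resting'' configurations are the correct pre- and post-states, so that the graph cannot short-circuit the simulation by flipping auxiliary bits in unintended orders; this typically requires the phase variables to be cyclically ordered and pinned by clauses that forbid any satisfying assignment in which two transitions are partially started simultaneously, or in which a started transition is abandoned halfway.
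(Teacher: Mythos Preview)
Your approach is essentially the same as the paper's (and Gopalan et al.'s original): membership via $\mathrm{NPSPACE}=\mathrm{PSPACE}$ and Savitch, hardness via a direct encoding of a space-bounded Turing machine's configurations into a 3-CNF whose solution graph mirrors the transition relation.

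However, there is a genuine gap in your treatment of \noun{Conn}\textsubscript{C}. Your construction, as described, only yields hardness for \noun{st-Conn}\textsubscript{C}: $\boldsymbol{s}$ and $\boldsymbol{t}$ are connected in $G(\phi_{M,w})$ iff $M$ accepts $w$. But this says nothing about whether $G(\phi_{M,w})$ is \emph{globally} connected. A deterministic machine has many configurations that are neither reachable from the initial configuration nor lead to the accepting one; each such configuration (and its micro-states) forms a separate connected component of $G(\phi_{M,w})$, so $G(\phi_{M,w})$ will typically be disconnected regardless of whether $M$ accepts. To reduce acceptance to \noun{Conn}\textsubscript{C}, the paper (following Gopalan et al.) first constructs from $M$ and $w$ a modified machine $M'$ with the property that \emph{every} configuration is connected either to the initial or to the accepting one---so that $G(\phi)$ has at most two components, and is connected iff $\boldsymbol{s}$ and $\boldsymbol{t}$ are connected. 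You need to engineer this (for instance, by allowing $M'$ to nondeterministically reset to the initial configuration from any state, or by making $M'$ sweep through all configurations before and after simulating $M$); without it, the \noun{Conn}\textsubscript{C} half of the lemma is not proved.

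A secondary concern: your Tseitin rewriting step is risky. Standard Tseitin variables are \emph{existentially} constrained (only one direction of the equivalence is enforced), so they are not functionally determined and can create spurious edges or components. You need full equivalence clauses for each auxiliary, and then you must check that flipping an auxiliary alone is never a legal move---otherwise the solution graph changes. The cleaner route, taken in the original proof, is to build the formula directly in 3-CNF from the start, since each transition constraint is local (involving $O(1)$ cells) and naturally breaks into 3-clauses.
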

Showing that the problems are in PSPACE is straightforward: Given
a CNF-formula $\phi$ and two solutions $\boldsymbol{s}$ and $\boldsymbol{t}$,
we can guess a path of length at most $2^{n}$ between them and verify
that each vertex along the path is indeed a solution. Hence \noun{st-Conn}
is in $\mathrm{NPSPACE}$, which equals PSPACE by Savitch\textquoteright s
theorem. For \noun{Conn}, by reusing space we can check for all pairs
of vectors whether they are satisfying, and, if they both are, whether
they are connected in $G(\phi)$.

The hardness-proof is quite intricate: it consists of a direct reduction
from the computation of a space-bounded Turing machine $M$. The input-string
$w$ of $M$ is mapped to a 3-CNF-formula $\phi$ and two satisfying
assignments $\boldsymbol{s}$ and $\boldsymbol{t}$, corresponding
to the initial and accepting configuration of a Turing machine $M'$
constructed from $M$ and $w$, s.t. $\boldsymbol{s}$ and $\boldsymbol{t}$
are connected in $G(\phi)$ iff $M$ accepts $w$. Further, all satisfying
assignments of $\phi$ are connected to either $\boldsymbol{s}$ or
$\boldsymbol{t}$, so that $G(\phi)$ is connected iff $M$ accepts.
\begin{lem}
\label{lem:l37}\emph{\citep[Lemma 3.7]{gop}} ~For $n$ even, there
is a 3-CNF-formula $\phi_{n}$ with $n$ variables and $O(n^{2})$
clauses, s.t. $G(\phi_{n})$ is a path of length greater than $2^{\frac{n}{2}}$.
\end{lem}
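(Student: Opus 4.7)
The natural strategy is to construct $\phi_n$ explicitly so that its satisfying assignments trace out a simple path of exponential length through the $n$-dimensional hypercube. The underlying idea is to simulate a reflected binary Gray code: its $2^{n/2}$ codewords of length $n/2$ already form a Hamilton path through $\{0,1\}^{n/2}$ with consecutive codewords at Hamming distance $1$. Merely projecting this path into $\{0,1\}^n$ by leaving the remaining $n/2$ coordinates free would, however, introduce exponentially many shortcut edges, so those extra coordinates must be used to prevent such shortcuts.

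Concretely, I would split the $n = 2m$ variables into $m$ \emph{counter} bits $x_1,\ldots,x_m$ (which will traverse the Gray code) and $m$ \emph{phase} bits $y_1,\ldots,y_m$ (which record our position within the traversal). Transitions in the intended path alternate between flipping an $x$-variable (to advance the counter by one Gray-code step) and flipping a $y$-variable (to advance the phase so that the next counter flip is unambiguously determined). Each compatibility condition between $x$ and $y$ that I need to enforce involves only $O(1)$ variables and can therefore be expressed by a constant number of clauses of at most three literals. Because the recursive structure of the reflected Gray code has depth $m$, there are $O(m)$ such condition groups, giving a total clause count of $O(m^2) = O(n^2)$.

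Once the formula is written down, two things need to be verified: (a) the intended sequence $\boldsymbol{s}_0,\boldsymbol{s}_1,\ldots,\boldsymbol{s}_L$ really is contained in $[\phi_n]$ with consecutive terms at Hamming distance $1$, and $L > 2^{n/2}$; and (b) every edge of $G(\phi_n)$ is of the form $\{\boldsymbol{s}_i,\boldsymbol{s}_{i+1}\}$, so that no shortcuts exist. Part (a) is a routine induction on $m$, mirroring the recursive construction of the reflected Gray code, and gives the length bound $L \ge 2^m = 2^{n/2}$ directly.

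Part (b) is the main obstacle. One must show that for any solution $\boldsymbol{a}$ and any single-coordinate flip $\boldsymbol{a}'$, either $\boldsymbol{a}'$ is the unique intended successor or predecessor of $\boldsymbol{a}$, or some phase-consistency clause is violated in $\boldsymbol{a}'$. I would handle this by a case analysis on which variable has been flipped: flipping an $x$-bit that does not match the index currently encoded by $\boldsymbol{y}$ violates a counter/phase compatibility clause, and flipping a $y$-bit out of turn violates a phase-ordering clause. Careful bookkeeping, driven by the recursive definition of the Gray code, confirms that exactly one $x$-flip and one $y$-flip remain admissible at each non-endpoint solution, so $G(\phi_n)$ is a path. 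Combined with the length bound from (a), this proves the lemma.
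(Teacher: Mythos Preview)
Your high-level strategy---Gray-code counter bits plus auxiliary ``phase'' bits to suppress shortcuts---is exactly the construction Gopalan et al.\ use, and the present paper gives no further detail beyond ``by direct construction of such a formula.''

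One point needs tightening, however: your clause accounting is internally inconsistent. You assert that each compatibility condition involves only $O(1)$ variables and that there are $O(m)$ condition groups; that would yield $O(m)$ clauses, not $O(m^2)$. In the actual construction the quadratic bound is genuine and necessary: when the $i$-th level $(x_i,y_i)$ is adjoined in the recursion, one must add clauses relating $(x_i,y_i)$ to \emph{each} of the lower-indexed pairs in order to rule out shortcut edges, contributing $\Theta(i)$ new clauses at level $i$ and hence $\Theta(m^2)$ in total. Correspondingly, the description ``phase bits record our position within the traversal'' is too loose---the $y_i$ do not encode a step counter but a local state that, \emph{together with all lower-indexed variables}, pins down which single bit may flip next; this non-locality is exactly where the $O(n^2)$ clause count comes from. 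With that correction, your plan goes through and matches the referenced proof.
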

The proof of this lemma is by direct construction of such a formula.

\section{\label{sub:struc}Extension of PSPACE-Completeness: Structural~Expressibility}

To show that PSPACE-hardness and exponential diameter extend to all
not (safely) tight sets of relations, Gopalan et al.~used the concept
of structural expressibility, which is a modification of Schaefer's
``representability'' that he used for his dichotomy theorem\footnote{While Schaefer\textquoteright s dichotomy theorem and many related
complexity classifications can also be proved using Post's classification
of all closed classes of Boolean functions and a Galois correspondence
(see e.g. \citep{kolaitis5250complexity}), this seems not possible
for our connectivity problems: The boundaries here ``cut across Boolean
clones'' (more exactly: co-clones), as already Gopalan et al.\ noted
\citep{gop}. For example, the co-clone of both $R=\{100,010,001\}$
and $R'=\{100,010,001,110,101\}$ is $\mathsf{I_{2}}$, but $R$ is
safely OR-free and thus tight, while $R'$ is not safely tight.}, so let us have a quick look at this first:
\begin{thm}[Schaefer\textquoteright s dichotomy theorem \citep{Schaefer:1978:CSP:800133.804350}]
\label{thm:sch}Let $\mathcal{S}$ be a finite set of logical relations.
\begin{enumerate}
\item If $\mathcal{S}$ is Schaefer, then \noun{Sat}\emph{\textsubscript{C}($\mathcal{S}$)}
is in P; otherwise, \noun{Sat}\emph{\textsubscript{C}($\mathcal{S}$)}
is \noun{NP}-complete.
\item If $\mathcal{S}$ is 0-valid, 1-valid, or Schaefer, then \noun{Sat}\emph{($\mathcal{S}$)}
is in P; otherwise, \noun{Sat}\emph{($\mathcal{S}$)} is \noun{NP}-complete.\footnote{Here we assume that $\mathcal{S}$ contains no empty relations, see
\prettyref{sec:No-Constants}.}
\end{enumerate}
\end{thm}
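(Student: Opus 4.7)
The plan is to prove each part by matching upper and lower bounds, treating CNF\textsubscript{C}($\mathcal{S}$)- and CNF($\mathcal{S}$)-formulas in parallel. For the tractability side, for each of the four Schaefer classes I would exhibit a polynomial-time decision procedure: for bijunctive $\mathcal{S}$, rewrite each constraint as an equivalent 2-CNF (which exists by \prettyref{lem:clos}.1) and apply the standard implication-graph 2-SAT algorithm; for Horn or dual Horn, rewrite as a Horn (dual Horn) CNF-formula and decide by unit propagation; for affine, translate into a system of linear equations over $\mathrm{GF}(2)$ and apply Gaussian elimination. For the constant-free variant, the two additional tractable cases are immediate: if $\mathcal{S}$ is 0-valid (resp.\ 1-valid), the all-zero (resp.\ all-one) assignment satisfies every constraint that can be built from $\mathcal{S}$.

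For the hardness side, the target is to show that whenever $\mathcal{S}$ fails all the listed tractability conditions, one can primitive-positively define from $\mathcal{S}$ (using constants in Part~1) the relations needed to reduce a known \noun{NP}-complete problem such as 3-\noun{Sat}, \noun{Nae}-3-\noun{Sat}, or 1-in-3-\noun{Sat} to \noun{Sat}\textsubscript{C}($\mathcal{S}$). This is Schaefer's original ``representability'' argument. One case-splits on which closure property fails: for instance, non-Horn means there is some $R\in\mathcal{S}$ admitting tuples $\boldsymbol{a},\boldsymbol{b}\in R$ with $\boldsymbol{a}\wedge\boldsymbol{b}\notin R$; combining such witnesses coming from the several simultaneous failures (and pinning down irrelevant coordinates to suitable constants) one builds the required canonical hard relations.

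The main obstacle is precisely this case analysis: verifying that the \emph{joint} failure of all tractability conditions truly suffices to express one of the canonical \noun{NP}-complete relations. A cleaner, more modern route, which I would actually prefer to follow, is to appeal to Post's lattice and the Galois correspondence between relational co-clones and clones of Boolean functions: this reduces the task to identifying the co-clones lying strictly above all Schaefer classes and to verifying \noun{NP}-hardness on a single representative relation per minimal such co-clone. For the constant-free Part~2 there is an additional difficulty, since substitution of the constants 0 and 1 is no longer permitted; here one uses the two extra hypotheses (not 0-valid, not 1-valid) to produce from relations in $\mathcal{S}$ themselves tuples that can play the role of constants in the reduction, for example by projecting onto coordinates where a given relation is forced to be 0 or 1.
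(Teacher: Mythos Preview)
Your proposal is correct and aligns with the paper's treatment. Note that the paper does not give its own proof of this theorem---it is cited from Schaefer's 1978 paper---and only sketches the hardness direction: Schaefer's notion of \emph{expressibility} (called ``representability'' in the original), the fact that every Boolean relation is expressible from any non-Schaefer $\mathcal{S}$, and the resulting reduction from 3-\noun{Sat}. Your proposal covers exactly this, with more detail on the tractability side (which the paper omits entirely as well-known). Your suggested alternative via Post's lattice and the Galois correspondence is also explicitly acknowledged by the paper in a footnote as a valid route to Schaefer's dichotomy, though the paper notes it does not extend to the connectivity problems studied later.
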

Schaefer first proved statement 1, and from that derived the no-constants
version; we here discuss only the proof statement 1.

Schaefer used a reduction from satisfiability of 3-CNF-formulas, i.e.\ CNF\textsubscript{C}($\mathcal{S}_{3}$)-formulas
(see \prettyref{def:cnf-1}), which was already known to be NP-complete
by the Cook\textendash Levin theorem. Therefor, he exploited that
any existentially quantified formula is satisfiability-equivalent
to the formula with the quantifiers removed, and introduced the notion
of \emph{representability}, that became also know as \emph{expressibility}:
\begin{defn}
\label{def:expr}A relation $R$ is \emph{expressible} from a set
of relations $\mathcal{S}$ if there is a CNF\textsubscript{C}($\mathcal{S}$)-formula
$\phi(\boldsymbol{x},\boldsymbol{y})$ such that $R=\{\boldsymbol{a}|\exists\boldsymbol{y}\phi(\boldsymbol{a},\boldsymbol{y})\}$.
\end{defn}
He then showed that every Boolean relation is expressible from any
set of relations that is not Schaefer, and that this expression can
efficiently be constructed.

With this, it is easy to see that for every non-Schaefer set $\mathcal{S}$,
satisfiability of any CNF\textsubscript{C}($\mathcal{S}_{3}$)-formula
$\psi$ can be reduced to satisfiability of a CNF\textsubscript{C}($\mathcal{S}$)-formula,
constructed as follows: Replace in $\psi$ every constraint $R(\boldsymbol{\xi})$
by $\phi(\boldsymbol{\xi},\boldsymbol{y})$ with $\phi$ from \prettyref{def:expr},
and new variables $\boldsymbol{y}$, distinct for each constraint.

As Gopalan et al.~explain in section 3.1 of \citep{gop}, for the
connectivity problems, expressibility is not sufficient; therefore,
they introduced \emph{structural expressibility}:
\begin{defn}
A relation $R$ is \emph{structurally expressible} from a set of relations
$\mathcal{S}$ if there is a CNF\textsubscript{C}($\mathcal{S}$)-formula
$\phi$ such that the following conditions hold:
\begin{enumerate}
\item $R=\{\boldsymbol{a}|\exists\boldsymbol{y}\phi(\boldsymbol{a},\boldsymbol{y})\}$.
\item For every $\boldsymbol{a}\in R$, the graph $G(\phi(\boldsymbol{a},\boldsymbol{y}))$
is connected.
\item For $\boldsymbol{a},\boldsymbol{b}\in R$ with $|\boldsymbol{a}-\boldsymbol{b}|=1$,
there exists a \emph{witness} $\boldsymbol{w}$ such that $(\boldsymbol{a},\boldsymbol{w})$
and $(\boldsymbol{b},\boldsymbol{w})$ are solutions of $\phi$.
\end{enumerate}
\end{defn}
Gopalan et al.~now argued that connectivity were retained when replacing
every constraint $R$ with a structural expression of $R$ in a CNF\textsubscript{C}($\mathcal{S}$)-formula.
In fact, this is only true for CNF\textsubscript{C}($\mathcal{S}$)-formulas
where no variable is used more than once in any constraint, and their
proof is only correct for such formulas that also use no constants:
\begin{lem}
\label{lem:l32}\emph{\citep[corrected from][Lemma 3.2]{gop}} ~Let
$\mathcal{S}$ and $\mathcal{S}'$ be sets of relations such that
every $R\in\mathcal{S}'$ is structurally expressible from $\mathcal{S}$.
Given a CNF($\mathcal{S}'$)-formula $\psi(\boldsymbol{x})$ (without
constants), where no variable is used more than once in any constraint,
one can efficiently construct a CNF\textsubscript{C}($\mathcal{S}$)-formula
$\varphi(\boldsymbol{x},\boldsymbol{y})$ such that
\begin{enumerate}
\item $\psi(\boldsymbol{x})=\exists\boldsymbol{y}\varphi(\boldsymbol{x},\boldsymbol{y})$;
\item if $(\boldsymbol{s}\boldsymbol{,w^{s}}),$ $(\boldsymbol{t},\boldsymbol{w^{t}})$
are connected in $G(\varphi)$ by a path of length $d$, then there
is a path from $\boldsymbol{s}$ to $\boldsymbol{t}$ in $G(\psi)$
of length at most $d$;
\item if $\boldsymbol{s},\boldsymbol{t}\in\psi$ are connected in $G(\psi)$,
then for every witness $\boldsymbol{w^{s}}$ of $\boldsymbol{s}$,
and every witness $\boldsymbol{w^{t}}$ of $\boldsymbol{t}$, there
is a path from $(\boldsymbol{s}\boldsymbol{,w^{s}})$ to $(\boldsymbol{t},\boldsymbol{w^{t}})$
in $G(\varphi)$.
\end{enumerate}
\end{lem}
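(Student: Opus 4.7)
The plan is to construct $\varphi$ by locally replacing each constraint of $\psi$ with its structural expression using disjoint auxiliary variables, and then to transfer paths in $G(\psi)$ to $G(\varphi)$ constraint by constraint, exploiting the decoupling of the auxiliary blocks. Concretely, for every constraint $C_i = R(x_{i_1},\ldots,x_{i_k})$ of $\psi$, fix a structural expression $\phi_R(\boldsymbol{z},\boldsymbol{y})$ of $R$ from $\mathcal{S}$ and replace $C_i$ by $\phi_R(x_{i_1},\ldots,x_{i_k},\boldsymbol{y}_i)$, where $\boldsymbol{y}_i$ is a fresh block of auxiliary variables used for no other constraint. Let $\boldsymbol{y}$ be the concatenation of all $\boldsymbol{y}_i$, and let $\varphi(\boldsymbol{x},\boldsymbol{y})$ be the conjunction of all these replacements. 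Because $\psi$ contains no constants and no variable appears twice in any single $C_i$, the indices $x_{i_1},\ldots,x_{i_k}$ are distinct, so the relation corresponding to the substituted constraint is exactly the image of $\phi_R$ in its first $k$ coordinates, which is $R$ by item~1 of structural expressibility. Statement~1 of the lemma then follows at once: projecting out $\boldsymbol{y}$ from $\varphi$ gives the conjunction of the $R(x_{i_1},\ldots,x_{i_k})$, i.e.\ $\psi$.

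For Statement~2, I would simply project any path $(\boldsymbol{a}^0,\boldsymbol{w}^0),\ldots,(\boldsymbol{a}^d,\boldsymbol{w}^d)$ in $G(\varphi)$ onto its $\boldsymbol{x}$-coordinates and collapse consecutive duplicates. Each step flips a single coordinate, hence either changes an $\boldsymbol{x}$-coordinate (giving one step of the projected path) or changes a $\boldsymbol{y}$-coordinate (giving no step at all). Because each $(\boldsymbol{a}^j,\boldsymbol{w}^j)\in[\varphi]$ projects to $\boldsymbol{a}^j\in[\psi]$ by Statement~1, the resulting sequence is a walk of length at most $d$ in $G(\psi)$ from $\boldsymbol{s}$ to $\boldsymbol{t}$.

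The heart of the argument is Statement~3, which I would prove via two intermediate claims. \emph{Claim A:} for any fixed $\boldsymbol{s}\in[\psi]$ and any two witnesses $\boldsymbol{w},\boldsymbol{w}'$ of $\boldsymbol{s}$, the vertices $(\boldsymbol{s},\boldsymbol{w})$ and $(\boldsymbol{s},\boldsymbol{w}')$ are connected in $G(\varphi)$. This uses the crucial fact that the $\boldsymbol{y}_i$-blocks are disjoint and no $\boldsymbol{x}$-variable is identified inside a constraint: fixing $\boldsymbol{x}=\boldsymbol{s}$ decomposes $\varphi$ into the independent conjuncts $\phi_R(\boldsymbol{s}|_{C_i},\boldsymbol{y}_i)$, each of which has a connected solution graph by item~2 of structural expressibility, so we can reroute each $\boldsymbol{w}_i$ to $\boldsymbol{w}'_i$ independently while leaving the other blocks and $\boldsymbol{s}$ untouched. \emph{Claim B:} if $\boldsymbol{s},\boldsymbol{s}'\in[\psi]$ differ in the single variable $x_j$, then $(\boldsymbol{s},\boldsymbol{w})$ and $(\boldsymbol{s}',\boldsymbol{w}')$ are connected for any witnesses. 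For every constraint $C_i$ containing $x_j$, item~3 of structural expressibility yields a common witness $\boldsymbol{y}_i^\ast$ with $(\boldsymbol{s}|_{C_i},\boldsymbol{y}_i^\ast),(\boldsymbol{s}'|_{C_i},\boldsymbol{y}_i^\ast)\in[\phi_R]$; here the no-repetition assumption is vital, because it guarantees $\boldsymbol{s}|_{C_i}$ and $\boldsymbol{s}'|_{C_i}$ differ in exactly one position, which is what item~3 requires. For constraints not involving $x_j$, the existing witness $\boldsymbol{w}_i$ already suffices. Applying Claim~A first rewires $\boldsymbol{w}$ to the composite witness $\boldsymbol{w}^\ast$, a single flip of $x_j$ then moves us to $(\boldsymbol{s}',\boldsymbol{w}^\ast)$, and Claim~A applied again rewires $\boldsymbol{w}^\ast$ to $\boldsymbol{w}'$. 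Statement~3 now follows by straightforward induction along any $\psi$-path from $\boldsymbol{s}$ to $\boldsymbol{t}$.

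The main obstacle I anticipate is bookkeeping around the two syntactic restrictions on $\psi$: Claim~A collapses if an $\boldsymbol{x}$-variable is reused within one $C_i$ (because then the $\boldsymbol{y}_i$-block would no longer vary over all of $[\phi_R(\boldsymbol{s}|_{C_i},\cdot)]$ but only over a restricted slice whose connectivity is not promised by structural expressibility), and Claim~B collapses if constants occur, for the analogous reason that structural expressibility guarantees witnesses only for pairs $\boldsymbol{a},\boldsymbol{b}\in R$, not for their projections onto constant-fixed slices. Making this separation between syntactic form and the guarantees of Definition of structural expressibility explicit is what distinguishes the corrected statement from the original one in~\citep{gop}, and it is the only subtle point; the remaining steps are routine path manipulations in $G(\varphi)$.
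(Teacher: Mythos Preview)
Your proposal is correct and follows essentially the same approach as the paper: build $\varphi$ as the conjunction of the structural expressions with disjoint auxiliary blocks, obtain Statement~2 by projection, and prove Statement~3 by alternating witness-rerouting (your Claim~A, the paper's product-space argument via condition~2) with single $\boldsymbol{x}$-flips supported by common witnesses (your Claim~B, the paper's use of condition~3). Your write-up is in fact a bit more careful than the paper's in isolating where the no-repeated-variables hypothesis enters; one small caveat is that your final remark about constants is slightly too pessimistic, since the paper notes (just after the proof) that the statement can in fact be extended to allow constants in $\psi$, even though that extension is not needed.
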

In Gopalan et al.'s proof, we only clarify the notation a little:
\begin{proof}
Let $\psi(\boldsymbol{x})=C_{1}\wedge\cdots\wedge C_{m}$ with $C_{j}=R_{j}(\boldsymbol{x}_{j})$,
where $R_{j}$ is some relation from $\mathcal{S}'$, and $\boldsymbol{x}_{j}$
is the vector of variables to which $R_{j}$ is applied. Let $\varphi_{j}$
be the structural expression for $R_{j}$ from ${\cal S}$, so that
$R_{j}(\boldsymbol{x}_{j})\equiv\exists\boldsymbol{y}_{j}~\varphi_{j}(\boldsymbol{x}_{j},\boldsymbol{y}_{j})$.
Let $\boldsymbol{y}$ be the vector $(\boldsymbol{y}_{1},\dots,\boldsymbol{y}_{m})$
and let $\varphi(\boldsymbol{x},\boldsymbol{y})$ be the formula $\wedge_{j=1}^{m}\varphi_{j}(\boldsymbol{x}_{j},\boldsymbol{y}_{j})$.
Then $\psi(\boldsymbol{x})\equiv\exists\boldsymbol{y}~\varphi(\boldsymbol{x},\boldsymbol{y})$.

Statement $2$ follows from 1 by projection of the path on the coordinates
of $\boldsymbol{x}$. For statement 3, consider $\boldsymbol{s},\boldsymbol{t}\in\psi$
that are connected in $G(\psi)$ via a path $\boldsymbol{s}=\boldsymbol{u^{0}}\rightarrow\boldsymbol{u^{1}}\rightarrow\dots\rightarrow\boldsymbol{u^{r}}=\boldsymbol{t}$
. For every $\boldsymbol{u^{i}},\boldsymbol{u^{i+1}}$, and clause
$C_{j}$, there exists an assignment $\boldsymbol{w_{j}^{i}}$ to
$\boldsymbol{y}_{j}$ such that both $(\boldsymbol{u_{j}^{i}},\boldsymbol{w_{j}^{i}})$
and $(\boldsymbol{u_{j}^{i+1}},\boldsymbol{w_{j}^{i}})$ are solutions
of $\varphi_{j}$, by condition 3 of structural expressibility. Thus
$(\boldsymbol{u^{i}},\boldsymbol{w^{i}})$ and $(\boldsymbol{u^{i+1}},\boldsymbol{w^{i}})$
are both solutions of $\varphi$, where $\boldsymbol{w^{i}}=(\boldsymbol{w_{1}^{i}},\dots,\boldsymbol{w_{m}^{i}})$.
Further, for every $\boldsymbol{u^{i}}$, the space of solutions of
$\varphi(\boldsymbol{u^{i}},\boldsymbol{y})$ is the product space
of the solutions of $\varphi_{j}(\boldsymbol{u_{j}^{i}},\boldsymbol{y}_{j})$
over $j=1,\dots,m$. Since these are all connected by condition 2
of structural expressibility, $G(\varphi(\boldsymbol{u^{i}},\boldsymbol{y}))$
is connected. The following describes a path from $(\boldsymbol{s},\boldsymbol{w^{s}})$
to $(\boldsymbol{t},\boldsymbol{w^{t}})$ in $G(\varphi)$: ~$(\boldsymbol{s},\boldsymbol{w^{s}})\rightsquigarrow(\boldsymbol{s},\boldsymbol{w^{0}})\rightarrow(\boldsymbol{u^{1}},\boldsymbol{w^{0}})\rightsquigarrow(\boldsymbol{u^{1}},\boldsymbol{w^{1}})\rightarrow\dots\rightsquigarrow(\boldsymbol{u^{r-1}},\boldsymbol{w^{r-1}})\rightarrow(\boldsymbol{t},\boldsymbol{w^{r-1}})\rightsquigarrow(\boldsymbol{t},\boldsymbol{w^{t}})$.
Here $\rightsquigarrow$ indicates a path in $G(\varphi(\boldsymbol{u^{i}},\boldsymbol{y}))$.
\end{proof}
It is easy to show that the statement of \prettyref{lem:l32} is also
correct if we allow constants in $\psi$; however, we don't need this
result. In \citep{csp}, we explain in detail the problem with repeated
variables in constraint applications.

We have to change Gopalan et al.'s corollary accordingly; we denote
the connectivity problems for CNF($\mathcal{S}$)-formulas without
repeated variables in constraints (and without constants) by the subscript
\textsubscript{ni}:
\begin{cor}
\label{cor:c33}\emph{\citep[corrected from][Corollary 3.3]{gop}}
~Suppose $\mathcal{S}$ and $\mathcal{S}'$ are sets of relations
such that every $R\in\mathcal{S}'$ is structurally expressible from
$\mathcal{S}$.
\begin{enumerate}
\item There are polynomial-time reductions from \noun{Conn}\emph{\textsubscript{ni}($\mathcal{S}$')}
to \noun{Conn}\emph{\textsubscript{C}($\mathcal{S}$)}, and from
\noun{st-Conn}\emph{\textsubscript{ni}($\mathcal{S}$')} to \noun{st-Conn}\emph{\textsubscript{C}($\mathcal{S}$)}.
\item If there exists a \emph{CNF\textsubscript{ni}($\mathcal{S}$')}-formula
$\psi(\boldsymbol{x})$ with $n$ variables, $m$ clauses and diameter
$d$, then there exists a \emph{CNF\textsubscript{C}($\mathcal{S}$)}-formula
$\phi(\boldsymbol{x},\boldsymbol{y})$, where $\boldsymbol{y}$ is
a vector of $O(m)$ variables, such that the diameter of $G(\phi)$
is at least $d$.
\end{enumerate}
\end{cor}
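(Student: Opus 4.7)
The corollary should follow almost immediately from Lemma~\ref{lem:l32}; the task is mainly to package its three consequences into the two reduction statements and a diameter bound. First I would fix, for each $R\in\mathcal{S}'$, a structural expression $\varphi_R(\boldsymbol{x},\boldsymbol{y})$ over $\mathcal{S}$, and observe that $\varphi_R$ is a fixed CNF\textsubscript{C}($\mathcal{S}$)-formula whose auxiliary variable block $\boldsymbol{y}$ has some fixed arity $k_R$. Given a CNF\textsubscript{ni}($\mathcal{S}'$)-formula $\psi$ with $m$ constraints, I would construct $\varphi(\boldsymbol{x},\boldsymbol{y})$ exactly as in the proof of Lemma~\ref{lem:l32}, replacing each $C_j=R_j(\boldsymbol{x}_j)$ by $\varphi_{R_j}(\boldsymbol{x}_j,\boldsymbol{y}_j)$ on fresh variables. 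This produces $\psi(\boldsymbol{x})\equiv\exists\boldsymbol{y}\,\varphi(\boldsymbol{x},\boldsymbol{y})$, adds $\sum_j k_{R_j}=O(m)$ variables, and runs in polynomial time.

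\textbf{Reducing \noun{st-Conn}\textsubscript{ni}$(\mathcal{S}')$.} Given an instance $(\psi,\boldsymbol{s},\boldsymbol{t})$, the reduction outputs $(\varphi,(\boldsymbol{s},\boldsymbol{w^s}),(\boldsymbol{t},\boldsymbol{w^t}))$ where $\boldsymbol{w^s},\boldsymbol{w^t}$ are arbitrary witnesses. To find such witnesses efficiently, note that for each constraint we must only satisfy $\varphi_{R_j}(\boldsymbol{s}_j,\boldsymbol{y}_j)$, which has at most $2^{k_{R_j}}=O(1)$ candidate assignments and can be searched exhaustively; condition~1 of structural expressibility guarantees that such an assignment exists. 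Correctness is then immediate from Lemma~\ref{lem:l32}: condition~3 yields that connectivity in $G(\psi)$ lifts to connectivity of these witnessed extensions in $G(\varphi)$, and condition~2 yields that a path in $G(\varphi)$ projects to a path in $G(\psi)$ of no greater length.

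\textbf{Reducing \noun{Conn}\textsubscript{ni}$(\mathcal{S}')$ and bounding the diameter.} For \noun{Conn}\textsubscript{ni}, the reduction simply outputs $\varphi$. If $G(\psi)$ is connected, then any two solutions of $\varphi$ project to solutions of $\psi$ that are connected in $G(\psi)$, hence their chosen witnessed extensions are connected in $G(\varphi)$ by condition~3 (and inside a fiber over a fixed $\boldsymbol{u}$, $G(\varphi(\boldsymbol{u},\boldsymbol{y}))$ is connected by condition~2 of structural expressibility, used inside Lemma~\ref{lem:l32}). Conversely, if $G(\varphi)$ is connected, then any $\boldsymbol{s},\boldsymbol{t}\in[\psi]$ lift to solutions of $\varphi$, and their connecting path projects via condition~2 to a path in $G(\psi)$. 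For the diameter claim, take $\boldsymbol{s},\boldsymbol{t}$ realising diameter $d$ in some component of $G(\psi)$; by condition~3 any two witnessed extensions lie in one component of $G(\varphi)$, and by condition~2 any path joining them in $G(\varphi)$ must have length at least $d_\psi(\boldsymbol{s},\boldsymbol{t})=d$, so the diameter of $G(\varphi)$ is at least $d$. The auxiliary variable count is $O(m)$ as noted.

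\textbf{Main obstacle.} There is essentially no genuine difficulty beyond Lemma~\ref{lem:l32}; the only subtleties are making explicit that (i) witnesses can be computed in polynomial time, which as above reduces to a constant-size search per constraint since every $\varphi_R$ is fixed once $\mathcal{S}$ and $\mathcal{S}'$ are, and (ii) the hypothesis of Lemma~\ref{lem:l32} (namely, that no variable repeats in any constraint of $\psi$ and no constants occur in $\psi$) is met automatically here, which is exactly what the subscript \textsubscript{ni} encodes. Once those two bookkeeping points are dispatched, the three statements of Lemma~\ref{lem:l32} translate directly into the three assertions of the corollary.
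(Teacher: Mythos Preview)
Your proposal is correct and follows exactly the intended approach: the paper treats this corollary as an immediate consequence of Lemma~\ref{lem:l32} and does not spell out a separate proof, so your packaging of conditions~1--3 into the two reductions and the diameter bound is precisely what is expected. The bookkeeping points you flag (constant-size witness search per constraint, and that the \textsubscript{ni} hypothesis matches the hypothesis of Lemma~\ref{lem:l32}) are the only things to check, and you handle them correctly.
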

Since 3-CNF-fomulas are CNF\textsubscript{ni}($\mathcal{S}_{1}\cup\mathcal{S}_{2}\cup\mathcal{S}_{3}$)-formulas,
for the reductions to work it now remains to prove that $\mathcal{S}_{1}\cup\mathcal{S}_{2}\cup\mathcal{S}_{3}$
is structurally expressible from any not safely tight set. As \prettyref{thm:se}
below shows, in fact every Boolean relation is structurally expressible
from any such set. The long proof of the next lemma contains only
minor modifications from \citep{gop}.
\begin{lem}
\label{lem:l34}\emph{\citep[modified from][Lemma 3.4]{gop} }~If
a set $\mathcal{S}$ of relations is not safely tight, $\mathcal{S}_{3}$
is structurally expressible from $\mathcal{S}$.\end{lem}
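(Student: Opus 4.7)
The plan is to structurally express each 3-clause in $\mathcal{S}_{3}=\{D_{0},D_{1},D_{2},D_{3}\}$ from $\mathcal{S}$, following the strategy of Gopalan et al.\ but adapting the gadget extractions to use identification of variables wherever the original argument used only substitution of constants; this is the only change required because the hypothesis is now \emph{not safely tight} rather than \emph{not tight}.

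First, since $\mathcal{S}$ is not safely tight, I would pick (not necessarily distinct) witness relations $R_{1},R_{2},R_{3}\in\mathcal{S}$ such that $R_{1}$ is not safely componentwise bijunctive, $R_{2}$ is not safely OR-free, and $R_{3}$ is not safely NAND-free. Unfolding the definitions in \prettyref{def:typ}, from $R_{2}$ one obtains, by an appropriate identification of arguments followed by substitution of constants, a single constraint application equivalent to $x\vee y$; analogously $R_{3}$ yields a single constraint equivalent to $\mathrm{NAND}(x,y)$. These single-constraint formulas structurally express OR and NAND trivially, since they have no auxiliary variables and so conditions (2) and (3) of structural expressibility are vacuous.

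Next comes the heart of the argument: extracting a structural-expressibility gadget from $R_{1}$. By identification of variables alone, I would obtain from $R_{1}$ a relation $R_{1}'$ that is not componentwise bijunctive, so its solution graph contains a connected component and a triple $\boldsymbol{a},\boldsymbol{b},\boldsymbol{c}$ in that component with $\mathrm{maj}(\boldsymbol{a},\boldsymbol{b},\boldsymbol{c})\notin R_{1}'$. Following the original proof, I would then select a \emph{minimal} such triple lying on shortest paths through the component; this gives a gadget whose existentially quantified projection realises a ``forced-detour'' relation, and whose fiber over any assignment of the public variables is connected by the minimality choice. All combinatorial arguments of the original proof go through because replacing ``constant $0$ or $1$ at a coordinate'' with ``identification with another coordinate'' does not disturb the Hamming-walk bookkeeping, which is the point of the word ``safely'' in the definitions.

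Finally, the OR, NAND and $R_{1}$-gadgets are assembled as in Gopalan et al.: OR together with NAND delivers (structural) expressions of negation and equality, which combined with the $R_{1}$-gadget realise each clause $D_{i}\in\mathcal{S}_{3}$. The main obstacle throughout is preserving structural expressibility, i.e.\ conditions (2) and (3) of \prettyref{def:expr} (that the fiber $G(\phi(\boldsymbol{a},\cdot))$ is connected and that edges of $G(R)$ lift to paths sharing a common witness); this is handled, as in the original, by using each auxiliary variable in exactly one sub-gadget so that the fiber is a product of connected fibers, and by taking the minimal non-bijunctive triple for the $R_{1}$-gadget so that single-bit flips on the public side always admit a shared witness. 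The bookkeeping is lengthy, but the substantive deviation from Gopalan et al.\ is confined to replacing their constant substitutions by variable identifications in the construction of the $R_{1}$-gadget, which is why only minor modifications to their long proof are needed.
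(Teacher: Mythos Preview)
Your overall framing is right (the only deviation from Gopalan et al.\ is that the three witness relations are first massaged via identification of variables before the original argument is applied), but your account of \emph{what} the original argument actually does is off, and the gap is substantive rather than cosmetic.

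First, the assembly step is wrong. You write that ``OR together with NAND delivers (structural) expressions of negation and equality, which combined with the $R_1$-gadget realise each clause $D_i$''. The paper does not express negation or equality, and in fact there is no way to structurally express $(x_1\vee x_2\vee x_3)$ by bolting negation or equality onto a single ``forced-detour'' gadget extracted from $R_1'$. What the paper does with OR and NAND is obtain all three 2-clauses (the third via reverse resolution), and these 2-clauses are then used repeatedly as \emph{side constraints} to sculpt the solution graph coming from $R_1'$: they cut away the short arc of a cycle (Step~1), they pin down a single shortest path of length $r+2$ (Step~2), and they permute literals to reach every relation in the family $\mathcal{P}$ (Step~3). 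Only after this rather delicate three-step distillation does one arrive at the fixed ``M'' relation, and the 3-clause is then obtained from $M$ via the Hearn--Demaine ``Protected OR'' gadget (Step~4), not via any negation/equality trick.

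Second, your description of the $R_1$-gadget (``select a minimal such triple \dots\ this gives a gadget whose existentially quantified projection realises a forced-detour relation'') is too compressed to carry the argument. The minimal triple only yields a \emph{relation in which some distance expands}; it is not yet a usable gadget. The isolation of a single path (Step~2), the projection down to three variables (Step~3), and the verification that all witnesses line up for structural expressibility at each stage are where most of the work lies, and none of this is covered by the minimality choice alone. Finally, a minor point: the modification is not ``replacing constant substitutions by variable identifications'' but \emph{prepending} an identification step to the existing substitutions; the proof still uses constants throughout.
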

\begin{proof}
First, observe that all $2$-clauses are structurally expressible
from ${\cal S}$. There exists $R\in{\cal S}$ which is not safely
OR-free, so we can express $(x_{1}\vee x_{2})$ by substituting constants
and identifying variables in $R$. Similarly, we can express $(\bar{x}_{1}\vee\bar{x}_{2})$
using a relation that is not safely NAND-free. The last 2-clause $(x_{1}\vee\bar{x}_{2})$
can be obtained from OR and NAND by a technique that corresponds to
reverse resolution. $(x_{1}\vee\bar{x}_{2})=\exists y~(x_{1}\vee y)\wedge(\bar{y}\vee\bar{x}_{2})$.
It is easy to see that this gives a structural expression. From here
onwards we assume that ${\cal S}$ contains all 2-clauses. The proof
now proceeds in four steps. First, we will express a relation in which
there exist two elements that are at graph distance larger than their
Hamming distance. Second, we will express a relation that is just
a single path between such elements. Third, we will express a relation
which is a path of length 4 between elements at Hamming distance 2.
Finally, we will express the 3-clauses.

\emph{Step 1. Structurally expressing a relation in which some distance
expands.}\\
 For a relation $R$, we say that the distance between $\boldsymbol{a}$
and $\boldsymbol{b}$ \emph{expands} if $\boldsymbol{a}$ and $\boldsymbol{b}$
are connected in $G(R)$, but $d_{R}(\boldsymbol{a},\boldsymbol{b})>|\boldsymbol{a}-\boldsymbol{b}|$.
Later on, we will show that no distance expands in safely componentwise
bijunctive relations. The same also holds true for the relation $R_{{\rm {NAE}}}=\{0,1\}^{3}\setminus\{000,111\}$,
which is not safely componentwise bijunctive. Nonetheless, we show
here that if $R$ is not safely componentwise bijunctive, then, by
adding $2$-clauses, we can structurally express a relation $Q$ in
which some distance expands. For instance, when $R=R_{{\rm NAE}}$,
then we can take $Q(x_{1},x_{2},x_{3})=R_{{\rm NAE}}(x_{1},x_{2},x_{3})\wedge(\bar{x_{1}}\vee\bar{x}_{3})$.
The distance between $\boldsymbol{a}=100$ and $\boldsymbol{b}=001$
in $Q$ expands. Similarly, in the general construction, we identify
$\boldsymbol{a}$ and $\boldsymbol{b}$ on a cycle, and add $2$-clauses
that eliminate all the vertices along the shorter arc between $\boldsymbol{a}$
and $\boldsymbol{b}$.

\begin{figure}[!h]
\begin{centering}
\label{fig:expand} \includegraphics[scale=0.66]{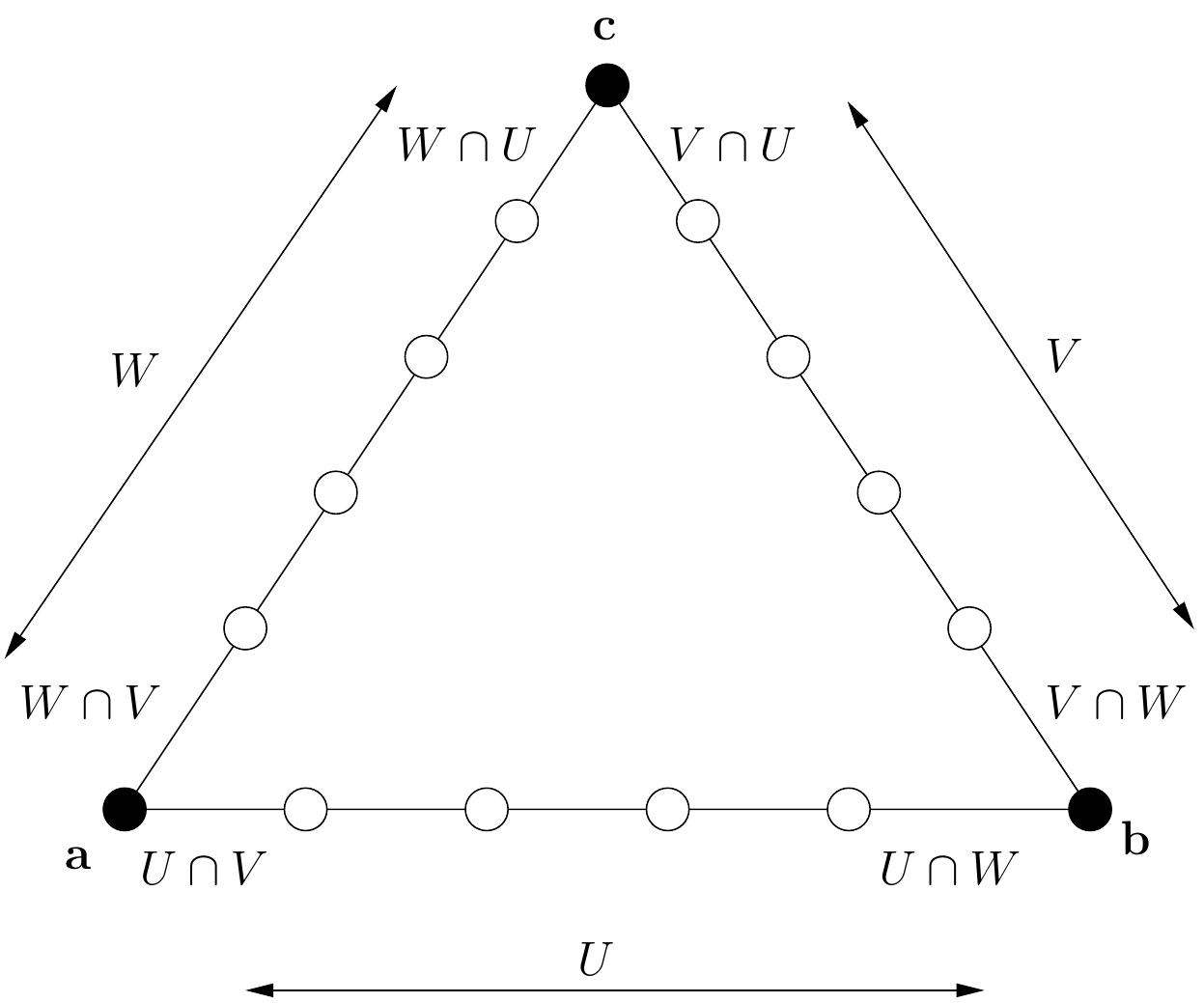}\vspace{0.6cm}
\\
\includegraphics[scale=0.66]{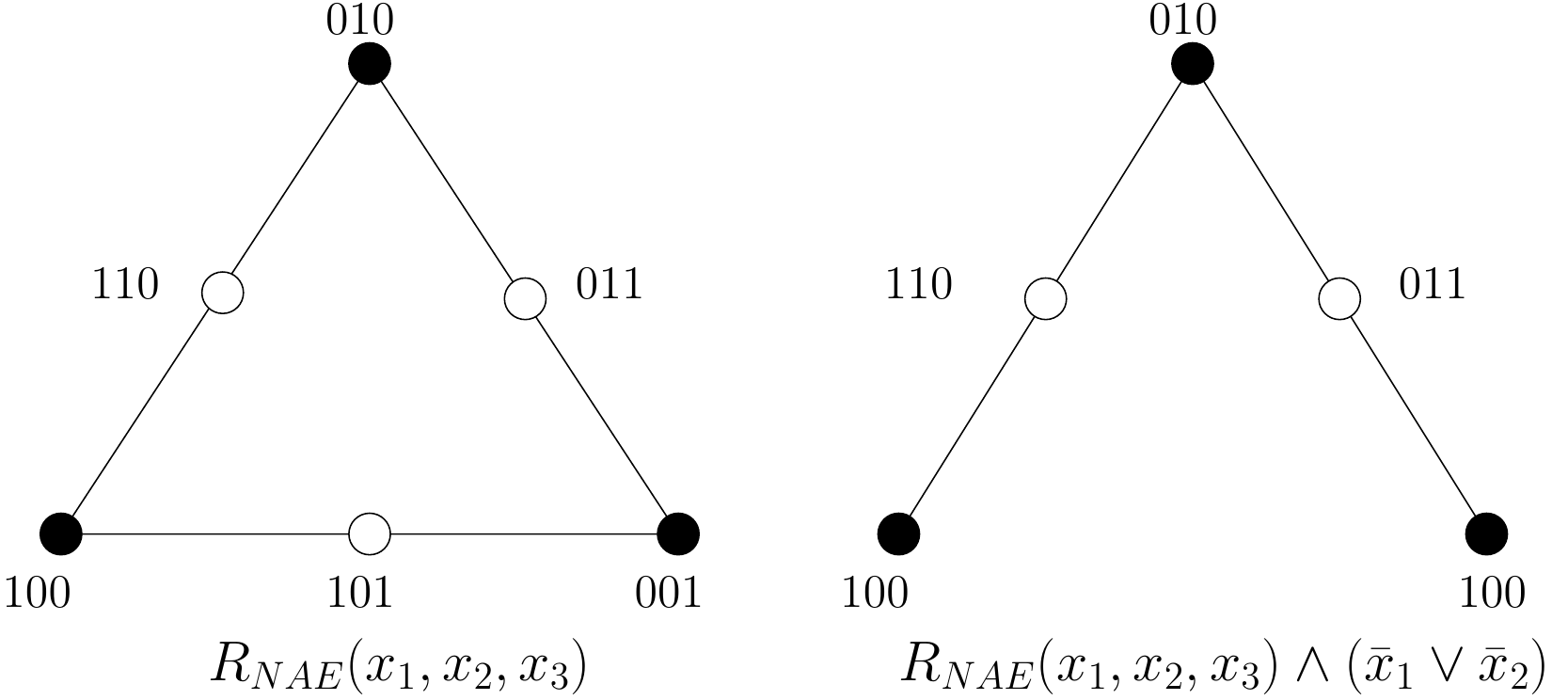}
\par\end{centering}

\centering{}\protect\caption[Proof of Step 1 of \prettyref{lem:l34}, and an example]{\emph{}Proof of Step 1, and an example.}
\end{figure}

Since ${\cal S}$ is not safely tight, it contains a relation which
is not safely componentwise bijunctive, from which we can obtain a
not componentwise bijunctive relation $R$. If $R$ contains $\boldsymbol{a},\boldsymbol{b}$
where the distance between them expands, we are done. So assume that
for all $\boldsymbol{a},\boldsymbol{b}\in G(R)$, $d_{R}(\boldsymbol{a},\boldsymbol{b})=|\boldsymbol{a}-\boldsymbol{b}|$.
Since $R$ is not componentwise bijunctive, there exists a triple
of assignments $\boldsymbol{a},\boldsymbol{b},\boldsymbol{c}$ lying
in the same component such that $\mathrm{Maj}(\boldsymbol{a},\boldsymbol{b},\boldsymbol{c})$
is not in that component (which also easily implies it is not in $R$).
Choose the triple such that the sum of pairwise distances $d_{R}(\boldsymbol{a},\boldsymbol{b})+d_{R}(\boldsymbol{b},\boldsymbol{c})+d_{R}(\boldsymbol{c},\boldsymbol{a})$
is minimized. Let $U=\{i|a_{i}\neq b_{i}\}$, $V=\{i|b_{i}\neq c_{i}\}$,
and $W=\{i|c_{i}\neq a_{i}\}$. Since $d_{R}(\boldsymbol{a},\boldsymbol{b})=|\boldsymbol{a}-\boldsymbol{b}|$,
a shortest path does not flip variables outside of $U$, and each
variable in $U$ is flipped exactly once. The same holds for $V$
and $W$. We note some useful properties of the sets $U,V,W$.
\begin{enumerate}
\item \emph{Every index $i\in U\cup V\cup W$ occurs in exactly two of $U,V,W$.}\\
 Consider going by a shortest path from $\boldsymbol{a}$ to $\boldsymbol{b}$
to $\boldsymbol{c}$ and back to $\boldsymbol{a}$. Every $i\in U\cup V\cup W$
is seen an even number of times along this path since we return to
$\boldsymbol{a}$. It is seen at least once, and at most thrice, so
in fact it occurs twice.
\item \emph{Every pairwise intersection $U\cap V,V\cap W$ and $W\cap U$
is non-empty}.\\
 Suppose the sets $U$ and $V$ are disjoint. From Property 1, we
must have $W=U\cup V$. But then it is easy to see that $\mathrm{Maj}(\boldsymbol{a},\boldsymbol{b},\boldsymbol{c})=\boldsymbol{b}$
which is in $R$. This contradicts the choice of $\boldsymbol{a},\boldsymbol{b},\boldsymbol{c}$.
\item \emph{The sets $U\cap V$ and $U\cap W$ partition the set $U$}.\\
 By Property $1$, each index of $U$ occurs in one of $V$ and $W$
as well. Also since no index occurs in all three sets $U,V,W$ this
is in fact a disjoint partition.
\item \emph{For each index $i\in U\cap W$, it holds that $\boldsymbol{a}\oplus\boldsymbol{e}_{i}\not\in R$.}\\
 Assume for the sake of contradiction that $\boldsymbol{a'}=\boldsymbol{a}\oplus\boldsymbol{e}_{i}\in R$.
Since $i\in U\cap W$ we have simultaneously moved closer to both
$\boldsymbol{b}$ and $\boldsymbol{c}$. Hence we have $d_{R}(\boldsymbol{a'},\boldsymbol{b})+d_{R}(\boldsymbol{b},\boldsymbol{c})+d_{R}(\boldsymbol{c},\boldsymbol{a'})<d_{R}(\boldsymbol{a},\boldsymbol{b})+d_{R}(\boldsymbol{b},\boldsymbol{c})+d_{R}(\boldsymbol{c},\boldsymbol{a})$.
Also $\mathrm{Maj}(\boldsymbol{a'},\boldsymbol{b},\boldsymbol{c})=\mathrm{Maj}(\boldsymbol{a},\boldsymbol{b},\boldsymbol{c})\not\in R$.
But this contradicts our choice of $\boldsymbol{a},\boldsymbol{b},\boldsymbol{c}$. 
\end{enumerate}
Property 4 implies that the shortest paths to $\boldsymbol{b}$ and
$\boldsymbol{c}$ diverge at $\boldsymbol{a}$, since for any shortest
path to $\boldsymbol{b}$ the first variable flipped is from $U\cap V$
whereas for a shortest path to $\boldsymbol{c}$ it is from $W\cap V$.
Similar statements hold for the vertices $\boldsymbol{b}$ and $\boldsymbol{c}$.
Thus along the shortest path from $\boldsymbol{a}$ to $\boldsymbol{b}$
the first bit flipped is from $U\cap V$ and the last bit flipped
is from $U\cap W$. On the other hand, if we go from $\boldsymbol{a}$
to $\boldsymbol{c}$ and then to $\boldsymbol{b}$, all the bits from
$U\cap W$ are flipped before the bits from $U\cap V$. We use this
crucially to define $Q$. We will add a set of 2-clauses that enforce
the following rule on paths starting at $\boldsymbol{a}$: \emph{Flip
variables from $U\cap W$ before variables from $U\cap V$.} This
will eliminate all shortest paths from $\boldsymbol{a}$ to $\boldsymbol{b}$
since they begin by flipping a variable in $U\cap V$ and end with
$U\cap W$. The paths from $\boldsymbol{a}$ to $\boldsymbol{b}$
via $\boldsymbol{c}$ survive since they flip $U\cap W$ while going
from $\boldsymbol{a}$ to $\boldsymbol{c}$ and $U\cap V$ while going
from $\boldsymbol{c}$ to $\boldsymbol{b}$. However all remaining
paths have length at least $|\boldsymbol{a}-\boldsymbol{b}|+2$ since
they flip twice some variables not in $U$.

Take all pairs of indices $\{(i,j)|i\in U\cap W,j\in U\cap V\}$.
The following conditions hold from the definition of $U,V,W$: $a_{i}=\bar{c}_{i}=\bar{b}_{i}$
and $a_{j}=c_{j}=\bar{b}_{j}$. Add the 2-clause $C_{ij}$ asserting
that the pair of variables $x_{i}x_{j}$ must take values in $\{a_{i}a_{j},c_{i}c_{j},b_{i}b_{j}\}=\{a_{i}a_{j},\bar{a}_{i}a_{j},\bar{a}_{i}\bar{a}_{j}\}$.
The new relation is $Q=R\wedge_{i,j}C_{ij}$. Note that $Q\subset R$.
We verify that the distance between $\boldsymbol{a}$ and $\boldsymbol{b}$
in $Q$ expands. It is easy to see that for any $j\in U$, the assignment
$\boldsymbol{a}\oplus\boldsymbol{e}_{j}\not\in Q$. Hence there are
no shortest paths left from $\boldsymbol{a}$ to $\boldsymbol{b}$.
On the other hand, it is easy to see that $\boldsymbol{a}$ and $\boldsymbol{b}$
are still connected, since the vertex $\boldsymbol{c}$ is still reachable
from both.

\emph{Step 2. Isolating a pair of assignments whose distance expands.}\\
The relation $Q$ obtained in Step 1 may have several disconnected
components. This \emph{cleanup} step isolates a single pair of assignments
whose distance expands. By adding $2$-clauses, we show that one can
express a path of length $r+2$ between assignments at distance $r$.

Take $\boldsymbol{a},\boldsymbol{b}\in Q$ whose distance expands
in $Q$ and $d_{Q}(\boldsymbol{a},\boldsymbol{b})$ is minimized.
Let $U=\{i|a_{i}\neq b_{i}\}$, and $|U|=r$. Shortest paths between
$\boldsymbol{a}$ and $\boldsymbol{b}$ have certain useful properties:
\begin{enumerate}
\item \emph{Each shortest path flips every variable from $U$ exactly once.}\\
 Observe that each index $j\in U$ is flipped an odd number of times
along any path from $\boldsymbol{a}$ to $\boldsymbol{b}$. Suppose
it is flipped thrice along a shortest path. Starting at $\boldsymbol{a}$
and going along this path, let $\boldsymbol{b'}$ be the assignment
reached after flipping $j$ twice. Then the distance between $\boldsymbol{a}$
and $\boldsymbol{b'}$ expands, since $j$ is flipped twice along
a shortest path between them in $Q$. Also $d_{Q}(\boldsymbol{a},\boldsymbol{b'})<d_{Q}(\boldsymbol{a},\boldsymbol{b})$,
contradicting the choice of $\boldsymbol{a}$ and $\boldsymbol{b}$.
\item \emph{Every shortest path flips exactly one variable $i\not\in U$.}
\\
 Since the distance between $\boldsymbol{a}$ and $\boldsymbol{b}$
expands, every shortest path must flip some variable $i\not\in U$.
Suppose it flips more than one such variable. Since $\boldsymbol{a}$
and $\boldsymbol{b}$ agree on these variables, each of them is flipped
an even number of times. Let $i$ be the first variable to be flipped
twice. Let $\boldsymbol{b'}$ be the assignment reached after flipping
$i$ the second time. It is easy to verify that the distance between
$\boldsymbol{a}$ and $\boldsymbol{b'}$ also expands, but $d_{Q}(\boldsymbol{a},\boldsymbol{b'})<d_{Q}(\boldsymbol{a},\boldsymbol{b})$.
\item \emph{The variable $i\not\in U$ is the first and last variable to
be flipped along the path.}\\
Assume the first variable flipped is not $i$. Let $\boldsymbol{a'}$
be the assignment reached along the path before we flip $i$ the first
time. Then $d_{Q}(\boldsymbol{a'},\boldsymbol{b})<d_{Q}(\boldsymbol{a},\boldsymbol{b})$.
The distance between $\boldsymbol{a'}$ and $\boldsymbol{b}$ expands
since the shortest path between them flips the variables $i$ twice.
This contradicts the choice of $\boldsymbol{a}$ and $\boldsymbol{b}$.
Assume $j\in U$ is flipped twice. Then as before we get a pair $\boldsymbol{a'},\boldsymbol{b'}$
that contradict the choice of $\boldsymbol{a},\boldsymbol{b}$. 
\end{enumerate}
Every shortest path between $\boldsymbol{a}$ and $\boldsymbol{b}$
has the following structure: first a variable $i\not\in U$ is flipped
to $\bar{a}_{i}$, then the variables from $U$ are flipped in some
order, finally the variable $i$ is flipped back to $a_{i}$.

Different shortest paths may vary in the choice of $i\not\in U$ in
the first step and in the order in which the variables from $U$ are
flipped. Fix one such path $T\subseteq Q$. Assume that $U=\{1,\dots,r\}$
and the variables are flipped in this order, and the additional variable
flipped twice is $r+1$. Denote the path by $\boldsymbol{a}\rightarrow\boldsymbol{u^{0}}\rightarrow\boldsymbol{u^{1}}\rightarrow\dots\rightarrow\boldsymbol{u^{r}}\rightarrow\boldsymbol{b}$.
Next we prove that we cannot flip the $r+1^{th}$ variable at an intermediate
vertex along the path.
\begin{enumerate}[resume]
\item \emph{For $1\leq j\leq r-1$ the assignment $\boldsymbol{u^{j}}\oplus\boldsymbol{e_{r+1}}\not\in Q$.}
Suppose that for some $j$, we have $\boldsymbol{c}=\boldsymbol{u^{j}}\oplus\boldsymbol{e_{r+1}}\in Q$.
Then $\boldsymbol{c}$ differs from $\boldsymbol{a}$ on $\{1,\dots,i\}$
and from $\boldsymbol{b}$ on $\{i+1,\dots,r\}$. The distance from
$\boldsymbol{c}$ to at least one of $\boldsymbol{a}$ or $\boldsymbol{b}$
must expand, else we get a path from $\boldsymbol{a}$ to $\boldsymbol{b}$
through $\boldsymbol{c}$ of length $|\boldsymbol{a}-\boldsymbol{b}|$
which contradicts the fact that this distance expands. However $d_{Q}(\boldsymbol{a},\boldsymbol{c})$
and $d_{Q}(\boldsymbol{b},\boldsymbol{c})$ are strictly less than
$d_{Q}(\boldsymbol{a},\boldsymbol{b})$ so we get a contradiction
to the choice of $\boldsymbol{a},\boldsymbol{b}$. 
\end{enumerate}
We now construct the path of length $r+2$. For all $i\geq r+2$ we
set $x_{i}=a_{i}$ to get a relation on $r+1$ variables. Note that
$\boldsymbol{b}=\bar{a}_{1}\dots\bar{a}_{r}a_{r+1}$. Take $i<j\in U$.
Along the path $T$ the variable $i$ is flipped before $j$ so the
variables $x_{i}x_{j}$ take one of three values $\{a_{i}a_{j},\bar{a}_{i}a_{j},\bar{a}_{i}\bar{a}_{j}\}$.
So we add a 2-clause $C_{ij}$ that requires $x_{i}x_{j}$ to take
one of these values and take $T=Q\wedge_{i,j}C_{ij}$. Clearly, every
assignment along the path lies in $T$. We claim that these are the
only solutions. To show this, take an arbitrary assignment $\boldsymbol{c}$
satisfying the added constraints. If for some $i<j\leq r$ we have
$c_{i}=a_{i}$ but $c_{j}=\bar{a}_{j}$, this would violate $C_{ij}$.
Hence the first $r$ variables of $\boldsymbol{c}$ are of the form
$\bar{a}_{1}\dots\bar{a}_{i}a_{i+1}\dots a_{r}$ for $0\leq i\leq r$.
If $c_{r+1}=\bar{a}_{r+1}$ then $\boldsymbol{c}=\boldsymbol{u}^{i}$.
If $c_{r+1}=a_{r+1}$ then $\boldsymbol{c}=\boldsymbol{u}^{i}\oplus\boldsymbol{e}_{r+1}$.
By property 4 above, such a vector satisfies $Q$ if and only if $i=0$
or $i=r$, which correspond to $\boldsymbol{c}=\boldsymbol{a}$ and
$\boldsymbol{c}=\boldsymbol{b}$ respectively.

\emph{Step 3. Structurally expressing paths of length $4$.}\\
 Let ${\cal {P}}$ denote the set of all ternary relations whose graph
is a path of length $4$ between two assignments at Hamming distance
$2$. Up to permutations of coordinates, there are 6 such relations.
Each of them is the conjunction of a $3$-clause and a $2$-clause.
For instance, the relation $M=\{100,110,010,011,001\}$ can be written
as $(x_{1}\vee x_{2}\vee x_{3})\wedge(\bar{x}_{1}\vee\bar{x}_{3})$.
(It is named so, because its graph looks like the letter 'M' on the
cube.) These relations are ``minimal\textquotedbl{} examples of relations
that are not componentwise bijunctive. By projecting out intermediate
variables from the path $T$ obtained in Step 2, we structurally express
one of the relations in ${\cal {P}}$. We structurally express other
relations in ${\cal {P}}$ using this relation.

We will write all relations in ${\cal {P}}$ in terms of $M(x_{1},x_{2},x_{3})=(x_{1}\vee x_{2}\vee x_{3})\wedge(\bar{x}_{1}\vee\bar{x}_{3})$,
by negating variables. For example $M(\bar{x}_{1},x_{2},x_{3})=(\bar{x}_{1}\vee x_{2}\vee x_{3})\wedge(x_{1}\vee\bar{x}_{3})=\{000,010,110,111,101\}$.

Define the relation $P(x_{1},x_{r+1},x_{2})=\exists x_{3}\dots x_{r}~T(x_{1},\dots,x_{r+1})$.
The table below listing all tuples in $P$ and their witnesses, shows
that the conditions for structural expressibility are satisfied, and
$P\in{\cal {P}}$.

\vspace{0.2cm}
\begin{footnotesize} %
\begin{tabular}{|c|l|}
\hline 
$x_{1},x_{2},x_{r+1}$  & $x_{3},\dots,x_{r}$\tabularnewline
\hline 
$a_{1}a_{2}a_{r+1}$  & $a_{3}\dots a_{r}$\tabularnewline
$a_{1}a_{2}\bar{a}_{r+1}$  & $a_{3}\dots a_{r}$\tabularnewline
$\bar{a}_{1}a_{2}\bar{a}_{r+1}$  & $a_{3}\dots a_{r}$\tabularnewline
$\bar{a}_{1}\bar{a}_{2}\bar{a}_{r+1}$  & $a_{3}\dots a_{k},\ \bar{a}_{3}a_{4}\dots a_{r},\ \bar{a}_{3}\bar{a}_{4}a_{5}\dots a_{r}\ \dots\bar{a}_{3}\bar{a}_{4}\dots\bar{a}_{r}$\tabularnewline
$\bar{a}_{1}\bar{a}_{2}a_{r+1}$  & $\bar{a}_{3}\bar{a}_{4}\dots\bar{a}_{r}$\tabularnewline
\hline 
\end{tabular}\end{footnotesize}\vspace{0.2cm}

Let $P(x_{1},x_{2},x_{3})=M(l_{1},l_{2},l_{3})$, where $l_{i}$ is
one of $\{x_{i},\bar{x}_{i}\}$. We can now use $P$ and 2-clauses
to express every other relation in ${\cal {P}}$. Given $M(l_{1},l_{2},l_{3})$
every relation in ${\cal {P}}$ can be obtained by negating some subset
of the variables. Hence it suffices to show that we can express structurally
$M(\bar{l}_{1},l_{2},l_{3})$ and $M(l_{1},\bar{l}_{2},l_{3})$ ($M$
is symmetric in $x_{1}$ and $x_{3}$). In the following let $\lambda$
denote one of the literals $\{y,\bar{y}\}$, such that it is $\bar{y}$
if and only if $l_{1}$ is $\bar{x}_{1}$. 
\begin{eqnarray*}
M(\bar{l}_{1},l_{2},l_{3}) & = & (\bar{l}_{1}\vee l_{2}\vee l_{3})\wedge(l_{1}\vee\bar{l}_{3})\\
 & = & \exists y~(\bar{l}_{1}\vee\bar{\lambda})\wedge(\lambda\vee l_{2}\vee l_{3})\wedge(l_{1}\vee\bar{l}_{3})\\
 & = & \exists y~(\bar{l}_{1}\vee\bar{\lambda})\wedge(\lambda\vee l_{2}\vee l_{3})\wedge(l_{1}\vee\bar{l}_{3})\wedge(\bar{\lambda}\vee\bar{l}_{3})\\
 & = & \exists y~(\bar{l}_{1}\vee\bar{\lambda})\wedge(l_{1}\vee\bar{l}_{3})\wedge M(\lambda,l_{2},l_{3})\\
 & = & \exists y~(\bar{l}_{1}\vee\bar{\lambda})\wedge(l_{1}\vee\bar{l}_{3})\wedge P(y,x_{2},x_{3})
\end{eqnarray*}
In the second step the clause $(\bar{\lambda}\vee\bar{l}_{3})$ is
implied by the resolution of the clauses $(\bar{l}_{1}\vee\bar{\lambda})\wedge(l_{1}\vee\bar{l}_{3})$.

For the next expression let $\lambda$ denote one of the literals
$\{y,\bar{y}\}$, such that it is negated if and only if $l_{2}$
is $\bar{x}_{2}$. 
\begin{eqnarray*}
M(l_{1},\bar{l}_{2},l_{3}) & = & (l_{1}\vee\bar{l}_{2}\vee l_{3})\wedge(\bar{l}_{1}\vee\bar{l}_{3})\\
 & = & \exists y~(l_{1}\vee l_{3}\vee\lambda)\wedge(\bar{\lambda}\vee\bar{l}_{2})\wedge(\bar{l}_{1}\vee\bar{l}_{3})\\
 & = & \exists y~(\bar{\lambda}\vee\bar{l}_{2})\wedge M(l_{1},\lambda,l_{3})\\
 & = & \exists y~(\bar{\lambda}\vee\bar{l}_{2})\wedge P(x_{1},y,x_{3})
\end{eqnarray*}
The above expressions are both based on resolution and it is easy
to check that they satisfy the properties of structural expressibility.

\emph{Step 4. Structurally expressing ${\cal S}_{3}$. }\\
 We structurally express $(x_{1}\lor x_{2}\lor x_{3})$ from $M$
using a formula derived from a gadget in \citep{HD02}. This gadget
expresses $(x_{1}\vee x_{2}\vee x_{3})$ in terms of ``Protected
OR'', which corresponds to our relation $M$.

\begin{eqnarray}
(x_{1}\vee x_{2}\vee x_{3}) & = & \exists y_{1}\dots y_{5}~(x_{1}\vee\bar{y}_{1})\wedge(x_{2}\vee\bar{y}_{2})\wedge(x_{3}\vee\bar{y}_{3})\wedge(x_{3}\vee\bar{y}_{4})\nonumber \\
 &  & \wedge M(y_{1},y_{5},y_{3})\wedge M(y_{2},\bar{y}_{5},y_{4})
\end{eqnarray}
The table below listing the witnesses of each assignment for $(x_{1},x_{2},x_{3})$,
shows that the conditions for structural expressibility are satisfied.

\vspace{0.2cm}
\begin{footnotesize}\hspace{-0.54cm} %
\begin{tabular}{|c|llll|}
\hline 
$x_{1},x_{2},x_{3}$  & $y_{1}\dots y_{5}$  &  &  & \tabularnewline
\hline 
$111$  & $00011$ ~$00111$ ~$00110$ ~$00100$ ~$01100$ ~$01101$  & $01001$ ~$11001$ ~$11000$  & $10000$  & $10010$ ~$10011$ \tabularnewline
$110$  &  & $01001$ ~$11001$ ~$11000$  & $10000$  & \tabularnewline
$100$  &  &  & $10000$  & \tabularnewline
$101$  & $00011$ ~$00111$ ~$00110$ ~$00100$  &  & $10000$  & $10010$ ~$10011$ \tabularnewline
$001$  & $00011$ ~$00111$ ~$00110$ ~$00100$  &  &  & \tabularnewline
$011$  & $00011$ ~$00111$ ~$00110$ ~$00100$ ~$01100$ ~$01101$  & $01001$ &  & \tabularnewline
$010$  &  & $01001$ &  & \tabularnewline
\hline 
\end{tabular}\end{footnotesize} \vspace{0.2cm}

\noindent From the relation $(x_{1}\vee x_{2}\vee x_{3})$ we derive
the other 3-clauses by reverse resolution, for instance 
\[
(\bar{x}_{1}\vee x_{2}\vee x_{3})=\exists y~(\bar{x}_{1}\vee\bar{y})\wedge(y\vee x_{2}\vee x_{3})
\]
\end{proof}
\begin{lem}
\emph{\citep[Lemma 3.5]{gop}} ~Let $R\subseteq\{0,1\}^{k}$ be any
relation of arity $k\geq1$. $R$ is structurally expressible from
$\mathcal{S}_{3}$.
\end{lem}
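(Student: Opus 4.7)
The plan is to exhibit a structural expression of $R$ from $\mathcal{S}_{3}$ by the standard reduction from CNF to 3-CNF via a Tseitin chain, and then to verify the three conditions of structural expressibility directly. First, I would write $R$ as an ordinary CNF-formula $R \equiv C_{1} \wedge \cdots \wedge C_{m}$, for instance by taking one clause $C_{\boldsymbol{a}} = \bigvee_{i=1}^{k}(x_{i} \neq a_{i})$ per non-solution $\boldsymbol{a} \notin R$; each such clause is a disjunction of at most $k$ literals with pairwise distinct variables. Any clause of length at most $3$ is a direct instance of a relation from $\mathcal{S}_{3}$ (with constants substituted or variables identified at unused positions, which is permitted in a CNF\textsubscript{C}($\mathcal{S}_{3}$)-formula). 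For a clause $C = \ell_{1} \vee \cdots \vee \ell_{r}$ of length $r > 3$, I would introduce $r-3$ fresh auxiliary variables $y_{1},\ldots,y_{r-3}$ and use the Tseitin chain
\[
\phi_{C} \;=\; (\ell_{1} \vee \ell_{2} \vee y_{1}) \wedge \bigwedge_{j=2}^{r-3}(\bar{y}_{j-1} \vee \ell_{j+1} \vee y_{j}) \wedge (\bar{y}_{r-3} \vee \ell_{r-1} \vee \ell_{r}).
\]
Taking all these auxiliary variables disjoint across different clauses, the final candidate expression is $\phi = \bigwedge_{i} \phi_{C_{i}}$.

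Condition 1 of structural expressibility, $\exists \boldsymbol{y}\,\phi \equiv R$, is the well-known correctness of this Tseitin encoding. For Condition 2, fix $\boldsymbol{a} \in R$; the substituted formula $\phi(\boldsymbol{a},\boldsymbol{y})$ decomposes across clauses because of the disjoint auxiliary blocks, so its witness graph is a Cartesian product of the per-clause witness graphs, and it suffices to show the latter are connected. Let $T = \{i : \ell_{i}(\boldsymbol{a}) = 1\}$ (nonempty since $\boldsymbol{a}$ satisfies $C$). The $j$th middle clause of $\phi_{C}(\boldsymbol{a},\boldsymbol{y})$ collapses to the $2$-clause ``$y_{j-1} \leq y_{j}$'' when $j+1 \notin T$ and to a trivially true clause when $j+1 \in T$; analogously, the first clause forces $y_{1} = 1$ unless $\{1,2\} \cap T \neq \emptyset$, and the last forces $y_{r-3} = 0$ unless $\{r-1,r\} \cap T \neq \emptyset$. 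Introducing virtual boundary values $y_{0} = 1$ and $y_{r-2} = 0$, the witness set is exactly the set of $01$-sequences $(y_{0},\ldots,y_{r-2})$ whose $1 \to 0$ transitions occur only at ``active'' positions (those corresponding to trivialized clauses). Partitioning the range by the active positions, the first segment is forced to all $1$s, the last to all $0$s, and each intermediate segment may be any monotone non-decreasing $01$-sequence; each such factor is a path in the hypercube, so the product (the full witness set) is connected.

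For Condition 3, consider $\boldsymbol{a}, \boldsymbol{b} \in R$ with $|\boldsymbol{a} - \boldsymbol{b}| = 1$ at some position $p^{*}$. For each clause $C_{i}$, either $x_{p^{*}}$ does not occur in $C_{i}$ (so the per-clause constraint systems for $\boldsymbol{a}$ and $\boldsymbol{b}$ coincide trivially), or exactly one literal $\ell_{p}$ of $C_{i}$ involves $x_{p^{*}}$; in the latter case $T_{\boldsymbol{a}}$ and $T_{\boldsymbol{b}}$ differ only at position $p$, so one contains the other, and WLOG $T_{\boldsymbol{b}} \subseteq T_{\boldsymbol{a}}$. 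Since $T_{\boldsymbol{b}}$ trivializes at most as many clauses as $T_{\boldsymbol{a}}$, the witness set for $\boldsymbol{b}$ is contained in that for $\boldsymbol{a}$, and any witness for $\boldsymbol{b}$ (which exists because $\boldsymbol{b} \in R$) automatically witnesses $\boldsymbol{a}$ as well. Concatenating such per-clause witnesses yields a shared witness $\boldsymbol{w}$ for $\phi$.

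The main technical obstacle is the case analysis for Condition 2 -- in particular, checking that the boundary clauses and the various patterns of active versus inactive positions interact cleanly to yield the product-of-paths structure without degenerate corner cases. Once the virtual-boundary viewpoint $y_{0} = 1, y_{r-2} = 0$ is adopted, however, the decomposition into independent monotone-sequence factors becomes essentially immediate, and Condition 3 follows from the monotonicity of the witness set in $T$.
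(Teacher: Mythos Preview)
Your argument is correct. The paper itself does not give a proof of this lemma; it simply cites Lemma~3.5 of Gopalan et al.\ and moves on, so there is no in-paper proof to compare against. Your Tseitin-chain construction together with the segment/monotone-sequence analysis of the witness set is the natural way to establish structural expressibility here, and all three conditions are verified soundly. One small clarification worth making explicit: in Condition~3 the direction of the inclusion $T_{\boldsymbol{b}}\subseteq T_{\boldsymbol{a}}$ may differ from clause to clause (since $x_{p^{*}}$ can occur positively in one clause and negatively in another), but because you choose the common witness independently on each clause's disjoint block of auxiliary variables this is harmless---your ``per-clause'' phrasing already covers it.
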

The next theorem follows from the last two lemmas:
\begin{thm}[{Structural expressibility theorem, modified from \citep[Theorem 2.7]{gop}}]
\emph{\label{thm:se}}Let $\mathcal{S}$ be a finite set of logical
relations. If $\mathcal{S}$ is not safely tight, then every logical
relation is structurally expressible from $\mathcal{S}$.
\end{thm}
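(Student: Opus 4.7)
The plan is to derive the theorem by composing the two preceding lemmas. Since $\mathcal{S}$ is not safely tight, Lemma~3.4 yields that every relation in $\mathcal{S}_3$ is structurally expressible from $\mathcal{S}$, and Lemma~3.5 gives that every logical relation is structurally expressible from $\mathcal{S}_3$. It therefore suffices to establish that structural expressibility is transitive: if $R$ is structurally expressible from a set $\mathcal{S}'$ and every $R'\in\mathcal{S}'$ is structurally expressible from $\mathcal{S}$, then $R$ is structurally expressible from $\mathcal{S}$.

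To prove transitivity, I would write a structural expression of $R$ from $\mathcal{S}'$ as $\phi(\boldsymbol{x},\boldsymbol{y})=\bigwedge_{j=1}^m R_j(\boldsymbol{\zeta}_j)$ with $R_j\in\mathcal{S}'$, and let $\psi_j(\boldsymbol{z}_j,\boldsymbol{w}_j)$ be a structural expression of $R_j$ from $\mathcal{S}$. Introducing disjoint fresh tuples $\boldsymbol{w}_j$, form $\Phi(\boldsymbol{x},\boldsymbol{y},\boldsymbol{w}) = \bigwedge_{j=1}^m \psi_j(\boldsymbol{\zeta}_j,\boldsymbol{w}_j)$. Condition (1) of structural expressibility then holds by pushing existentials inside the conjunction: $\exists\boldsymbol{y},\boldsymbol{w}\,\Phi=\exists\boldsymbol{y}\bigwedge_j\exists\boldsymbol{w}_j\psi_j=\exists\boldsymbol{y}\,\phi=R$. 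Condition~(3) is easy: for adjacent $\boldsymbol{a},\boldsymbol{b}\in R$, take the common $\phi$-witness $\boldsymbol{y}$ given by property (3) of $R$; for each $j$, the arguments of $R_j$ under $(\boldsymbol{a},\boldsymbol{y})$ and $(\boldsymbol{b},\boldsymbol{y})$ agree or differ in exactly one coordinate, and property (3) of $R_j$ supplies a common $\psi_j$-witness $\boldsymbol{w}_j$. Concatenating yields a common $\Phi$-witness.

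The main obstacle will be verifying condition (2), and the argument mirrors that of Lemma~3.2. Fix $\boldsymbol{a}\in R$ and two solutions $(\boldsymbol{y}^0,\boldsymbol{w}^0)$, $(\boldsymbol{y}^r,\boldsymbol{w}^r)$ of $\Phi(\boldsymbol{a},\cdot,\cdot)$. Projecting onto $\boldsymbol{y}$ gives two solutions of $\phi(\boldsymbol{a},\cdot)$, which by property~(2) of $R$ are joined by a path $\boldsymbol{y}^0=\boldsymbol{u}^0\to\boldsymbol{u}^1\to\cdots\to\boldsymbol{u}^r=\boldsymbol{y}^r$ in $G(\phi(\boldsymbol{a},\cdot))$. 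For each edge $\boldsymbol{u}^i\to\boldsymbol{u}^{i+1}$, condition~(3) applied constraint-wise to the $R_j$ produces a joint $\boldsymbol{w}$-witness $\boldsymbol{w}^{i,i+1}$ for both endpoints. Finally, within each fiber $\Phi(\boldsymbol{a},\boldsymbol{u}^i,\cdot)$, the solution set is the product of the $\psi_j$-solution sets over fixed $\boldsymbol{\zeta}_j$-arguments, each of which is connected by condition~(2) of $R_j$; hence consecutive witnesses $\boldsymbol{w}^{i-1,i}$ and $\boldsymbol{w}^{i,i+1}$ can be interpolated within the fiber. Splicing these three kinds of moves produces a path in $G(\Phi(\boldsymbol{a},\cdot,\cdot))$ from $(\boldsymbol{y}^0,\boldsymbol{w}^0)$ to $(\boldsymbol{y}^r,\boldsymbol{w}^r)$, verifying~(2). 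With transitivity in hand, the theorem is immediate.
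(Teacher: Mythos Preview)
Your high-level plan matches the paper exactly: the paper's proof is a one-liner stating that the theorem follows from the two preceding lemmas. You go further and spell out the transitivity of structural expressibility that this composition relies on.

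However, your transitivity argument has a gap, and it is precisely the issue this paper flags in its correction to Lemma~3.2. In both your verification of condition~(2) (``condition~(3) applied constraint-wise to the $R_j$ produces a joint $\boldsymbol{w}$-witness'') and of condition~(3) (``the arguments of $R_j$ \ldots\ agree or differ in exactly one coordinate''), you implicitly assume that flipping a single variable in $(\boldsymbol{x},\boldsymbol{y})$ changes the argument tuple $\boldsymbol{\zeta}_j$ of each constraint in at most one coordinate. If the CNF\textsubscript{C}($\mathcal{S}'$)-formula $\phi$ repeats a variable inside some constraint $R_j(\boldsymbol{\zeta}_j)$, a single flip changes $\boldsymbol{\zeta}_j$ in several coordinates, and condition~(3) for $R_j$ no longer supplies a common witness. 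This is exactly why the paper restricts Lemma~3.2 to formulas ``where no variable is used more than once in any constraint.''

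To close the gap you need an extra observation: the structural expression of $R$ from $\mathcal{S}_3$ produced in Lemma~3.5 can be taken without repeated variables in its constraints, so that the corrected Lemma~3.2 (with $\mathcal{S}'=\mathcal{S}_3$ and the structural expressions of Lemma~3.4) applies and yields a genuine structural expression of $R$ from $\mathcal{S}$. Alternatively, you can restrict your transitivity statement to the no-repetition case and invoke it only in that form; the unrestricted transitivity claim, as you wrote it, is not established by your argument.
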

With \prettyref{lem:l36}, \prettyref{cor:c33} and the preceding
theorem, we can now complete the proofs for PSPACE-completeness and
the exponential diameter:
\begin{cor}
\label{cor:pc}If a finite set $\mathcal{S}$ of logical relations
is not safely tight, then \noun{st-Conn}\textsubscript{\noun{C}}\noun{($\mathcal{S}$)
}and\noun{ Conn}\textsubscript{\noun{C}}\noun{($\mathcal{S}$) }are
\noun{PSPACE}-complete, and there exist \emph{CNF\textsubscript{C}($\mathcal{S}$)}-formulas
$\phi$, such that the diameter of $G(\phi)$ is exponential in the
number of variables.
\end{cor}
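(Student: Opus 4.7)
The plan is to assemble this corollary from three ingredients developed earlier in this section: the base PSPACE-completeness result for 3-CNF-formulas (\prettyref{lem:l36}), the exponential-diameter path construction for 3-CNF-formulas (\prettyref{lem:l37}), and the transport mechanism provided by structural expressibility (\prettyref{cor:c33} together with \prettyref{thm:se}).

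For containment in PSPACE, I would invoke the standard argument already sketched after \prettyref{lem:l36}: any two solutions lie at distance at most $2^n$ in the hypercube, so \noun{st-Conn}\textsubscript{C} is in NPSPACE by guessing a path vertex by vertex and verifying each step satisfies $\phi$; Savitch's theorem then gives PSPACE. Connectivity is reduced to this by iterating over all pairs of candidate satisfying assignments and reusing space.

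For PSPACE-hardness, I would reduce from the problems on general 3-CNF-formulas. By \prettyref{thm:se}, every logical relation in $\mathcal{S}_1\cup\mathcal{S}_2\cup\mathcal{S}_3$ is structurally expressible from the non-safely-tight set $\mathcal{S}$. A 3-CNF-formula may be preprocessed to remove repeated variables in clauses (collapsing $x\vee x\vee y$ to $x\vee y$, etc.) without altering its set of solutions, so we may regard the hard instances of \prettyref{lem:l36} as \emph{CNF}\textsubscript{ni}$(\mathcal{S}_1\cup\mathcal{S}_2\cup\mathcal{S}_3)$-formulas. \prettyref{cor:c33}(1) then delivers polynomial-time reductions to \noun{st-Conn}\textsubscript{C}$(\mathcal{S})$ and \noun{Conn}\textsubscript{C}$(\mathcal{S})$, which together with \prettyref{lem:l36} establishes PSPACE-hardness of both problems.

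For the diameter lower bound, I would take the family $\{\phi_n\}$ from \prettyref{lem:l37}, whose solution graphs are paths of length greater than $2^{n/2}$, put each $\phi_n$ into \emph{CNF}\textsubscript{ni}-form by the same preprocessing, and apply \prettyref{cor:c33}(2). The resulting \emph{CNF}\textsubscript{C}$(\mathcal{S})$-formula $\varphi_n(\boldsymbol{x},\boldsymbol{y})$ uses $n+O(n^2)$ variables, and its diameter is at least $2^{n/2}$, which remains exponential in the total variable count. The main obstacle I anticipate is purely bookkeeping: checking that the 3-CNF instances underlying \prettyref{lem:l36} and the path-family of \prettyref{lem:l37} can indeed be rewritten without repeated variables per clause and without constants, so that the no-repeat, no-constant hypothesis of \prettyref{cor:c33} is met. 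Once that technicality is cleared, the three results plug together cleanly to give all three assertions of the corollary.
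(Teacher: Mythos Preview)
Your proposal is correct and follows exactly the route the paper takes: the paper derives the corollary from \prettyref{lem:l36}, \prettyref{cor:c33}, and \prettyref{thm:se} (with \prettyref{lem:l37} implicitly supplying the diameter instance), after observing that 3-CNF-formulas are already CNF\textsubscript{ni}$(\mathcal{S}_1\cup\mathcal{S}_2\cup\mathcal{S}_3)$-formulas. Your more explicit bookkeeping about removing repeated literals and tracking the variable count (yielding a $2^{\Omega(\sqrt{m})}$ bound in the total number $m$ of variables, matching Table~\ref{tab:cc}) simply spells out what the paper leaves implicit.
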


\section{\label{sec:stru}Safely Tight Sets of Relations: Structure~and~Algorithms}

For safely tight sets of relations, the solution graphs possess certain
structural properties that guarantee a linear diameter, and allow
for P-algorithms for $st$-connectivity, and coNP-algorithms for connectivity.
We start with safely componentwise bijunctive relations.
\begin{lem}
\label{lem:l43}\emph{\citep[corrected from][Lemma 4.3]{gop}} ~Let
$\mathcal{S}$ be a set of safely componentwise bijunctive relations
and $\varphi$ a \noun{CNF\textsubscript{C}($\mathcal{S}$)}-formula.
If $\boldsymbol{a}$ and \textbf{$\boldsymbol{b}$} are two solutions
of $\varphi$ that lie in the same component of G($\varphi$), then
$d_{\varphi}(\boldsymbol{a},\boldsymbol{b})=|\boldsymbol{a}-\boldsymbol{b}|$,
i.e., no distance expands.\end{lem}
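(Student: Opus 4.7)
The plan is to prove the statement by constructing, from any existing path between $\boldsymbol{a}$ and $\boldsymbol{b}$, a shorter path that flips only variables in which $\boldsymbol{a}$ and $\boldsymbol{b}$ differ. The classical tool for this in a bijunctive setting is the ternary majority operation; the issue to overcome is that $\varphi$ itself need not be bijunctive, only componentwise bijunctive after all constraint manipulations, so the majority must be applied carefully within the right component of each constraint's relation.

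Concretely, I would start from an arbitrary path $\boldsymbol{a}=\boldsymbol{u}^{0}\to\boldsymbol{u}^{1}\to\cdots\to\boldsymbol{u}^{r}=\boldsymbol{b}$ in $G(\varphi)$ and define $\boldsymbol{v}^{k}:=\mathrm{maj}(\boldsymbol{a},\boldsymbol{b},\boldsymbol{u}^{k})$ coordinate-wise, for $0\leq k\leq r$. Two easy observations are then that $\boldsymbol{v}^{0}=\boldsymbol{a}$, $\boldsymbol{v}^{r}=\boldsymbol{b}$, and that $\boldsymbol{v}^{k}$ and $\boldsymbol{v}^{k+1}$ differ in at most one coordinate, namely only possibly at the coordinate $j$ where $\boldsymbol{u}^{k}$ and $\boldsymbol{u}^{k+1}$ differ, and there only if $a_{j}\neq b_{j}$. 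Thus $(\boldsymbol{v}^{k})_{k}$ is a walk from $\boldsymbol{a}$ to $\boldsymbol{b}$ flipping only variables in $\{j:a_{j}\neq b_{j}\}$, so if each $\boldsymbol{v}^{k}$ is a solution of $\varphi$ the walk has length at most $r$, and since each differing coordinate must be flipped at least once (in fact an odd number of times) its length is at least $|\boldsymbol{a}-\boldsymbol{b}|$; removing any back-and-forth gives a path of length exactly $|\boldsymbol{a}-\boldsymbol{b}|$.

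The main obstacle, and the place where ``safely componentwise bijunctive'' is used, is showing that every $\boldsymbol{v}^{k}$ satisfies $\varphi$. For each constraint $C=R(\xi_{1},\dots,\xi_{\ell})$, let $R'$ denote the corresponding relation, obtained from some $R\in\mathcal{S}$ by substitution of constants and identification of variables. The projections $\boldsymbol{a}|_{C}$, $\boldsymbol{b}|_{C}$, $\boldsymbol{u}^{k}|_{C}$ all lie in $R'$, and because $\boldsymbol{a}$, $\boldsymbol{b}$, $\boldsymbol{u}^{k}$ are joined by paths in $G(\varphi)$ that project to walks in $G(R')$, they lie in a common component of $G(R')$. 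Since $\mathcal{S}$ is safely componentwise bijunctive, $R'$ is componentwise bijunctive (the identification part is the definition; substitution of constants preserves componentwise bijunctivity because the slice $x_{i}=c$ of $G(R)$ is closed under $\mathrm{maj}$ whenever $G(R)$ is, since $\mathrm{maj}(c,c,c)=c$). Hence this component is closed under $\mathrm{maj}$, and $\boldsymbol{v}^{k}|_{C}=\mathrm{maj}(\boldsymbol{a}|_{C},\boldsymbol{b}|_{C},\boldsymbol{u}^{k}|_{C})\in R'$, so $\boldsymbol{v}^{k}$ satisfies $C$.

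With that in hand the proof essentially closes itself: the sequence $(\boldsymbol{v}^{k})$ is a walk in $G(\varphi)$ from $\boldsymbol{a}$ to $\boldsymbol{b}$ that never flips a coordinate where $\boldsymbol{a}$ and $\boldsymbol{b}$ agree, yielding $d_{\varphi}(\boldsymbol{a},\boldsymbol{b})\leq|\boldsymbol{a}-\boldsymbol{b}|$; the reverse inequality $d_{\varphi}(\boldsymbol{a},\boldsymbol{b})\geq|\boldsymbol{a}-\boldsymbol{b}|$ is trivial since each step of any path changes exactly one coordinate. The only subtlety worth double-checking is the preservation of componentwise bijunctivity under substitution of constants, which is the single point where one might reasonably worry that the ``safely'' in the hypothesis is not literally enough; a one-line remark on the slice argument above suffices to dispose of it.
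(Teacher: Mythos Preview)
Your proof is correct and uses the same core idea as the paper: apply the coordinate-wise majority $\boldsymbol{v}^{k}=\mathrm{maj}(\boldsymbol{a},\boldsymbol{b},\boldsymbol{u}^{k})$ to an arbitrary path and observe that the resulting walk flips only coordinates in which $\boldsymbol{a}$ and $\boldsymbol{b}$ differ, with each $\boldsymbol{v}^{k}$ remaining a solution because the projections to any constraint land in a single bijunctive component.

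The organization differs slightly. The paper proceeds in two stages: it first handles the special case where every relation in $\mathcal{S}$ is bijunctive (so $\varphi$ is globally closed under $\mathrm{maj}$), and then reduces the general case to this by replacing each constraint $R(x_{1},\dots,x_{k})$ with the bijunctive component of its corresponding relation that contains the projection of the relevant component of $G(\varphi)$, obtaining a bijunctive formula $\varphi'$ whose solution graph contains that component. You instead argue directly, constraint by constraint, without building an auxiliary $\varphi'$. Your version is arguably cleaner and, notably, makes explicit the point that substitution of constants preserves componentwise bijunctivity (via the slice argument and idempotence of $\mathrm{maj}$), a step the paper uses implicitly when asserting that ``any relation corresponding to a clause \ldots\ consists of bijunctive components.''
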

\begin{proof}
Consider first the special case in which every relation in $\mathcal{S}$
is bijunctive. In this case, $\varphi$ is equivalent to a 2-CNF-formula
and so the space of solutions of $\varphi$ is closed under majority.
We show that there is a path in $G(\varphi)$ from $\boldsymbol{a}$
to $\boldsymbol{b}$ such that along the path only the assignments
on variables with indices from the set $D=\{i|a_{i}\neq b_{i}\}$
change. This implies that the shortest path is of length $|D|$ by
induction on $|D|$. Consider any path $\boldsymbol{a}\rightarrow\boldsymbol{u}^{1}\rightarrow\cdots\rightarrow\boldsymbol{u}^{r}\rightarrow\boldsymbol{b}$
in $G(\varphi)$. We construct another path by replacing $u_{i}$
by $v_{i}=\mathrm{maj}(a,u_{i},b)$ for $i=1,\ldots,r$ and removing
repetitions. This is a path because for any $i$ $v^{i}$ and $v^{i+1}$
differ in at most one variable. Furthermore, $v^{i}$ agrees with
$\boldsymbol{a}$ and $\boldsymbol{b}$ for every $i$ for which $a_{i}=b_{i}$.
Therefore, along this path only variables in $D$ are flipped.

For the general case, we show that every component $F$ of G($\varphi$)
is the solution space of a 2-CNF-formula \emph{$\varphi$}. Let $R\in\mathcal{S}$
be a safely componentwise bijunctive relation. Then any relation corresponding
to a clause in $\varphi$ (see \prettyref{def:cnf-1}) of the form
$R(x_{1},\ldots,x_{k})$ consists of bijunctive components $R_{1},\ldots,R_{m}$.
The projection of $F$ onto $x_{1},\ldots,x_{k}$ is itself connected
and must satisfy $R$. Hence it lies within one of the components
$R_{1},\ldots,R_{m}$; assume it is $R_{1}$. We replace $R(x_{1},\ldots,x_{k})$
by $R_{1}(x_{1},\ldots,x_{k})$. Call this new formula $\varphi_{1}$.
$G(\varphi_{1})$ consists of all components of G($\varphi$) whose
projection on $x_{1},\ldots,x_{k}$ lies in $R_{1}$. We repeat this
for every clause. Finally we are left with a formula $\varphi'$ over
a set of bijunctive relations. Hence $\varphi'$ is bijunctive and
$G(\varphi')$ is a component of $G(\varphi)$. So the claim follows
from the bijunctive case.\end{proof}
\begin{cor}
\label{lem:c44}\emph{\citep[corrected from][Corollary 4.4]{gop}}
~Let \emph{$\mathcal{S}$} be set of safely componentwise bijunctive
relations. Then 
\begin{enumerate}
\item for every $\phi\in$\emph{CNF\textsubscript{C}($\mathcal{S}$)} with
$n$ variables, the diameter of each component of $G(\phi)$ is bounded
by $n$.
\item \emph{\noun{st-Conn}}\emph{\textsubscript{C}($\mathcal{S}$) }is
in \emph{P}.
\item \emph{\noun{Conn}}\emph{\textsubscript{C}($\mathcal{S}$)} is in
\emph{coNP}.
\end{enumerate}
\end{cor}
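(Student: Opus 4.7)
My approach is to derive all three items from the expansion-free property established in \prettyref{lem:l43}, which says that $d_\varphi(\boldsymbol{a},\boldsymbol{b}) = |\boldsymbol{a}-\boldsymbol{b}|$ for any two solutions $\boldsymbol{a},\boldsymbol{b}$ lying in the same component of $G(\varphi)$.

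For item 1, the bound is immediate: for any component of $G(\phi)$ and any two vectors $\boldsymbol{a},\boldsymbol{b}$ in it, the shortest-path distance equals the Hamming distance, which is at most $n$. Hence the diameter of every component of $G(\phi)$ is at most $n$.

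For item 2, I would give a greedy polynomial-time algorithm. Maintain a current vector $\boldsymbol{s}$ (initially the given source); as long as $\boldsymbol{s} \neq \boldsymbol{t}$, let $D = \{i : s_i \neq t_i\}$ and scan $i \in D$ until finding one such that $\boldsymbol{s}\oplus\boldsymbol{e}_i$ is a solution of $\phi$. If some such $i$ is found, update $\boldsymbol{s} := \boldsymbol{s}\oplus\boldsymbol{e}_i$ and repeat; if none is found, output ``not connected''. Correctness is supplied by \prettyref{lem:l43}: if $\boldsymbol{s}$ still lies in the component of $\boldsymbol{t}$, then a shortest path of length exactly $|D|$ exists and must flip each variable of $D$ precisely once, so its first step witnesses some $i \in D$ with $\boldsymbol{s}\oplus\boldsymbol{e}_i$ satisfying $\phi$. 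Moreover, any such neighbour automatically lies in the component of $\boldsymbol{s}$ (it is adjacent to $\boldsymbol{s}$ in $G(\phi)$), so the greedy choice can never jump into a wrong component. Each iteration is polynomial, and since the Hamming distance from $\boldsymbol{s}$ to $\boldsymbol{t}$ strictly decreases, at most $n$ iterations occur.

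For item 3, I would bootstrap from the P-algorithm of item 2. A formula $\phi$ is disconnected precisely when it has two solutions $\boldsymbol{s},\boldsymbol{t}$ that are not connected in $G(\phi)$. This yields a polynomial-size NP-certificate for the complement of \noun{Conn}\textsubscript{C}($\mathcal{S}$): guess the pair $(\boldsymbol{s},\boldsymbol{t})$, verify in polynomial time that both satisfy $\phi$, and run the st-connectivity algorithm from item 2 to verify that they are \emph{not} connected. Hence the complement lies in NP, and \noun{Conn}\textsubscript{C}($\mathcal{S}$) is in coNP. The only genuinely subtle step in the whole plan is the greedy correctness argument in item 2, where \prettyref{lem:l43} simultaneously provides the existence of a forward move and the safety of choosing an arbitrary such move.
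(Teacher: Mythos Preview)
Your proposal is correct and follows essentially the same approach as the original proof in \cite{gop} (which the thesis simply cites without reproducing): item 1 is immediate from \prettyref{lem:l43}, item 2 uses the greedy walk toward $\boldsymbol{t}$ justified by the non-expansion of distances, and item 3 puts the problem in coNP by guessing a disconnected pair and invoking the polynomial-time $st$-connectivity test. Your correctness argument for the greedy step---that a shortest path of length $|D|$ must begin by flipping some $i\in D$, and that any such flip keeps you in the same component while decreasing the Hamming distance---is exactly the point that makes the algorithm work.
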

The proof of this corollary in \citep{gop} is correct. 

We now turn to safely OR-free relations; we need the following definition:
\begin{defn}
\label{def:lm}We define the \emph{coordinate-wise partial order}
$\leq$ on Boolean vectors as follows: $\boldsymbol{a}\leq\boldsymbol{b}$
if $a_{i}\leq b_{i},$ $\forall i$. A \emph{monotone path} between
two solutions $\boldsymbol{a}$ and $\boldsymbol{b}$ is a path $\boldsymbol{a}\rightarrow\boldsymbol{u}^{1}\rightarrow\cdots\rightarrow\boldsymbol{u}^{r}\rightarrow\boldsymbol{b}$
in the solution graph such that $\boldsymbol{a}\leq\boldsymbol{u}^{1}\leq\cdots\leq\boldsymbol{u}^{r}\leq\boldsymbol{b}$.
A solution is \emph{locally minimal} if it has no neighboring solution
that is smaller than it.\end{defn}
\begin{lem}
\label{lem:l45}\emph{\citep[corrected from][Lemma 4.5]{gop} }~Let
$\mathcal{S}$ be a set of safely OR-free relations and $\varphi$
a CNF\textsubscript{C}($\mathcal{S}$)-formula. Every component of
$G(\varphi)$ contains a minimum solution with respect to the coordinatewise
order; moreover, every solution is connected to the minimum solution
in the same component via a monotone path.\end{lem}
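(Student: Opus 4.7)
The plan is to prove the statement in two parts: (i) existence of a monotone descending path from any solution $\boldsymbol{a}$ to some locally minimal solution in its component, and (ii) uniqueness of the locally minimal solution in each component. Together these imply that the unique local minimum $\boldsymbol{m}$ in a component is reachable monotonically from every solution there, and is therefore the coordinatewise minimum of the component. Part (i) is immediate: if $\boldsymbol{a}$ is not locally minimal, then by definition some neighbor $\boldsymbol{a}'<\boldsymbol{a}$ is still a solution; iterate, terminating because Hamming weight strictly decreases.

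The technical heart of part (ii) is a \emph{peak-smoothing} lemma. Let $\boldsymbol{u}\to\boldsymbol{v}\to\boldsymbol{w}$ be three consecutive vertices of a shortest path in $G(\varphi)$, where the first step flips some $x_i$ from $0$ to $1$ and the second flips some $x_j$ from $1$ to $0$. On a shortest path $i\neq j$, so $\boldsymbol{u},\boldsymbol{v},\boldsymbol{w}$ agree outside $\{i,j\}$ and take the values $(0,1),(1,1),(1,0)$ at $(x_i,x_j)$. Set $\boldsymbol{z}=\boldsymbol{u}\wedge\boldsymbol{w}$; it agrees with the others outside $\{i,j\}$ and has $(0,0)$ at $(x_i,x_j)$. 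I claim $\boldsymbol{z}$ satisfies $\varphi$, so replacing $\boldsymbol{v}$ by $\boldsymbol{z}$ still yields a valid path and reverses the two step directions. For each constraint $C=R(\xi_1,\ldots,\xi_k)$ of $\varphi$, the effective relation $R_C$ on $\mathrm{Var}(C)$ obtained from $R$ by substituting the constants of $C$ and identifying its repeated variables is OR-free, because $\mathcal{S}$ is safely OR-free. If both $x_i,x_j\in\mathrm{Var}(C)$, the restrictions of $\boldsymbol{u},\boldsymbol{v},\boldsymbol{w}$ to $\mathrm{Var}(C)$ are three elements of $R_C$ exhibiting the pattern $\{(0,1),(1,1),(1,0)\}$ at $(x_i,x_j)$ with fixed values elsewhere, so OR-freeness forces $(0,0)$ there also to be in $R_C$, whence $\boldsymbol{z}|_{\mathrm{Var}(C)}\in R_C$. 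If at most one of $x_i,x_j$ lies in $\mathrm{Var}(C)$, then $\boldsymbol{z}|_{\mathrm{Var}(C)}$ coincides with $\boldsymbol{u}|_{\mathrm{Var}(C)}$ or with $\boldsymbol{w}|_{\mathrm{Var}(C)}$ and satisfies $C$ for free.

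Now for part (ii): assume two distinct local minima $\boldsymbol{m}_1\neq\boldsymbol{m}_2$ in the same component, and take a shortest path between them. Its first step must be up (else $\boldsymbol{m}_1$ is not locally minimal) and its last step must be down (else $\boldsymbol{m}_2$ is not), so the path contains at least one down-step. Bubble-sort the sequence of step directions by iteratively applying peak-smoothing to every adjacent ``up then down'' pair; the sum of indices of down-steps strictly decreases per swap, so the process terminates with all down-steps preceding all up-steps. But then the first step of the resulting path is a down-step, contradicting the local minimality of $\boldsymbol{m}_1$. Hence $\boldsymbol{m}_1=\boldsymbol{m}_2$.

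The main obstacle is establishing the peak-smoothing claim, and specifically recognizing that it uses precisely the ``safely'' in \emph{safely} OR-free. If a constraint $C$ uses the same variable of $\varphi$ in multiple coordinates of $R$, the effective relation $R_C$ is obtained not only by constant substitution but also by variable identification; plain OR-freeness of $R$ need not survive identification, whereas safely OR-freeness does by definition, which is exactly what is needed for $\boldsymbol{z}$ to satisfy $C$ in the case $x_i,x_j\in\mathrm{Var}(C)$.
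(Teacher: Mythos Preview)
Your proof is correct and rests on the same core observation as the paper's: at an ``up-then-down'' triple $\boldsymbol{u}\to\boldsymbol{v}\to\boldsymbol{w}$, the vector $\boldsymbol{u}\wedge\boldsymbol{w}$ is again a solution because the relation corresponding to each constraint (after substitution of constants \emph{and} identification of repeated variables) is OR-free---exactly where ``safely'' enters. The paper uses this one peak-smoothing step differently: instead of taking a shortest path and bubble-sorting all up-down pairs, it picks a path that minimizes the maximum Hamming weight along it (and, among those, the number of vertices attaining that maximum), so that a single smoothing at the global peak already contradicts the extremal choice. Your bubble-sort variant is a clean alternative; the paper's extremal choice is slightly more economical in that it avoids the termination bookkeeping, but both arguments are short and essentially equivalent.
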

\begin{proof}
We will show that there is exactly one such assignment in each component
of $G(\varphi)$. Suppose there are two distinct locally minimal assignments
$\boldsymbol{u}$ and $\boldsymbol{u'}$ in some component of $G(\varphi)$.
Consider the path between them where the maximum Hamming weight of
assignments on the path is minimized. If there are many such paths,
pick one where the smallest number of assignments have the maximum
Hamming weight. Denote this path by $\boldsymbol{u}=\boldsymbol{u^{1}}\rightarrow\boldsymbol{u^{2}}\rightarrow\dots\rightarrow\boldsymbol{u^{r}}=\boldsymbol{u'}$.
Let $\boldsymbol{u^{i}}$ be an assignment of largest Hamming weight
in the path. Then $\boldsymbol{u^{i}}\not=\boldsymbol{u}$ and $\boldsymbol{u^{i}}\not=\boldsymbol{u'}$,
since $\boldsymbol{u}$ and $\boldsymbol{u'}$ are locally minimal.
The assignments $\boldsymbol{u^{i-1}}$ and $\boldsymbol{u^{i+1}}$
differ in exactly 2 variables, say, in $x_{1}$ and $x_{2}$. So $\{u_{1}^{i-1}u_{2}^{i-1},~u_{1}^{i}u_{2}^{i},~u_{1}^{i+1}u_{2}^{i+1}\}=\{01,11,10\}$.
Let $\boldsymbol{\hat{u}}$ be such that $\hat{u}_{1}=\hat{u}_{2}=0$,
and $\hat{u}_{i}=u_{i}$ for $i>2$. If $\boldsymbol{\hat{u}}$ is
a solution, then the path $\boldsymbol{u^{1}}\rightarrow\boldsymbol{u^{2}}\rightarrow\dots\rightarrow\boldsymbol{u^{i}}\rightarrow\boldsymbol{\hat{u}}\rightarrow\boldsymbol{u^{i+1}}\rightarrow\dots\rightarrow\boldsymbol{u^{r}}$
contradicts the way we chose the original path. Therefore, $\boldsymbol{\hat{u}}$
is not a solution. This means that there is a clause that is violated
by it, but is satisfied by $\boldsymbol{u^{i-1}}$, $\boldsymbol{u^{i}}$,
and $\boldsymbol{u^{i+1}}$. So the relation corresponding to that
clause is not OR-free, thus \emph{$\mathcal{S}$ }must have contained
some not\emph{ }safely OR-free relation.

The unique locally minimal solution in a component is its minimum
solution, because starting from any other assignment in the component,
it is possible to keep moving to neighbors that are smaller, and the
only time it becomes impossible to find such a neighbor is when the
locally minimal solution is reached. Therefore, there is a monotone
path from any satisfying assignment to the minimum in that component.\end{proof}
\begin{cor}
\label{lem:c46}\emph{\citep[corrected from][Corollary 4.6]{gop}
}~Let \emph{$\mathcal{S}$} be a set of safely OR-free relations.
Then
\begin{enumerate}
\item for every \emph{CNF\textsubscript{C}($\mathcal{S}$)}-formula $\phi$
with $n$ variables, the diameter of each component of $G(\phi)$
is bounded by $2n$.
\item \emph{\noun{st-Conn}}\emph{\textsubscript{C}($\mathcal{S}$)} is
in \emph{P}.
\item \emph{\noun{Conn}}\emph{\textsubscript{C}($\mathcal{S}$)} is in
\emph{coNP}.
\end{enumerate}
\end{cor}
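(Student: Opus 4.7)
The plan is to derive all three statements directly from \prettyref{lem:l45}, which already supplies the structural heart of the argument: in each component of $G(\varphi)$ there is a unique minimum solution, and every other solution in the component is joined to it by a monotone path.

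First I would prove the diameter bound. Let $\boldsymbol{a}$ and $\boldsymbol{b}$ be two solutions lying in the same component, and let $\boldsymbol{m}$ be the minimum of that component given by \prettyref{lem:l45}. The monotone path from $\boldsymbol{a}$ to $\boldsymbol{m}$ flips only variables on which $\boldsymbol{a}$ is $1$ and $\boldsymbol{m}$ is $0$, so it has length exactly $|\boldsymbol{a}|-|\boldsymbol{m}|\le n$; likewise for $\boldsymbol{b}$. Concatenating the two paths yields a walk of length at most $2n$, proving $d_\varphi(\boldsymbol{a},\boldsymbol{b})\le 2n$.

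For $\noun{st-Conn}_{C}(\mathcal{S})$, I would turn the monotone path into an efficient algorithm. Given a solution $\boldsymbol{s}$, repeatedly flip any variable currently set to $1$ whose flip to $0$ still satisfies $\varphi$; this ``greedy descent'' terminates in at most $n$ steps and each step is a polynomial check on $\varphi$. By \prettyref{lem:l45} the resulting locally minimal assignment is \emph{the} unique minimum $\boldsymbol{m}_s$ of the component of $\boldsymbol{s}$, since any other solution in that component admits a monotone path down to the minimum. Compute $\boldsymbol{m}_s$ and $\boldsymbol{m}_t$, and accept iff $\boldsymbol{m}_s=\boldsymbol{m}_t$; the whole procedure is polynomial.

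For $\noun{Conn}_{C}(\mathcal{S})$, I would use the contrapositive: $G(\varphi)$ is disconnected iff there are two solutions whose component-minima differ. So the coNP certificate for non-connectivity is a pair $(\boldsymbol{s},\boldsymbol{t})$ of assignments; the verifier checks in polynomial time that both satisfy $\varphi$, runs the greedy descent to obtain $\boldsymbol{m}_s$ and $\boldsymbol{m}_t$, and accepts iff $\boldsymbol{m}_s\neq\boldsymbol{m}_t$. No step here is delicate; the only thing worth double-checking is that the greedy descent is confined to the component of its starting point (which is immediate since each flip is an edge in $G(\varphi)$) and that it cannot get stuck above the true minimum, which is exactly the content of the uniqueness of the locally minimal solution in \prettyref{lem:l45}.
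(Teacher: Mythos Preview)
The proposal is correct and follows essentially the same approach as the paper, which simply notes that the proof of this corollary in \citep{gop} is correct; your argument---monotone descent to the unique component minimum from \prettyref{lem:l45}, giving the $2n$ diameter bound, a polynomial-time greedy descent for \noun{st-Conn}\textsubscript{C}, and the pair-of-solutions certificate for \noun{Conn}\textsubscript{C}---is exactly that proof.
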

The proof of this corollary in \citep{gop} is correct. Safely NAND-free
relations are symmetric to safely OR-free relations, so that we have
the following corollary which completes the proof of the dichotomy
(\prettyref{thm:dich}).
\begin{cor}
\label{lem:c47}\emph{\citep[corrected from][Corollary 4.7]{gop}
}~Let S be a safely tight set of relations. Then 
\begin{enumerate}
\item for every $\phi\in$\emph{CNF\textsubscript{C}($\mathcal{S}$)} with
$n$ variables, the diameter of each component of $G(\phi)$ is bounded
by $2n$.
\item \emph{\noun{st-}}\noun{Conn}\emph{\textsubscript{C}($\mathcal{S}$)}
is in \emph{P}.
\item \noun{Conn}\emph{\textsubscript{C}($\mathcal{S}$)} is in \emph{coNP}.
\end{enumerate}
\end{cor}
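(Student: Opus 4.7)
The plan is simply to unpack the definition of safely tight and invoke the three previous results in the three cases. By \prettyref{def:cpss}, if $\mathcal{S}$ is safely tight, then either (i) every relation in $\mathcal{S}$ is safely componentwise bijunctive, or (ii) every relation in $\mathcal{S}$ is safely OR-free, or (iii) every relation in $\mathcal{S}$ is safely NAND-free. In case (i), all three conclusions follow immediately from \prettyref{lem:c44}; note that the bound $n$ there is within the claimed $2n$. In case (ii), all three conclusions follow immediately from \prettyref{lem:c46}.

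The only part that requires an additional thought is case (iii), the safely NAND-free case, since the preceding corollaries are stated only on the OR-free side. The plan is to reduce it to case (ii) by complementation. Given a \emph{CNF\textsubscript{C}($\mathcal{S}$)}-formula $\phi$, let $\mathcal{S}^{\ast} = \{R^{\ast} : R \in \mathcal{S}\}$, where $R^{\ast} = \{(a_1\oplus 1,\ldots,a_k\oplus 1) : (a_1,\ldots,a_k) \in R\}$, and let $\phi^{\ast}$ be the formula obtained from $\phi$ by replacing every constraint $R(\xi_1,\ldots,\xi_k)$ by $R^{\ast}(\xi_1^{\ast},\ldots,\xi_k^{\ast})$, where variables are left unchanged and the constants $0$ and $1$ are swapped. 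Then $\boldsymbol{a} \mapsto \boldsymbol{a}\oplus 1^n$ is a graph isomorphism $G(\phi)\to G(\phi^{\ast})$, so diameters and $st$-connectivity transfer, and $G(\phi)$ is connected iff $G(\phi^{\ast})$ is. Since the complementation turns safely NAND-free into safely OR-free (substituting constants and identifying variables commutes with complementation, and NAND $=$ OR${}^{\ast}$), $\mathcal{S}^{\ast}$ falls under case (ii) and \prettyref{lem:c46} applies to $\phi^{\ast}$, yielding all three conclusions for $\phi$.

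The main (and essentially only) obstacle is making the symmetry argument for the NAND-free case airtight; once the isomorphism $\boldsymbol{a}\mapsto\boldsymbol{a}\oplus 1^n$ is set up correctly, everything is just quoting \prettyref{lem:c44} and \prettyref{lem:c46}. The reductions showing \textsc{st-Conn}\textsubscript{C}$(\mathcal{S}) \le_m^p \textsc{st-Conn}_{\mathrm{C}}(\mathcal{S}^{\ast})$ and \textsc{Conn}\textsubscript{C}$(\mathcal{S}) \le_m^p \textsc{Conn}_{\mathrm{C}}(\mathcal{S}^{\ast})$ are polynomial-time computable (linear in the size of $\phi$), and the input solutions $\boldsymbol{s},\boldsymbol{t}$ for \textsc{st-Conn} are simply replaced by $\boldsymbol{s}\oplus 1^n,\boldsymbol{t}\oplus 1^n$. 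This gives the desired P- and coNP-algorithms and the $2n$ bound on the diameter, completing the dichotomy theorem \prettyref{thm:dich}.
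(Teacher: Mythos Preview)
Your proposal is correct and follows exactly the approach the paper takes: it combines \prettyref{lem:c44} for the safely componentwise bijunctive case, \prettyref{lem:c46} for the safely OR-free case, and handles the safely NAND-free case by the duality with OR-free. The paper itself does not spell out a proof of this corollary at all; it merely remarks that ``safely NAND-free relations are symmetric to safely OR-free relations'' and states the corollary, whereas you make the complementation $\boldsymbol{a}\mapsto\boldsymbol{a}\oplus 1^n$ and the resulting graph isomorphism $G(\phi)\cong G(\phi^{\ast})$ explicit, which is a welcome clarification rather than a different argument.
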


\section{\label{sec:cpss}CPSS Sets of Relations: A~Simple~Algorithm~for~Connectivity}

The rest of this chapter is devoted to the complexity of \noun{Conn}\textsubscript{C}($\mathcal{S}$).
In this section we cover the tractable case; we show that for CPSS
sets $\mathcal{S}$ of relations, for every CNF\textsubscript{C}($\mathcal{S}$)-formula
whose solution graph is disconnected, already the projection to the
variables of some constraint is disconnected. We then use this property
to derive a simple algorithm for \noun{Conn}\textsubscript{C}($\mathcal{S}$)
(Gopalan et al.\ had given much more complicated algorithms in Lemmas
4.9, 4.10 and 4.13 of \citep{gop}).
\begin{defn}
\label{def:dcp}A set $\mathcal{S}$ of logical relations is \emph{constraint-projection
separating }(\emph{CPS}), if every CNF\textsubscript{C}($\mathcal{S}$)-formula
$\phi$ whose solution graph $G(\phi)$ is disconnected contains a
constraint $C_{i}$ s.t. $G(\phi_{i})$ is disconnected, where $\phi_{i}$
is the projection of $\phi$ to $\mathrm{Var}(C_{i})$.
\end{defn}
\begin{wrapfigure}{r}{0.27\columnwidth}%
\includegraphics[scale=0.65]{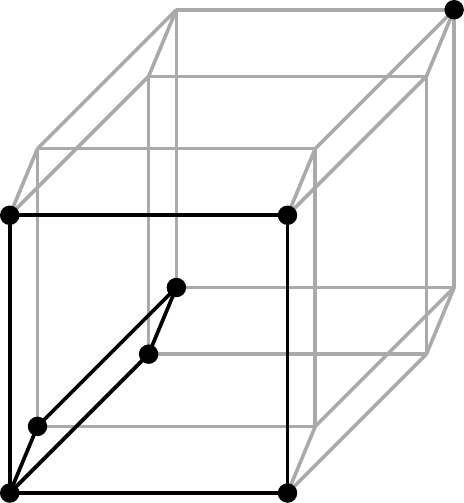}\end{wrapfigure}%
For example, $\mathcal{S}=\{x\vee\overline{y}\}$ is CPS (for the
proof see \prettyref{lem:prjs}); so, e.g. for the CNF\textsubscript{C}($\mathcal{S}$)-formula
$\left(x\vee\overline{y}\right)\wedge\left(y\vee\overline{z}\right)\wedge\left(z\vee\overline{x}\right),$
the projections to $\{x,y\}$, $\{y,z\}$ and $\{z,x\}$ are all disconnected.

In contrast, $\mathcal{S}'=\{x\vee\overline{y}\vee\overline{z}\}$
is not CPS: For example, the CNF\textsubscript{C}($\mathcal{S}'$)-formula
\[
\left(x\vee\overline{y}\vee\overline{z}\right)\wedge\left(y\vee\overline{z}\vee\overline{w}\right)\wedge\left(z\vee\overline{w}\vee\overline{x}\right)\wedge\left(w\vee\overline{x}\vee\overline{y}\right)
\]
is disconnected (see the graph on the right), but the projection to
any three variables is connected.

We cannot provide an algorithm to decide for an arbitrary set of relations
if it is CPS, but we will determine exactly which Schaefer sets are
CPS, and also exhibit classes of non-Schaefer sets that are CPS.

For IHSB$-$, IHSB$+$ and bijunctive sets of relations we can prove
even stronger properties in the next two lemmas:
\begin{lem}
\label{lem:prjs i}Let $\mathcal{S}$ be a set of IHSB$-$ (IHSB$+$)
relations and $\phi$ a \noun{CNF\textsubscript{C}($\mathcal{S}$)}-formula.
Then for any two components of $G(\phi)$, there is some constraint
$C_{i}$ of $\phi$ s.t.$\mbox{\,}$their images in the projection
$\phi_{i}$ of $\phi$ to $\mathrm{Var}(C_{i})$ are disconnected
in $G(\phi_{i})$.\end{lem}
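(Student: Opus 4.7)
I focus on the IHSB$-$ case; IHSB$+$ is handled dually, using $\vee$ and component maxima. By Lemma~\ref{lem:clos}(4), IHSB$-$ is precisely closure under the idempotent operation $x \wedge (y \vee z)$, and setting $y = z$ shows it also entails closure under $\wedge$. Since closure under an operation is preserved by identification of variables and substitution of constants, the relation corresponding to each constraint of $\phi$ remains IHSB$-$; Lemma~\ref{lem:l41} then yields that every component of $G(\phi)$ is closed under $\wedge$ and hence has a unique coordinate-wise minimum. These minima are the central tool of the argument.

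Given distinct components $F_1, F_2$ with minima $\boldsymbol{m}_1 \neq \boldsymbol{m}_2$, the plan is to exhibit a single constraint $C_i$ and two variables $k,l \in V_i$ such that $\phi$ forces $x_k = x_l$ while $\boldsymbol{m}_1$ has these coordinates $(0,0)$ and $\boldsymbol{m}_2$ has them $(1,1)$. Swapping $F_1$ and $F_2$ if necessary, fix a coordinate $j$ with $\boldsymbol{m}_1[j]=0$ and $\boldsymbol{m}_2[j]=1$. Because $\boldsymbol{m}_2$ is minimal in $F_2$ and single-flip neighbours of a solution must lie in the same component, $\boldsymbol{m}_2 - \boldsymbol{e}_j$ is not a solution, so some constraint $C_i$ is violated; in its IHSB$-$ representation, the offending clause has $x_j$ as its only satisfied literal, and any IHSB$-$ clause containing a positive literal has at most two literals. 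The unit case is excluded because it would force $\boldsymbol{m}_1[j]=1$, so the clause is binary, $\bar{x}_k \vee x_j$, with $\boldsymbol{m}_2[k]=1$; applying the implication $x_k \to x_j$ to $\boldsymbol{m}_1$ then gives $\boldsymbol{m}_1[k]=0$, placing $k$ in the finite difference set $D := \{p : \boldsymbol{m}_1[p]=0,\, \boldsymbol{m}_2[p]=1\}$ alongside $j$.

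Iterating this construction produces a sequence $j_1 = j, j_2 = k, j_3, \dots \in D$, each step supplied by a single constraint yielding an implication $x_{j_{l+1}} \to x_{j_l}$. By pigeonhole some $j_r$ equals an earlier $j_s$; transitivity along $x_{j_r} \to x_{j_{r-1}} \to \cdots \to x_{j_{s+1}}$ combined with $j_r = j_s$ gives $x_{j_s} \to x_{j_{s+1}}$, which together with the one-step implication $x_{j_{s+1}} \to x_{j_s}$ forces the equivalence $x_{j_s} \equiv x_{j_{s+1}}$ in every solution of $\phi$. The constraint $C_{i_s}$ witnessing the last step has both $j_s$ and $j_{s+1}$ in its variable set, so every vector in $\phi_{i_s}$ inherits $x_{j_s} = x_{j_{s+1}}$. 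The images of $F_1$ and $F_2$ in $G(\phi_{i_s})$ contain vectors with these coordinates equal to $(0,0)$ and $(1,1)$ respectively, and no path can connect them without passing through a vertex with $x_{j_s} \neq x_{j_{s+1}}$, which is forbidden; hence the two images lie in different components of $G(\phi_{i_s})$, as required. The main obstacle I anticipate is keeping the iterated variables inside $D$ at every step: this relies critically on the IHSB$-$ restriction that positive-literal clauses contain at most two literals, since otherwise the forcing clause might refer to variables outside $D$ and collapse the pigeonhole argument.
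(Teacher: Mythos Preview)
Your argument is correct and follows essentially the same route as the paper: take the component minima, use local minimality together with the IHSB$-$ restriction that any clause with a positive literal is binary to build a chain of two-literal implication clauses inside the difference set, close the chain into a cycle by pigeonhole, and read off a forced equality between two variables that co-occur in a single constraint. The only cosmetic difference is that the paper justifies the existence of minima via Lemma~\ref{lem:l45} (safely OR-free) rather than closure under $\wedge$, and traces the implication chain from the other component's minimum, which amounts to swapping the roles of $F_1$ and $F_2$.
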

\begin{proof}
We prove the IHSB$-$ case, the IHSB$+$ case is analogous. Consider
any two components $A$ and $B$ of $\phi$. Since every IHSB$-$
relation is safely OR-free, there is a locally minimal solution $\boldsymbol{a}$
in $A$ and a locally minimal solution $\boldsymbol{b}$ in $B$ by
\prettyref{lem:l45}. Let $U$ and $V$ be the sets of variables that
are assigned 1 in $\boldsymbol{a}$ and \textbf{$\boldsymbol{b}$},
resp.~At least one of the sets $U'=U\setminus V$ or $V'=V\setminus U$
is not empty, assume it is $U'$. Then for every $x_{1}\in U'$ there
must be a clause $x_{1}\vee\overline{x}_{2}$ with $x_{2}\in U$ since
$\boldsymbol{a}$ is locally minimal, and also $x_{2}$ must be from
$U'$, else $\boldsymbol{b}$ would not be satisfying.

But then for $x_{2}$ there must be also some variable $x_{3}\in U'$
and a clause $x_{2}\vee\overline{x}_{3}$, and we can add the clause
$x_{1}\vee\overline{x}_{3}$ to $\phi$ without changing its value.
Continuing this way, we will find a cycle, i.e.~a clause $x_{i}\vee\overline{x}_{i+1}$
with $x_{i+1}=x_{j}$, $j<i$. But then we already have $x_{j}\vee\overline{x}_{i}$
added, thus $(s_{i},s_{j})\in\{(0,0),(1,1)\}$ for any solution $\boldsymbol{s}$
of $\phi$, and there must be some constraint $C_{i}$ with both $x_{i}$
and $x_{j}$ occurring in it (the $C_{i}$ in which the original $x_{i}\vee\overline{x}_{j}$
appeared), and thus the projections of $A$ and $B$ to $\mathrm{Var}(C_{i})$
are disconnected in $G(\phi_{i})$.\end{proof}
\begin{lem}
\label{lem:prjs b}Let $\mathcal{S}$ be a set of bijunctive relations
and $\phi$ a \noun{CNF\textsubscript{C}($\mathcal{S}$)}-formula.
Then for any two components of $G(\phi)$, there is some constraint
$C_{i}$ of $\phi$ s.t.$\mbox{\,}$their images in the projection
$\phi_{i}$ of $\phi$ to $\mathrm{Var}(C_{i})$ are disconnected
in $G(\phi_{i})$.\end{lem}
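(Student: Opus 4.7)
The plan is to mirror the strategy used in Lemma 2.6.2 for IHSB$-$ sets, with minimization of the Hamming distance between $A$ and $B$ taking the role that local minimality played there. Since every relation in $\mathcal{S}$ is bijunctive, each constraint of $\phi$ is equivalent to a conjunction of 2-clauses; let $\psi$ denote the resulting 2-CNF equivalent of $\phi$, where each 2-clause of $\psi$ is tagged with the constraint of $\phi$ from whose 2-CNF decomposition it originates.

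First I would choose $\boldsymbol{a} \in A$ and $\boldsymbol{b} \in B$ such that $|\boldsymbol{a} - \boldsymbol{b}|$ is minimal among all pairs in $A \times B$, and set $D = \{i : a_i \neq b_i\}$; since $A$ and $B$ are distinct components, $|D| \geq 2$. For each $i \in D$, the vector $\boldsymbol{a} \oplus \boldsymbol{e}_i$ cannot be a solution of $\phi$, for otherwise it would lie in $A$ (as it is adjacent to $\boldsymbol{a}$) at Hamming distance $|D|-1$ from $\boldsymbol{b}$, contradicting minimality. Hence some 2-clause of $\psi$, originating from some constraint $C^{(i)}$ of $\phi$ with $x_i \in \mathrm{Var}(C^{(i)})$, is satisfied by $\boldsymbol{a}$ but violated by $\boldsymbol{a} \oplus \boldsymbol{e}_i$. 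Writing $\ell_k$ for the literal on $x_k$ that evaluates to true under $\boldsymbol{a}$, a direct case analysis --- exactly one literal of the violated 2-clause involves $x_i$; the other must be false under $\boldsymbol{a}$; and $\boldsymbol{b}$ must still satisfy the clause even though $\ell_i$ is false under $\boldsymbol{b}$ --- pins the clause to the form $\ell_i \vee \overline{\ell_{f(i)}}$ for some $f(i) \in D \setminus \{i\}$ with $x_{f(i)} \in \mathrm{Var}(C^{(i)})$.

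Iterating the map $f \colon D \to D$ on the finite set $D$ produces a cycle $i_1 \to i_2 \to \cdots \to i_k \to i_1$. The corresponding 2-clauses yield implications $\ell_{i_{s+1}} \to \ell_{i_s}$ (indices mod $k$) in the implication graph of $\psi$, which together form a directed cycle. Hence $\ell_{i_1}, \ldots, \ell_{i_k}$ lie in a single strongly connected component of that graph, and in particular $\phi$ forces the equivalence $\ell_{i_1} \leftrightarrow \ell_{i_2}$.

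Setting $C := C^{(i_1)}$, both $x_{i_1}$ and $x_{i_2}$ lie in $\mathrm{Var}(C)$, and every tuple of the projection $\phi_C$ must satisfy the forced equivalence on these two coordinates. Consequently $\pi_C(\boldsymbol{a})$ and $\pi_C(\boldsymbol{b})$ take opposite values on the pair $(x_{i_1}, x_{i_2})$, and any path in $G(\phi_C)$ joining them would have to pass through an intermediate tuple in which the two literals disagree --- impossible in $\phi_C$. Since $\pi_C(A)$ and $\pi_C(B)$ are each contained in a single component of $G(\phi_C)$ (projection is adjacency-non-increasing, so connected sets project to connected sets) and contain $\pi_C(\boldsymbol{a})$ and $\pi_C(\boldsymbol{b})$ respectively, the two images lie in different components, as required. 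The delicate step is the case analysis pinning down the violated 2-clause to the exact form $\ell_i \vee \overline{\ell_{f(i)}}$ with $f(i) \in D$; this is the bijunctive analogue of the ``$x_2 \in U'$'' deduction in Lemma 2.6.2, and, together with the minimality of the chosen pair, drives the whole argument.
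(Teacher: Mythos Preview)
Your proposal is correct and follows essentially the same approach as the paper: both pick a minimum-Hamming-distance pair $\boldsymbol{a}\in A$, $\boldsymbol{b}\in B$, argue that for each differing coordinate $i$ some 2-clause of the form $\ell_i\vee\overline{\ell_{f(i)}}$ with $f(i)\in D\setminus\{i\}$ must be present (the paper phrases this via the set $L$ of literals true in $\boldsymbol{a}$ and false in $\boldsymbol{b}$), iterate to obtain a cycle, deduce the forced equivalence $\ell_{i_1}\leftrightarrow\ell_{i_2}$, and use that both variables occur in the constraint contributing the clause $\ell_{i_1}\vee\overline{\ell_{i_2}}$. The only cosmetic difference is that you speak of iterating a map $f$ and of the implication graph's SCC, whereas the paper builds the chain by successively ``adding'' resolvent clauses; the content is identical.
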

\begin{proof}
The proof is similar to the last one. Consider any two components
$A$ and $B$ of $\phi$ and two solutions $\boldsymbol{a}$ in $A$
and $\boldsymbol{b}$ in $B$ that are at minimum Hamming distance.
Let $L$ be the set of literals that are assigned 1 in \textbf{$\boldsymbol{a}$},
but assigned 0 in\textbf{ $\boldsymbol{b}$}. Then for every $l_{1}\in L$
that is assigned 1 in\textbf{ $\boldsymbol{a}$}, there must be a
clause equivalent to $l_{1}\vee\overline{l}_{2}$ in $\phi$ s.t.$\mbox{\,}$$l_{2}$
is also assigned 1 in\textbf{ $\boldsymbol{a}$}, else the variable
corresponding to $l_{1}$ could be flipped in \textbf{$\boldsymbol{a}$},
and the resulting vector would be closer to \textbf{$\boldsymbol{b}$},
contradicting our choice of $\boldsymbol{a}$ and \textbf{$\boldsymbol{b}$}.
Also, $l_{2}$ must be assigned 0 in \textbf{$\boldsymbol{b}$}, i.e.~$l_{2}\in L$,
else $\boldsymbol{b}$ would not be satisfying. 

But then for $l_{2}$ there must be also some literal $l_{3}\in L$
that is assigned 1 in\textbf{ }$\boldsymbol{a}$ and a clause equivalent
to $l_{2}\vee\overline{l}_{3}$ in $\phi$, and we can add the clause
$l_{1}\vee\overline{l}_{3}$ to $\phi$ without changing its value.
Continuing this way, we will find a cycle, i.e.~a clause equivalent
to $l_{n}\vee\overline{l}_{n+1}$ with $l_{n+1}=l_{m}$, $m<n$. But
then we already have $l_{m}\vee\overline{l}_{n}$ added, thus if $x_{i}$
and $x_{j}$ are the variables corresponding to $l_{n}$ resp.~$l_{m}$,
then $(s_{i},s_{j})\in\{(0,1),(1,0)\}$ (if $l_{n}$ and $l_{m}$
were both positive or both negative), or $(s_{i},s_{j})\in\{(0,0),(1,1)\}$
(otherwise), for any solution $\boldsymbol{s}$ of $\phi$. Also,
there must be some constraint $C_{i}$ with both $x_{i}$ and $x_{j}$
occurring in it (the constraint in which the clause equivalent to
$l_{n}\vee\overline{l}_{m}$ appeared), and thus the projections of
$A$ and $B$ to $\mathrm{Var}(C_{i})$ are disconnected in $G(\phi_{i})$.\end{proof}
\begin{lem}
\label{lem:prjs}Every set $\mathcal{S}$ of  safely componentwise
bijunctive (safely componentwise IHSB$-$,  safely componentwise IHSB$+$,
affine) relations is constraint-projection separating.\end{lem}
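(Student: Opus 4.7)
The plan is to treat all four cases uniformly. First, I would observe that every affine relation is safely componentwise bijunctive by \prettyref{lem:l42}, so the affine case reduces to the safely componentwise bijunctive one, and it remains to treat the three ``safely componentwise'' variants. Let $\phi$ be a CNF$_\mathrm{C}(\mathcal{S})$-formula whose solution graph is disconnected, and for each constraint $C_i$ let $R_i$ denote its effective relation, obtained from some $R \in \mathcal{S}$ by constant substitution and variable identification. A small preliminary observation to record: each $R_i$ inherits the componentwise property (bijunctive / IHSB$-$ / IHSB$+$ components). Variable identification is covered by ``safely''; substitution of a constant $x_j = c$ amounts to intersecting $R$ with a coordinate hyperplane, which preserves the closure properties of \prettyref{lem:clos} and therefore componentwise bijunctivity (or IHSB$\pm$) as well.

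The central step is then a case split according to how the components of $G(\phi)$ project onto the $R_i$. In the \emph{first case}, for some constraint $C_i$ there exist two components $A, B$ of $G(\phi)$ whose projections $\pi_i(A), \pi_i(B)$ lie in different connected components of $G(R_i)$; since $\phi_i \subseteq R_i$ and the continuous image $\pi_i(A)$ of the connected set $A$ is itself connected in $G(\phi_i)$ (and likewise for $\pi_i(B)$), the two projections lie in different components of $G(\phi_i)$, which is therefore disconnected and we are done. In the \emph{second case}, for every constraint $C_i$ all components of $G(\phi)$ project into a single component $R_i^{\star}$ of $G(R_i)$. Replacing each $R_i$ by $R_i^{\star}$ yields a formula $\phi'$ with $\phi' \subseteq \phi$; conversely every $s \in \phi$ belongs to some component of $G(\phi)$ and hence already projects into $R_i^{\star}$ for each $i$, giving $\phi \subseteq \phi'$, so in fact $\phi' = \phi$.

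Crucially, $\phi'$ is now a conjunction of constraints whose relations $R_i^{\star}$ are bijunctive (respectively IHSB$-$, IHSB$+$), each being a component of the corresponding componentwise bijunctive / IHSB$-$ / IHSB$+$ relation. \prettyref{lem:prjs b} (respectively \prettyref{lem:prjs i}) therefore applies to $\phi'$: since $G(\phi') = G(\phi)$ is disconnected, the lemma produces a constraint $C_i$ whose projection $\phi'_i = \phi_i$ is disconnected, completing the argument. The main subtlety I expect is not in the case distinction itself but in bridging the formula-level statement of CPS-ness with the component-level conclusions of \prettyref{lem:prjs b} and \prettyref{lem:prjs i}; the second case is exactly what collapses the constraint relations to those components while keeping the solution set intact, so that the earlier lemmas become applicable in the first place.
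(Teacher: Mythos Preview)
Your proposal is correct and follows essentially the same approach as the paper: reduce affine to safely componentwise bijunctive via \prettyref{lem:l42}, then split into the case where some constraint's effective relation has two of its components hit by solutions of $\phi$ (giving a disconnected projection directly), and the case where every constraint can be tightened to a single component, yielding an equivalent formula over genuinely bijunctive (resp.\ IHSB$\pm$) relations to which \prettyref{lem:prjs b} or \prettyref{lem:prjs i} applies. Your write-up is in fact slightly more careful than the paper's in two respects: you explicitly justify why the effective relation $R_i$ remains componentwise bijunctive after constant substitution (the paper tacitly assumes this), and you spell out the mutual inclusion $\phi=\phi'$ in the second case.
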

\begin{proof}
The affine case follows from the  safely componentwise bijunctive
case since every affine relation is safely componentwise bijunctive
by \prettyref{lem:l42}.

If the relation corresponding to some $C_{i}$ is disconnected, and
there is more than one component of this relation for which $\phi$
has solutions with the variables of $C_{i}$ assigned values in that
component, the projection of $\phi$ to $\mathrm{Var}(C_{i})$ must
be disconnected in $G(\phi_{i})$.

So assume that for every constraint $C_{i}$, $\phi$ only has solutions
in which the variables of $C_{i}$ are assigned values in one component
$P_{i}$ of the relation corresponding to $C_{i}$. Then we can replace
every $C_{i}$ with $P_{i}$ to obtain an equivalent formula $\phi'$.
Since $\mathcal{S}$ is  safely componentwise bijunctive (safely componentwise
IHSB$-$,  safely componentwise IHSB$+$), each $P_{i}$ is bijunctive
(IHSB$-$, IHSB$-$), and thus so is $\phi'$, and the statement follows
from Lemmas \prettyref{lem:prjs i} and \prettyref{lem:prjs b}. 
\end{proof}
We are now ready to show how connectivity can be solved in polynomial
time for CPSS sets of relations:
\begin{lem}
\label{lem:alg}If a finite set $\mathcal{S}$ of relations is constraint-projection
separating, \noun{Conn}\textsubscript{\emph{C}}\noun{($\mathcal{S}$)
}is in $\mathrm{P^{NP}}$.\footnote{$\mathrm{P^{NP}}=\mathrm{P^{SAT}}$ is the class of languages decidable
by a deterministic polynomial-time Turing machine with oracle-access
to an NP-complete problem, e.g. \noun{Sat}.} If $\mathcal{S}$ is also Schaefer, \noun{Conn}\emph{\textsubscript{C}}\noun{($\mathcal{S}$)}
is in \noun{P.}\end{lem}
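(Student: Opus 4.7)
The plan is to exploit the CPS property to reduce the connectivity question on $G(\phi)$ to a family of connectivity questions on the much smaller projections $G(\phi_i)$, then use that each such projection has at most $2^k$ vertices, where $k$ is the maximal arity appearing in $\mathcal{S}$ (a constant depending only on $\mathcal{S}$).

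First I would establish the equivalence: for a CPS set $\mathcal{S}$ and any CNF\textsubscript{C}($\mathcal{S}$)-formula $\phi$, the graph $G(\phi)$ is connected if and only if for every constraint $C_i$ of $\phi$ the projection $G(\phi_i)$ is connected. The ``if'' direction is exactly the contrapositive of \prettyref{def:dcp}. The ``only if'' direction follows from the general fact that the coordinate-projection of any path in the hypercube is a walk (after removing repetitions), so if two solutions of $\phi$ are connected in $G(\phi)$, their projections to $\mathrm{Var}(C_i)$ are connected in $G(\phi_i)$.

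Given this equivalence, the algorithm is the obvious enumeration. For each constraint $C_i$ of $\phi$, let $k_i=|\mathrm{Var}(C_i)|\le k$ and enumerate all $2^{k_i}$ assignments $\boldsymbol{a}$ to $\mathrm{Var}(C_i)$. For each such $\boldsymbol{a}$, decide whether $\phi[\mathrm{Var}(C_i)/\boldsymbol{a}]$ is satisfiable; this is itself a CNF\textsubscript{C}($\mathcal{S}$)-satisfiability instance of size polynomial in $|\phi|$. The $\boldsymbol{a}$ for which the answer is yes are precisely the vertices of $G(\phi_i)$; since there are at most $2^k$ of them, we can build $G(\phi_i)$ explicitly (edges are pairs at Hamming distance $1$) and check its connectedness by breadth-first search in constant time. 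Output ``disconnected'' if any $G(\phi_i)$ is disconnected, ``connected'' otherwise. An unsatisfiable $\phi$ is handled uniformly, since then every projection is empty, which we treat as connected per the paper's convention.

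For the complexity, the total number of satisfiability subqueries is $\sum_i 2^{k_i}\le m\cdot 2^k$, polynomial in $|\phi|$. With a \noun{Sat}-oracle this places \noun{Conn}\textsubscript{C}($\mathcal{S}$) in $\mathrm{P^{NP}}$. When $\mathcal{S}$ is additionally Schaefer, \noun{Sat}\textsubscript{C}($\mathcal{S}$) is in P by Schaefer's dichotomy (\prettyref{thm:sch}), so the oracle calls become polynomial-time computations and we obtain an outright P algorithm. I do not see a real obstacle here: the only non-routine step is the iff above, which is a one-line consequence of the CPS definition combined with the projection-of-paths observation; the rest is a straightforward enumeration argument exploiting the bounded arity of relations in $\mathcal{S}$.
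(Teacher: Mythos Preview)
Your proposal is correct and follows essentially the same approach as the paper: compute each projection $\phi_i$ by enumerating the (constantly many) assignments to $\mathrm{Var}(C_i)$ and testing satisfiability, then check connectivity of the resulting constant-size graphs; the biconditional is justified exactly as you do (CPS gives one direction, projection-of-paths the other). Your write-up is in fact more explicit than the paper's, which leaves the ``only if'' direction and the arity bound implicit.
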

\begin{proof}
For any \noun{CNF\textsubscript{C}($\mathcal{S}$)}-formula $\phi$,
connectivity of $G(\phi)$ can be decided as follows:
\begin{itemize}[label= ]
\item For every constraint $C_{i}$ of $\phi$, obtain the projection $\phi_{i}$
of $\phi$ to the variables $\boldsymbol{x}^{i}$ occurring in $C_{i}$
by checking for every assignment $\boldsymbol{a}$ of $\boldsymbol{x}^{i}$
whether $\phi[\boldsymbol{x}^{i}/\boldsymbol{a}]$ is satisfiable.
Then $G(\phi)$ is connected iff for no $\phi_{i}$, $G(\phi_{i})$
is disconnected.
\end{itemize}
If \noun{$\mathcal{S}$} is Schaefer, every projection can be computed
in polynomial time, else we use a \noun{Sat}-oracle. Connectivity
of every $G(\phi_{i})$ can be checked in constant time.

If $G(\phi)$ is disconnected, some $G(\phi_{i})$ is disconnected
since $\phi$ is CPS by \prettyref{lem:prjs} below. If some $G(\phi_{i})$
is disconnected, $G(\phi)$ clearly is also disconnected.\end{proof}
\begin{cor}
\label{cor:alg-1}If a finite set $\mathcal{S}$ of relations is CPSS,
\noun{Conn}\textsubscript{\noun{C}}\noun{($\mathcal{S}$)} is polynomial-time
solvable. 
\end{cor}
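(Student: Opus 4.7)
The plan is to derive this corollary as an immediate consequence of Lemma \ref{lem:alg} and Lemma \ref{lem:prjs}. Since Lemma \ref{lem:alg} already gives a polynomial-time algorithm whenever $\mathcal{S}$ is simultaneously constraint-projection separating (CPS) and Schaefer, it suffices to verify that \emph{every} CPSS set has both of these properties.

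First I would go through the four clauses of the CPSS definition and show each one implies CPS. In case 1, $\mathcal{S}$ is bijunctive, and by Lemma \ref{lem:l42} every bijunctive relation is safely componentwise bijunctive, so \prettyref{lem:prjs} applies. In cases 2 and 3, safely componentwise IHSB$-$ resp.\ safely componentwise IHSB$+$ is explicitly assumed, so \prettyref{lem:prjs} applies directly. In case 4, $\mathcal{S}$ is affine, which is also a hypothesis of \prettyref{lem:prjs}. Thus CPSS $\Rightarrow$ CPS in all four cases.

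Next I would observe that CPSS $\Rightarrow$ Schaefer, which is essentially by inspection of the two definitions: condition 1 of CPSS matches condition 1 of Schaefer; conditions 2 and 3 of CPSS impose the additional IHSB$\pm$ requirement on top of (dual) Horn, which are conditions 2 and 3 of Schaefer; and condition 4 matches condition 4. This fact is also asserted right after \prettyref{def:cpss} in the text.

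With CPS and Schaefer both established, Lemma \ref{lem:alg} yields that \noun{Conn}\textsubscript{C}($\mathcal{S}$) is in P, which is the claim. There is no real obstacle here; the work is purely bookkeeping against the definitions and the two invoked lemmas. If anything, one should be careful to note that the \emph{safely componentwise} hypothesis in \prettyref{lem:prjs} is exactly what the CPSS definition provides in cases 2 and 3, whereas plain Horn or dual Horn alone would not suffice to invoke \prettyref{lem:prjs}, which is precisely why CPSS is a strict refinement of Schaefer and why the trichotomy boundary sits where it does.
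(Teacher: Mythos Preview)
Your proposal is correct and matches the paper's reasoning exactly: the paper states Corollary~\ref{cor:alg-1} without a separate proof, as it follows immediately from Lemma~\ref{lem:alg} once one observes that CPSS implies both CPS (via Lemma~\ref{lem:prjs}) and Schaefer (by inspection of the definitions), precisely as you lay out.
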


\section{The Last Piece: \noun{coNP}-Hardness for\noun{ }Connectivity}

It remains to determine the complexity of\noun{ Conn}\textsubscript{\noun{C}}
for safely tight sets of relations that are not CPSS. For non-Schaefer
sets this was done already by Gopalan et al.:
\begin{lem}
\label{lem:l48}\emph{\citep[corrected from][Lemma 4.8]{gop}} ~For
S safely tight, but not Schaefer, \noun{Conn}\textsubscript{C}($\mathcal{S}$)
is \noun{coNP}-complete.\end{lem}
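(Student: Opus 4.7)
For coNP membership, I would appeal directly to Corollary \ref{lem:c47}. Since $\mathcal{S}$ is safely tight, $\noun{st-Conn}_C(\mathcal{S})$ is in P. A disconnectedness certificate for a CNF\textsubscript{C}($\mathcal{S}$)-formula $\phi$ is simply a pair $(\boldsymbol{s},\boldsymbol{t})$ of satisfying assignments that lie in different components of $G(\phi)$; both that $\boldsymbol{s},\boldsymbol{t}\in[\phi]$ and that there is no path between them can be checked in polynomial time, the latter via the $st$-connectivity algorithm. Hence $\overline{\noun{Conn}_C(\mathcal{S})}$ is in NP.

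For coNP-hardness, the plan is to polynomial-time reduce an NP-complete satisfiability problem to $\overline{\noun{Conn}_C(\mathcal{S})}$. Schaefer's theorem (\prettyref{thm:sch}) guarantees that $\noun{Sat}_C(\mathcal{S})$ is NP-complete whenever $\mathcal{S}$ is not Schaefer, so $\noun{Sat}_C(\mathcal{S})$ itself is the natural candidate for the source problem. Given an instance $\phi(\boldsymbol{x})$, the goal is to build a CNF\textsubscript{C}($\mathcal{S}$)-formula $\psi(\boldsymbol{x},\boldsymbol{y})$ such that $G(\psi)$ is disconnected iff $\phi$ is satisfiable. The idea is that the safely-tight hypothesis forces the non-Schaefer failure to be of a very specific ``structured'' kind, which can be leveraged to produce a gadget with exactly two components whose mutual connection is controlled by a satisfying assignment of $\phi$.

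I would split the argument by which of the three safely-tight clauses applies. If every relation in $\mathcal{S}$ is safely componentwise bijunctive but $\mathcal{S}$ is not Schaefer, then $\mathcal{S}$ must contain a relation whose connected components are proper subcubes but which is itself not bijunctive/Horn/dual-Horn/affine; the prototype is a NAE-type relation, whose two components (a vector and its complement class) give the essential ``two-component'' gadget for the reduction (and NAE-SAT is the natural NP-hard source). The safely OR-free case uses the unique minima guaranteed by \prettyref{lem:l45}: non-Horn-ness allows us to express a relation that fails $\wedge$-closure, so the minima of two prospective components can be forced apart exactly when $\phi$ is satisfiable. The safely NAND-free case is dual, using unique maxima. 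In each case the gadget is combined in parallel with a faithful encoding of $\phi$ so that disconnection of $G(\psi)$ is synchronised with the existence of a satisfying assignment.

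The main obstacle is constructing the gadget using only relations from $\mathcal{S}$ itself (no unrestricted expressibility is available, since we are in the safely-tight regime where full structural expressibility of \prettyref{thm:se} does not apply). One must use the plain-expressibility content of Schaefer's theorem to obtain the required helper clauses within each case, and then verify carefully that the combined formula has exactly the intended component structure: the safely-tight constraints must not accidentally connect the two halves of the gadget, and conversely the satisfiability witness must produce a genuine bridge rather than an artefact. A delicate consistency check of this sort, done separately for the componentwise-bijunctive, OR-free, and NAND-free subcases, is what carries the proof.
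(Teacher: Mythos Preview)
Your membership argument via Corollary~\ref{lem:c47} is exactly what the paper uses. The hardness argument, however, is where you diverge substantially, and your plan as written has a real gap.

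The paper's route is far shorter than the case-split you propose. It reduces not from \textsc{Sat}$_C(\mathcal{S})$ but from \textsc{Another-Sat}$(\mathcal{S})$: given a CNF$_C(\mathcal{S})$-formula $\varphi$ together with a known solution $\boldsymbol{s}$, decide whether a second solution exists. Juban proved this problem NP-complete for every non-Schaefer $\mathcal{S}$, and, crucially, also proved that the disequality relation $x\neq y$ is expressible as a CNF$_C(\mathcal{S})$-formula whenever $\mathcal{S}$ is non-Schaefer. With these two facts the reduction is a one-liner: set
\[
\psi(x_1,\dots,x_n,y_1,\dots,y_n)=\varphi(x_1,\dots,x_n)\wedge\bigwedge_i(x_i\neq y_i).
\]
Every solution of $\psi$ has the form $(\boldsymbol{a},\overline{\boldsymbol{a}})$ with $\boldsymbol{a}\models\varphi$, and any two distinct such solutions differ in at least two coordinates, so $G(\psi)$ is a collection of isolated points. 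Hence $G(\psi)$ is connected iff $\varphi$ has the \emph{unique} solution $\boldsymbol{s}$, i.e.\ iff the \textsc{Another-Sat} instance is a no-instance. No case analysis on the three safely-tight subcases is needed, and the safely-tight hypothesis plays no role in the hardness direction at all.

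Your plan, by contrast, is not yet a proof. The gadget you describe is never made concrete, and the informal description is internally inconsistent: you say $G(\psi)$ should be \emph{disconnected} iff $\phi$ is satisfiable, but then speak of the satisfiability witness producing ``a genuine bridge'', which would \emph{connect} components. More fundamentally, a direct reduction from \textsc{Sat}$_C(\mathcal{S})$ to disconnectedness is awkward because you have no solution in hand when $\phi$ is satisfiable and must somehow manufacture two separated components out of nothing; reducing from \textsc{Another-Sat} sidesteps this entirely. The missing idea is precisely the expressibility of $x\neq y$ from any non-Schaefer set, which collapses the whole case analysis into a single uniform construction.
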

\begin{proof}
The problem \textsc{Another-Sat}$({\cal S})$ is: given a formula
$\varphi$ in CNF\textsubscript{C}($\mathcal{S}$) and a solution
$\boldsymbol{s}$, does there exist a solution $\boldsymbol{t}\not=\boldsymbol{s}$?
Juban (\citep{Juban99}, Theorem 2) shows that if ${\cal S}$ is not
Schaefer, then \textsc{Another-Sat} is NP-complete. He also shows
(\citep{Juban99}, Corollary 1) that if ${\cal S}$ is not Schaefer,
then the relation $x\neq y$ is expressible as a CNF\textsubscript{C}($\mathcal{S}$)-formula.

Since ${\cal S}$ is not Schaefer, \textsc{Another-Sat}$({\cal S})$
is NP-complete. Let $\varphi,\boldsymbol{s}$ be an instance of \textsc{Another-Sat}
on variables $x_{1},\dots,x_{n}$. We define a CNF\textsubscript{C}($\mathcal{S}$)
formula $\psi$ on the variables $x_{1},\dots,x_{n},y_{1},\dots,y_{n}$
as 
\[
\psi(x_{1},\dots,x_{n},y_{1},\dots,y_{n})=\varphi(x_{1},\dots,x_{n})\wedge_{i}(x_{i}\neq y_{i})
\]
It is easy to see that $G(\psi)$ is connected if and only if $\boldsymbol{s}$
is the unique solution to $\varphi$. 
\end{proof}
We are now left with the case of Horn (dual Horn) sets of relations
containing at least one relation that is not safely componentwise
IHSB$-$ (not safely componentwise IHSB$+$).

For one such set, namely $\{x\vee\overline{y}\vee\overline{z}\}$,
Makino, Tamaki, and Yamamoto showed in 2007 that \noun{Conn}\textsubscript{C}
is coNP-complete \citep{Makino:2007:BCP:1768142.1768162}. Consequently,
Gopalan et al.~conjectured that \noun{Conn}\textsubscript{C} is
coNP-complete for any such set, and already suggested a way for proving
that: One had to show that \noun{Conn}$_{C}$($\{M\}$) for the relation
$M=\left(x\vee\overline{y}\vee\overline{z}\right)\wedge\left(\overline{x}\vee z\right)$
is coNP-hard \citep{gop}. We will prove this in \prettyref{lem:m har}
by a reduction from the complement of a satisfiability problem.

Gopalan et al.~stated (without giving the proof) that they could
show that $M$ is structurally expressible from every such set, using
a similar reasoning as in the proof of their structural expressibility
theorem (Lemma 3.4 in \citep{gop}). We give a quite different proof
in \prettyref{lem:exp m}, that shows that $M$ actually is expressible
as a CNF\textsubscript{C}($\mathcal{S}$)-formula, which is of course
a structural expression.

The proofs of \prettyref{lem:m har} and \prettyref{lem:exp m} are
arguably the most intricate part of this thesis and will be adapted
for the no-constants and quantified cases in the next chapter.

\subsection{Connectivity of Horn Formulas}

In this subsection, we introduce terminology and develop tools we
will need for the proofs of \prettyref{lem:m har} and \prettyref{lem:exp m}.
\begin{defn}
Clauses with only one literal are called \emph{unit clauses} (\emph{positive}
if the literal is positive, \emph{negative} otherwise). Clauses with
only negative literals are \emph{restraint clauses}, and the sets
of variables occurring in restraint clauses are \emph{restraint sets}.
Clauses having one positive and one or more negative literals are
\emph{implication clauses}. Implication clauses with two or more negative
literals are \emph{multi-implication clauses.}

A variable\emph{ $x$ is implied by a set of variables $U$,} if setting
all variables from $U$ to 1 forces $x$ to be 1 in any satisfying
assignment. We write Imp($U$) for the set of variables implied by
$U$, we abbreviate Imp($\{x\}$) as Imp($x$). We simply say that
$x$ \emph{is} \emph{implied,} if $x\in\mathrm{Imp}(U\setminus\{x\})$
for some $U$. Note that $U\subseteq\mathrm{Imp}(U)$ for all sets
$U$.

$U$ is \emph{self-implicating} if every $x\in U$ is implied by $U\setminus\{x\}$.
$U$ is \emph{maximal self-implicating}, if further $U=\mathrm{Imp}(U)$.\end{defn}
\begin{rem}
\label{hyp}A Horn formula can be represented by a directed hypergraph
with hyperedges of head-size one as follows: For every variable, there
is a node, for every implication clause $y\vee\overline{x}_{1}\vee\cdots\vee\overline{x}_{k}$,
there is a directed hyperedge from $x_{1},\ldots,x_{k}$ to $y$,
for every restraint clause $\overline{x}_{1}\vee\cdots\vee\overline{x}_{k}$,
there is a directed hyperedge from $x_{1},\ldots,x_{k}$ to a special
node labeled ``false'', and for every positive unit clause $x$,
there is a directed hyperedge from a special node labeled ``true''
to $x$. 

We draw the directed hyperedges as joining lines, e.g., $x\vee\overline{y}\vee\overline{z}=$
$\vcenter{\hbox{\includegraphics[scale=0.4]{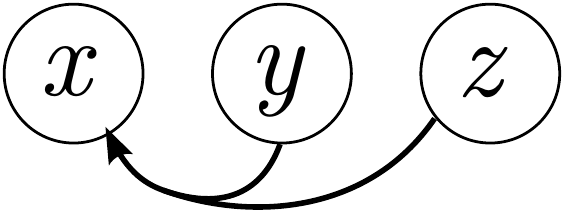}}}$. For simplicity,
we omit the ``false'' and ``true'' nodes in the drawings and let
the corresponding hyperedges end, resp. begin, in the void. 

For an introduction to directed hypergraphs see e.g. \citep{gallo1993directed}.
General CNF-formulas can be represented by general directed hypergraphs,
see e.g. \citep{andersen2001easy}.\end{rem}
\begin{lem}
\label{lem:loc min}The solution graph of a Horn formula $\phi$ without
positive unit clauses is disconnected iff $\phi$ has a locally minimal
nonzero solution.\end{lem}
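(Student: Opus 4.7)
The plan is to combine two observations: the assumption that $\phi$ has no positive unit clauses forces $0^n$ to be a solution, and Lemma~4.5 (applied since every Horn relation is safely OR-free by Lemma~4.2) says that each connected component of $G(\phi)$ has a unique locally minimal solution, which is in fact its componentwise minimum.

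First I would verify that $\boldsymbol{0}:=0^n$ satisfies $\phi$. Every clause of a Horn formula either is a restraint clause (only negative literals) or an implication clause (exactly one positive literal and at least one negative literal), since by hypothesis positive unit clauses are forbidden. Both kinds of clauses are satisfied by the all-zero assignment, because each has at least one negative literal. Hence $\boldsymbol{0}\in[\phi]$, and since $\boldsymbol{0}$ is the coordinatewise minimum of $\{0,1\}^n$, it is automatically locally minimal and in fact the minimum of its component.

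For the $(\Leftarrow)$ direction, suppose $\boldsymbol{u}\neq\boldsymbol{0}$ is a locally minimal solution of $\phi$. By \prettyref{lem:l45} (applicable because $\phi$ is Horn and hence safely OR-free), the component of $G(\phi)$ containing $\boldsymbol{u}$ has a unique locally minimal solution, which is its minimum. Thus $\boldsymbol{u}$ is the minimum of its component, and in particular $\boldsymbol{0}$ does not lie in this component. So $G(\phi)$ has at least the two distinct components containing $\boldsymbol{u}$ and $\boldsymbol{0}$, and is therefore disconnected.

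For the $(\Rightarrow)$ direction, suppose $G(\phi)$ is disconnected. Then there is some component $C$ of $G(\phi)$ that does not contain $\boldsymbol{0}$. Again by \prettyref{lem:l45}, $C$ contains a (unique) locally minimal solution $\boldsymbol{u}$, and $\boldsymbol{u}\neq\boldsymbol{0}$ since $\boldsymbol{0}\notin C$. This $\boldsymbol{u}$ is the desired locally minimal nonzero solution. The only subtlety is to invoke \prettyref{lem:l42} to ensure that every Horn relation (and therefore every constraint appearing in $\phi$) is safely OR-free, which is precisely the hypothesis needed for \prettyref{lem:l45}; apart from that the argument is essentially immediate from the two structural facts above, so no real obstacle arises.
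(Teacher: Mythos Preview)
Your proof is correct and follows exactly the same approach as the paper: observe that the all-zero vector is a solution (since absence of positive unit clauses means every clause has a negative literal), invoke \prettyref{lem:l42} to get that Horn relations are safely OR-free, and then apply \prettyref{lem:l45} to conclude. The paper's proof is simply the one-sentence version of what you have spelled out in full.
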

\begin{proof}
This follows from \prettyref{lem:l45} since the all-zero vector is
a solution of every Horn formula without positive unit clauses, and
Horn formulas are safely OR-free by \prettyref{lem:l42}.\end{proof}
\begin{lem}
\label{lem:horn conn}For every Horn formula $\phi$ without positive
unit clauses, there is a bijection correlating each connected component
$\phi_{i}$ with a maximal self-implicating set $U_{i}$ containing
no restraint set; $U_{i}$ consists of the variables assigned 1 in
the minimum solution of $\phi_{i}$ (the ``lowest'' component is
correlated with the empty set).\end{lem}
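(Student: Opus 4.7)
The plan is to exhibit the bijection explicitly in both directions, using \prettyref{lem:l45} as the main handle: since every Horn relation is safely OR-free by \prettyref{lem:l42}, each component of $G(\phi)$ has a unique locally minimal (hence minimum) solution. I will map a component $\phi_i$ to $U_i := \{x : m_x = 1\}$, where $\boldsymbol{m}$ is the minimum solution of $\phi_i$, and conversely map a set $U$ to the component containing the indicator vector $\boldsymbol{s}_U$.

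Going forward, for a component with minimum $\boldsymbol{m}$ and $U = U_i$, the set $U$ contains no restraint set since every restraint clause $\overline{x}_1 \vee \cdots \vee \overline{x}_k$ must be satisfied by $\boldsymbol{m}$, so $\{x_1,\ldots,x_k\}\not\subseteq U$. For self-implication, I fix $x \in U$ and look at $\boldsymbol{m}\oplus \boldsymbol{e}_x$: by local minimality this is not a solution, so some clause is violated. Here is where I use the Horn structure crucially: a Horn clause violated by flipping $x$ from $1$ to $0$ must have $x$ as its (unique) positive literal, and all negated variables must be assigned $1$ in $\boldsymbol{m}$, i.e., lie in $U\setminus\{x\}$. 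Such a clause $x \vee \overline{y}_1\vee\cdots\vee\overline{y}_k$ with $y_j \in U\setminus\{x\}$ forces $x=1$ whenever $U\setminus\{x\}$ is set to $1$, so $x\in\mathrm{Imp}(U\setminus\{x\})$. Maximality ($U = \mathrm{Imp}(U)$) is then trivial: any $y \in \mathrm{Imp}(U)$ must equal $1$ in the solution $\boldsymbol{m}$, which assigns all of $U$ to $1$.

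Going backward, given a maximal self-implicating $U$ containing no restraint set, I claim $\boldsymbol{s}_U$ is a locally minimal solution. Restraint clauses are satisfied since $U$ contains no restraint set; an implication clause $y \vee \overline{x}_1\vee\cdots\vee\overline{x}_k$ with all $x_j\in U$ forces $y \in \mathrm{Imp}(U) = U$, hence $y=1$; clauses with some $x_j \notin U$ are satisfied by that negative literal; and positive unit clauses are excluded by hypothesis. Local minimality is immediate from self-implication: flipping any $x\in U$ to $0$ would yield an assignment with $U\setminus\{x\}$ all $1$ and $x=0$, contradicting $x\in\mathrm{Imp}(U\setminus\{x\})$. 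Since each component has a unique locally minimal vector, $\boldsymbol{s}_U$ is the minimum of its component.

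Injectivity and surjectivity of the map then follow: two distinct maximal self-implicating sets give two distinct indicator vectors, which are the minimums of their components and hence lie in different components; and every component's minimum has an associated set by the forward direction. The empty set corresponds to the all-zero assignment (a solution since there are no positive unit clauses), and is vacuously self-implicating and restraint-set-free. The main obstacle I anticipate is the self-implication step, which requires being careful that ``$x$ is implied by $U\setminus\{x\}$'' is extracted from a \emph{single} violated Horn clause rather than from some subtler semantic chain of reasoning — but the at-most-one-positive-literal rule makes this clean.
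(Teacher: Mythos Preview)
Your proof is correct and follows essentially the same approach as the paper: both directions use \prettyref{lem:l45} to identify each component with its unique locally minimal solution, then argue forward (local minimality plus the Horn shape of the violated clause gives self-implication; $\boldsymbol{m}$ being a solution gives maximality and absence of restraint sets) and backward (maximality and no restraint set make $\boldsymbol{s}_U$ a solution; self-implication makes it locally minimal) exactly as you do. Your treatment of the bijectivity and the empty-set case is slightly more explicit than the paper's, but the argument is the same.
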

\begin{proof}
Let $\phi_{i}$ be a connected component of $\phi$ with minimum solution
$\boldsymbol{s}$, and let $U$ be the set of variables assigned 1
in $\boldsymbol{s}$. Since $\boldsymbol{s}$ is locally minimal,
flipping any variable $x_{i}$ from $U$ to 0 results in a vector
that is no solution, so there must be a clause in $\phi$ prohibiting
that $x_{i}$ is flipped. Since $\phi$ contains no positive unit-clauses,
each $x_{i}\in U$ must appear as the positive literal in an implication
clause with also all negated variables from $U$. It follows that
$U$ is self-implicating. Also, $U$ must be maximal self-implicating
and can contain no restraint set, else $\boldsymbol{s}$ were no solution.

Conversely, let $U$ be a maximal self-implicating set containing
no restraint set. Then the vector $\boldsymbol{s}$ with all variables
from $U$ assigned 1, and all others 0, is a locally minimal solution:
All implication clauses $\overline{y}_{1}\vee\cdots\vee\overline{y}_{k}\vee x$
with some $y_{i}\notin U$ are satisfied since $y_{i}=0$, and for
the ones with all $y_{i}\in U$, also $x\in U$ holds because $U$
is maximal, so these are satisfied since $x=1$. All restraint clauses
are satisfied since $U$ contains no restraint set. $\boldsymbol{s}$
is locally minimal since every variable assigned 1 is implied by $U$,
so that any vector with one such variable flipped to 0 is no solution.
By Lemma 4.5 of \citep{gop}, every connected component has a unique
locally minimal solution, so $\boldsymbol{s}$ is the minimum solution
of some component.\end{proof}
\begin{cor}
\label{cor:horn conn}The solution graph of a Horn formula $\phi$
without positive unit clauses is disconnected iff $\phi$ has a non-empty
maximal self-implicating set containing no restraint set.\end{cor}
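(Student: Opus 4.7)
The plan is to derive this corollary directly from the bijection established in Lemma~\ref{lem:horn conn}. That lemma sets up a one-to-one correspondence between the connected components of $G(\phi)$ and the maximal self-implicating sets of variables that contain no restraint set, with the distinguished "lowest" component corresponding to the empty set (which yields the all-zero solution, always a solution of a Horn formula without positive unit clauses).

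First I would observe that $G(\phi)$ is disconnected if and only if it has at least two connected components. Under the bijection of Lemma~\ref{lem:horn conn}, this is equivalent to saying that the collection of maximal self-implicating sets containing no restraint set has at least two members. Since the empty set is always such a maximal self-implicating set (vacuously: it is self-implicating, contains no restraint set, and corresponds to the all-zero solution, which is locally minimal in its component), the existence of a second such set is exactly the existence of a non-empty maximal self-implicating set containing no restraint set.

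For the reverse direction, given any non-empty maximal self-implicating set $U$ containing no restraint set, Lemma~\ref{lem:horn conn} provides the corresponding component whose minimum solution has the variables in $U$ assigned $1$. Since $U \neq \emptyset$, this component is distinct from the component of the all-zero solution, so $G(\phi)$ is disconnected.

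No obstacle is expected here: the content of the corollary is essentially a restatement of the bijection of Lemma~\ref{lem:horn conn}, once one notes that the empty set always sits on one side of the correspondence. The proof is a two-line unpacking of the previous lemma.
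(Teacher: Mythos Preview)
Your proposal is correct and matches the paper's approach: the corollary is stated in the paper without an explicit proof, as it follows immediately from the bijection of Lemma~\ref{lem:horn conn}, and your argument unpacks precisely that implication by noting that the empty set always accounts for one component (the all-zero solution), so disconnectedness is equivalent to the existence of a non-empty maximal self-implicating set containing no restraint set.
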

\begin{lem}
\label{lem:in}Let $R_{1}$ and $R_{2}$ be two connected components
of a Horn relation $R$ with minimum solutions $\boldsymbol{u}$ and
$\boldsymbol{v}$, resp., and let $U$ and $V$ be the sets of variables
assigned 1 in $\boldsymbol{u}$ and $\boldsymbol{v}$, resp. If then
$U\subsetneq V$, no vector $\boldsymbol{a}\in R_{1}$ has all variables
from $V$ assigned 1.\end{lem}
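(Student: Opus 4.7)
The plan is a short proof by contradiction that exploits the closure of Horn relations under the coordinate-wise $\wedge$. Assuming some $\boldsymbol{a}\in R_1$ satisfies $\boldsymbol{a}\ge\boldsymbol{v}$ (i.e., has all $V$-variables set to $1$), I would pick a path $\boldsymbol{u}=\boldsymbol{p}^0,\boldsymbol{p}^1,\dots,\boldsymbol{p}^n=\boldsymbol{a}$ inside $R_1$ whose consecutive vertices differ in exactly one coordinate, and then convert it into a walk from $\boldsymbol{u}$ to $\boldsymbol{v}$ in $G(R)$; this would place $\boldsymbol{u}$ and $\boldsymbol{v}$ in the same connected component of $G(R)$, contradicting the assumption that $R_1$ and $R_2$ are different components.

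The conversion is to set $\boldsymbol{q}^i:=\boldsymbol{p}^i\wedge\boldsymbol{v}$ coordinate-wise. Because $R$ is Horn and hence closed under $\wedge$ by Lemma~\ref{lem:clos}, every $\boldsymbol{q}^i$ lies in $R$. The inclusion $U\subseteq V$ yields $\boldsymbol{q}^0=\boldsymbol{u}\wedge\boldsymbol{v}=\boldsymbol{u}$ (its $1$-set equals $U\cap V=U$), and the hypothesis on $\boldsymbol{a}$ (whose $1$-set contains $V$) yields $\boldsymbol{q}^n=\boldsymbol{a}\wedge\boldsymbol{v}=\boldsymbol{v}$. For each step, $\boldsymbol{q}^i$ and $\boldsymbol{q}^{i+1}$ automatically agree outside $V$ (both are $0$ there) and can differ only at the single coordinate $c$ where $\boldsymbol{p}^i$ and $\boldsymbol{p}^{i+1}$ differ, and only when $c\in V$; otherwise $\boldsymbol{q}^i=\boldsymbol{q}^{i+1}$. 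After suppressing repetitions, $\boldsymbol{q}^0,\dots,\boldsymbol{q}^n$ is therefore a walk in $G(R)$ from $\boldsymbol{u}$ to $\boldsymbol{v}$, delivering the desired contradiction.

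I do not expect a deep obstacle; the only point that needs care is the per-step analysis of $\boldsymbol{q}^i\to\boldsymbol{q}^{i+1}$, splitting on whether the flipped coordinate lies in $V$. Notice that the argument does not need the path inside $R_1$ to be monotone (although such a path is available via Lemma~\ref{lem:l45}, since Horn relations are safely OR-free), and it does not use minimality of $\boldsymbol{u}$ and $\boldsymbol{v}$ beyond the containment $U\subseteq V$ used to identify $\boldsymbol{q}^0$ with $\boldsymbol{u}$.
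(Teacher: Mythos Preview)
Your proposal is correct and is essentially the same argument as the paper's: both assume such an $\boldsymbol{a}$ exists, take a path from $\boldsymbol{u}$ to $\boldsymbol{a}$ in $R_1$, apply coordinate-wise $\wedge$ with $\boldsymbol{v}$ to each vertex (using Horn closure under $\wedge$ from Lemma~\ref{lem:clos}), and after removing repetitions obtain a path from $\boldsymbol{u}=\boldsymbol{u}\wedge\boldsymbol{v}$ to $\boldsymbol{v}=\boldsymbol{a}\wedge\boldsymbol{v}$, contradicting that $R_1$ and $R_2$ are distinct components. Your write-up is in fact a bit more explicit than the paper's about why consecutive $\boldsymbol{q}^i$ differ in at most one coordinate.
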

\begin{proof}
For the sake of contradiction, assume $\boldsymbol{a}\in R_{1}$ has
all variables from $V$ assigned 1. Then $\boldsymbol{a\wedge v}=\boldsymbol{v}$,
where $\boldsymbol{\wedge}$ is applied coordinate-wise. Consider
a path from $\boldsymbol{u}$ to $\boldsymbol{a}$, $\boldsymbol{u}\rightarrow\boldsymbol{w}^{1}\rightarrow\cdots\rightarrow\boldsymbol{w}^{k}\rightarrow\boldsymbol{a}$.
Since $U\subsetneq V$, we have $\boldsymbol{u\wedge v}=\boldsymbol{u}$,
so we can construct a path from $\boldsymbol{u}$ to $\boldsymbol{v}$
by replacing each $\boldsymbol{w}^{i}$ by $\boldsymbol{w}^{i}\boldsymbol{\wedge}\boldsymbol{v}$
in the above path, and removing repetitions. Since $R$ is Horn, it
is closed under $\wedge$ (see \prettyref{lem:clos}), so all vectors
of the constructed path are in $R$. But $\boldsymbol{u}$ and $\boldsymbol{v}$
are not connected in $R$, which is a contradiction.\end{proof}
\begin{defn}
\label{def:nor}For a Horn formula $\phi$, let $\nu(\phi)$ be the
formula obtained from $\phi$ by recursively applying the following
simplification rules as long as one is applicable; it is easy to check
that the operations are equivalent transformations, and that the recursion
must terminate:
\selectlanguage{british}%
\begin{enumerate}[label=({\alph*})]
\item \foreignlanguage{english}{The constants 0 and 1 are eliminated in
the obvious way.}
\selectlanguage{english}%
\item Multiple occurrences of some variable in a clause are eliminated in
the obvious way.
\item \emph{\label{enu:red imp}}If for some implication clause $c=x\vee\overline{y}_{1}\vee\cdots\vee\overline{y}_{k}$
($k\geq1$), $x$ is already implied by $\{y_{1},\ldots,y_{k}\}$
via other clauses, $c$ is removed.\emph{}\\
{\small{}\hspace*{2ex}}E.g., if there was a clause $x\vee\overline{z}_{1}\vee\cdots\vee\overline{z}_{l}$
with $\{z_{1},\ldots,z_{l}\}\subseteq\{y_{1},\ldots,y_{k}\}$, or
clauses $q\vee\overline{z}_{1}\vee\cdots\vee\overline{z}_{l}$ and
$x\vee\overline{q}$, \emph{$c$} would be removed. Which clauses
are removed by this rule may be random; e.g., for the formula $\left(\overline{x}\vee y\right)\wedge\left(\overline{x}\vee z\right)\wedge\left(\overline{z}\vee y\right)\wedge\left(\overline{y}\vee z\right)$,
$\overline{x}\vee y$ \emph{or} $\overline{x}\vee z$ would be removed:\\
\includegraphics[scale=0.55]{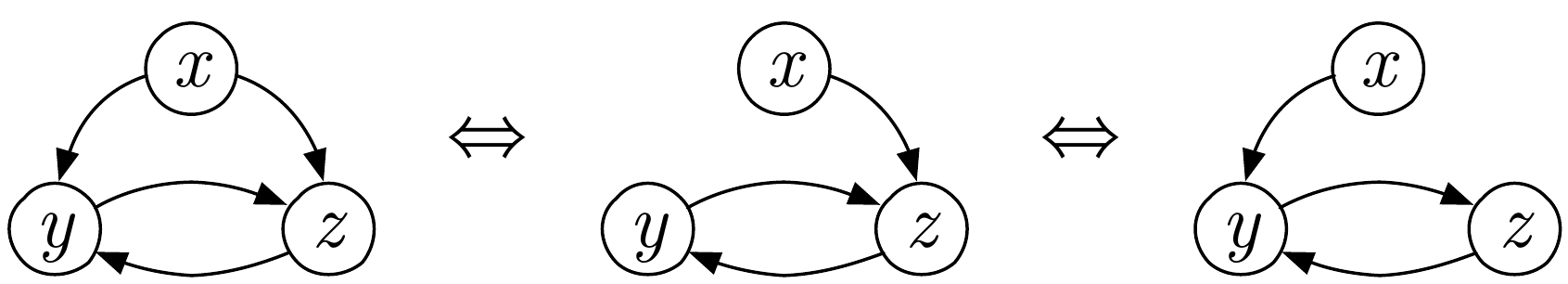}
\item \emph{\label{enu:imp imp}}If for some implication clause $c=x\vee\overline{y}_{1}\vee\cdots\vee\overline{y}_{k}$
($k\geq1$), Imp(Var($c$)) contains a restraint set, $c$ is replaced
by $\overline{y}_{1}\vee\cdots\vee\overline{y}_{k}$.\\
{\small{}\hspace*{3ex}}E.g., if there is a clause $\overline{r}_{1}\vee\cdots\vee\overline{r}_{l}$
with $\{r_{1},\ldots,r_{l}\}\subseteq\{x,y_{1},\ldots,y_{k}\}$, or
if there are clauses $q_{1}\vee\overline{r}_{1}\vee\cdots\vee\overline{r}_{l}$,
$q_{2}\vee\overline{r}_{1}\vee\cdots\vee\overline{r}_{l}$ and $\overline{q}_{1}\vee\overline{q}_{2}$.
E.g., in the formula $\left(x\vee\overline{y}\vee\overline{z}\right)\wedge\left(w\vee\overline{y}\right)\wedge\left(\overline{w}\vee\overline{x}\right)$,
$x\vee\overline{y}\vee\overline{z}$ is replaced by $\overline{y}\vee\overline{z}$:\\
\includegraphics[scale=0.55]{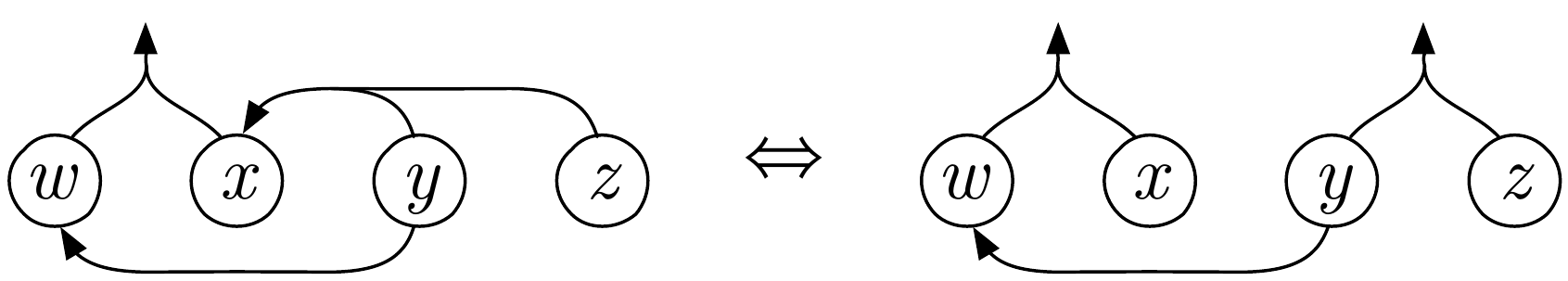}\emph{}
\item \emph{\label{enu:red bra}}If for some multi-implication clause $c=x\vee\overline{y}_{1}\vee\cdots\vee\overline{y}_{k}$
($k\geq2$), or for some restraint clause $d=\overline{y}_{1}\vee\cdots\vee\overline{y}_{k}$,
some $y_{i}\in\{y_{1},\ldots,y_{k}\}$ is implied by $\{y_{1},\ldots,y_{k}\}\setminus\{y_{i}\}$,
the literal $\overline{y}_{i}$ is removed from $c$ resp. $d$.\emph{}\\
{\small{}\hspace*{3ex}}Which literals are removed by this rule may
be random, as in the following example:\\
\includegraphics[scale=0.63]{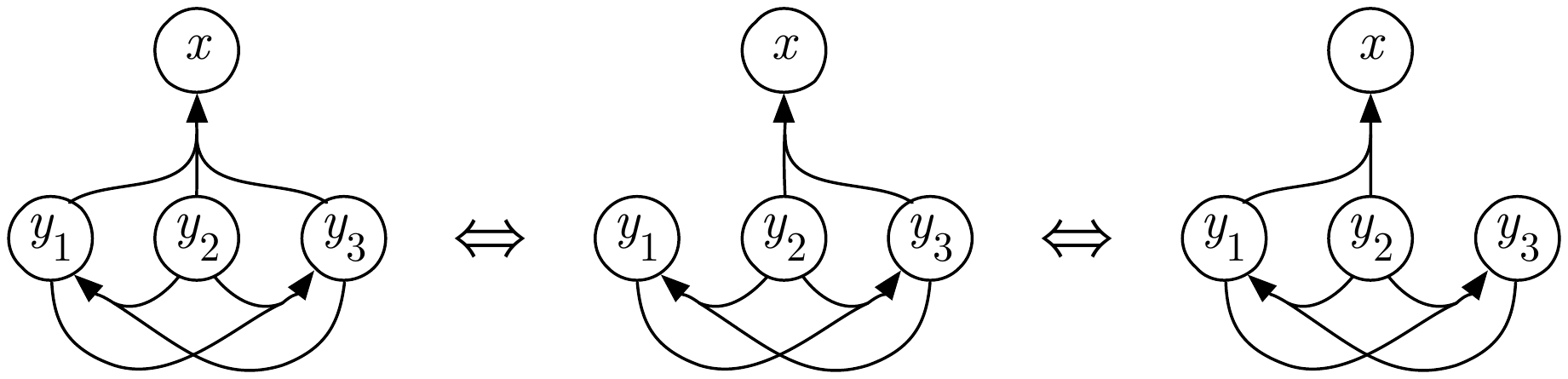}
\end{enumerate}
\selectlanguage{english}%
For a Horn relation $R$, let $\nu(R)=\nu(\phi)$ with some Horn formula
$\phi$ representing $R$.
\end{defn}

\subsection{Reduction from Satisfiability}
\begin{lem}
\noun{\label{lem:m har} Conn}\textsubscript{\noun{C}}\noun{(}\textup{$\left\{ M\right\} $}\noun{)}
with $M=\left(x\vee\overline{y}\vee\overline{z}\right)\wedge\left(\overline{x}\vee z\right)$
is \noun{coNP}-hard.\end{lem}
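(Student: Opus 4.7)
My plan is a polynomial-time reduction from a suitable NP-complete satisfiability problem (for instance $3$-SAT, or an appropriately restricted variant) to the complement of $\noun{Conn}_{\mathsf{C}}(\{M\})$. Given an input $\psi$ with variables $v_{1},\dots,v_{n}$ and clauses $C_{1},\dots,C_{m}$, I would construct in polynomial time a $\noun{CNF}_{\mathsf{C}}(\{M\})$-formula $\phi$, containing no positive unit clauses, such that $G(\phi)$ is disconnected iff $\psi$ is satisfiable. Since $M$ itself is Horn (both of its defining clauses are), $\phi$ will be Horn, so by \prettyref{cor:horn conn} disconnectedness of $G(\phi)$ is equivalent to the existence of a non-empty maximal self-implicating set of variables containing no restraint set; the design of $\phi$ aims to put these sets in correspondence with the satisfying assignments of $\psi$.

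The building blocks, obtained from $\{M\}$ by substituting constants or identifying variables, are: $2$-restraints $\overline{y}\vee\overline{z}$ from $M(0,y,z)$, binary implications $y\to x$ from $M(x,y,1)$ and $x\to z$ from $M(x,0,z)$, biconditionals $x\leftrightarrow y$ from $M(x,1,y)$, and the full paired clause $M(x,y,z)$ which simultaneously gives the multi-implication $y\wedge z\to x$ and the binary implication $x\to z$. Using these I would introduce literal variables $t_{i},f_{i}$ for each $v_{i}$ with mutual-exclusion restraints $\overline{t}_{i}\vee\overline{f}_{i}$, and for each clause $C_{j}$ a gadget based on full $M$-constraints and fresh auxiliary variables whose only admissible self-implicating sub-configurations must include a literal variable that makes $C_{j}$ true; a suitable global coupling (e.g.\ implications between clause-hub variables and literal variables) then forces any non-empty maximal self-implicating set to be total across the $v_{i}$'s and to span all clause-gadgets, so that the underlying assignment satisfies every clause.

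The hard part is the unavoidable side effect $x\to z$ that comes packaged with every multi-implication $y\wedge z\to x$ drawn from a full $M$-constraint: any self-implicating set containing the head $x$ is forced to contain $z$ as well, which can either produce spurious self-implicating sets that disconnect $G(\phi)$ regardless of $\psi$, or collapse the intended gadgets by propagating unwanted dependencies between them. The gadget design will therefore require carefully placed fresh auxiliary ``buffer'' variables keeping the side-effect implications local, and verifying correctness will use the tools developed in subsection~2.7.1: the Horn-hypergraph picture of \prettyref{hyp}, the normalization $\nu$ of \prettyref{def:nor} for pruning redundant implications, the component-characterization \prettyref{lem:horn conn}, and \prettyref{lem:in} to rule out unwanted strict supersets between components. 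The verification splits into an easy direction (a satisfying assignment of $\psi$ yields such a self-implicating set of $\phi$) and a harder direction (every such self-implicating set in $\phi$ projects onto a satisfying assignment of $\psi$), and it is in the harder direction that the rigidity forced by the paired clauses of each full $M$-constraint is essential and where the bulk of the case analysis will lie.
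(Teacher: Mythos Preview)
Your high-level plan matches the paper's approach exactly: reduce an NP-complete satisfiability problem to the complement of \noun{Conn}$_{\mathsf{C}}(\{M\})$, exploit that $M$ is Horn so \prettyref{cor:horn conn} applies, and engineer $\phi$ so that non-empty maximal self-implicating sets containing no restraint set correspond to satisfying assignments. Your inventory of building blocks from $M$ is also correct.

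Where the paper diverges from your plan is in the choice of source problem, and this choice is what dissolves the difficulty you flag. The paper reduces from \noun{Sat}$(\{P,N\})$ with $P=x\vee y\vee z$ and $N=\overline{x}\vee\overline{y}$ (NP-hard by Schaefer), i.e.\ a monotone variant with only all-positive $3$-clauses and all-negative $2$-clauses. This eliminates your literal variables $t_i,f_i$ entirely: the $N$-constraints become restraints $\overline{x}_i\vee\overline{x}_j=M(0,x_i,x_j)$ directly, and only the $P$-constraints need gadgetry. The gadget is a single global ``circulatory'' cycle through hub variables $q_1,\dots,q_m$ (one per $P$-clause, wrapping around modulo $m$), where passing from $q_p$ to $q_{p+1}$ is routed through three parallel ``valves'' $\overline{q}_p\vee a_{pl}$, $M(b_{pl},a_{pl},x_l)$, $\overline{b}_{pl}\vee q_{p+1}$, one for each literal $x_l$ of clause $c_p$. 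The side-effect implication $b_{pl}\to x_l$ that worried you is here a \emph{feature}: it forces any self-implicating set containing the cycle to pick up at least one $x_l$ per clause, and the $N$-restraints then block exactly the non-satisfying choices.

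So your plan is sound but leaves the hardest step unspecified; the paper's insight is that moving to monotone $3$-SAT lets the $x\to z$ side effect do the work rather than obstruct it, and yields a short, explicit construction with no case analysis in the backward direction. Your proposed use of \prettyref{lem:in} and the normalization $\nu$ is not needed here; \prettyref{lem:loc min} and \prettyref{cor:horn conn} suffice.
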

\begin{proof}
\begin{figure}[!h]
\begin{centering}
\includegraphics[scale=0.44]{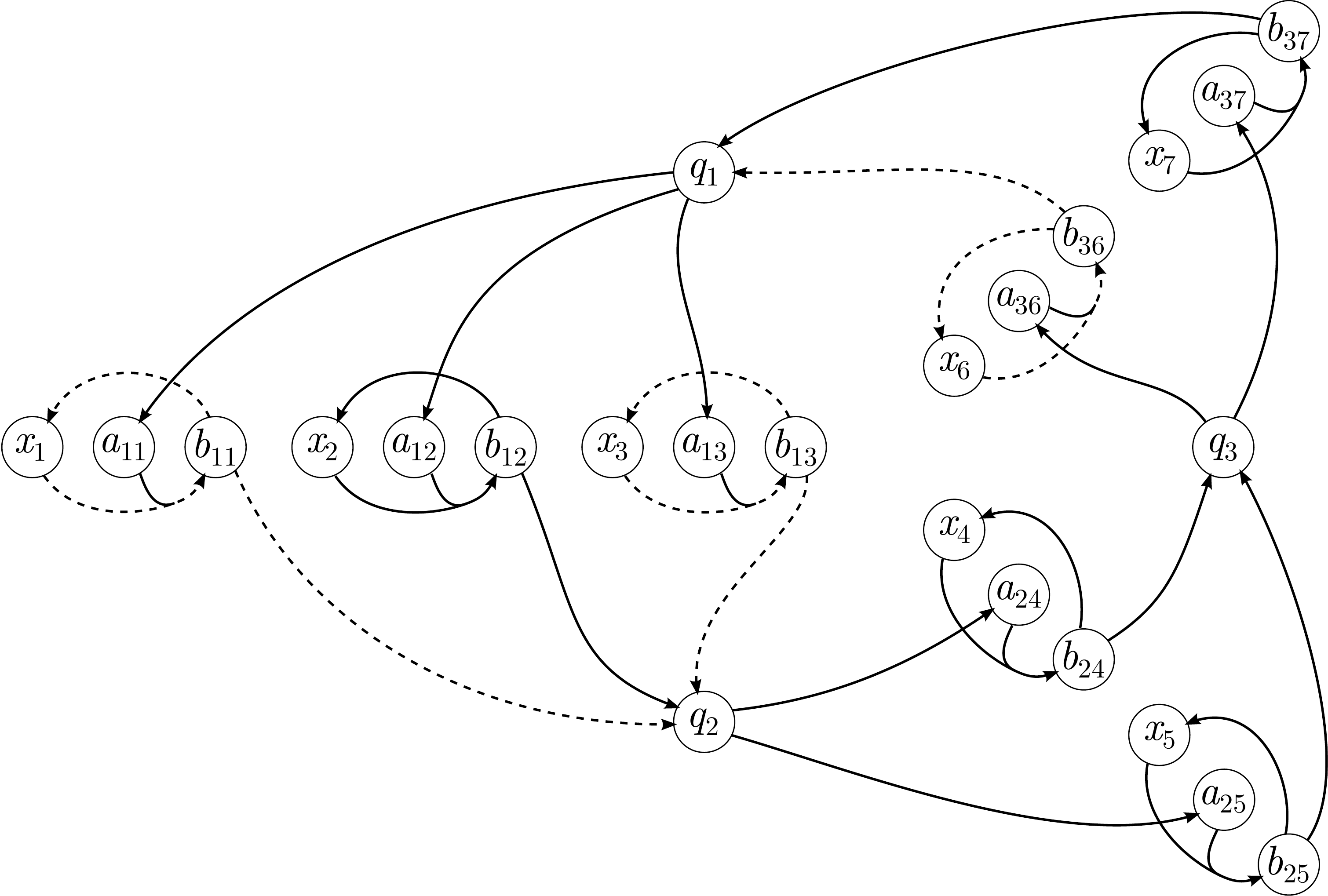}
\par\end{centering}

\protect\caption[An example for the proof of \prettyref{lem:m har}, illustrating the
idea]{\emph{}\label{fig:cir}\setlength{\parindent}{1em}An example for
the proof of \prettyref{lem:m har}, illustrating the idea. Depicted
here is the hypergraph representation (see Remark \ref{hyp}) of $\phi$
for $\psi=\left(x_{1}\vee x_{2}\vee x_{3}\right)\wedge\left(x_{4}\vee x_{5}\right)\wedge\left(x_{6}\vee x_{7}\right)$,
as constructed in the proof.\protect \\
\hspace*{4ex}Any self-implicating set of $\phi$ must contain a ``large
circulatory'', passing through each $q_{p}$ and at least one gadget
$\vcenter{\hbox{\protect\includegraphics[scale=0.46]{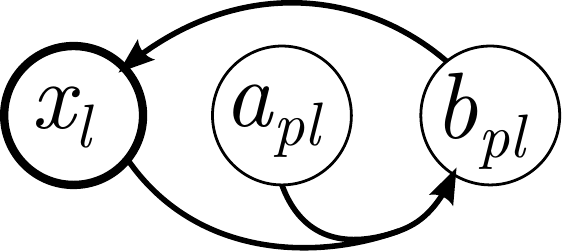}}}$ for
each $p$; these gadgets act as ``valves'': If some $x_{i}$ is
not allowed to be assigned 1 (due to restraints), the circulatory
may not pass through any gadget containing $x_{i}$.\protect \\
\hspace*{4ex}Every maximal self-implicating set also contains all
$a_{pl}$; here, for example, one maximal self-implicating set consist
of the variables with the outgoing edges drawn solid.\protect \\
\hspace*{4ex}If we would add restraint clauses to $\psi$ s.t. $\psi$
would become unsatisfiable, e.g. $\overline{x_{4}}\vee\overline{x_{6}}$,
$\overline{x_{4}}\vee\overline{x_{7}}$, $\overline{x_{5}}\vee\overline{x_{6}}$,
and $\overline{x_{5}}\vee\overline{x_{7}}$, each maximal self-implicating
set of the corresponding $\phi$ would contain a restraint set, so
that $G(\phi)$ would be connected.}
\end{figure}

\begin{figure}[!h]
\begin{centering}
\includegraphics[scale=0.46]{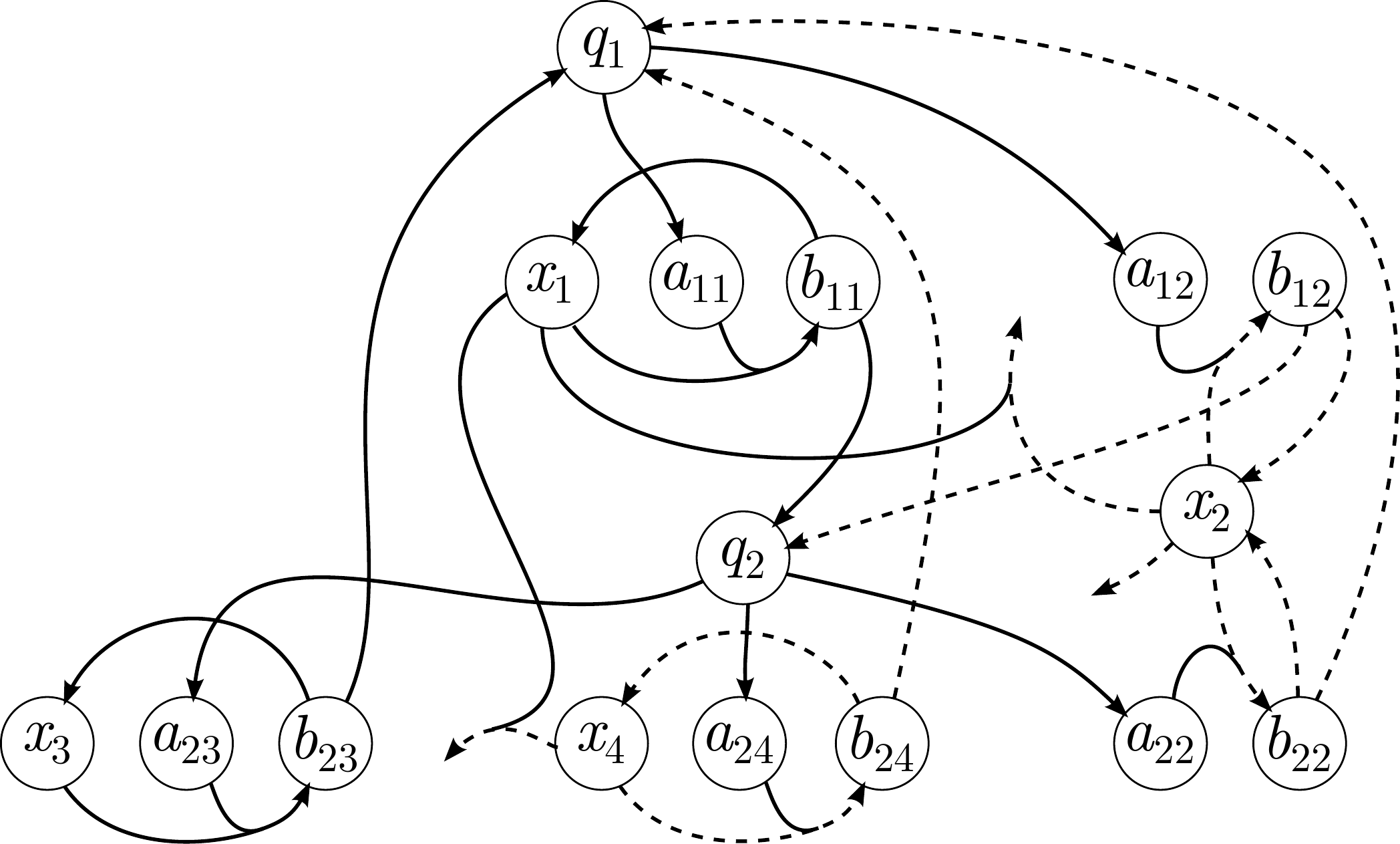}
\par\end{centering}

\protect\caption[A more complex example for the proof of \prettyref{lem:m har}]{\emph{}\label{fig:gr}A more complex example, with a variable of
$\psi$ appearing twice in a $P$-constraint: Depicted is $\phi$
for $\psi=\left(x_{1}\vee x_{2}\right)\wedge\left(x_{3}\vee x_{4}\vee x_{2}\right)\wedge\left(\overline{x}_{1}\vee\overline{x}_{2}\right)\wedge\left(\overline{x}_{1}\vee\overline{x}_{4}\right)\wedge\left(\overline{x}_{2}\right)$.
\protect \\
{\small{}\hspace*{3ex}}$\psi$ is satisfiable with the unique solution
$x_{1}=x_{3}=1$ and $x_{2}=x_{4}=0$, and $G(\phi)$ is disconnected
(with exactly two components, since there is exactly one maximal self-implicating
set containing no restraint set, consisting of the variables with
the outgoing edges drawn solid).}
\end{figure}

We reduce the no-constants satisfiability problem \noun{Sat}($\left\{ P,N\right\} $)
with $P=x\vee y\vee z$ and $N=\overline{x}\vee\overline{y}$ to the
complement of \noun{Conn}$_{C}$($\left\{ M\right\} $), where $M=\left(x\vee\overline{y}\vee\overline{z}\right)\wedge\left(\overline{x}\vee z\right)$.
\noun{Sat}($\left\{ P,N\right\} $) is NP-hard by Schaefer's dichotomy
theorem (\prettyref{thm:sch}) since $P$ is not 0-valid, not bijunctive,
not Horn and not affine, while $N$ is not 1-valid and not dual Horn.

Let $\psi$ be any CNF($\left\{ P,N\right\} $)-formula. If $\psi$
only contains $N$-constraints, it is trivially satisfiable, so assume
it contains at least one $P$-constraint. We construct a CNF\textsubscript{C}($\{M\}$)-formula
$\phi$ s.t.$\mbox{\,}$the solution graph $G(\phi)$ is disconnected
iff $\psi$ is satisfiable. First note that we can use the relations
$\overline{x}\vee\overline{y}=M(0,x,y)$ and $\overline{x}\vee y=M(x,0,y)$. 

For every variable $x_{i}$ of $\psi$ ($i=1,\ldots,n$), there is
the same variable $x_{i}$ in $\phi$. For every $N$-constraint $\overline{x}_{i}\vee\overline{x}_{j}$
of $\psi$, there is the clause $\overline{x}_{i}\vee\overline{x}_{j}$
in $\phi$ also. For every $P$-constraint $c_{p}=x_{i_{p}}\vee x_{j_{p}}\vee x_{k_{p}}$
($p=1,\ldots,m$) of $\psi$ there is an additional variable $q_{p}$
in $\phi$, and for every $x_{l}\in\{x_{i_{p}},x_{j_{p}},x_{k_{p}}\}$
appearing in $c_{p}$, there are two more additional variables $a_{pl}$
and $b_{pl}$ in $\phi$. Now for every $c_{p}$, for each $l\in\{i_{p},j_{p},k_{p}\}$
the constraints $\overline{q}_{p}\vee a_{pl}$, $\left(\overline{x}_{l}\vee\overline{a}_{pl}\vee b_{pl}\right)\wedge\left(\overline{b}_{pl}\vee x_{l}\right)$
and $\overline{b}_{pl}\vee q_{(p+1)\,\mathrm{mod}\,m}$ are added
to $\phi$. See the figures for examples of the construction. 

If $\psi$ is satisfiable, there is an assignment $\boldsymbol{s}$
to the variables $x_{i}$ s.t.$\mbox{\,}$for every $P$-constraint
$c_{p}$ there is at least one $x_{l}\in\{x_{i_{p}},x_{j_{p}},x_{k_{p}}\}$
assigned 1, and for no $N$-constraint $\overline{x}_{i}\vee\overline{x}_{j}$,
both $x_{i}$ and $x_{j}$ are assigned 1. We extend $\boldsymbol{s}$
to a locally minimal nonzero satisfying assignment $\boldsymbol{s}'$
for $\phi$; then $G(\phi)$ is disconnected by \prettyref{lem:loc min}:
Let all $q_{p}=1$, $a_{pl}=1$, and all $b_{pl}=x_{l}$ in $\boldsymbol{s}'$.
It is easy to check that all clauses of $\phi$ are satisfied, and
that all variables assigned 1 appear as the positive literal in an
implication clause with all its variables assigned 1, so that $\boldsymbol{s}'$
is locally minimal. $\boldsymbol{s}'$ is nonzero since $\psi$ contains
at least one $P$-constraint.

Conversely, if $G(\phi)$ is disconnected, $\phi$ has a maximal self-implicating
set $U$ containing no restraint set by \prettyref{cor:horn conn}.
It is easy to see that $U$ must contain all $q_{p}$, all $a_{pl}$,
and for every $p$ for at least one $l\in\{i_{p},j_{p},k_{p}\}$ both
$b_{pl}$ and $x_{l}$(see also Figure \ref{fig:cir} and the explanation
beneath). Thus the assignment with all $x_{i}\in U$ assigned 1 and
all other $x_{i}$ assigned 0 satisfies $\psi$.
\end{proof}

\subsection{Expressing \emph{M}}
\begin{lem}
\label{lem:exp m}The relation $M=\left(x\vee\overline{y}\vee\overline{z}\right)\wedge\left(\overline{x}\vee z\right)$
is expressible as a \noun{CNF\textsubscript{C}($\{R\}$)}-formula
for every Horn relation $R$ that is not  safely componentwise IHSB$-$.\end{lem}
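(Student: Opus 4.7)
The plan is to exhibit, given a Horn relation $R$ that is not safely componentwise IHSB$-$, a CNF\textsubscript{C}($\{R\}$)-formula $\varphi(x,y,z,\boldsymbol{w})$ such that $M(x,y,z)\equiv\exists\boldsymbol{w}\,\varphi(x,y,z,\boldsymbol{w})$, in the sense of Definition~\ref{def:expr}. The approach has three stages, mirroring the structure of the coNP-hardness argument in \prettyref{lem:m har}: first locate a structural witness inside $R$, then refine it to the ternary clause $P=x\vee\bar y\vee\bar z$, and finally bolt $M$ together from two instances of $P$.

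First I would reduce $R$ to a normal form. Take a Horn formula $\phi_R$ representing $R$ and apply the normalization $\nu$ from \prettyref{def:nor}; the result still represents $R$ and contains no redundant clauses or literals. By \prettyref{lem:clos}(4), being IHSB$-$ is equivalent to closure under $x\wedge(y\vee z)$. Since $R$ is Horn but not safely componentwise IHSB$-$, there is an identification $R'$ of $R$ and a connected component $C$ of $G(R')$ such that $C$ is not IHSB$-$. This forces $\nu(\phi_{R'})$, restricted to $C$, to contain at least one essential multi-implication clause $c = w\vee\bar u_1\vee\cdots\vee\bar u_k$ with $k\ge 2$: otherwise the only non-unit, non-binary clauses would be restraints, and the Horn relation would be IHSB$-$.

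In the second stage I want to collapse $c$ to a genuine $P$-constraint. If $k>2$, identify $u_3,\dots,u_k$ with $u_1$ (this is legal because we are allowed to substitute variables of $R$). The normalization rules \ref{enu:red imp} and \ref{enu:red bra} guarantee that none of $w, u_1, u_2$ becomes forced to a constant by this substitution, and that no literal in $c$ is implied from the others (otherwise $c$ itself would have been simplified away). The remaining coordinates of $R$ can be handled by substituting suitable constants---using the Horn minimum to always have a safe default---or, when no constant is safe, by existentially quantifying those variables; the "safely" in the hypothesis is exactly what guarantees that one such substitution/identification leaves the restricted relation equal to $P(w,u_1,u_2)$ rather than something smaller. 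The outcome is a CNF\textsubscript{C}($\{R\}$)-formula $\pi(x,y,z,\boldsymbol{w})$ with $P(x,y,z)\equiv\exists\boldsymbol{w}\,\pi(x,y,z,\boldsymbol{w})$.

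Finally, assembling $M$ is a short calculation: identifying variables inside $P$ gives $P(z,x,x)=z\vee\bar x\vee\bar x\equiv\bar x\vee z$, and therefore
\[
M(x,y,z)\;\equiv\;P(x,y,z)\wedge P(z,x,x)\;\equiv\;\exists\boldsymbol{w}_{1}\boldsymbol{w}_{2}\bigl(\pi(x,y,z,\boldsymbol{w}_{1})\wedge\pi(z,x,x,\boldsymbol{w}_{2})\bigr),
\]
which is the desired CNF\textsubscript{C}($\{R\}$)-expression. The main obstacle is the second stage: extracting $P$ cleanly from an arbitrary Horn $R$ that is only guaranteed to fail safe componentwise IHSB$-$. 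The delicate point is verifying that, after the required identifications and constant substitutions, no additional implication or restraint is introduced that secretly forces one of $x,y,z$ and thereby shrinks $\pi$ below $P$. Handling this will require a case analysis driven by the normalization rules of \prettyref{def:nor} and by which of the components of $R'$ witnesses the failure of IHSB$-$; this is where the bulk of the technical work will lie, and where the term \emph{safely} in the hypothesis carries its weight.
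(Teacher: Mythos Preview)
Your proposal has a genuine gap at the very first line: you set out to prove $M(x,y,z)\equiv\exists\boldsymbol{w}\,\varphi(x,y,z,\boldsymbol{w})$ in the sense of Definition~\ref{def:expr}, but the lemma asks for something stronger --- that $M$ \emph{is} a CNF\textsubscript{C}($\{R\}$)-formula, i.e.\ a formula in the three free variables $x,y,z$ only, built from $R$-constraints with constants and variable repetitions but \emph{no auxiliary variables to be projected out}. (Compare Lemma~\ref{lem:M2}, where the paper explicitly adds the words ``existentially quantified'' when that weakening is intended.) This matters downstream: the reduction from \noun{Conn}\textsubscript{C}($\{M\}$) to \noun{Conn}\textsubscript{C}($\{R\}$) in Corollary~\ref{cor:co} is by direct substitution, not via structural expressibility, and Lemma~\ref{lem:sc} also uses that each $M$-constraint literally becomes a conjunction of $R$-constraints on the same variable set.

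Even setting that aside, your second stage does not work as written. After identifying $u_3,\dots,u_k$ with $u_1$ and fixing the remaining coordinates, you assert that the result is exactly $P=x\vee\bar y\vee\bar z$. But other Horn clauses of $R$ survive the process and can constrain $\{x,y,z\}$ further; the paper's analogous argument (even in the easier existential setting of Lemma~\ref{lem:M2}) ends up with \emph{one of six} possible ternary relations, not $P$ alone. The paper's proof of the present lemma avoids existentials entirely via a seven-step sequence of identifications and constant substitutions, carefully maintaining the invariant $x\notin\mathrm{Imp}(y)$, $x\notin\mathrm{Imp}(z)$, $z\notin\mathrm{Imp}(y)$, $y$ not implied, so that the final three-variable formula is forced to be one of $K$, $L$, or $M$; your sketch neither establishes nor preserves any such invariant, and the appeal to ``the Horn minimum as a safe default'' does not by itself prevent extra implications among $x,y,z$ from appearing.
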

\begin{proof}
$M=\left(x\vee\overline{y}\vee\overline{z}\right)\wedge\left(\overline{x}\vee z\right)=$
$\vcenter{\hbox{\includegraphics[scale=0.4]{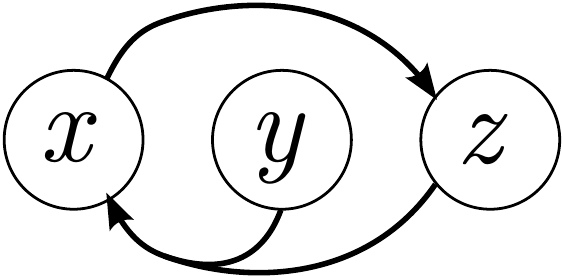}}}$ contains a multi-implication
clause where some negated variable is not implied. The only other
3-ary such relations are (up to permutation of variables)

$L=\left(x\vee\overline{y}\vee\overline{z}\right)\wedge\left(\overline{x}\vee\overline{y}\vee z\right)=$
$\vcenter{\hbox{\includegraphics[scale=0.4]{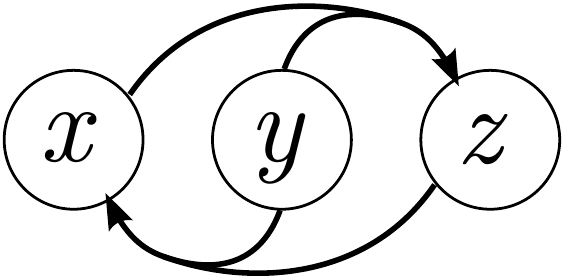}}}$$\enskip$and
$K=x\vee\overline{y}\vee\overline{z}=$ $\vcenter{\hbox{\includegraphics[scale=0.4]{drws}}}$.
We show that $M,L,$ or $K$ is expressible from $R$ by substitution
of constants and identification of variables. We can then express
$M$ from $K$ or $L$ as 
\[
M(x,y,z)\equiv K(x,y,z)\wedge K(z,x,x)\equiv L(x,y,z)\wedge L(z,x,x).
\]

We will argue with formulas simplified according to \prettyref{def:nor};
let $\phi_{0}=\nu(R)$. The following 7 numbered transformation steps
generate $K$, $L$, or $M$ from $\phi_{0}$. After each transformation,
we assume $\nu$ is applied again to the resulting formula; we denote
the formula resulting from the $i$'th transformation step in this
way by $\phi_{i}$.

In the first three steps, we ensure that the formula contains a multi-implication
clause where some variable is not implied, in the fourth step we trim
the multi-implication clause to size 3, and in the last three steps
we eliminate all remaining clauses and variables not occurring in
$K$, $L$, or $M$. Our first goal is to produce a formula with a
connected solution graph that is not IHSB$-$, which will turn out
helpful.
\begin{enumerate}
\item \emph{Obtain a not componentwise IHSB$-$ formula $\phi_{1}$ from
$\phi_{0}$ by identification of variables.}
\end{enumerate}
Let $\left[\phi_{1}^{*}\right]$ be a connected component of $\left[\phi_{1}\right]$
that is not IHSB$-$, and let $U$ be the set of variables assigned
1 in the minimum solution of $\phi_{1}^{*}$.
\begin{enumerate}[resume]
\item \emph{\label{enu:s1}Substitute 1 for all variables from $U$.}
\end{enumerate}
The resulting formula $\phi_{2}$ now contains no positive unit-clauses.
Further, the component $\left[\phi_{2}^{*}\right]$ of $\left[\phi_{2}\right]$
resulting from $\left[\phi_{1}^{*}\right]$ is still not IHSB$-$,
and it has the all-0 vector as minimum solution. We show that 
\begin{equation}
{\textstyle \phi_{2}^{*}\equiv\nu\left(\phi_{2}\wedge\left(\bigvee_{x\in V_{1}}\overline{x}\right)\wedge\cdots\wedge\left(\bigvee_{x\in V_{k}}\overline{x}\right)\right),}\label{eq:st}
\end{equation}
where $V_{1},\ldots,V_{k}$ are the sets of variables assigned 1 in
the minimum solutions $\boldsymbol{v}^{1},\ldots,\boldsymbol{v}^{k}$
of the other components of $\left[\phi_{2}\right]$, and we specified
the formula to be in normal form:
\begin{itemize}[label= ]
\item For any solution $\boldsymbol{a}$ in the component with minimum
solution $\boldsymbol{v}^{i}$ we have $\boldsymbol{a}\geq\boldsymbol{v}^{i}$
(see \prettyref{def:lm}), so all components other than $\left[\phi_{2}^{*}\right]$
are eliminated in the right-hand side of \eqref{eq:st}. By \prettyref{lem:in},
no vector from $\left[\phi_{2}^{*}\right]$ is removed.
\end{itemize}
By \prettyref{lem:horn conn}, $V_{1},\ldots,V_{k}$ are exactly the
non-empty maximal self-implicating sets of $\phi_{2}$ that contain
no restraint set.

Clearly, $\phi_{2}^{*}$ is not IHSB$-$. However, we have no restraint
clauses at our disposal to generate $\phi_{2}^{*}$ from $\phi_{2}$;
nevertheless, we can isolate a connected part of $\phi_{2}^{*}$ that
is not IHSB$-$, as we will see.

Since $\phi_{2}^{*}$ is not IHSB$-$, it contains a multi-implication
clause $c^{*}$, and by \eqref{eq:st} it is clear that $\phi_{2}$
must contain the same clause $c^{*}$.

By simplification rule \ref{enu:imp imp}, Imp(Var$(c^{*})$) contains
no restraint set in $\phi_{2}$. Now if some self-implicating set
$U^{*}$ were implied by Var$(c^{*})$, the related maximal self-implicating
set $U_{m}^{*}$ (which then were also implied by Var$(c^{*})$) could
contain no restraint set, thus a restraint clause would be added for
the variables from $U_{m}^{*}$ in \eqref{eq:st}. But then $c^{*}$
would be removed by $\nu$ in \eqref{eq:st}, again due to rule \ref{enu:imp imp},
which is a contradiction. Thus Imp(Var$(c^{*})$) also contains no
self-implicating set in $\phi_{2}$, and so the following operation
eliminates all self-implicating sets and all restraint clauses:
\begin{enumerate}[resume]
\item \emph{Substitute 0 for all remaining variables not implied by} Var($c^{*}$).
\end{enumerate}
This operation also produces no new restraint clauses since any implication
clause with the positive literal not implied by Var($c^{*}$) must
also have some negative literal not implied by Var($c^{*}$), and
thus vanishes.

Further, since $\phi_{2}$ contained no positive unit-clauses, the
formula cannot have become unsatisfiable by this operation. Also,
it is easy to see that the simplification initiated by the substitution
of 0 for some variable $x_{i}$ can only affect clauses $c$ with
$x_{i}\in\mathrm{Imp}(\mathrm{Var}(c))$, so $c^{*}$ is retained
in $\phi_{3}$. 

Since all variables not from Var($c^{*}$) are now implied by\emph{
}Var($c^{*}$), and Imp(Var($c^{*}$)) is not self-implicating, $c^{*}$
contains a variable that is not implied; w.l.o.g., let $c^{*}=x\vee\overline{y}\vee\overline{z}_{1}\vee\cdots\vee\overline{z}_{k}$
($k\geq1$) s.t. $y$ is not implied.
\begin{enumerate}[resume]
\item \emph{\label{enu:Id z-2}Identify $z_{1},\ldots,z_{k}$, call the
resulting variable $z$.}
\end{enumerate}
This produces the clause $c^{\sim}=x\vee\overline{y}\vee\overline{z}$
from $c^{*}$. Clearly, $y$ is still not implied in $\phi_{4}$,
and since $x$ was not implied by any set $U\subsetneq\{y,z_{1},\ldots,x_{k}\}$
by simplification rule \ref{enu:red bra} in $\phi_{3}$, and no $z_{i}$
was implied by $y$, it follows for $\phi_{4}$ that
\begin{itemize}[label= ({*})]
\item $x\notin\mathrm{Imp}(y)$, $x\notin\mathrm{Imp}(z)$, $z\notin\mathrm{Imp}(y)$,
$y$ is not implied.
\end{itemize}
Also, since $x$ was implied by $\{y,z_{1},\ldots,x_{k}\}$ only via
$c^{*}$ in $\phi_{3}$ due to simplification rule \ref{enu:red imp},
$x$ is implied by $\{y,z\}$ only via $c^{\sim}$ in $\phi_{4}$.

In the following steps, we eliminate all variables other than $x,y,z,$
s.t. $c^{\sim}$ is retained and ({*}) is maintained. It follows that
we are then left with $K,L$, or $M$, since the only clauses only
involving $x,y,z$ and satisfying ({*}) besides $c^{\sim}$ are from
$\{z\vee\overline{x},\,z\vee\overline{x}\vee\overline{y}\}$.
\begin{enumerate}[resume]
\item \emph{Substitute 1 for every variable from $\mathrm{Imp}(y)\setminus\{y\}$
. }
\end{enumerate}
For the simplification initiated by this operation, note that $\phi_{4}$
contained no restraint clauses. It follows that the formula cannot
have become unsatisfiable by this operation. Further, it is easy to
see that for a Horn formula without restraint clauses, at a substitution
of 1 for variables from a set $U$, only clauses $c$ containing at
least one variable $x_{i}\in\mathrm{Imp}(U)$ are affected by the
simplification. Thus, $c^{\sim}$ is not affected since $x,y$ and
$z$ were not implied by $\mathrm{Imp}(y)\setminus\{y\}$.

We must carefully check that ({*}) is maintained since substitutions
of 1 may result in new implications: Since $\mathrm{Imp}(y)\setminus\{y\}$
is empty in $\phi_{5}$, still $x\notin\mathrm{Imp}(y)$ and $z\notin\mathrm{Imp}(y)$.
It is easy to see that $x$ could only have become implied by $z$
as result of transformation 5 if there had been a multi-implication
clause (other than $c^{\sim}$) in $\phi_{4}$ with the positive variable
implying $x$, and each negated variable implied by $y$ or $z$;
but this is not the case since $x$ was implied by $\{y,z\}$ only
via $c^{\sim}$ in $\phi_{4}$, thus still $x\notin\mathrm{Imp}(z)$.

We eliminate all remaining variables besides $x,y,z$ by identifications
in the next two steps. Since now $\mathrm{Imp}(y)\setminus\{y\}$
is empty, the only condition from ({*}) we have to care about is that
$x\notin\mathrm{Imp}(z)$ remains true.
\begin{enumerate}[resume]
\item \emph{\label{enu:id x}Identify all remaining variables from $\mathrm{Imp}(z)\setminus\{z\}$
with $z$.} 
\end{enumerate}
Now $\mathrm{Imp}(z)\setminus\{z\}$ is empty, so the last step is
easy:
\begin{enumerate}[resume]
\item \emph{\label{enu:id z}Identify all remaining variables other than
$x,y,z$ with $x$.} 
\end{enumerate}
\end{proof}
This completes the \noun{coNP}-completeness proof for connectivity
and the proof of the trichotomy:
\begin{cor}
\label{cor:co}If a finite set $\mathcal{S}$ of logical relations
is safely tight but not CPSS, \noun{Conn}\textsubscript{\noun{C}}\noun{($\mathcal{S}$)}
is \noun{coNP}-complete.\end{cor}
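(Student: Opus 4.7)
The plan is to combine the pieces already assembled in the chapter. Membership in coNP is immediate from Corollary~\ref{lem:c47}, since safely tight sets always have connectivity in coNP. So only coNP-hardness needs argument, and I would split into two cases depending on whether $\mathcal{S}$ is Schaefer.

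First, if $\mathcal{S}$ is safely tight but not Schaefer, Lemma~\ref{lem:l48} already gives coNP-hardness, so nothing more needs to be done there. The substantive case is when $\mathcal{S}$ is Schaefer but not CPSS. I would begin by unpacking the definitions: among the four Schaefer options (bijunctive, Horn, dual Horn, affine), the bijunctive and affine cases automatically give CPSS, so being Schaefer and not CPSS forces $\mathcal{S}$ to consist of Horn relations with at least one relation $R$ that is not safely componentwise $\mathrm{IHSB}{-}$, or symmetrically the dual Horn / $\mathrm{IHSB}{+}$ situation. By the symmetry between Horn/$\mathrm{IHSB}{-}$ and dual Horn/$\mathrm{IHSB}{+}$, it suffices to handle the Horn case.

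For that case, the plan is a two-step reduction chain. By Lemma~\ref{lem:exp m}, the relation $M=(x\vee\overline{y}\vee\overline{z})\wedge(\overline{x}\vee z)$ is expressible as a \noun{CNF\textsubscript{C}}$(\{R\})$-formula. I would use this expression to construct a polynomial-time many-one reduction from \noun{Conn}\textsubscript{C}$(\{M\})$ to \noun{Conn}\textsubscript{C}$(\mathcal{S})$: replace every occurrence of an $M$-constraint by its expressing subformula over $R$, introducing fresh variables for the existentially quantified ones. Because $M$ is expressed (not merely structurally expressed), the expressing formula defines $M$ exactly as a projection, and since no auxiliary witness variables are involved beyond those determined by the constraint, the solution graph of the reduced formula is isomorphic (on each fiber) to the solution graph of the original, so connectivity is preserved. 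Then Lemma~\ref{lem:m har} gives coNP-hardness of \noun{Conn}\textsubscript{C}$(\{M\})$, and composing the two reductions yields coNP-hardness of \noun{Conn}\textsubscript{C}$(\mathcal{S})$.

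The main obstacle I anticipate is making precise why plain expressibility (as opposed to structural expressibility) suffices for preserving connectivity in the reduction from $\{M\}$ to $\mathcal{S}$. The point is that when $M$ is \emph{equivalent} to $\exists \boldsymbol{y}\,\varphi(\boldsymbol{x},\boldsymbol{y})$ as logical relations, the solution space of the reduced formula projects onto that of the original, and a little care is needed to check that two $\boldsymbol{x}$-assignments are connected in $G$ of the reduced formula iff they are connected in $G$ of the original; this would mirror the argument in Lemma~\ref{lem:l32} but is simpler because $\varphi$ here is a concrete small gadget rather than a structural expression. The dual Horn subcase is obtained by dualising (swapping $0\leftrightarrow 1$), and finally one observes that if $\mathcal{S}$ happens to contain both Horn non-$\mathrm{IHSB}{-}$ and dual Horn non-$\mathrm{IHSB}{+}$ relations, either reduction works.
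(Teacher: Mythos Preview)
Your overall plan is correct and matches the paper's proof exactly: membership from Corollary~\ref{lem:c47}, the non-Schaefer case from Lemma~\ref{lem:l48}, and the Schaefer-but-not-CPSS case via Lemmas~\ref{lem:m har} and~\ref{lem:exp m}, with the dual Horn case by symmetry.

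However, the ``main obstacle'' you anticipate is not an obstacle at all, because you have misread what Lemma~\ref{lem:exp m} actually delivers. It does \emph{not} say that $M$ is expressible from $\{R\}$ in the sense of Definition~\ref{def:expr} (via an existentially quantified CNF\textsubscript{C}-formula); it says that $M$ \emph{is} a CNF\textsubscript{C}$(\{R\})$-formula outright, with no auxiliary variables and no quantifiers. Consequently the reduction from \noun{Conn}\textsubscript{C}$(\{M\})$ to \noun{Conn}\textsubscript{C}$(\mathcal{S})$ is a literal syntactic substitution: each constraint $M(\xi_1,\xi_2,\xi_3)$ is replaced by the equivalent CNF\textsubscript{C}$(\{R\})$-formula on the \emph{same} arguments $\xi_1,\xi_2,\xi_3$. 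The resulting formula has the same variable set and the same solution set, hence an identical solution graph. There is no projection, no fiber argument, and no appeal to anything like Lemma~\ref{lem:l32}. Your worry that ``plain expressibility'' might not preserve connectivity would be legitimate if existential witnesses were involved, but here there are none, so you can drop that entire discussion.
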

\begin{proof}
By \prettyref{lem:c47}, \noun{Conn}\textsubscript{C}($\mathcal{S}$)
is in coNP. If $\mathcal{S}$ is not Schaefer, coNP-hardness follows
from \prettyref{lem:l48}. If $\mathcal{S}$ is Schaefer and not CPSS,
it must be Horn and contain at least one relation that is not  safely
componentwise IHSB$-$, or dual Horn and contain at least one relation
that is not  safely componentwise IHSB$+$; in the first case, coNP-hardness
follows from Lemmas \ref{lem:m har} and \ref{lem:exp m}, the second
case follows by symmetry.
\end{proof}

\section{Further Results about Constraint-Projection Separation}

This section is not needed for the proof of the trichotomy (\prettyref{thm:trich}),
but gives further insights that will be useful for the investigation
of formulas without constants in the next chapter.

We start by showing that with \prettyref{lem:prjs}, we have found
all Schaefer sets of relations that are CPS:
\begin{lem}
\label{lem:sc}If a set $\mathcal{S}$ of relations is Schaefer but
not CPSS, it is not constraint-projection separating.\end{lem}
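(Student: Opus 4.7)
The plan is to exhibit, for any $\mathcal{S}$ that is Schaefer but not CPSS, an explicit CNF$_C(\mathcal{S})$-formula $\Phi$ whose solution graph $G(\Phi)$ is disconnected while the projection to the variables of every single constraint is connected. Since every all-bijunctive and every all-affine set is automatically CPSS, the hypothesis forces $\mathcal{S}$ to be all Horn containing some relation $R$ that is not safely componentwise IHSB$-$ (or dually, all dual Horn with one not safely componentwise IHSB$+$; by Horn--dual-Horn symmetry I would reduce to the first case). By Lemma \ref{lem:exp m}, together with an inspection of its proof, such an $R$ yields $M=(x\vee\overline{y}\vee\overline{z})\wedge(\overline{x}\vee z)$ already as a single constraint $R(\xi_1,\dots,\xi_k)$ with each $\xi_j\in\{x,y,z,0,1\}$; so every $M$-constraint below is simultaneously an $\mathcal{S}$-constraint.

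Next I would use the elementary identity $K(x,y,z)\equiv\exists u\, M(u,y,z)\wedge M(x,u,1)$, where $K=x\vee\overline{y}\vee\overline{z}$ (checked straight from $M$'s truth table), to expand the known non-CPS witness $\phi_K=K(x,y,z)\wedge K(y,z,w)\wedge K(z,w,x)\wedge K(w,x,y)$ displayed after Definition \ref{def:dcp} into the eight-variable CNF$_C(\mathcal{S})$-formula
\[
\Phi \;=\; \bigwedge_{i=1}^{4}\bigl(M(u_i,a_i,b_i)\wedge M(c_i,u_i,1)\bigr),
\]
with $(a_i,b_i,c_i)$ cycling through $(y,z,x),(z,w,y),(w,x,z),(x,y,w)$. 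The disconnectedness of $G(\Phi)$ would follow from the observation that every edge of $G(\Phi)$ either fixes the projection to $(x,y,z,w)$ (when a $u_i$ is flipped) or moves it by one Hamming step inside $[\phi_K]$; since a direct enumeration yields $[\phi_K]=\{0000,0001,0010,0100,1000,0101,1010,1111\}$ with $1111$ isolated in $G(\phi_K)$, the two $\Phi$-solutions $00000000$ and $11111111$ must lie in distinct components of $G(\Phi)$.

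For the projection condition I would exploit the cyclic symmetry of $\Phi$: the eight constraints fall into two orbits. Each rank-two constraint $M(c_i,u_i,1)$ projects to $\{(0,0),(1,0),(1,1)\}$, which is a path. Each rank-three constraint $M(u_i,a_i,b_i)$ projects, after tabulating which values of the auxiliary $u_i$ are forced by each main assignment $(x,y,z,w)\in[\phi_K]$, to $\{(0,0,0),(0,0,1),(0,1,0),(1,0,1),(1,1,1)\}$, which is connected via the path $(0,1,0)$--$(0,0,0)$--$(0,0,1)$--$(1,0,1)$--$(1,1,1)$. The main technical obstacle is precisely this last enumeration: the $u_i$ are rigidly forced at most solutions of $\phi_K$ but enjoy extra freedom at $(0,1,0,1)$ and $(1,0,1,0)$, and one must make sure that these freedoms are what carries $(1,0,1)$ into the projection so that it joins $(1,1,1)$ to the rest. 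Once this small table is written down, all remaining connectedness checks are a finite bookkeeping, proving that $\Phi$ witnesses the failure of constraint-projection separation for $\mathcal{S}$.
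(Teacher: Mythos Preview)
Your approach is correct but takes a more circuitous route than the paper. The paper builds a smaller witness directly from $M$: the six-variable formula
\[
T(u,v,w,x,y,z)=M(u,v,w)\wedge M(x,y,z)\wedge M(w,w,y)\wedge M(z,z,v),
\]
where the last two constraints degenerate to the $2$-clauses $y\vee\overline{w}$ and $v\vee\overline{z}$. Disconnectedness is read off from Corollary~\ref{cor:horn conn} (the full variable set is maximal self-implicating and there is no restraint set), and the four projections---to $\{u,v,w\}$, $\{x,y,z\}$, $\{w,y\}$, $\{v,z\}$---are just $M$ itself and two implication clauses, all visibly connected. Your detour through the $K$-cycle example and the encoding $K(x,y,z)\equiv\exists u\,M(u,y,z)\wedge M(x,u,1)$ works as well, but costs eight variables, eight $M$-constraints, and a case-by-case tabulation of the forced values of the auxiliaries $u_i$ over $[\phi_K]$. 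The paper's construction buys brevity; yours buys a more mechanical derivation from an already-exhibited non-CPS example.

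One small inaccuracy worth fixing: the proof of Lemma~\ref{lem:exp m} does \emph{not} in general produce $M$ as a single constraint $R(\xi_1,\dots,\xi_k)$. It produces one of $K,L,M$ that way, and then writes $M$ as a conjunction of two such applications (e.g.\ $M(x,y,z)\equiv K(x,y,z)\wedge K(z,x,x)$). This does not damage your argument, because all you really use is that every $\mathcal{S}$-constraint occurring in the expression of $M(a,b,c)$ has its variable set contained in $\{a,b,c\}$; hence connectedness of the projection of $\Phi$ to each $M$-block's variables already forces connectedness of the finer projections to the individual $\mathcal{S}$-constraints. You should state it that way rather than claiming a single-constraint representation.
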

\begin{proof}
Since $\mathcal{S}$ is Schaefer but not CPSS, it must contain some
relation that is Horn but not  safely componentwise IHSB$-$, or dual
Horn but not  safely componentwise IHSB$+$. Assume the first case,
the second one is analogous. Then by \prettyref{lem:exp m}, we can
express $M=\left(x\vee\overline{y}\vee\overline{z}\right)\wedge\left(\overline{x}\vee z\right)$
as a CNF\textsubscript{C}($\mathcal{S}$)-formula. Consider the CNF\textsubscript{C}($\mathcal{S}$)-formula
\[
T(u,v,w,x,y,z)=M(u,v,w)\wedge M(x,y,z)\wedge M(w,w,y)\wedge M(z,z,v)
\]
\[
\equiv\left(\left(u\vee\overline{v}\vee\overline{w}\right)\wedge\left(\overline{u}\vee w\right)\right)\wedge\left(\left(x\vee\overline{y}\vee\overline{z}\right)\wedge\left(\overline{x}\vee z\right)\right)\wedge\left(y\vee\overline{w}\right)\wedge\left(v\vee\overline{z}\right).
\]
Now $G(T)$ is disconnected by \prettyref{cor:horn conn} since $\{u,v,w,x,y,z\}$
is maximal self-implicating, but neither the projection $\exists x\exists y\exists zT\equiv M(u,v,w)$
to the variables of the first constraint in the CNF($\{M\}$)-representation
of $T$, nor the projection $\exists u\exists v\exists x\exists zT\equiv y\vee\overline{w}$
to the variables of the third one is disconnected. The second and
fourth constraints are symmetric to the first and third ones. 

Since in the CNF\textsubscript{C}($\mathcal{S}$)-representation
of $T$ every conjunct $M(r,s,t)$ of $T$ ($r,s,t\in\{u,v,w,x,y,z\}$)
is a CNF\textsubscript{C}($\mathcal{S}$)-formula $\bigwedge_{i}R_{i}(\boldsymbol{\xi}^{i})$
with $R_{i}\in\mathcal{S}$ and $\xi_{j}^{i}\in\{0,1,r,s,t\}$, for
every constraint $C_{i}$ of $T$, the set $\mathrm{Var}(C_{i})$
is a subset of $\{u,v,w\},$ $\{x,y,z\},$ $\{y,w\}$ or $\{v,z\}$,
and thus also for no $C_{i}$ the projection to $\mathrm{Var}(C_{i})$
is disconnected.
\end{proof}
By \prettyref{lem:prjs} we see that there are non-Schaefer sets that
are CPS, e.g. $\{R\}$ with $R=\{100,010,001\text{\}}$, which is
safely componentwise bijunctive but not Schaefer. It is open whether
there are other such sets not mentioned in \prettyref{lem:prjs}.

While we will see in \prettyref{sub:eas} that there are not safely
tight sets that are no-constants CPS (see \prettyref{def:dcp-1}),
it is likely that no not safely tight set is CPS, else we had a \noun{$\mathrm{P^{NP}}$}-algorithm
for a PSPACE-complete problem. We can show that not safely tight sets
are at least not by \prettyref{lem:prjs} CPS:
\begin{lem}
\label{lem:nst}If a set of relations $\mathcal{S}$ is not safely
tight, it also is not bijunctive, not safely componentwise IHSB$-$,
not safely componentwise IHSB$+$, and not affine.\end{lem}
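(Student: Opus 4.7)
The plan is to argue the contrapositive: assume $\mathcal{S}$ is bijunctive, or safely componentwise IHSB$-$, or safely componentwise IHSB$+$, or affine, and show in each case that $\mathcal{S}$ is safely tight.

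Two of the four cases are immediate from \prettyref{lem:l42}. If every relation in $\mathcal{S}$ is bijunctive, then part~1 of that lemma gives that every relation in $\mathcal{S}$ is safely componentwise bijunctive, so condition~1 of safely tight holds. If every relation in $\mathcal{S}$ is affine, part~4 says it is safely componentwise bijunctive (in fact also safely OR-free and safely NAND-free), so again $\mathcal{S}$ is safely tight.

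The substantive work is the IHSB$-$ case (the IHSB$+$ case will be dually symmetric). I claim that every safely componentwise IHSB$-$ relation $R$ is safely OR-free. Since identification of variables commutes with the definitions, it suffices to prove that every componentwise IHSB$-$ relation is OR-free and then apply the hypothesis to $R$ and to every relation obtained from $R$ by identification of variables. So suppose $R$ is componentwise IHSB$-$ but, for contradiction, not OR-free. Then there are indices $i\neq j$ and constants $t_1,\dots,t_r\in\{0,1\}$ such that, writing $\mathbf{t}^{ab}$ for the tuple that agrees with $(t_1,\dots,t_r)$ off positions $i,j$ and carries $(a,b)$ on those positions, we have $\mathbf{t}^{10},\mathbf{t}^{11},\mathbf{t}^{01}\in R$ while $\mathbf{t}^{00}\notin R$. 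But $\mathbf{t}^{10}$ and $\mathbf{t}^{11}$ differ in one coordinate and both lie in $R$, so they are adjacent in $G(R)$; similarly for $\mathbf{t}^{11}$ and $\mathbf{t}^{01}$. Hence all three tuples lie in a single connected component of $G(R)$. That component is IHSB$-$, in particular Horn, so by \prettyref{lem:clos}(2) it is closed under coordinate-wise $\wedge$. Therefore $\mathbf{t}^{01}\wedge\mathbf{t}^{10}=\mathbf{t}^{00}$ must belong to this component and hence to $R$, a contradiction.

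The IHSB$+$ case is handled dually: one shows that every componentwise IHSB$+$ relation is NAND-free, using that each component is dual Horn and hence closed under $\vee$ by \prettyref{lem:clos}(2), and derives the corresponding safely NAND-free conclusion. The only place one needs to be careful is the small bookkeeping step of passing from ``componentwise'' to ``safely componentwise'': since the hypothesis already guarantees that every relation $R'$ obtained from $R$ by identification of variables is componentwise IHSB$\pm$, the per-relation argument above immediately upgrades to the ``safely'' versions of OR-free and NAND-free. That is the only subtle point; there is no combinatorial obstacle beyond it.
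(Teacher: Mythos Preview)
Your proof is correct and follows essentially the same approach as the paper. Both arguments reduce the IHSB$\pm$ cases to the observation that the three vectors $\boldsymbol{t}^{01},\boldsymbol{t}^{11},\boldsymbol{t}^{10}$ witnessing OR lie in a single connected component, and that a Horn (hence IHSB$-$) component would be closed under $\wedge$ and thus contain $\boldsymbol{t}^{00}$; the paper packages this step by invoking \prettyref{lem:l42} on the component, while you unfold the closure argument from \prettyref{lem:clos} directly, but the content is the same.
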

\begin{proof}
By \prettyref{lem:l42}, $\mathcal{S}$ is not bijunctive and not
affine. Also, $\mathcal{S}$ must contain a relation $R$ s.t. the
relation OR=$\{01,10,11\}$ can be obtained from $R$ by identification
of variables and substitution of constants. Since these operations
are permutable, we can assume that OR can be obtained by first producing
an $n-ary$ relation $R'$ by identification of variables, and then
w.l.o.g. setting the first $n-2$ variables to constants $c_{1}\cdots c_{n-2}$.
Then $\{c_{1}\cdots c_{n-2}01,c_{1}\cdots c_{n-2}10,c_{1}\cdots c_{n-2}11\}\subset R'$,
but $c_{1}\cdots c_{n}00\notin R'$. Since these 3 vectors from $R'$
are in one component of $R'$, already that component is not safely
OR-free, so it cannot be Horn by \prettyref{lem:l42}, and thus is
not IHSB$-$. But then $R'$, and hence $R$, was not safely componentwise
IHSB$-$. \end{proof}
\begin{rem}
The Lemmas \ref{lem:prjs i} and \ref{lem:prjs b} cannot be generalized
to  safely componentwise bijunctive or  safely componentwise IHSB$-$
relations: For sets \emph{$\mathcal{S}$} of  safely componentwise
bijunctive (safely componentwise IHSB$-$) relations that are not
bijunctive (IHSB$-$), there are CNF\textsubscript{C}($\mathcal{S}$)-formulas
with pairs of components that are not disconnected in the projection
to any constraint:

E.g., for the  safely componentwise bijunctive relation $R=\left((x\vee\overline{y})\wedge\overline{z}\right)\vee\left(\overline{x}\wedge y\wedge z\right)$,
the CNF(\emph{$\{R\}$})-formula $F(x,y,z,w)=R(x,y,z)\wedge R(y,x,w)$
has the four pairwise disconnected solutions $a$=0000, $b$=1100,
$c$=0110, and $d$=1001, but $a$ is connected to $b$ in the projection
to $\{x,y,z\}$ as well as in the one to $\{x,y,w\}$.

It follows that for such relations there is no algorithm for $st$-connectivity
analogous to that of \prettyref{lem:alg}. In \prettyref{sub:eas}
we will actually see cases for no-constants formulas where we can
solve connectivity in polynomial time via constraint-projection separation
while $st$-connectivity is PSPACE-complete.
\end{rem}

\chapter{\label{chap:3}No-Constants and Quantified Variants}

\section{\label{sec:No-Constants}No-Constants}

Complexity classifications for CNF($\mathcal{S}$)-formulas without
constants seem to be more favored, but also more difficult to prove.
For example, in \citep{Schaefer:1978:CSP:800133.804350}, Schaefer
stated the no-constants classification as the main theorem, but then
first proved a classification for formulas with constants as intermediate
result. So we also will now attack the no-constants versions of our
$st$-connectivity and connectivity problems, denoted by\emph{ }\emph{\noun{st-}}\noun{Conn($\mathcal{S}$)}\emph{
}and\emph{ }\noun{Conn($\mathcal{S}$)}, respectively.

For $st$-connectivity and the diameter, we prove that the same dichotomy
holds as for formulas with constants (in \citep{Gopalan:2006}, Gopalan
et al.~already stated that they could extend their dichotomy theorem
for $st$-connectivity to formulas without constants, but didn't show
the proof). 

For connectivity, we can extend the tractable class in two ways. Thereby,
we get a quite interesting result: That there are cases when connectivity
is easier than $st$-connectivity; namely, for some sets of relations,
connectivity is in P even though $st$-connectivity is PSPACE-complete.
While we can in some cases show that PSPACE-completeness and coNP-completeness
carry over from the case with constants, we must leave open the complexity
in two situations. 

The following two theorems and the table below summarize our results.

\begin{table}[!h]
\begin{tabular}{|c||c|c||c|c|}
\hline 
{\footnotesize{}$\mathcal{S}$} & \noun{\footnotesize{}Conn}{\footnotesize{}($\mathcal{S}$) } & \noun{\footnotesize{}Conn\textsubscript{C}}{\footnotesize{}($\mathcal{S}$) } & \noun{\footnotesize{}st-Conn}{\footnotesize{}($\mathcal{S}$) } & \noun{\footnotesize{}st-Conn\textsubscript{C}}{\footnotesize{}($\mathcal{S}$) }\tabularnewline
\hline 
\hline 
{\footnotesize{}not safely tight, not {*}} & {\footnotesize{}PSPACE-c.} & \multirow{3}{*}{{\footnotesize{}PSPACE-c.}} & \multirow{3}{*}{{\footnotesize{}PSPACE-c.}} & \multirow{3}{*}{{\footnotesize{}PSPACE-c.}}\tabularnewline
\cline{1-2} 
{\footnotesize{}not safely tight, not q.disc., {*}} & \textbf{\footnotesize{}in PSPACE} &  &  & \tabularnewline
\cline{1-2} 
{\footnotesize{}not safely tight, q.disc. } & {\footnotesize{}in P} &  &  & \tabularnewline
\hline 
{\footnotesize{}safely tight, not {*}, not Schaefer} & \multirow{3}{*}{{\footnotesize{}coNP-c.}} & \multirow{8}{*}{{\footnotesize{}coNP-c.}} & \multirow{9}{*}{{\footnotesize{}in P}} & \multirow{9}{*}{{\footnotesize{}in P}}\tabularnewline
\cline{1-1} 
{\footnotesize{}Schaefer, not implicative,} &  &  &  & \tabularnewline
{\footnotesize{}not nc-CPSS} &  &  &  & \tabularnewline
\cline{1-2} 
{\footnotesize{}safely tight, {*}, not Schaefer, } & \multirow{2}{*}{\textbf{\footnotesize{}in coNP}} &  &  & \tabularnewline
{\footnotesize{}not} {\footnotesize{}q.disc., not nc-CPSS} &  &  &  & \tabularnewline
\cline{1-2} 
{\footnotesize{}safely tight, q.disc, not CPSS} & \multirow{4}{*}{{\footnotesize{}in P}} &  &  & \tabularnewline
\cline{1-1} 
{\footnotesize{}nc-CPSS, not CPSS } &  &  &  & \tabularnewline
\cline{1-1} 
{\footnotesize{}implicative, not CPSS} &  &  &  & \tabularnewline
\cline{1-1} \cline{3-3} 
{\footnotesize{}CPSS} &  & {\footnotesize{}in P} &  & \tabularnewline
\hline 
\end{tabular}

\protect\caption[The classifications for CNF($\mathcal{S}$)-formulas without constants]{\emph{}\textbf{\label{tab:cl}}The classifications for CNF($\mathcal{S}$)-formulas
without constants, in comparison to the case with constants.\protect \\
{*} = (0-valid or 1-valid or complementive) ~~~~~q.disc. = quasi
disconnecting\protect \\
The cases where the complexity (up to polynomial-time isomorphisms)
is not yet known are highlighted.}
\end{table}

\newpage{}In this whole section, we assume the sets \emph{$\mathcal{S}$}
to contain no empty relations (otherwise, since empty relations are
not 0-valid and not 1-valid, the hardness statements for sets containing
relations which are not 0-valid or not 1-valid would be wrong). This
assumption has also to be made, e.g., for Schaefer's theorem in the
no-constants case, but is often omitted.
\begin{thm}[Dichotomy theorem for \noun{st-Conn}($\mathcal{S}$) and the diameter]
\label{thm:ndich} Let $\mathcal{S}$ be a finite set of logical
relations.
\begin{enumerate}
\item If $\mathcal{S}$ is  safely tight, \emph{\noun{st-}}\noun{Conn($\mathcal{S}$)}
is in \noun{P,} and for every \noun{CNF($\mathcal{S}$)}-formula $\phi$,
the diameter of $G(\phi)$ is linear in the number of variables.
\item Otherwise, \emph{\noun{st-}}\noun{Conn($\mathcal{S}$)} is \noun{PSPACE}-complete,
and there are \noun{CNF($\mathcal{S}$)}-formulas $\phi$\emph{ }such
that the diameter of $G(\phi)$ is exponential in the number of variables.
\end{enumerate}
\end{thm}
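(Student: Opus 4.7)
The first statement is inherited from the with-constants case: every CNF($\mathcal{S}$)-formula is a CNF\textsubscript{C}($\mathcal{S}$)-formula, so Corollary \ref{lem:c47} gives at once the $2n$ bound on the diameter of every component of $G(\phi)$ and the polynomial-time algorithm for \noun{st-Conn}($\mathcal{S}$) when $\mathcal{S}$ is safely tight. Containment of \noun{st-Conn}($\mathcal{S}$) in PSPACE for arbitrary $\mathcal{S}$ is standard: guess a path of length at most $2^n$ between $\boldsymbol{s}$ and $\boldsymbol{t}$ in $G(\phi)$ and verify each intermediate vertex, so the problem lies in $\mathrm{NPSPACE} = \mathrm{PSPACE}$ by Savitch.

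For PSPACE-hardness and the exponential-diameter lower bound when $\mathcal{S}$ is not safely tight, the plan is to reduce from the with-constants statements (Corollary \ref{cor:pc} and Lemma \ref{lem:l37} combined with Theorem \ref{thm:se}). Given a CNF\textsubscript{C}($\mathcal{S}$)-formula $\phi(\boldsymbol{x})$ with two solutions $\boldsymbol{s}, \boldsymbol{t}$, I would build a CNF($\mathcal{S}$)-formula $\phi'(\boldsymbol{x}, \boldsymbol{z})$ with extensions $\boldsymbol{s}', \boldsymbol{t}'$ such that $\boldsymbol{s}, \boldsymbol{t}$ are connected in $G(\phi)$ iff $\boldsymbol{s}', \boldsymbol{t}'$ are connected in $G(\phi')$, by replacing every occurrence of the constants $0$ and $1$ in $\phi$ by dedicated fresh variables $z_0, z_1$ together with a constant-pinning gadget; the same simulation applied to the diameter witness of Lemma \ref{lem:l37} then transfers the exponential lower bound to CNF($\mathcal{S}$).

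The crux is to pin $z_0$ and $z_1$ to the intended values using only CNF($\mathcal{S}$)-subformulas, and the natural case split is by 0-/1-validity of $\mathcal{S}$. If $\mathcal{S}$ is neither 0-valid nor 1-valid, some $R \in \mathcal{S}$ excludes $0^{\mathrm{ar}(R)}$, so the subformula $R(z_1,\dots,z_1)$ excludes $z_1 = 0$, and if $R$ is moreover 1-valid this pins $z_1 = 1$; symmetrically for $z_0$. When no single relation is simultaneously 0-excluding and 1-valid, two relations of $\mathcal{S}$ have to be combined through identification of variables to produce unary gadgets pinning $z_0$ and $z_1$, which is always possible for a non-0-valid, non-1-valid set by a short combinatorial argument. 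The one-sided cases (0-valid but not 1-valid, and dually) reduce the problem to pinning only the missing constant, and the same gadget restricts cleanly.

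The hardest subcase, which I expect to be the main obstacle, is when $\mathcal{S}$ is simultaneously 0-valid and 1-valid (in particular when it is complementive), since then no unary relation derived from $\mathcal{S}$ by identification of variables can pin a variable to any constant. My plan here is to exploit that non-safe tightness together with 0- and 1-validity forces $\mathcal{S}$ to contain a relation with a non-bijunctive component away from the two trivial vertices $0^n$ and $1^n$ (in the spirit of $R_{\mathrm{NAE}}$ from Step~1 of Lemma \ref{lem:l34}), and to re-run the structural-expressibility construction with $\boldsymbol{s}$ and $\boldsymbol{t}$ themselves playing the role of pseudo-constants: the extensions $\boldsymbol{s}', \boldsymbol{t}'$ are chosen to satisfy $z_0 = 0, z_1 = 1$, and a projection-style argument along the lines of Lemma \ref{lem:l32} shows that any path in $G(\phi')$ between them can be transformed into one in which the $z$-coordinates remain fixed, reducing back to the with-constants case.
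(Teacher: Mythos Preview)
Your treatment of Part~1 and of the PSPACE upper bound is fine and matches the paper. The gap is in Part~2: your plan hinges on \emph{globally pinning} the substitute variables $z_0,z_1$ to $0$ and $1$, and then falls into a case split on $0$-/$1$-validity that, as you yourself note, breaks down when $\mathcal{S}$ is simultaneously $0$-valid and $1$-valid. The workaround you sketch for that case---re-running the structural-expressibility construction with $\boldsymbol{s},\boldsymbol{t}$ as pseudo-constants and a ``projection-style argument'' \`a la Lemma~\ref{lem:l32}---does not go through: Lemma~\ref{lem:l32} projects out auxiliary variables of a structural expression, it does not let you freeze arbitrary coordinates along a path, and there is no reason a path in $G(\phi')$ should be transformable into one that keeps $z_0,z_1$ fixed unless the gadget itself enforces this.

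The paper avoids the case split entirely by observing that for \noun{st-Conn} (as opposed to \noun{Conn}) you do not need to pin $z_0,z_1$ globally; you only need the specific values you start at to be \emph{isolated} in the gadget's solution graph, so that no path can leave them. Now use directly that ``not safely tight'' means $\mathcal{S}$ contains some relation that is not safely OR-free and some relation that is not safely NAND-free. From any not safely OR-free $R$ one can build, by identification of variables and conjunction, a CNF($\{R\}$)-formula with an isolated solution $\boldsymbol{b}\neq(0\cdots0)$ (a \emph{$1$-isolating} formula, Lemma~\ref{lem:iso}); dually, a not safely NAND-free relation yields a $0$-isolating formula with an isolated $\boldsymbol{a}\neq(1\cdots1)$. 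Replace the constants $0,1$ in $\phi$ by $y_1,z_1$, conjoin the two gadgets on fresh variables $\boldsymbol{y},\boldsymbol{z}$ with $a_1=0$, $b_1=1$, and map $(\boldsymbol{s},\boldsymbol{t})$ to $(\boldsymbol{s}\cdot\boldsymbol{a}\cdot\boldsymbol{b},\,\boldsymbol{t}\cdot\boldsymbol{a}\cdot\boldsymbol{b})$. Because $\boldsymbol{a}$ and $\boldsymbol{b}$ are isolated in their respective gadgets, any path between these two vectors in $G(\phi')$ must keep all $y$- and $z$-coordinates frozen, so it is exactly a path in $G(\phi)$; this gives both the PSPACE-hardness reduction (Lemma~\ref{lem:red}) and the exponential-diameter transfer, uniformly and with no validity cases.
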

\begin{proof}
See \prettyref{sub:st-}.\end{proof}
\begin{thm}[Classification for \noun{Conn}($\mathcal{S}$)]
\label{thm:trich-1}Let $\mathcal{S}$ be a finite set of non-empty
logical relations.
\begin{enumerate}
\item If $\mathcal{S}$ is nc-CPSS, quasi disconnecting or implicative,\emph{
}\noun{Conn($\mathcal{S}$)} is in \noun{P}.
\item Else, if $\mathcal{S}$ is Schaefer, or if $\mathcal{S}$ is safely
tight but not Schaefer and not 0-valid nor 1-valid nor complementive,
\noun{Conn($\mathcal{S}$)} is \noun{coNP}-complete.
\item Else, if $\mathcal{S}$ is safely tight, \noun{Conn($\mathcal{S}$)}
is in \noun{coNP.}
\item Else, if $\mathcal{S}$ is not 0-valid nor 1-valid nor complementive,\emph{
}\noun{Conn($\mathcal{S}$)} is \noun{PSPACE}-complete.
\item Else, \noun{Conn($\mathcal{S}$)} is in\noun{ PSPACE.}
\end{enumerate}
\end{thm}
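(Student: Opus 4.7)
The plan is to handle the five cases by combining the with-constants classification (\prettyref{thm:trich}) with three ingredients: easy-direction membership results that transfer directly from the with-constants case, new polynomial-time algorithms for the three P-classes listed in Case~1, and a constant-simulation gadget that transfers hardness from CNF\textsubscript{C}($\mathcal{S}$) to CNF($\mathcal{S}$).

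First, the upper bounds in Cases~3 and 5 are almost immediate. Every CNF($\mathcal{S}$)-formula is in particular a CNF\textsubscript{C}($\mathcal{S}$)-formula (merely one that happens not to use constants), so the coNP-algorithm of \prettyref{lem:c47} gives Case~3, and the standard nondeterministic search --- guess a pair of solutions and a separating route, verify vertex by vertex --- combined with Savitch's theorem gives the PSPACE upper bound in Case~5. For the P-algorithms in Case~1, I would run the projection-based algorithm of \prettyref{lem:alg} on nc-CPSS sets (the definition of nc-CPSS should be precisely what makes that algorithm work without constants), and provide tailored structural arguments for the quasi disconnecting and implicative classes, each exploiting a local check that detects whether the solution graph falls apart.

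For the hardness results in Cases~2 and 4, the central technical step will be a constant-simulation lemma: whenever $\mathcal{S}$ is neither 0-valid, nor 1-valid, nor complementive, there exist CNF($\mathcal{S}$)-formulas $\phi_0(y,\boldsymbol u)$ and $\phi_1(y,\boldsymbol u)$ whose solution graphs are connected and in which the distinguished variable $y$ is pinned to $0$ respectively to $1$. Given such gadgets, any CNF\textsubscript{C}($\mathcal{S}$)-formula $\psi$ is turned into a CNF($\mathcal{S}$)-formula $\tilde\psi$ by replacing the constants by two fresh variables and conjoining disjoint copies of $\phi_0$ and $\phi_1$; connectedness of the auxiliary witness spaces then ensures $G(\tilde\psi)$ is connected iff $G(\psi)$ is. Combined with \prettyref{cor:pc} this yields the PSPACE-completeness in Case~4, and combined with \prettyref{lem:l48} together with the Horn hardness of \prettyref{lem:m har}--\prettyref{lem:exp m} it yields the coNP-completeness in the safely-tight-not-Schaefer branch of Case~2. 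For the Schaefer branch of Case~2, a direct reduction from \textsc{Another-Sat} analogous to \prettyref{lem:l48} has to be adapted to the no-constants setting, using that $\mathcal{S}$ falls into none of the three P-classes.

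The main obstacle I expect is establishing the polynomial-time algorithms for the quasi disconnecting and implicative classes of Case~1: these are genuinely new classes with no direct counterpart in the with-constants setting, and since the connectivity boundary does not align with Post's clones (as noted in the footnote to \prettyref{thm:sch}), the arguments must be combinatorial rather than algebraic. A secondary difficulty is engineering the constant-simulation gadgets so that their auxiliary witness space is fully connected --- pinning a variable alone is essentially Schaefer's constant expressibility, but the extra connectivity requirement is nontrivial, since, as warned in \prettyref{sub:struc} for structural expressibility, existential quantification over auxiliary variables generally destroys connectivity.
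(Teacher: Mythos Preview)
Your overall architecture matches the paper closely for Cases~1, 3, 4, and 5, and your constant-simulation idea for Cases~2 and 4 is essentially the paper's \prettyref{lem:uni}; the paper solves your ``secondary difficulty'' by simply using formulas with a \emph{unique} solution (which trivially have connected solution graphs), obtained from Creignou et al.'s expressibility of $\overline{x}\wedge y$ whenever $\mathcal{S}$ is not 0-valid, not 1-valid, and not complementive (\prettyref{lem:red-1-1}).

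However, there is a genuine gap in your Schaefer branch of Case~2. You propose adapting the \textsc{Another-Sat} reduction of \prettyref{lem:l48}, but \textsc{Another-Sat}$(\mathcal{S})$ is NP-hard only when $\mathcal{S}$ is \emph{not} Schaefer (this is exactly Juban's hypothesis); for Schaefer $\mathcal{S}$ the problem is in P, so the reduction carries no hardness. Moreover, your constant-simulation gadget does not cover this branch either, since Schaefer sets falling under Case~2 may well be 0-valid or 1-valid (e.g.\ Horn and 0-valid), which is precisely why the paper needs a separate argument here. What the paper actually does (\prettyref{lem:cco}) is quite different and substantially harder: it shows that a Schaefer set that is not nc-CPSS and not implicative must be Horn with some relation not safely componentwise IHSB$-$ \emph{and} some relation not 1-valid (or the dual). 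From these two ingredients, \prettyref{lem:Mex} assembles a CNF($\mathcal{S}$)-formula $\mu$ (without constants) whose normal form $\nu(\mu)$ has certain structural properties --- a multi-implication clause $x\vee\overline{y}\vee\overline{z}$ with $y$ unimplied, no large restraints, no self-implicating sets --- and then \prettyref{lem:m har-1} redoes the reduction of \prettyref{lem:m har} from \textsc{Sat}$(\{P,N\})$ directly in CNF($\{\mu\}$), using $\mu$'s structural properties to build the gadgets. This is the most technical part of the no-constants classification and cannot be replaced by an \textsc{Another-Sat} argument.

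A smaller point: for the quasi disconnecting class in Case~1 you anticipate a bespoke algorithm, but the paper (\prettyref{lem:qu}) shows these sets are themselves nc-CPS, so the same projection algorithm you use for nc-CPSS already handles them; only the implicative class needs a genuinely different (self-implication based) algorithm (\prettyref{lem:si}).
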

\begin{proof}
1. See \prettyref{cor:sa}, \prettyref{cor:qua}, and \prettyref{lem:si}.

2. See \prettyref{lem:cco} and \prettyref{cor:ns}.

3. This result carries over from the case with constants (\prettyref{thm:trich}).

4. See \prettyref{cor:nr}.

5. This result carries over from the case with constants (\prettyref{thm:trich}).
\end{proof}

\subsection{\emph{\label{sub:st-}st}-Connectivity and Diameter}

The PSPACE-hardness proof for \emph{\noun{st-}}\noun{Conn($\mathcal{S}$)}
will be by reduction from \emph{\noun{st-}}\noun{Conn}\emph{\noun{\textsubscript{C}}}\noun{($\mathcal{S}$).}
An obvious way to reduce a problem for formulas with constants to
one for formulas without is to replace every occurrence of a constant
with a new variable, and then to add constraints for the new variables
such that, with regard to the problem at hand, the transformed formula
is equivalent to the original one. This approach was already used
by Schaefer \citep{Schaefer:1978:CSP:800133.804350}.

For $st$-connectivity, we have to make sure that for every two solutions
of the original formula, there are two solutions of the transformed
formula that are connected iff the solutions of the original formula
are connected. In \prettyref{lem:red} below we show how this is possible
for not safely tight sets of relations. We need the following definition
and the next lemma.
\begin{defn}
A solution $\boldsymbol{a}$ of a formula $\phi$ is \emph{isolated}
if $\boldsymbol{a}$ is not connected to any other solution $\boldsymbol{b}$
in $G(\phi)$. A formula $\phi$ is \emph{0-isolating }(\emph{1-isolating})
if it has an isolated solution $\boldsymbol{a}\neq(1\cdots1)$ ($\boldsymbol{a}\neq(0\cdots0)$).
Similarly, we define isolated\emph{ }vectors for relations, and 0-isolating
and 1-isolating relations.\end{defn}
\begin{lem}
\label{lem:iso}If an $n$-ary logical relation $R$\emph{ }is not
safely OR-free, there is a 1-isolating\emph{ }\noun{CNF($\{R\}$)}-formula
\emph{$\phi$.}\end{lem}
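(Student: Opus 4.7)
The plan is to use the hypothesis that $R$ is not safely OR-free to build a CNF$(\{R\})$-formula $\phi$ having an isolated solution other than the all-zeros vector. First I would unpack the assumption combinatorially: there exists a relation $R'$ obtained from $R$ by identifying some variables, together with two positions $i, j$ of $R'$ and an assignment $\boldsymbol{c}$ to the remaining positions, such that $R'$ contains the three vectors that agree with $\boldsymbol{c}$ off $\{i, j\}$ and take values $(0, 1), (1, 0), (1, 1)$ on $(i, j)$, but \emph{not} the vector with $(0, 0)$ on $(i, j)$. My target isolated vector $\boldsymbol{a}$ will correspond to the $(0, 1)$-pattern, suitably padded with the auxiliary values.

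Since we have no constants at our disposal, the constants $c_k$ must be simulated using dedicated auxiliary variables $y_k$ pinned to their target values by additional $R$-constraints. A key building block is the ``force-to-$1$'' gadget $R(x, \ldots, x)$: when the full identification of arguments yields the singleton $\{1\}$ (this happens precisely when $R$ is $1$-valid but not $0$-valid), it pins $x = 1$; a dual gadget forces a variable to $0$. When these simple gadgets are unavailable — most importantly when $R$ is $0$-valid — more elaborate pinning constructions are required, obtained by chaining several copies of $R$ along shared variables in the spirit of the gadget constructions of \prettyref{lem:m har} and \prettyref{lem:l34}.

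I would then assemble $\phi$ as a conjunction of the central $R$-copy realising $R'$ on $(x_i, x_j, \boldsymbol{y})$, the pinning gadgets for the $y_k$'s, and an additional ``$(1,1)$-blocker'' preventing $(x_i, x_j) = (1, 1)$ from satisfying the formula (otherwise the $(1,1)$-completion $\boldsymbol{w}$ would be a Hamming-neighbour of $\boldsymbol{a}$). Isolation of $\boldsymbol{a}$ then follows by checking the three types of flips: flipping any $y_k$ breaks its pinning gadget; flipping $x_j$ from $1$ to $0$ produces the forbidden $(0,0)$-pattern of $R'$; flipping $x_i$ from $0$ to $1$ triggers the blocker. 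Since $x_j = 1$ in $\boldsymbol{a}$, the solution is non-zero, and $\phi$ is 1-isolating.

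The main obstacle will be constructing both the pinning gadgets and the $(1,1)$-blocker in full generality — particularly for $R$ which are $0$-valid (where the all-zeros solution is automatically present in every CNF$(\{R\})$-formula and the direct force-to-$c$ gadget fails) and for cases where the various auxiliary $R$-copies must share or avoid sharing variables in carefully coordinated ways. I expect a case analysis on the type of $R$ (e.g., whether $R$ is $0$-valid, $1$-valid, or complementive, and on which identifications of $R$ yield nontrivial unary restrictions) to arise, borrowing combinatorial manipulations from Juban's work on expressing $x \neq y$ from non-Schaefer relations and from the structural-expressibility toolkit of Gopalan et al. developed in \prettyref{lem:l34}.
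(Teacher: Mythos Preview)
Your plan diverges substantially from the paper's proof and, as written, has a real gap. The pinning strategy cannot work in general: if $R$ is both $0$-valid and $1$-valid, then $R(x,\ldots,x)=\{0,1\}$, so the ``force-to-$c$'' gadget pins nothing, and in fact \emph{every} CNF$(\{R\})$-formula is then both $0$- and $1$-valid, so no variable can be forced to a specific value at all. You acknowledge this obstacle but defer it to unspecified ``more elaborate pinning constructions''; that is precisely the hard part, and the references you cite (\prettyref{lem:m har}, \prettyref{lem:l34}) do not supply such gadgets in this generality. The same issue afflicts your ``$(1,1)$-blocker'', which you never construct.

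The paper sidesteps all of this with a much simpler idea: rather than pinning each auxiliary position separately, it \emph{identifies} all positions with target constant $1$ into a single variable and all positions with target constant $0$ into another. This collapses $R$ (after the initial identification producing the OR-witness) to a relation of arity at most $3$ satisfying a fixed membership constraint (e.g.\ $110\in R''$, $010\notin R''$). From there the proof is a finite case analysis---four binary relations in the degenerate cases, and $64$ ternary relations in the general case, dispatched by a short computer search---using only further identifications and conjunctions of permuted copies of $R''$. No pinning, no blocker, no appeal to Schaefer-style expressibility. The moral is that when constants are unavailable, collapsing to bounded arity and brute-forcing is often cleaner than trying to simulate constants.
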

\begin{proof}
By identification of variables, we can obtain a not OR-free relation
$R^{*}$ from $R$. W.l.o.g., assume that OR can be obtained from
$R^{*}$ by setting the last $n-2$ variables to constants $c_{3},\ldots,c_{n}$
(for relations that also require identification of variables to obtain
OR, the identification can be done in a prior step); then $R^{*}(x_{1},x_{2},c_{3},\ldots,c_{n})$
$=x_{1}\vee x_{2}$.

If $n=2$, we take $\phi=R^{*}(x,x)=x$, then $[\phi]=\{1\}$.

Else, if all $c_{3},\ldots,c_{n}=1$, we define a 3-ary relation $R'$
by identifying the last $n-2$ variables. Then $R'(x_{1},x_{2},1)=x_{1}\vee x_{2}$,
and it follows that identifying the first two variables of $R'$ yields
a 2-ary relation $R''=R'(x_{1},x_{1},x_{2})$ with $11\in R''$ and
$01\notin R''$, thus $R''$ equals $\{11,00,10\},$ $\{11,00\},$
$\{11,10\}$ or $\{11\}$. The second and fourth relation are already
1-isolating, so we let $\phi=R''(x_{1},x_{2})$. The first is $x_{1}\vee\overline{x_{2}}$,
and we obtain a 1-isolating relation by taking $\phi=R''(x_{1},x_{2})\wedge R''(x_{2},x_{1})$
with $[\phi]=\{11,00\}$. From the third one we obtain $\{1\}$ as
$\phi=R''(x_{1},x_{1})$.

Similarly, if all $c_{3},\ldots,c_{n}=0$, by identifying the last
$n-2$ variables, and then the fist two, we get a relation $R''$
with $10\in R''$ and $00\notin R''$, thus $R''$ equals $\{10,01,11\}$,
$\{10,01\}$, $\{10,11\}$ or $\{10\}$. Here again, the second and
fourth relation are already 1-isolating, and from the first as well
as from the third we obtain $\{1\}$ by identifying the two variables.

Otherwise, we define a 3-ary relation $R''$ by identifying all variables
$x_{i}$ of $R^{*}$ with $c_{i}=0$, then all with $c_{i}=1$, and
then the first two, i.e., $R''(x_{1},x_{2},x_{3})=R(x_{1},x_{1},\xi_{3},\ldots,\xi_{n})$,
where $\xi_{i}=x_{2}$ if $c_{i}=1$ and $\xi_{i}=x_{3}$ if $c_{i}=0$.
Then $110\in R''$ and $010\notin R''$, and $R''$ is one of 64 possibles
relations; Figure \ref{fig:Producing} shows how to produce a 1-isolating
relation from each of them by identification of variables and conjunction.
\end{proof}
\begin{figure}[p]
\begin{minipage}[t]{1\columnwidth}%
{\scriptsize{}\{110\}: already 1-isolating}{\scriptsize \par}

{\scriptsize{}\{000 110\}: already 1-isolating}{\scriptsize \par}

{\scriptsize{}\{100 110\}: identify x1,x2 -> \{10\}}{\scriptsize \par}

{\scriptsize{}\{000 100 110\}: R(x1,x2,x3) AND R(x2,x1,x3) = \{000
110\}}{\scriptsize \par}

{\scriptsize{}\{110 001\}: already 1-isolating}{\scriptsize \par}

{\scriptsize{}\{000 110 001\}: already 1-isolating}{\scriptsize \par}

{\scriptsize{}\{100 110 001\}: already 1-isolating}{\scriptsize \par}

{\scriptsize{}\{000 100 110 001\}: R(x1,x2,x3) AND R(x2,x1,x3) = \{000
110 001\}}{\scriptsize \par}

{\scriptsize{}\{110 101\}: already 1-isolating}{\scriptsize \par}

{\scriptsize{}\{000 110 101\}: already 1-isolating}{\scriptsize \par}

{\scriptsize{}\{100 110 101\}: identify x1,x2 -> \{10\}}{\scriptsize \par}

{\scriptsize{}\{000 100 110 101\}: R(x1,x2,x3) AND R(x2,x1,x3) = \{000
110\}}{\scriptsize \par}

{\scriptsize{}\{110 001 101\}: already 1-isolating}{\scriptsize \par}

{\scriptsize{}\{000 110 001 101\}: already 1-isolating}{\scriptsize \par}

{\scriptsize{}\{100 110 001 101\}: identify x1,x2 -> \{10 01\}}{\scriptsize \par}

{\scriptsize{}\{000 100 110 001 101\}: R(x1,x2,x3) AND R(x2,x1,x3)
= \{000 110 001\}}{\scriptsize \par}

{\scriptsize{}\{110 011\}: already 1-isolating}{\scriptsize \par}

{\scriptsize{}\{000 110 011\}: already 1-isolating}{\scriptsize \par}

{\scriptsize{}\{100 110 011\}: already 1-isolating}{\scriptsize \par}

{\scriptsize{}\{000 100 110 011\}: already 1-isolating}{\scriptsize \par}

{\scriptsize{}\{110 001 011\}: already 1-isolating}{\scriptsize \par}

{\scriptsize{}\{000 110 001 011\}: already 1-isolating}{\scriptsize \par}

{\scriptsize{}\{100 110 001 011\}: identify x1,x2 -> \{10 01\}}{\scriptsize \par}

{\scriptsize{}\{000 100 110 001 011\}: R(x1,x2,x3) AND R(x1,x3,x2)
= \{000 100 011\}}{\scriptsize \par}

{\scriptsize{}\{110 101 011\}: already 1-isolating}{\scriptsize \par}

{\scriptsize{}\{000 110 101 011\}: already 1-isolating}{\scriptsize \par}

{\scriptsize{}\{100 110 101 011\}: already 1-isolating}{\scriptsize \par}

{\scriptsize{}\{000 100 110 101 011\}: already 1-isolating}{\scriptsize \par}

{\scriptsize{}\{110 001 101 011\}: already 1-isolating}{\scriptsize \par}

{\scriptsize{}\{000 110 001 101 011\}: already 1-isolating}{\scriptsize \par}

{\scriptsize{}\{100 110 001 101 011\}: identify x1,x2 -> \{10 01\}}{\scriptsize \par}

{\scriptsize{}\{000 100 110 001 101 011\}: R(x1,x2,x3) AND R(x1,x3,x2)
= \{000 100 110 101 011\}}{\scriptsize \par}

{\scriptsize{}\{110 111\}: identify x1,x2 -> \{10 11\}, then R(x1,x2)
AND R(x2,x1) = \{11\}}{\scriptsize \par}

{\scriptsize{}\{000 110 111\}: identify x1,x2 -> \{00 10 11\}, then
R(x1,x2) AND R(x2,x1) = \{00 11\}}{\scriptsize \par}

{\scriptsize{}\{100 110 111\}: identify x1,x2 -> \{10 11\}, then R(x1,x2)
AND R(x2,x1) = \{11\}}{\scriptsize \par}

{\scriptsize{}\{000 100 110 111\}: identify x1,x2 -> \{00 10 11\},
then R(x1,x2) AND R(x2,x1) = \{00 11\}}{\scriptsize \par}

{\scriptsize{}\{110 001 111\}: already 1-isolating}{\scriptsize \par}

{\scriptsize{}\{000 110 001 111\}: identify x1,x3 -> \{00 11\}}{\scriptsize \par}

{\scriptsize{}\{100 110 001 111\}: already 1-isolating}{\scriptsize \par}

{\scriptsize{}\{000 100 110 001 111\}: identify x1,x3 -> \{00 11\}}{\scriptsize \par}

{\scriptsize{}\{110 101 111\}: identify x1,x2 -> \{10 11\}, then R(x1,x2)
AND R(x2,x1) = \{11\}}{\scriptsize \par}

{\scriptsize{}\{000 110 101 111\}: identify x1,x2 -> \{00 10 11\},
then R(x1,x2) AND R(x2,x1) = \{00 11\}}{\scriptsize \par}

{\scriptsize{}\{100 110 101 111\}: identify x1,x2 -> \{10 11\}, then
R(x1,x2) AND R(x2,x1) = \{11\}}{\scriptsize \par}

{\scriptsize{}\{000 100 110 101 111\}: identify x1,x2 -> \{00 10 11\},
then R(x1,x2) AND R(x2,x1) = \{00 11\}}{\scriptsize \par}

{\scriptsize{}\{110 001 101 111\}: identify x1,x3 -> \{10 11\}, then
R(x1,x2) AND R(x2,x1) = \{11\}}{\scriptsize \par}

{\scriptsize{}\{000 110 001 101 111\}: identify x1,x3 -> \{00 10 11\},
then R(x1,x2) AND R(x2,x1) = \{00 11\}}{\scriptsize \par}

{\scriptsize{}\{100 110 001 101 111\}: identify x1,x3 -> \{10 11\},
then R(x1,x2) AND R(x2,x1) = \{11\}}{\scriptsize \par}

{\scriptsize{}\{000 100 110 001 101 111\}: identify x1,x3 -> \{00
10 11\}, then R(x1,x2) AND R(x2,x1) = \{00 11\}}{\scriptsize \par}

{\scriptsize{}\{110 011 111\}: identify x1,x2 -> \{10 11\}, then R(x1,x2)
AND R(x2,x1) = \{11\}}{\scriptsize \par}

{\scriptsize{}\{000 110 011 111\}: identify x1,x2 -> \{00 10 11\},
then R(x1,x2) AND R(x2,x1) = \{00 11\}}{\scriptsize \par}

{\scriptsize{}\{100 110 011 111\}: identify x1,x2 -> \{10 11\}, then
R(x1,x2) AND R(x2,x1) = \{11\}}{\scriptsize \par}

{\scriptsize{}\{000 100 110 011 111\}: identify x1,x2 -> \{00 10 11\},
then R(x1,x2) AND R(x2,x1) = \{00 11\}}{\scriptsize \par}

{\scriptsize{}\{110 001 011 111\}: identify x1,x3 -> \{11\}}{\scriptsize \par}

{\scriptsize{}\{000 110 001 011 111\}: identify x1,x3 -> \{00 11\}}{\scriptsize \par}

{\scriptsize{}\{100 110 001 011 111\}: identify x1,x3 -> \{11\}}{\scriptsize \par}

{\scriptsize{}\{000 100 110 001 011 111\}: identify x1,x3 -> \{00
11\}}{\scriptsize \par}

{\scriptsize{}\{110 101 011 111\}: identify x1,x2 -> \{10 11\}, then
R(x1,x2) AND R(x2,x1) = \{11\}}{\scriptsize \par}

{\scriptsize{}\{000 110 101 011 111\}: identify x1,x2 -> \{00 10 11\},
then R(x1,x2) AND R(x2,x1) = \{00 11\}}{\scriptsize \par}

{\scriptsize{}\{100 110 101 011 111\}: identify x1,x2 -> \{10 11\},
then R(x1,x2) AND R(x2,x1) = \{11\}}{\scriptsize \par}

{\scriptsize{}\{000 100 110 101 011 111\}: identify x1,x2 -> \{00
10 11\}, then R(x1,x2) AND R(x2,x1) = \{00 11\}}{\scriptsize \par}

{\scriptsize{}\{110 001 101 011 111\}: identify x1,x3 -> \{10 11\},
then R(x1,x2) AND R(x2,x1) = \{11\}}{\scriptsize \par}

{\scriptsize{}\{000 110 001 101 011 111\}: identify x1,x3 -> \{00
10 11\}, then R(x1,x2) AND R(x2,x1) = \{00 11\}}{\scriptsize \par}

{\scriptsize{}\{100 110 001 101 011 111\}: identify x1,x3 -> \{10
11\}, then R(x1,x2) AND R(x2,x1) = \{11\}}{\scriptsize \par}

{\scriptsize{}\{000 100 110 001 101 011 111\}: identify x1,x3 -> \{00
10 11\}, then R(x1,x2) AND R(x2,x1) = \{00 11\}}{\scriptsize \par}%
\end{minipage}

\protect\caption[Producing a 1-isolating relation from every 3-ary relation $R$ satisfying
$110\in R$ and $010\protect\notin R$ for the proof of \prettyref{lem:iso}]{\emph{}\label{fig:Producing}Producing a 1-isolating relation from
every 3-ary relation $R$ satisfying $110\in R$ and $010\protect\notin R$
for the last case of the proof of \prettyref{lem:iso}. This list
is generated by the \texttt{main}-function of the class \texttt{Isolating}
of \noun{SatConn}.}
\end{figure}

\begin{lem}
\label{lem:red}If a finite set of logical relations \emph{$\mathcal{S}$
}is not  safely tight, then \noun{st-}\emph{\noun{Conn\textsubscript{C}($\mathcal{S}$)}}
$\leq_{m}^{p}$\emph{\noun{st-}}\noun{Conn($\mathcal{S}$)}, and for
every \noun{CNF\textsubscript{C}($\mathcal{S}$)}-formula $\phi$
with $n$ variables and diameter $d$, there is a \noun{CNF}\emph{($\mathcal{S}$)}-formula
$\phi'$ with $O(n)$ variables and diameter $d'\geq d$.\end{lem}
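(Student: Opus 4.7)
The plan is to carry out the familiar ``replace constants by fresh variables and lock them with a gadget'' trick, where the locking is done by the isolating gadgets supplied by \prettyref{lem:iso}. Since $\mathcal{S}$ is not safely tight, it contains a relation that is not safely OR\mbox{-}free and (dually) one that is not safely NAND\mbox{-}free. By \prettyref{lem:iso} and its dual, there is a 1\mbox{-}isolating \noun{CNF}($\mathcal{S}$)\mbox{-}formula $\phi_{1}(\boldsymbol{y})$ with an isolated solution $\boldsymbol{a}^{1}\neq(1\cdots1)$, and a 0\mbox{-}isolating \noun{CNF}($\mathcal{S}$)\mbox{-}formula $\phi_{0}(\boldsymbol{z})$ with an isolated solution $\boldsymbol{a}^{0}\neq(0\cdots0)$. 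Fix in $\phi_{1}$ a coordinate $o$ with $a^{1}_{o}=1$ and in $\phi_{0}$ a coordinate $z$ with $a^{0}_{z}=0$; both $\phi_{0}$ and $\phi_{1}$ have constantly many variables (depending only on $\mathcal{S}$).

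Given a \noun{CNF\textsubscript{C}}($\mathcal{S}$)\mbox{-}formula $\phi(\boldsymbol{x})$, define $\phi'(\boldsymbol{x},\boldsymbol{y},\boldsymbol{z})$ by taking $\phi$, replacing every occurrence of the constant $0$ by the variable $z$ and every occurrence of $1$ by the variable $o$, and conjoining $\phi_{0}(\boldsymbol{z})\wedge\phi_{1}(\boldsymbol{y})$ on disjoint sets of fresh variables. The formula $\phi'$ is in \noun{CNF}($\mathcal{S}$), uses only $n+O(1)$ variables, and is computable in polynomial time. For any solution $\boldsymbol{s}$ of $\phi$, the extended vector $\boldsymbol{s}'=(\boldsymbol{s},\boldsymbol{a}^{1},\boldsymbol{a}^{0})$ is a solution of $\phi'$, since substituting $z\mapsto 0$ and $o\mapsto 1$ recovers $\phi(\boldsymbol{s})$. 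This defines the map $\boldsymbol{s}\mapsto\boldsymbol{s}'$ used for the reduction and, given $(\phi,\boldsymbol{s},\boldsymbol{t})$, the image is $(\phi',\boldsymbol{s}',\boldsymbol{t}')$.

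The heart of the argument is the equivalence ``$\boldsymbol{s}$ and $\boldsymbol{t}$ are connected in $G(\phi)$ iff $\boldsymbol{s}'$ and $\boldsymbol{t}'$ are connected in $G(\phi')$, via paths of the same length.'' The forward direction is immediate: any path $\boldsymbol{s}=\boldsymbol{u}^{0}\to\cdots\to\boldsymbol{u}^{r}=\boldsymbol{t}$ in $G(\phi)$ lifts to the path $(\boldsymbol{u}^{i},\boldsymbol{a}^{1},\boldsymbol{a}^{0})$ in $G(\phi')$ of the same length, because only the $\boldsymbol{x}$-coordinates change. The backward direction is where we use isolation: on any path from $\boldsymbol{s}'$ to $\boldsymbol{t}'$ in $G(\phi')$, consider the projection to the $\boldsymbol{y}$-coordinates. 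At each step this projection is a solution of $\phi_{1}$, and consecutive such vectors differ in at most one coordinate; since $\boldsymbol{a}^{1}$ is isolated in $G(\phi_{1})$, this projection cannot change. The same argument with $\phi_{0}$ freezes $\boldsymbol{z}=\boldsymbol{a}^{0}$ throughout. Consequently $o=1$ and $z=0$ are maintained along the whole path, so the projection to $\boldsymbol{x}$ is a walk in $G(\phi)$ from $\boldsymbol{s}$ to $\boldsymbol{t}$ of the same length.

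This equivalence directly yields the many-one reduction \noun{st-Conn\textsubscript{C}}($\mathcal{S}$) $\le^{p}_{m}$ \noun{st-Conn}($\mathcal{S}$). For the diameter statement, if $\phi$ has diameter $d$ witnessed by two solutions $\boldsymbol{s},\boldsymbol{t}$ in one component of $G(\phi)$, then $\boldsymbol{s}',\boldsymbol{t}'$ lie in one component of $G(\phi')$ (by the forward direction) and the shortest path between them has length at least $d$ (by the backward direction, which shows every path projects to a path of equal length in $G(\phi)$). Hence $\phi'$ has diameter $d'\ge d$, on $O(n)$ variables. The only non-routine ingredient is the isolation property, which we already have from \prettyref{lem:iso}; the main thing to watch out for is to keep the gadgets on fresh variables and to argue carefully that a single flipped coordinate in $\phi'$ projects to either an edge in $G(\phi)$ or an edge in one of the $G(\phi_{i})$, so that isolation really does freeze the gadget blocks.
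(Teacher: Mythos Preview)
Your proof is correct and follows essentially the same approach as the paper: replace the constants by fresh variables that are frozen by the isolated solutions of the $0$- and $1$-isolating gadgets from \prettyref{lem:iso} and its dual, and observe that isolation forces the gadget blocks to stay constant along any path. Your argument is in fact more explicit than the paper's about why paths in $G(\phi')$ project to paths in $G(\phi)$ of the same length.

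One small slip to fix: you have the inequality conditions for $0$- and $1$-isolating swapped. By the paper's definition, a $1$-isolating formula has an isolated solution $\neq(0\cdots0)$ (so some coordinate equals $1$, which is what you need for $o$), and a $0$-isolating formula has an isolated solution $\neq(1\cdots1)$ (so some coordinate equals $0$, which is what you need for $z$). Your subsequent choices of $o$ with $a^{1}_{o}=1$ and $z$ with $a^{0}_{z}=0$ are exactly what the \emph{correct} conditions guarantee, so the argument itself is unaffected---just correct the stated inequalities.
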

\begin{proof}
Since \emph{$\mathcal{S}$ }is not safely tight, it must contain some
relation that is not safely OR-free, so we can construct an $n$-ary
1-isolating relation $R_{1}$ by \prettyref{lem:iso}. Similarly,
we can construct a $m$-ary 0-isolating relation $R_{0}$ from a not
 safely NAND-free relation. Let $\boldsymbol{a}\neq(1\cdots1)$ be
an isolated vector of $R_{0}$, and $\boldsymbol{b}\neq(0\cdots0)$
an isolated vector of $R_{1}$; w.l.o.g. assume $a_{1}=0$ and $b_{1}=1$.

Now let $\phi(x_{1},\ldots,x_{n})$ be any CNF\textsubscript{C}($\mathcal{S}$)-formula
and $\boldsymbol{s}$ and $\boldsymbol{t}$ two solutions of $\phi$.
We construct a CNF($\mathcal{S}$)-formula $\phi'$ by replacing every
occurrence of the constant 0 in $\phi$ with a new variable $y_{1}$,
and every occurrence of the constant 1 with a new variable $z_{1}$,
and appending $\wedge R_{0}(y_{1},y_{2},\ldots,y_{m})\wedge R_{1}(z_{1},z_{2},\ldots,z_{n})$
to $\phi$ (where $y_{2},\ldots,y_{m}$ and $z_{2},\ldots,z_{n}$
are further new variables). Then $\boldsymbol{s}\cdot\boldsymbol{a}\cdot\boldsymbol{b}$
and $\boldsymbol{t}\cdot\boldsymbol{a}\cdot\boldsymbol{b}$ are connected
in $G(\phi')$ iff $\boldsymbol{s}$ and $\boldsymbol{t}$ are connected
in $G(\phi)$. This also shows that the maximal diameter carries over.
\end{proof}
We can now prove \prettyref{thm:ndich}:
\begin{proof}[Proof of \prettyref{thm:ndich}]
1. This result carries over from the case with constants (\prettyref{thm:dich}).

2. This\noun{ }follows from \prettyref{thm:dich} with \prettyref{lem:red}.
\end{proof}

\subsection{\label{sub:eas}Deciding Connectivity via Constraint-Projection Separation}

We define the no-constants version of\emph{ }constraint-projection
separation in the obvious way:
\begin{defn}
\label{def:dcp-1}A set $\mathcal{S}$ of logical relations is \emph{no-constants
constraint-projection separating }(\emph{nc-CPS}), if every CNF($\mathcal{S}$)-formula
$\phi$ whose solution graph $G(\phi)$ is disconnected contains a
constraint $C_{i}$ s.t. $G(\phi_{i})$ is disconnected, where $\phi_{i}$
is the projection of $\phi$ to $\mathrm{Var}(C_{i})$.
\end{defn}
Analogously to \prettyref{lem:alg}, and using \prettyref{thm:sch},
we immediately have the following result:
\begin{lem}
If a finite set $\mathcal{S}$ of relations is nc-CPS, \noun{Conn($\mathcal{S}$)
}is in $\mathrm{P^{NP}}$. If $\mathcal{S}$ also is 0-valid, 1-valid,
or Schaefer, \noun{Conn($\mathcal{S}$)} is in \noun{P.}\end{lem}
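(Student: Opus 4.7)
The approach is to adapt the proof of \prettyref{lem:alg} to the no-constants setting. Given a CNF($\mathcal{S}$)-formula $\phi$ with constraints $C_1,\ldots,C_m$, the algorithm loops over the $C_i$ and computes, for each of them, the projection $\phi_i$ of $[\phi]$ onto $\mathrm{Var}(C_i)$. By the nc-CPS hypothesis, $G(\phi)$ is connected iff none of the $G(\phi_i)$ is disconnected. Because $\mathcal{S}$ is finite, each $\phi_i$ lives on a bounded number $k$ of variables, so it contains at most $2^k$ vectors and its connectivity can be read off in $O(1)$ time; this leaves only the projection computations to worry about.

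Each $\phi_i$ is produced by enumerating all $\leq 2^k$ candidate assignments $\boldsymbol{a}$ to $\mathrm{Var}(C_i)$ and checking satisfiability of $\phi[\boldsymbol{x}^i/\boldsymbol{a}]$. The latter is a CNF\textsubscript{C}($\mathcal{S}$)-formula, so the check is in NP. Using \noun{Sat}\textsubscript{C}($\mathcal{S}$) as an oracle then yields the $\mathrm{P}^\mathrm{NP}$ bound of the first claim. For the Schaefer case of the second claim, \prettyref{thm:sch}(1) puts \noun{Sat}\textsubscript{C}($\mathcal{S}$) in P and the oracle becomes superfluous.

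The genuinely new part is the 0-valid and 1-valid cases of the second claim, where \noun{Sat}\textsubscript{C}($\mathcal{S}$) may still be NP-complete. I plan to bypass it by exploiting structural consequences of 0-validity (1-validity is symmetric): the all-zero vector automatically lies in every $\phi_i$, so it suffices to decide membership of the at most $2^k-1$ remaining candidate vectors in $\phi_i$ and then determine their connectivity to $\boldsymbol{0}$, and \noun{Sat}($\mathcal{S}$) itself is in P by \prettyref{thm:sch}(2). The main obstacle I expect is that no CNF($\mathcal{S}$)-formula over a 0-valid $\mathcal{S}$ can force a variable to take value $1$, so one cannot simply pin coordinates of $\boldsymbol{a}$ by adding constraints; the membership queries must therefore be reduced to no-constants satisfiability instances in a more indirect manner, and verifying that such a reduction exists uniformly across all 0-valid (and all 1-valid) $\mathcal{S}$ will be the technical heart of the argument.
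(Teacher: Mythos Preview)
Your treatment of the $\mathrm{P^{NP}}$ bound and of the Schaefer subcase is correct and matches the paper's argument exactly: run the projection-checking algorithm of \prettyref{lem:alg}, using a \textsc{Sat} oracle in general and dispensing with it when $\mathcal{S}$ is Schaefer.

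For the 0-valid/1-valid subcase you have in fact gone further than the paper. The paper's entire justification is the one-line preamble ``Analogously to \prettyref{lem:alg}, and using \prettyref{thm:sch}, we immediately have the following result''; no separate argument for the 0-valid or 1-valid case is given. The implicit reading is that Schaefer's theorem in its no-constants form (\prettyref{thm:sch}, part 2) suffices. Your observation is sharper: the membership test ``is $\boldsymbol{a}\in\phi_i$?'' is a satisfiability query for $\phi[\boldsymbol{x}^i/\boldsymbol{a}]$, which is a CNF\textsubscript{C}($\mathcal{S}$)-instance, not a CNF($\mathcal{S}$)-instance, and \textsc{Sat}\textsubscript{C}($\mathcal{S}$) can be NP-complete even when $\mathcal{S}$ is 0-valid or 1-valid (e.g.\ the relation in the paper's own \prettyref{cor:sa}-example is 1-valid, safely componentwise bijunctive, and not Schaefer). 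So the one-line appeal to \prettyref{thm:sch} does not literally justify the claim, and you are right to flag this.

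That said, your proposal does not close the gap either. You correctly identify the obstruction---over a 0-valid $\mathcal{S}$ no CNF($\mathcal{S}$)-formula can pin a variable to~$1$, so the projection queries cannot be reduced to \textsc{Sat}($\mathcal{S}$) by simply adding constraints---but you stop at naming it as ``the technical heart of the argument'' without supplying the reduction. As written, then, neither your proposal nor the paper's one-liner actually establishes the 0-valid/1-valid part of the second claim; you have simply been more explicit about where the difficulty lies.
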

\begin{rem}
The first part of the preceding lemma may be irrelevant, since connectivity
is in coNP$\subseteq\mathrm{P^{NP}}$ for all safely tight sets of
relations by \prettyref{lem:c47}, and all not  safely tight nc-CPS
sets we know of are 0-valid and 1-valid (\prettyref{lem:qu} below).
\end{rem}
Since CPS sets of relations are also nc-CPS, using \prettyref{lem:prjs},
we can extend the tractable class in the no-constants setting:
\begin{defn}
A set $\mathcal{S}$ of logical relations is \emph{nc-CPSS}, if $\mathcal{S}$
is safely componentwise bijunctive, safely componentwise IHSB$-$,
safely componentwise IHSB$+$, or affine, and if $\mathcal{S}$ also
is 0-valid, 1-valid, or Schaefer.\end{defn}
\begin{cor}
\label{cor:sa}If a finite set $\mathcal{S}$ of relations is nc-CPSS,
\noun{Conn($\mathcal{S}$) }is in \noun{P.}\end{cor}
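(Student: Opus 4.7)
The plan is to derive the corollary as a direct consequence of the preceding lemma, which says that \noun{Conn}($\mathcal{S}$) is in \noun{P} whenever $\mathcal{S}$ is nc-CPS together with one of the conditions 0-valid, 1-valid, or Schaefer. Since the definition of nc-CPSS already builds in the second hypothesis, the only real content is showing that every nc-CPSS set is nc-CPS.

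For this, I would first observe that if $\mathcal{S}$ is safely componentwise bijunctive, safely componentwise IHSB$-$, safely componentwise IHSB$+$, or affine, then by \prettyref{lem:prjs} the set $\mathcal{S}$ is constraint-projection separating in the sense of \prettyref{def:dcp}, i.e.\ CPS in the with-constants world. The second step is to point out that CPS immediately implies nc-CPS, because every \noun{CNF}($\mathcal{S}$)-formula is in particular a \noun{CNF\textsubscript{C}}($\mathcal{S}$)-formula, and the defining conditions in \prettyref{def:dcp} and \prettyref{def:dcp-1} are syntactically the same once one restricts the class of admissible formulas. Nothing subtle happens here: a disconnected $G(\phi)$ for a no-constants $\phi$ is a special case of a disconnected $G(\phi)$ in the with-constants setting, and the same constraint $C_i$ witnesses both.

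With both halves in place, the argument concludes by citing the lemma that precedes the corollary: since $\mathcal{S}$ is nc-CPS (by the above) and additionally 0-valid, 1-valid, or Schaefer (by the definition of nc-CPSS), the algorithm described there decides connectivity in polynomial time, giving \noun{Conn}($\mathcal{S}$) $\in$ \noun{P}. There is no real obstacle, as both ingredients are already established; the only point that requires any care is making the implication ``CPS $\Rightarrow$ nc-CPS'' explicit, which is essentially a definitional unwinding.
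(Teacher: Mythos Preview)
Your proposal is correct and follows essentially the same approach as the paper: use \prettyref{lem:prjs} to get CPS from the first half of the nc-CPSS definition, observe that CPS trivially implies nc-CPS (the paper states this explicitly just before defining nc-CPSS), and then invoke the preceding lemma together with the 0-valid/1-valid/Schaefer hypothesis built into nc-CPSS.
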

\begin{example}
$R=(x\vee y\vee z)\wedge(\overline{x}\vee\overline{y}\vee z)\wedge(\overline{x}\vee y\vee\overline{z})=\left((y\vee z)\wedge(\overline{x}\vee z)\wedge(\overline{x}\vee y)\right)\vee(x\wedge\overline{y}\wedge\overline{z})$
is not CPSS, but since $R$ is 1-valid and e.g. safely componentwise
bijunctive, \noun{Conn($\{R\}$) }is in\noun{ P }by the preceding
corollary.
\end{example}
By \prettyref{lem:nst} we see that all nc-CPSS sets of relations
are also safely tight, but we can prove the following additional classes
to be nc-CPS, containing sets that are not safely tight:
\begin{defn}
A set $\mathcal{S}$ of relations is \emph{quasi componentwise bijunctive
(quasi componentwise IHSB$-$, quasi componentwise IHSB$+$, quasi
affine), }if the following holds for every relation $R$ in $\mathcal{S}$:
\begin{itemize}
\item $R$ is both 0-valid and 1-valid, and
\item $R$ is itself safely componentwise bijunctive (safely componentwise
IHSB$-$, safely componentwise IHSB$+$, affine), or the following
two conditions hold for $R$:

\begin{enumerate}
\item The all-0-vector is disconnected from the all-1-vector in $G(R)$.
\item the set\noun{ $\mathcal{S}'$} of all relations producible from $R$
by identification of variables (excluding $R$) is safely componentwise
bijunctive (safely componentwise IHSB$-$, safely componentwise IHSB$+$,
affine).
\end{enumerate}
\end{itemize}
$\mathcal{S}$ is \emph{quasi disconnecting}, if it is quasi componentwise
bijunctive, quasi componentwise IHSB$-$, quasi componentwise IHSB$+$,
or quasi affine.\end{defn}
\begin{lem}
\label{lem:qu}If a finite set $\mathcal{S}$ of relations is quasi
disconnecting, it is nc-CPS.\end{lem}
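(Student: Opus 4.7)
I would reduce the claim to \prettyref{lem:prjs} by classifying each constraint of a given \noun{CNF($\mathcal{S}$)}-formula according to its \emph{effective relation}, that is, the relation it induces on the distinct variables it actually applies to. I treat the quasi componentwise bijunctive case in detail; the other three are entirely analogous, with the corresponding closure property playing the role of bijunctivity.

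Given a constraint $C_i = R_i(\xi_1^i, \ldots, \xi_{k_i}^i)$, either the tuple $(\xi_1^i,\ldots,\xi_{k_i}^i)$ has a repetition, in which case the effective relation is a non-trivial identification of $R_i$ and is therefore safely componentwise bijunctive by condition 2 of the definition of quasi componentwise bijunctive; or else all $\xi_j^i$ are distinct and the effective relation is $R_i$ itself. Call the constraint \emph{unsafe} precisely in the second situation when $R_i$ is not itself safely componentwise bijunctive. The argument then splits into two cases depending on whether $\phi$ contains an unsafe constraint.

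If $\phi$ contains no unsafe constraint, then replacing each constraint with the application of its effective relation to the distinct variables it uses produces an equivalent formula $\tilde\phi$ over a set of relations that is genuinely safely componentwise bijunctive. This rewriting preserves $\mathrm{Var}(C_i)$ and the solution set, so the projections $\phi_i$ and $\tilde\phi_i$ coincide, and applying \prettyref{lem:prjs} to $\tilde\phi$ immediately yields a constraint with disconnected projection. If instead $\phi$ contains an unsafe constraint $C_i$, I exploit that every $R \in \mathcal{S}$ is simultaneously 0-valid and 1-valid, so $\boldsymbol{0}$ and $\boldsymbol{1}$ belong to $[\phi]$ and their restrictions to $\mathrm{Var}(C_i)$ are the length-$k_i$ all-0 and all-1 vectors. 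By condition 1, those two vectors lie in distinct components of $G(R_i)$, and since $G(\phi_i)$ is an induced subgraph of $G(R_i)$, they also lie in distinct components of $G(\phi_i)$, which is therefore disconnected.

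I do not anticipate a substantial obstacle; the only delicate observation is that in the unsafe case, condition 1 is precisely what the proof needs, and it also forces $G(\phi)$ itself to be disconnected (any path in $G(\phi)$ from $\boldsymbol{0}$ to $\boldsymbol{1}$ would project to a walk in $G(R_i)$ linking the two disconnected components, a contradiction), so the argument delivers the required disconnected $\phi_i$ whenever $G(\phi)$ is disconnected, completing the proof that $\mathcal{S}$ is nc-CPS.
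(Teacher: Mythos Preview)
Your proposal is correct and follows essentially the same strategy as the paper's proof: split on whether $\phi$ contains a constraint using a non-safely-componentwise-bijunctive relation with all distinct variables, handle that case directly via 0/1-validity and condition 1, and otherwise pass to an equivalent formula over safely componentwise bijunctive relations and invoke \prettyref{lem:prjs}. The only cosmetic difference is that the paper phrases the ``unsafe'' criterion as ``$R$ has all-0 disconnected from all-1 and all variables distinct'' rather than ``$R$ not safely componentwise bijunctive and all variables distinct''; since the latter implies the former via condition 1, your case split is slightly tighter but the argument is the same.
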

\begin{proof}
Let $\phi$ be any CNF\noun{($\mathcal{S}$)}-formula. First suppose
that \noun{$\mathcal{S}$} contains a relation $R$ where the all-0-vector
is disconnected from the all-1-vector in $G(R)$, and that $\phi$
contains a constraint $R(x_{1},\ldots,x_{n})$ with all variables
distinct. Then the projection $\phi_{P}$ of $\phi$ to $x_{1},\ldots,x_{n}$
contains $0\cdots0$ and $1\cdots1$ as solutions since every constraint
in $\phi$ is 0-valid and 1-valid. But $0\cdots0$ and $1\cdots1$
are disconnected in $G(R)$, thus $0\cdots0$ and $1\cdots1$ must
also be disconnected in $G(\phi_{P})$ (note that the solutions to
$\phi_{P}$ are a subset of the solutions to $R(x_{1},\ldots,x_{n})$).

Otherwise, if $\phi$ contains no such constraint $R(x_{1},\ldots,x_{n})$,
it is equivalent to a CNF\noun{($\mathcal{S}'$)-}formula $\phi'$,
where each constraint $C_{i}$ of $\phi$ corresponds to an equivalent
constraint $C_{i}'$ of $\phi'$ with $\mathrm{Var}(C_{i}')=\mathrm{Var}(C_{i})$.
Thus since \noun{$\mathcal{S}'$} is CPS by \prettyref{lem:prjs},
if $\phi$ is disconnected, there must be a constraint $C_{i}$ of
$\phi$ s.t. the projection of $\phi$ to $\mathrm{Var}(C_{i})$ is
disconnected.\end{proof}
\begin{cor}
\label{cor:qua}If a finite set $\mathcal{S}$ of relations is quasi
disconnecting, there is a polynomial-time algorithm for \noun{Conn($\mathcal{S}$).}\end{cor}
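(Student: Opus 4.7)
The plan is to combine the preceding \prettyref{lem:qu} with the first part of the earlier lemma asserting that nc-CPS sets admit a polynomial-time connectivity algorithm whenever satisfiability is in P. First I would observe that \prettyref{lem:qu} gives us that every quasi disconnecting $\mathcal{S}$ is nc-CPS, so for any CNF($\mathcal{S}$)-formula $\phi$, the solution graph $G(\phi)$ is disconnected if and only if there is some constraint $C_{i}$ of $\phi$ such that the projection $\phi_{i}$ of $\phi$ to $\mathrm{Var}(C_{i})$ is disconnected. Since each constraint involves only a bounded number of variables (the arity is fixed by $\mathcal{S}$), checking connectivity of $G(\phi_{i})$ once $\phi_{i}$ has been computed takes constant time.

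Next I would invoke Schaefer's no-constants dichotomy (\prettyref{thm:sch}): by definition, every relation in a quasi disconnecting $\mathcal{S}$ is both 0-valid and 1-valid, hence \noun{Sat}($\mathcal{S}$) is in P. This is exactly what is needed for the projection step: for each constraint $C_{i}$ and each assignment $\boldsymbol{a}$ of $\mathrm{Var}(C_{i})$, we can decide in polynomial time whether $\phi[\mathrm{Var}(C_{i})/\boldsymbol{a}]$ is satisfiable, and thereby build the full list of vectors in $\phi_{i}$ in polynomial time.

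Putting these pieces together gives the algorithm: iterate over the (polynomially many) constraints $C_{i}$, compute $\phi_{i}$ by the above projection procedure using the P-algorithm for \noun{Sat}($\mathcal{S}$), test $G(\phi_{i})$ for connectedness in constant time, and answer ``disconnected'' iff some $\phi_{i}$ fails the test. Correctness of this algorithm is precisely the nc-CPS property guaranteed by \prettyref{lem:qu}, and the total running time is polynomial.

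There is no real obstacle here; the work has already been done in \prettyref{lem:qu} (which is the structural heart of the argument) and in the nc-CPS algorithm lemma. The only thing to double-check is that quasi disconnecting sets do satisfy the hypothesis ``0-valid, 1-valid, or Schaefer'' required to push the $\mathrm{P^{NP}}$ bound down to P, and this is immediate since quasi disconnecting sets are 0-valid and 1-valid by definition.
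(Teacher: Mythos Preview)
Your proposal is correct and takes essentially the same approach as the paper: the corollary is stated there without proof, immediately after \prettyref{lem:qu}, and is meant to follow by combining \prettyref{lem:qu} (quasi disconnecting $\Rightarrow$ nc-CPS) with the nc-CPS algorithm lemma, using the observation that quasi disconnecting sets are by definition $0$-valid and $1$-valid so that the $\mathrm{P^{NP}}$ bound drops to~P. You have simply spelled out this implicit argument in detail.
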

\begin{lem}
$R=\{0000,0001,0010,1001,1010,\enskip1100,\enskip0111,1111\}$ is
not  safely tight but quasi disconnecting.\end{lem}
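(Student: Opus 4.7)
The proof splits naturally into two independent verifications, each reducing to short finite case checks on the eight-vector relation $R = \{0000, 0001, 0010, 1001, 1010, 1100, 0111, 1111\}$.

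First I would show that $R$ is not safely tight by refuting each of the three defining options. For OR-freeness, substituting the constants $x_{2} = x_{3} = 0$ into $R$ leaves exactly the vectors $\{0000, 0001, 1001\}$, whose projection onto $(x_{1}, x_{4})$ is $\{00, 01, 11\} = \mathrm{OR}$. For NAND-freeness, substituting $x_{1} = x_{2} = 0$ leaves $\{0000, 0001, 0010\}$, whose projection onto $(x_{3}, x_{4})$ is $\{00, 01, 10\} = \mathrm{NAND}$. For componentwise bijunctivity, I would first describe $G(R)$: the five vectors $\{0000, 0001, 0010, 1001, 1010\}$ form a path in one component, while $\{1100\}$ and $\{0111, 1111\}$ are two further components. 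In the large component one checks directly that $\mathrm{maj}(0000, 1001, 1010) = 1000 \notin R$, so that component (and hence $R$) fails to be closed under majority.

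Second, to establish quasi componentwise bijunctivity, I would verify the three conditions of the definition in turn. Since $0000, 1111 \in R$, the relation is both $0$-valid and $1$-valid. The component analysis above already shows that $0000$ and $1111$ lie in distinct components of $G(R)$. It remains to check that every relation $R'$ obtained from $R$ by a nontrivial identification of variables is safely componentwise bijunctive. Exploiting the evident symmetry $x_{3} \leftrightarrow x_{4}$ of $R$, only four inequivalent pair-identifications need to be written down, giving the $3$-ary relations
\begin{align*}
R(x, x, y, z) &= \{000, 001, 010, 100, 111\}, \\
R(x, y, x, z) &= \{000, 001, 100, 111\}, \\
R(x, y, y, z) &= \{000, 001, 011, 101, 111\}, \\
R(x, y, z, z) &= \{000, 011, 110, 111\},
\end{align*}
each of which splits into components that are either singletons or at most three vectors, making closure under majority a one-line check. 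The $2$-ary relations arising from triple identifications and double-pair identifications are all subsets of $\{0,1\}^2$ with at most three elements and are therefore trivially bijunctive. Since being (componentwise) bijunctive is preserved under further identification of variables, all of these relations are in fact safely componentwise bijunctive.

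The only obstacle is the bookkeeping in the second part: one needs to be confident the enumeration of identifications is exhaustive. The $x_{3} \leftrightarrow x_{4}$ symmetry and the smallness of the resulting relations keep this manageable by hand, and it can be independently verified using the \texttt{Check} and \texttt{Clones} utilities of \noun{SatConn}.
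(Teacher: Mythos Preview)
Your overall structure matches the paper's proof, and the second half (quasi componentwise bijunctive) is correct; your use of the $x_3 \leftrightarrow x_4$ symmetry to cut six identifications to four is a nice touch that the paper does not exploit.

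However, your argument that $R$ is not safely OR-free contains a genuine error. You write that substituting $x_2=x_3=0$ projects to $\{00,01,11\} = \mathrm{OR}$, but $\mathrm{OR} = \{01,10,11\}$; the set $\{00,01,11\}$ is the implication $x_1 \rightarrow x_4$, not $\mathrm{OR}$. Worse, the approach cannot be repaired by choosing different constants: a direct check of all $\binom{4}{2}\cdot 4 = 24$ substitutions shows that $R$ itself \emph{is} OR-free. What fails is only the ``safely'' part: you must first identify variables. For instance, identifying $x_3$ with $x_4$ gives $R(x,y,z,z) = \{000,011,110,111\}$, and then setting $y=1$ yields $\{01,10,11\} = \mathrm{OR}$ on $(x,z)$. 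So the conclusion ($R$ is not safely OR-free) is true, but your proof of it does not work as written.

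The remaining pieces are fine: your NAND-free witness ($x_1=x_2=0$ giving $\{00,01,10\}$) is correct, and $\mathrm{maj}(0000,1001,1010)=1000\notin R$ correctly shows the large component is not bijunctive, hence $R$ is not (safely) componentwise bijunctive. One minor remark on phrasing: the sentence ``being componentwise bijunctive is preserved under further identification of variables'' is not true in general (this is exactly why the ``safely'' qualifier exists), but your actual argument---that all further identifications are $\leq 2$-ary and hence automatically bijunctive---is valid.
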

\begin{proof}
It is easy to check that $R$ is not safely tight. But $R$ is quasi
componentwise bijunctive: Obviously, $R$ is 0-valid, 1-valid and
the all-0 vector is disconnected from the all-1 vector. It remains
to show that the set\noun{ $\mathcal{S}'$} of all relations producible
from $R$ by identification of variables is componentwise bijunctive.
By identifying each pair of variables in turn we get all 3-ary relations
in \noun{$\mathcal{S}'$:}
\begin{itemize}
\item identifying variable 1 with 2 gives \{$000,001,010,100,\enskip111$\}.
\item identifying variable 1 with 3 gives \{$000,001,100,\enskip111$\}.
\item identifying variable 1 with 4 gives \{$000,001,100,\enskip111$\}.
\item identifying variable 2 with 3 gives \{$000,001,011,101,111$\}.
\item identifying variable 2 with 4 gives \{$000,001,011,101,111$\}.
\item identifying variable 3 with 4 gives \{$000,\enskip011,110,111$\}.
\end{itemize}
The connected components are signified; it is easy to check they are
bijunctive by verifying that they are closed under maj (see \prettyref{lem:clos}).
All 2-ary and 1-ary relations are automatically bijunctive.\end{proof}
\begin{rem}
The \texttt{main-}function in the class \texttt{Sift} of \noun{SatConn}
enumerates all not safely tight relations that are quasi disconnecting.
There are no 3-ary such relations, and up to permutation of variables
and duality, the above relation is the only 4-ary one.
\end{rem}
We now see that the complexity of connectivity and $st$-connectivity
is ``inverted'' in some cases:
\begin{cor}
There are sets $\mathcal{S}$ of relations s.t. \noun{Conn($\mathcal{S}$)
}is in \emph{P} while \emph{\noun{st-}}\noun{Conn($\mathcal{S}$)
}is \emph{PSPACE}-complete.
\end{cor}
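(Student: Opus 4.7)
The plan is to exhibit a concrete witnessing set $\mathcal{S}$ and then observe that the two preceding theorems deliver the required complexity statements almost for free. Specifically, I would take $\mathcal{S}=\{R\}$ for the 4-ary relation $R=\{0000,0001,0010,1001,1010,1100,0111,1111\}$ that was just shown to be simultaneously not safely tight and quasi disconnecting.

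For the upper bound on \noun{Conn}($\mathcal{S}$), I would simply invoke the preceding lemma to conclude that $\mathcal{S}$ is quasi disconnecting, and then apply \prettyref{cor:qua} to get a polynomial-time algorithm. For the PSPACE-hardness of \emph{\noun{st-}}\noun{Conn}($\mathcal{S}$), I would invoke the second part of \prettyref{thm:ndich}: since $\mathcal{S}$ is not safely tight, \emph{\noun{st-}}\noun{Conn}($\mathcal{S}$) is PSPACE-complete, which both provides hardness and membership in PSPACE. The combination of these two facts is exactly the desired inversion of complexities.

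There is no real obstacle here: the only conceptual work was in constructing the witnessing relation, and that has already been done in the preceding lemma. The proof itself is thus essentially a one-line application of two earlier results, and the main thing to be careful about is to state clearly which finite set $\mathcal{S}$ is being used and which direction of each cited result (upper bound versus lower bound) is being invoked. I would end with a brief remark that by duality one obtains an analogous witness, so that the phenomenon is not an isolated accident of the specific $R$ chosen.
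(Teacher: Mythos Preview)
Your proposal is correct and matches the paper's approach exactly: the corollary is stated immediately after the lemma exhibiting the specific relation $R=\{0000,0001,0010,1001,1010,1100,0111,1111\}$ that is not safely tight yet quasi disconnecting, and the paper treats it as an immediate consequence of that lemma together with \prettyref{cor:qua} and \prettyref{thm:ndich}, without writing out any further argument. Your closing remark about duality is a harmless addition, but the core reasoning is identical.
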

As in the case with constants, we have no algorithm to determine in
general if a set of relations is nc-CPS, so one may discover yet more
nc-CPS sets and thereby cases where connectivity is in P or in $\mathrm{P^{NP}}$.

\subsection{Deciding Connectivity via Self-Implication}

While for formulas with constants, all sets of relations with a tractable
connectivity problem are constraint-projection separating (assuming
P$\neq$coNP), without constants this is not true anymore: In \prettyref{lem:si}
below we show that for Horn sets \noun{$\mathcal{S}$} that are 1-valid,
there is a polynomial-time algorithm for \noun{Conn($\mathcal{S}$)};
now for example, $M=\left(x\vee\overline{y}\vee\overline{z}\right)\wedge\left(\overline{x}\vee z\right)$
is Horn and 1-valid but not CPSS, so by \prettyref{lem:sc} it is
not CPS, and from the proof of that lemma we find that $M$ is also
not nc-CPS.
\begin{defn}
A set \emph{$\mathcal{S}$} of relations\emph{ }is \emph{implicative},
if it is Horn and 1-valid or dual Horn and 0-valid.
\end{defn}
Horn relations that are 1-valid can contain no restraints, and without
constants, restraints also cannot be produced from such relations.
This makes it possible to decide connectivity in polynomial time:
\begin{lem}
\label{lem:si}If a finite set \emph{$\mathcal{S}$} of relations
is implicative, there is a polynomial-time algorithm for \noun{Conn($\mathcal{S}$).}\end{lem}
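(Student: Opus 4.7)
I plan to handle the Horn and 1-valid case; the dual Horn and 0-valid case follows by symmetry, swapping the roles of $0$ and $1$. The key starting observation is that if every relation in $\mathcal{S}$ is 1-valid, then any \noun{CNF($\mathcal{S}$)}-formula written in Horn form contains no restraint clauses, because a clause consisting solely of negative literals would be violated by the all-$1$ assignment.

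First I would preprocess by unit propagation: while a positive unit clause $x$ occurs, substitute $x=1$ and simplify. Simplification only deletes the literal $\overline{x}$ from other clauses, so no restraint clause is introduced (each affected clause keeps its original positive literal). The process terminates in polynomial time with an equivalent (up to a trivial embedding preserving connectivity) Horn formula $\phi'$ that has neither positive unit clauses nor restraint clauses. By \prettyref{cor:horn conn}, $G(\phi')$ is disconnected iff $\phi'$ has a non-empty maximal self-implicating set.

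The heart of the proof is a polynomial-time test for the existence of such a set. I would introduce the \emph{support} operator
\[
\mathrm{Supp}(V)=\{\,y\mid\phi'\text{ has a clause }y\vee\overline{z}_{1}\vee\cdots\vee\overline{z}_{k}\text{ with }z_{1},\ldots,z_{k}\in V\setminus\{y\}\,\}
\]
and prove the characterization: \emph{$\phi'$ has a non-empty maximal self-implicating set iff there is a non-empty $V$ with $V\subseteq\mathrm{Supp}(V)$.} The backward direction is routine: given such a $V$, set $U=\mathrm{Imp}(V)$; this $U$ is non-empty and closed, each $y\in V$ is witnessed directly by a clause certifying $y\in\mathrm{Supp}(V)$, and each $y\in U\setminus V$ is witnessed by the clause that added $y$ during the computation of $\mathrm{Imp}(V)$. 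The main obstacle is the forward direction: given a maximal self-implicating $U$ and $y\in U$, the assumption $y\in\mathrm{Imp}(U\setminus\{y\})$ yields a forward-chaining derivation starting from $U\setminus\{y\}$, and its first step produces some new variable $w\in\mathrm{Imp}(U\setminus\{y\})\subseteq\mathrm{Imp}(U)=U$. Since $w\notin U\setminus\{y\}$, we must have $w=y$, so $y$ is in fact witnessed by a single clause with body contained in $U\setminus\{y\}$; hence $U\subseteq\mathrm{Supp}(U)$.

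Finally, the largest $V$ with $V\subseteq\mathrm{Supp}(V)$ can be computed by starting with $V$ equal to all variables of $\phi'$ and iteratively removing any $x\in V$ with $x\notin\mathrm{Supp}(V)$; monotonicity of $\mathrm{Supp}$ ensures that removed variables can never be needed again, so the loop terminates after at most $|V|$ rounds. The algorithm answers ``connected'' iff this final fixed set is empty. Everything is plainly polynomial; the only subtle point in the write-up is the first-step argument giving the forward direction of the characterization.
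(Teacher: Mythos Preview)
The proposal is correct and follows essentially the same approach as the paper: unit-propagate to eliminate positive unit clauses, observe that 1-validity rules out restraint clauses, invoke \prettyref{cor:horn conn}, and then compute the largest self-implicating set by iteratively deleting unsupported variables. Your write-up is more explicit than the paper's---you introduce the $\mathrm{Supp}$ operator and spell out the first-step argument showing that a maximal self-implicating $U$ satisfies $U\subseteq\mathrm{Supp}(U)$---whereas the paper simply asserts that the correctness of the deletion loop is ``easy to check by induction.''
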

\begin{proof}
We show the proof for \emph{$\mathcal{S}$ }being Horn and 1-valid,
the dual Horn and 0-valid case is symmetric. We can decide for any
CNF\noun{($\mathcal{S}$)}-formula $\phi$ whether $G(\phi)$ is connected
as follows:.
\begin{itemize}
\item First assign all variables in positive unit-clauses; this produces
a connectivity-equivalent formula $\phi'$. Since \emph{$\mathcal{S}$
}is 1-valid, $\phi'$ contains no restraints, so $G(\phi')$ is disconnected
iff $\phi'$ has a non-empty self-implicating set by \prettyref{cor:horn conn}.
The following polynomial-time algorithm finds the largest self-implicating
set of $\phi'$ (which is the union of all self-implicating sets):

\begin{itemize}[label= ({*})]
\item Let $U$ be the set of all variables of $\phi'$. Repeat the following
as long as variables are removed:

\begin{itemize}[label= ]
\item For each $x\in U$, check if there is a clauses with $x$ as the
positive literal and all negated variables from $U$; if not, remove
$x$ from $U$.
\end{itemize}
\end{itemize}
\item Now $G(\phi)$ is connected iff $U$ is empty.
\end{itemize}
The correctness of algorithm ({*}) is easy to check by induction:
At first, $U$ includes every self-implicating set, and if $U$ includes
every self-implicating set, no variable from a self-implicating set
is removed from $U$. Further, as long as $U$ is not self-implicating,
a variable \emph{is} removed from $U$.
\end{proof}
It is tempting to extend this algorithm for formulas containing restraints,
by checking for every maximal self-implicating set if it contains
no restraint set. However, this seems to require checking an exponential
number of possibilities to find all maximal self-implicating sets;
this presumption is strongly supported by \prettyref{lem:cco} below,
which shows that connectivity is coNP-complete for such formulas.

\subsection{\noun{coNP}-Completeness for Connectivity within Schaefer}

In this subsection we prove that \noun{Conn($\mathcal{S}$)} is \noun{coNP}-complete
for all remaining Schaefer sets of relations\footnote{Note that Schaefer sets of relations that are quasi disconnecting
are also nc-CPSS or implicative.}:
\begin{lem}
\label{lem:cco}If \emph{$\mathcal{S}$} is a finite set of relations
that is Schaefer but not nc-CPSS and not implicative,\emph{ }\noun{Conn($\mathcal{S}$)
}is \noun{$\mathrm{coNP}$}-complete.\end{lem}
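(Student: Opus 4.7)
The plan is to prove coNP-hardness by reducing from co-\noun{Conn}\textsubscript{C}($\{M\}$), which is coNP-hard by combining \prettyref{lem:m har} and \prettyref{lem:exp m}, to co-\noun{Conn}($\mathcal{S}$); the coNP upper bound already follows from \prettyref{lem:c47} since Schaefer implies safely tight. WLOG I assume all relations in $\mathcal{S}$ are Horn (the dual Horn case is symmetric). Then ``not nc-CPSS'' yields some $R\in\mathcal{S}$ that is not safely componentwise IHSB$-$, and ``not implicative'' yields some $R_0\in\mathcal{S}$ with $\boldsymbol{1}\notin R_0$.

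Given a CNF\textsubscript{C}($\{M\}$)-formula $\phi$, I would build a CNF($\mathcal{S}$)-formula $\phi'$ by replacing each $M$-constraint and each constant in $\phi$ by an appropriate CNF($\mathcal{S}$)-gadget. A constant-$0$ gadget comes from $R_0(u,u,\ldots,u)$: since $R_0$ is Horn and not 1-valid, identifying all its arguments yields a unary relation not containing~$1$, which equals~$\{0\}$ whenever $R_0$ is also $0$-valid (the typical case here). When $\mathcal{S}$ is not $0$-valid, a dual construction exploiting the non-zero minimum of some non-0-valid relation in~$\mathcal{S}$ yields a constant-$1$ gadget. A CNF($\mathcal{S}$) structural expression of $M$ itself is obtained by taking the CNF\textsubscript{C}($\{R\}$)-expression provided by~\prettyref{lem:exp m} (which is built purely from identifications and constant-substitutions) and substituting the constant gadgets for the constants.

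The hard part will be when $\mathcal{S}$ is $0$-valid, for then every CNF($\mathcal{S}$)-formula is $0$-valid and constant~$1$ is simply not simulable at all. I would address this by showing that \prettyref{lem:exp m}'s construction can be carried out without any substitution of~$1$ when $\mathcal{S}$ is $0$-valid. Substitutions of~$1$ occur only in steps~$2$ and~$5$ of that proof; step~$2$ vanishes ($U=\emptyset$) if the chosen non-IHSB$-$ component $[\phi_{1}^{*}]$ has minimum~$\boldsymbol{0}$, and step~$5$ vanishes if the multi-implication $c^{*}=x\vee\overline{y}\vee\overline{z}_{1}\vee\cdots\vee\overline{z}_{k}$ is chosen so that $\mathrm{Imp}(y)=\{y\}$. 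Both arrangements can be forced by an appropriate choice of identifications in step~$1$, exploiting that $0$-validity of~$\mathcal{S}$ anchors every component minimum coordinate-wise at or above~$\boldsymbol{0}$ while leaving enough freedom to collapse problematic components into the all-$\boldsymbol{0}$ component without destroying the essential multi-implication witnessing non-IHSB$-$-ness.

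The key final observation is that the CNF\textsubscript{C}($\{M\}$)-formula~$\phi$ constructed in the proof of~\prettyref{lem:m har} uses only constant~$0$, since $\overline{x}\vee\overline{y}=M(0,x,y)$ and $\overline{x}\vee y=M(x,0,y)$, so the outer reduction never requires a constant-$1$ gadget, even in the $0$-valid case. Applying the structural-expressibility reasoning underlying~\prettyref{cor:c33} to the composed construction, one obtains that $G(\phi)$ is connected iff $G(\phi')$ is, yielding the desired polynomial-time reduction from \noun{Sat}($\{P,N\}$) via co-\noun{Conn}\textsubscript{C}($\{M\}$) to co-\noun{Conn}($\mathcal{S}$) and completing the proof.
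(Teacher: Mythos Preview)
Your overall strategy---simulate constants and then invoke \prettyref{lem:exp m} and \prettyref{lem:m har}---is natural, but the argument breaks down precisely at the point you flag as ``the hard part.''

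First, the constant gadgets are not as clean as you describe. From a Horn relation $R_0$ that is not $1$-valid you do \emph{not} in general get $\{0\}$ by identifying all variables; \prettyref{lem:obt} only guarantees that either $\{0\}$ or the binary relation $\{01\}$ is obtainable. In the latter case you are fine (both constants are available), but in the former you have only constant~$0$. Your ``dual construction'' for constant~$1$ from a non-$0$-valid Horn relation is not spelled out and is not obviously possible: Horn relations are closed under $\wedge$, not $\vee$, so there is no symmetric analogue of \prettyref{lem:obt} yielding $\{1\}$.

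Second, and more seriously, your claim that steps~2 and~5 of \prettyref{lem:exp m} can be made to vanish by ``an appropriate choice of identifications in step~1'' is not substantiated. Nothing prevents the non-IHSB$-$ component from having a nonzero minimum, and nothing prevents every candidate variable $y$ in the multi-implication from implying further variables; the sentence about $0$-validity ``anchoring every component minimum at or above~$\boldsymbol{0}$'' is vacuous. In fact the paper explicitly states (in the remark at the end of the proof of \prettyref{lem:Mex}) that step~5 \emph{cannot} be simulated without a positive unit clause, so $M$ itself is not produced.

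The paper's route is accordingly different: rather than expressing $M$, it constructs (\prettyref{lem:Mex}) a more complicated CNF($\mathcal{S}$)-formula $\mu$ that retains only the essential features---the clause $x\vee\overline{y}\vee\overline{z}$ with $y$ not implied and $x$ implied only via that clause, plus a unit $\overline{v_0}$ and no self-implicating sets---by replacing the substitutions in \prettyref{lem:exp m} with identifications (using $v_0$ for~$0$) and iterating the construction to kill self-implicating sets that arise. It then proves a separate hardness lemma (\prettyref{lem:m har-1}) showing that \noun{Conn}($\{\mu\}$) is coNP-hard for any such $\mu$, by rebuilding the gadget graph of \prettyref{lem:m har} out of copies of $\mu$ rather than $M$. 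This extra machinery is exactly what your proposal tries to shortcut, and it is not clear that it can be avoided.
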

\begin{proof}
By \prettyref{lem:c47}, \noun{Conn}($\mathcal{S}$) is in coNP. Since
\emph{$\mathcal{S}$} is Schaefer but not nc-CPSS, it must be Horn
and contain at least one relation that is not  safely componentwise
IHSB$-$, or dual Horn and contain at least one relation that is not
safely componentwise IHSB$+$; since further \emph{$\mathcal{S}$}
is not implicative, it must in the first case also contain at least
one relation that is not 1-valid, and in the second case at least
one that is not 0-valid. Thus in the first case, the statement follows
from the Lemmas \ref{lem:Mex} and \ref{lem:m har-1} below, the second
case is symmetric.
\end{proof}
We need the following lemma:
\begin{lem}
\label{lem:obt}If a non-empty logical relation $R$ is Horn but not
1-valid, at least one of the relations $\{0\}$ or $\{01\}$ can be
obtained from $R$ by identification and permutation of variables.\end{lem}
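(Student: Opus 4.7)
The plan is to exploit that a Horn relation is closed under coordinatewise $\wedge$ (see \prettyref{lem:clos}), so that $R$ has a unique minimum vector $\boldsymbol{m}\in R$ obtained as the $\wedge$ of all its elements, and every $\boldsymbol{a}\in R$ satisfies $\boldsymbol{a}\geq\boldsymbol{m}$ in the coordinatewise order of \prettyref{def:lm}. From the hypotheses we immediately get $\boldsymbol{m}\neq 1^n$: otherwise every $\boldsymbol{a}\in R$ would lie above $1^n$, forcing $R=\{1^n\}$ and making $R$ $1$-valid. So $\boldsymbol{m}$ has at least one $0$-coordinate.

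First I would handle the case $\boldsymbol{m}=0^n$. Identifying all $n$ variables of $R$ into a single variable yields the unary relation $R'=\{a\in\{0,1\}:(a,\dots,a)\in R\}$. Since $0^n=\boldsymbol{m}\in R$, we have $0\in R'$, while $1\notin R'$ because $R$ is not $1$-valid. Hence $R'=\{0\}$.

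Otherwise, $\boldsymbol{m}$ has both $1$-coordinates and $0$-coordinates. Let $I=\{i:m_i=1\}$ and $J=\{i:m_i=0\}$, both non-empty. Identifying all variables indexed by $I$ into one variable $y$ and all variables indexed by $J$ into one variable $x$ yields the binary relation
\[
R'=\{(y,x)\in\{0,1\}^2:\boldsymbol{a}\in R,\text{ where }a_i=y\text{ for }i\in I\text{ and }a_i=x\text{ for }i\in J\}.
\]
I would then check the four candidates against $\boldsymbol{a}\geq\boldsymbol{m}$: the vectors $(0,0)$, $(0,1)$ give assignments with a $0$ in some $I$-coordinate, so they fail $\boldsymbol{a}\geq\boldsymbol{m}$ and lie outside $R$; $(1,1)$ corresponds to $1^n\notin R$; and $(1,0)$ corresponds exactly to $\boldsymbol{m}\in R$. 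Thus $R'=\{10\}$, and after permuting the two variables we obtain $\{01\}$, completing the proof.

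The argument is direct and I do not expect any real obstacle; the only point requiring minimal care is to verify that $J$ is non-empty, which follows from the observation that $\boldsymbol{m}=1^n$ would contradict non-$1$-validity of $R$.
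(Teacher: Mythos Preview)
Your proof is correct and follows the same overall strategy as the paper: split on whether $0^n\in R$, and in the other case identify variables according to the $0$/$1$ pattern of a chosen reference vector to obtain a binary relation with a single element. The difference is in the choice of reference vector. You take the coordinatewise minimum $\boldsymbol{m}$ of $R$ (which exists by closure under $\wedge$); this gives you the inequality $\boldsymbol{a}\geq\boldsymbol{m}$ for all $\boldsymbol{a}\in R$, so the three unwanted patterns $(0,0)$, $(0,1)$, $(1,1)$ are excluded in one stroke. The paper instead picks an arbitrary $\boldsymbol{a}\in R$ with some $a_i=0$, identifies variables along its $0$- and $1$-coordinates, and then must separately argue that the complementary pattern cannot occur (using that $\boldsymbol{a}\wedge\overline{\boldsymbol{a}}=0^n$ would otherwise lie in $R$). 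Your choice of $\boldsymbol{m}$ thus buys a slightly cleaner second case; otherwise the arguments are the same.
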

\begin{proof}
Since $R$ is not 1-valid, but also not empty, there must be some
vector $\boldsymbol{a}\in R$ with some $a_{i}=0$. If $(0\cdots0)\in R$,
we identify all variables and obtain $\{0\}$.

Otherwise, we define the relation $R'$ by identifying all variables
$i$ with $a_{i}=0$, and then all with $a_{i}=1$. We show that $R'$
equals $\{01\}$ or $\{10\}$:

Since $(0\cdots0)$ and $(1\cdots1)$ are not in $R$, $\{00\}$ and
$\{11\}$ are not in $R'$. Further, since $R$ is Horn, it is closed
under $x\wedge y$ (see \prettyref{lem:clos}), and so the ``to $\boldsymbol{a}$
complementary'' vector $\boldsymbol{b}=\boldsymbol{a}\boldsymbol{\oplus}\boldsymbol{1}$
is not in $R$, else $\boldsymbol{a}\boldsymbol{\wedge}\boldsymbol{b}=(0\cdots0)$
were in $R$ (where $\boldsymbol{\oplus}$ and $\boldsymbol{\wedge}$
are applied coordinate-wise). Thus if $\{01\}\in R'$, $\{10\}\notin R'$,
and the other way around.
\end{proof}
The proof for coNP-hardness is by modifying the two corresponding
Lemmas for formulas with constants (\prettyref{lem:exp m} and \prettyref{lem:m har}).
While we cannot express the relation $M$ from \prettyref{lem:exp m}
as a \noun{CNF(}\emph{$\mathcal{S}$}\noun{)}-formula, we can assemble
a \noun{CNF(}\emph{$\mathcal{S}$}\noun{)}-formula $\mu$ which is
suitable for a reduction from satisfiability similar to the one of
\prettyref{lem:m har}:
\begin{lem}
\label{lem:Mex}If \emph{$\mathcal{S}$} is a finite set of non-empty
Horn relations that contains at least one relation that is not 1-valid,
and at least one relation that is not safely componentwise IHSB$-$,
then there is a \noun{CNF(}\emph{$\mathcal{S}$}\noun{)}-formula $\mu$
s.t. $\nu(\mu)$ contains no restraint clauses of size greater than
1 and no self-implicating sets, and s.t. $\nu(\mu)$ contains the
unit-clause $\overline{v_{0}}$ and the clause $x\vee\overline{y}\vee\overline{z}$,
s.t. $y$ is not implied and $x$ is implied only via the clause $x\vee\overline{y}\vee\overline{z}$.\end{lem}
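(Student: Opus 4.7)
The plan is to adapt the seven-step construction of \prettyref{lem:exp m} to the no-constants setting by replacing each substitution of a constant with an identification of a variable with a \emph{pseudo-constant}, i.e.\ a fresh variable forced to $0$ or $1$ by a small unit constraint we produce ourselves. These pseudo-constants are supplied by \prettyref{lem:obt}: from a non-1-valid relation $R_{0}\in\mathcal{S}$ we obtain, via identification and permutation of variables, either $\{0\}$ or $\{01\}$; applied to fresh variables this yields either the unit-clause $\overline{v_{0}}$ alone, or both $\overline{v_{0}}$ and the positive unit-clause $v_{1}$. Call the resulting small CNF($\mathcal{S}$)-formula $\mu_{0}$.

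Let $R\in\mathcal{S}$ be a not safely componentwise IHSB$-$ relation. I would construct $\mu_{1}$ by replaying the seven numbered transformations of \prettyref{lem:exp m}'s proof applied to $\nu(R)$, with each ``substitute $c$ for variable $u$'' replaced by ``identify $u$ with $v_{c}$''. When both $v_{0}$ and $v_{1}$ are available this is a direct simulation. Setting $\mu=\mu_{0}\wedge\mu_{1}$ and normalising with $\nu$, the unit-clause $\overline{v_{0}}$ persists (and so does $v_{1}$), while the remainder of the formula reduces to a representation of $K$, $L$, or $M$ on the designated variables $x,y,z$. The verbatim arguments of \prettyref{lem:exp m}, together with the simplification rules \ref{enu:red imp}--\ref{enu:red bra} of $\nu$, then yield that $y$ is not implied, $x$ is implied only via the target 3-clause $x\vee\overline{y}\vee\overline{z}$, no self-implicating set is produced, and the only restraint clause is the unit-clause $\overline{v_{0}}$. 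The pseudo-constants $v_{0}$ and $v_{1}$ do not interfere, since any clause in $\mu_{1}$ in which they occur is either absorbed by $\nu$ (once the relevant simplification rule can fire thanks to the unit constraints in $\mu_{0}$) or contributes only to the inert unit-clauses themselves.

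The main obstacle is the case where \prettyref{lem:obt} delivers only $\{0\}$ -- that is, when every non-1-valid relation in $\mathcal{S}$ is already 0-valid -- so that no pseudo-constant for $1$ is at hand. In this situation steps 2 and 5 of \prettyref{lem:exp m}, which substitute $1$ for sets of variables, cannot be imitated directly. The workaround I would use is to perform each such step as a pure variable identification, collapsing the would-be-substituted set into a single fresh variable, and then to argue that the resulting variable either becomes forced to $1$ by a Horn self-implication already present within $\mu_{1}$ (acting as an effective $v_{1}$), or lies outside the support of the target clause and is eliminated by rules \ref{enu:imp imp}--\ref{enu:red bra}. Showing that this indirect replacement preserves the invariants $x\notin\mathrm{Imp}(y)$, $x\notin\mathrm{Imp}(z)$, $z\notin\mathrm{Imp}(y)$, and ``$y$ not implied'' -- the analogue of the starred properties ({*}) in the proof of \prettyref{lem:exp m} -- while producing no self-implicating set and no restraint clause of size greater than one, is the most delicate part of the argument.
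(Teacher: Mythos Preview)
Your high-level plan --- use \prettyref{lem:obt} to manufacture the unit clause $\overline{v_0}$ and then replace each ``substitute $0$'' in the proof of \prettyref{lem:exp m} by ``identify with $v_0$'' --- matches the paper's approach. But your treatment of the ``substitute $1$'' steps when no pseudo-constant $v_1$ is available contains genuine gaps.

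\textbf{First gap: your workaround for the missing $v_1$ does not work as stated.} You propose to collapse a to-be-substituted-with-$1$ set to a single fresh variable and then argue that this variable is either forced to $1$ by a self-implication or eliminated by rules~\ref{enu:imp imp}--\ref{enu:red bra}. Neither alternative holds in general. In step~2 of \prettyref{lem:exp m} the set $U$ is self-implicating, so after collapsing it to one variable $u$ that variable is \emph{not} forced to $1$ (a single variable is never self-implicating), and $u$ is not removed by any of the simplification rules. In the original step~5 the set $\mathrm{Imp}(y)\setminus\{y\}$ contains no self-implication at all (step~3 already removed all self-implicating sets), so again the collapsed variable is neither forced nor eliminated. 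The paper deals with step~2 by keeping the collapsed variable and arguing directly that the resulting relation is still not componentwise IHSB$-$ with the all-$0$ vector as minimum solution; and it \emph{abandons} the original step~5 entirely, replacing it with a different identification step (identify offending variables with $x$) that achieves the weaker goal stated in the lemma.

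\textbf{Second gap: self-implicating sets.} You assert that ``no self-implicating set is produced'', but the identification in step~4 (and the replacement step~5) can and does create new self-implicating sets. The paper addresses this by proving that after these steps the formula is still not componentwise IHSB$-$, and then \emph{iterating} the entire construction (steps~1--5, without the unit-clause step) until no self-implicating set remains; termination is guaranteed because variables are eliminated in every pass. Your proposal omits this loop.

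\textbf{Third gap: the target is weaker than you aim for.} You claim the output reduces to a representation of $K$, $L$, or $M$ on $x,y,z$. The paper explicitly notes that this is \emph{not} achievable in general without a positive unit clause; that is precisely why the statement of \prettyref{lem:Mex} asks only for a formula containing the clause $x\vee\overline{y}\vee\overline{z}$ with the listed implication properties, rather than for $M$ itself. Your stronger claim is unsupported, and the companion \prettyref{lem:m har-1} is designed to work with the weaker output.
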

\begin{proof}
We modify the proof of \prettyref{lem:exp m}.

Again, let $\phi_{0}=\nu(R)$. The first transformation remains unchanged:
\begin{enumerate}
\item \emph{Obtain a not  componentwise IHSB$-$ formula $\phi_{1}^{\sim}$
from $\phi_{0}$ by identification of variables.}
\end{enumerate}
We renamed the resulting formula since it will not directly be used
as input for step 2; namely, we insert the following step:
\begin{itemize}[label=\#.]
\item Identify all variables in negative unit-clauses and call the resulting
variable $v_{0}$, then identify all variables in positive unit-clauses
and call the resulting variable $v_{1}$. If there were no negative
unit-clauses, add the clauses $\overline{v_{0}}$ or $\overline{v_{0}}\wedge v_{1}$
obtained by \prettyref{lem:obt} from a not 1-valid Horn relation.
\end{itemize}
The resulting formula $\phi_{\#}$ can be written as $\phi_{\#}=\phi_{1}\wedge\overline{v_{0}}$
or $\phi_{\#}=\phi_{1}\wedge\overline{v_{0}}\wedge v_{1}$, where
$\phi_{1}$ contains no unit-clauses, and s.t. $v_{0}$ and $v_{1}$
do not appear in $\phi_{1}$. It is clear that $\phi_{\#}$ still
is not componentwise IHSB$-$. The following transformations will
only affect the part of the formula without unit-clauses, and our
notation will be such that the (entire) formula resulting from step
$i$ is $\phi_{i}\wedge\overline{v_{0}}$ resp. $\phi_{i}\wedge\overline{v_{0}}\wedge v_{1}$.

We use $\phi_{1}$ as input for the next transformation, which replaces
step 2 of the original proof. Again, let $\left[\phi_{1}^{*}\right]$
be a connected component of $\left[\phi_{1}\right]$ that is not IHSB$-$,
and let $U$ be the set of variables assigned 1 in the minimum solution
of $\phi_{1}^{*}$.
\begin{enumerate}[resume]
\item \emph{Identify all} \emph{variables from $U$.}
\end{enumerate}
We show that the resulting formula $\phi_{2}$ has the same crucial
properties as the formula $\phi_{2}$ in proof of \prettyref{lem:exp m}:
\begin{itemize}[label= ]
\item W.l.o.g., assume $U$ was not empty; then the vector $\boldsymbol{m}$
resulting from the minimum solution of $\phi_{1}^{*}$ has one variable
$x_{m}$ assigned 1 and all others 0. All vectors resulting from $\phi_{1}^{*}$
belong to the same component $\phi_{2}^{*}$. In comparison to the
relation $\phi_{2}^{*}$ of the original proof, $\phi_{2}^{*}$ contains
additional vectors, resulting from vectors of $\phi_{1}$ having all
variables from $U$ assigned 0.\\
Since $\phi_{1}$ contained no unit clauses, this also holds for $\phi_{2}$,
so $\phi_{2}$ contains the all-0 vector, which is connected to $\boldsymbol{m}$.
It follows that also here, $\phi_{2}^{*}$ has the all-0 vector as
minimum solution. Further, $\phi_{2}^{*}$ also here is not componentwise
IHSB$-$:

\begin{itemize}[label= ]
\item Let $a,b,c$ be vectors from $[\phi_{1}^{*}]$ s.t. $a\wedge(b\vee c)$
is not in $[\phi_{1}^{*}]$. Since $a,b,c$ all have 1 assigned to
all variables from $U$, this also holds for $a\wedge(b\vee c)$,
so for the vectors $a',b',c'$ resulting from the identification in
step 2, $a'\wedge(b'\vee c')\notin[\phi_{2}^{*}]$. 
\end{itemize}
\end{itemize}
With this, the reasoning can proceed as in the original proof. We
can perfectly simulate the next step (note that the entire formula
after step 2 is $\phi_{2}\wedge\overline{v_{0}}$ or $\phi_{2}\wedge\overline{v_{0}}\wedge v_{1}$):
\begin{enumerate}[resume]
\item \emph{Identify all variables of $\phi_{2}$ not implied by Var($c^{*}$)
with $v_{0}$.}
\end{enumerate}
It is clear that this has the same effect on $\phi_{2}$ as a substitution
with 0, so as in the original proof, we now have a clause $c=x\vee\overline{y}\vee\overline{z}_{1}\vee\cdots\vee\overline{z}_{k}$
($k\geq3$) s.t. $y$ is not implied. The next step remains unchanged:
\begin{enumerate}[resume]
\item \emph{Identify $z_{1},\ldots,z_{k}$, call the resulting variable
$z$.}
\end{enumerate}
The resulting formula $\phi_{4}\wedge\overline{v_{0}}$ resp. $\phi_{4}\wedge\overline{v_{0}}\wedge v_{1}$
now contains the clause $x\vee\overline{y}\vee\overline{z}$, and
$y$ is not implied; further, it still contains no restraints of size
greater than 1. However, step 4 may have produced self-implicating
sets. Before we deal with this problem we append one more step to
ensure that $x$ is implied only via $x\vee\overline{y}\vee\overline{z}$.

Since $x\notin\mathrm{Imp}(y)$, $x\notin\mathrm{Imp}(z)$, and $x$
is implied by $\{y,z\}$ only via $x\vee\overline{y}\vee\overline{z}$,
each other implication clause with $x$ as positive literal must have
at least one negative literal with a variable not implied by $\{y,z\}$,
so the following step eliminates all other clauses with $x$ as positive
literal:
\begin{enumerate}[resume]
\item \emph{If a variable $x_{i}\notin\mathrm{Imp}(\{y,z\})$ appears in
a negative literal of an implication clause having $x$ as positive
literal, identify $x_{i}$ with $x$; repeat as long as there is such
a variable.}
\end{enumerate}
We take into account that also step 5 may have produced self-implicating
sets (although that seems not possible).

To show how to eliminate the self-implicating sets which may have
been produced in the last two steps, we first prove that $\phi_{5}$
is still is not componentwise IHSB$-$:
\begin{itemize}[label= ]
\item By the proof of \prettyref{lem:exp m}, we know that we can express
a not componentwise IHSB$-$ relation (i.e., $K,L,$ or $M$) from
$\phi_{5}$ by identification of variables and substitution of constants.
We prove that any formula obtained from a componentwise IHSB$-$ one
by identification and substitution is also componentwise IHSB$-$;
the statement then follows by reversal:

\begin{itemize}[label= ]
\item A relation obtained from a componentwise IHSB$-$ one by identification
of variables is componentwise IHSB$-$ by definition. For substitution,
consider a formula $\psi=\psi_{1}\vee\cdots\vee\psi_{k}$ where $\psi_{1},\ldots,\psi_{k}$
are the connected components that can be written as IHSB$-$ formulas.
Then the formula $\psi'$ resulting from the substitution is equivalent
to $\psi'_{1}\vee\cdots\vee\psi'_{k}$, where each $\psi'_{i}$ is
obtained from $\phi_{i}$ by substitution, and thus is IHSB$-$ also
(some $\psi'_{i}$ may be empty). Since any two vectors of $\psi'$
resulting from vectors of different components of $\psi$ differ in
at least two variables, all $\psi'_{i}$ are disconnected. Now by
\prettyref{lem:l41}, the connected components of each $\phi'_{i}$
are IHSB$-$ since IHSB$-$ relations are characterized by closure
under an idempotent operation (see \prettyref{lem:clos}). It follows
that $\psi'$ is componentwise IHSB$-$.
\end{itemize}
\end{itemize}
Now if $\phi_{5}$ contains self-implicating sets, we repeat steps
1 to 5 (skipping step \#), with $\phi_{5}\wedge\overline{v_{0}}$
resp. $\phi_{5}\wedge\overline{v_{0}}\wedge v_{1}$ as input to step
1, until we obtain a formula containing no self-implicating sets.
This leads to a formula with all the demanded properties since the
input to step 1 always is not safely componentwise IHSB$-$, and since
the recursion must terminate as variables are removed in every pass.
\begin{itemize}[label= ]
\item \emph{Remark:} We cannot execute or simulate the next step of \prettyref{lem:exp m}
since the formula may contain no positive unit-clause, so we are not
able to produce $M$. We could simplify the formula more, but this
would be quite involved, and we can already use the formula for the
reduction from satisfiability.
\end{itemize}
\end{proof}
\begin{lem}
\noun{\label{lem:m har-1} Conn}\emph{(}\textup{$\left\{ \mu\right\} $)}
is \noun{coNP}-hard for every Horn\noun{ }formula $\mu$ containing
no restraint clauses of size greater than 1 and no self-implicating
sets, and containing the unit-clause $\overline{v_{0}}$ and the clause
$x\vee\overline{y}\vee\overline{z}$, s.t. $y$ is not implied and
$x$ is implied only via the clause $x\vee\overline{y}\vee\overline{z}$.\end{lem}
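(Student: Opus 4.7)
My plan is to reduce from \noun{Sat}($\{P,N\}$), where $P = x\vee y\vee z$ and $N = \overline{x}\vee\overline{y}$, which is \noun{NP}-hard by Schaefer's dichotomy theorem, to the complement of \noun{Conn}($\{\mu\}$), imitating the construction of \prettyref{lem:m har}. Given a \noun{CNF}($\{P,N\}$)-formula $\psi$ (which we may assume to contain at least one $P$-constraint), I will build a \noun{CNF}($\{\mu\}$)-formula $\phi$ whose solution graph is disconnected iff $\psi$ is satisfiable. As in \prettyref{lem:m har}, $\phi$ will contain a variable $X_i$ for each $x_i$ of $\psi$ and, for each $P$-constraint $c_p = x_{i_p}\vee x_{j_p}\vee x_{k_p}$ and each $l\in\{i_p,j_p,k_p\}$, auxiliary variables $Q_p$, $A_{pl}$, $B_{pl}$ wired into the same circulatory gadget. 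The new ingredient is that each primitive Horn clause used there, namely $\overline{X_i}\vee\overline{X_j}$, $\overline{Q_p}\vee A_{pl}$, $\overline{X_l}\vee\overline{A_{pl}}\vee B_{pl}$, $\overline{B_{pl}}\vee X_l$, and $\overline{B_{pl}}\vee Q_{(p+1)\bmod m}$, will be realised by a single application of $\mu$ rather than taken for granted.

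Each realisation follows the same recipe: the designated main variables are placed at the $v_0$-, $x$-, $y$- and $z$-coordinates of $\mu$ (with identifications when a shorter clause is desired), and the remaining coordinates are filled with fresh satellite variables private to that gadget. Because $\overline{v_0}$ forces the $v_0$-coordinate to $0$, identifying it with the $x$-coordinate effectively sets $x$ to $0$ and collapses the main clause $x\vee\overline{y}\vee\overline{z}$ to $\overline{y}\vee\overline{z}$, yielding a binary restraint $\overline{X_i}\vee\overline{X_j}$; identifying the $y$- and $z$-coordinates collapses the clause to $x\vee\overline{y}$, yielding a binary implication of the form $Y\vee\overline{X}$; leaving $x$, $y$ and $z$ distinct yields the ternary implication clause. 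The hypotheses that $y$ is not implied and that $x$ is implied only via $x\vee\overline{y}\vee\overline{z}$ ensure that no other clause of $\mu$ can propagate an unintended $1$-value back to a main-variable position: the remaining Horn clauses only push values outward from the main variables onto the satellites. Hence, projected onto its main variables, the relation defined by each gadget is exactly the desired primitive clause, and up to disjoint blocks of satellite variables, $\phi$ is the Horn formula from the construction of \prettyref{lem:m har}.

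With this in place, both directions of the reduction proceed as in \prettyref{lem:m har}. For the forward direction, given a satisfying assignment $\boldsymbol{s}$ of $\psi$, I define $\boldsymbol{s}'$ on $\phi$ by setting $X_i = s_i$, $Q_p = 1$, $A_{pl} = 1$, $B_{pl} = X_l$, and extending by the minimum Horn-closure inside each satellite block; the absence of self-implicating sets and of restraint clauses of size greater than one in $\mu$ ensures that this closure exists and is itself locally minimal, so that $\boldsymbol{s}'$ is a nonzero locally minimal solution of $\phi$ and $G(\phi)$ is disconnected by \prettyref{lem:loc min}. For the reverse direction, I first eliminate any positive unit clauses of $\phi$ arising from a possible $v_1$ in $\mu$ by pre-assigning the corresponding satellites to $1$, and then apply \prettyref{cor:horn conn}: a disconnected $G(\phi)$ yields a nonempty maximal self-implicating set $U$ of the reduced formula that contains no restraint set. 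Because $\mu$ itself has no self-implicating set, the satellites of a gadget lie in $U$ only by being implied transitively from the gadget's main variables, so the restriction of $U$ to the main variables obeys exactly the combinatorial conditions of the original proof; reading off $s_i = 1 \Leftrightarrow X_i \in U$ then produces a satisfying assignment for $\psi$ (the $P$-constraints being satisfied via the circulatory structure, the $N$-constraints via the absence of any $\{X_i,X_j\}$-restraint set in $U$).

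The main obstacle is the gadget analysis itself: one must make precise that the three structural hypotheses on $\mu$ ($\overline{v_0}$ being the only restraint, $y$ unimplied, and $x$ implied only via the main clause) are jointly strong enough to rule out feedback implications from a gadget's satellites back into its main-variable positions. The verification amounts to tracking, for each wiring and each assignment of the main variables, which variables of $\mu$ become forced and checking that no such forced chain can terminate at a designated main position except through the main clause itself.
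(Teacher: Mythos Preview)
Your overall strategy matches the paper's: reduce from \noun{Sat}($\{P,N\}$), reuse the circulatory construction of \prettyref{lem:m har}, and realise each primitive clause by an instance of $\mu$ with fresh satellite variables. Your restraint gadget (place $v_0$ at the $x$-coordinate) and your ternary gadget (keep $x,y,z$ distinct) coincide with what the paper does.

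However, the obstacle you flag at the end is real, and for the binary-implication gadget the hypotheses on $\mu$ are \emph{not} strong enough to clear it. You obtain $x\vee\overline{W}$ by identifying the $y$- and $z$-coordinates and assert that the remaining clauses only push values outward from the main positions. But the hypotheses constrain only $y$ (unimplied) and $x$ (implied solely via the main clause); nothing forbids $z\in\mathrm{Imp}(x)$ in $\mu$. If that holds, then after the identification $y=z=W$ the gadget acquires an implication $x\to W$ through the satellites, so its projection onto the main pair is $x\leftrightarrow W$ rather than $x\vee\overline{W}$. In the instance realising $\overline{b}_{pl}\vee q_{(p+1)\bmod m}$ (target $q_{(p+1)}$ at the $x$-position, source $b_{pl}$ at the $W$-position) this forces $b_{pl}=1$ whenever $q_{(p+1)}=1$, hence all $X_l=1$, which will in general violate the $N$-constraints of $\psi$; your forward direction then fails.

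The paper meets this issue head-on. To simulate $\overline{u}\vee w$ it splits on whether $z\in\mathrm{Imp}(x)$, and in each case identifies \emph{all} variables of the non-unit part $\lambda$ of $\mu$ into two classes determined by an explicit witnessing solution of $\lambda$, obtaining a genuine 2-ary relation equal to $\overline{u}\vee w$ with no satellites at all. The binary implications are then used as black boxes $(\overline{u}\vee w)\wedge\varepsilon$. For the restraint and ternary gadgets the paper does keep satellites, but it never claims that the projection equals the target clause; instead it argues directly that each replacement entails its target clause and introduces no self-implicating set, that satellites are private to one gadget (so no shortcuts arise), and then compares maximal self-implicating sets of $\phi'$ with those of $\phi$.
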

\begin{proof}
We modify the proof of \prettyref{lem:m har}: The CNF\textsubscript{C}($\{M\}$)-formula
$\phi$ is replaced by a connectivity-equivalent CNF($\{\mu\}$)-formula
$\phi'$; we assemble $\phi'$ by replacing the gadgets from which
$\phi$ is built.

Clearly, $\mu$ can be written as $\mu=\lambda\wedge\varepsilon$,
where $\varepsilon$ contains only unit clauses (including $\overline{v_{0}}$),
and $\lambda$ contains no unit clauses and no variables appearing
in unit clauses. Let the variables of $\lambda$ be ordered such that
$x,y,z$ are the first three variables, in this order; let $r$ be
the arity of $\lambda$. 
\begin{itemize}[label= ]
\item \emph{Remark:} The construction would be easy if we could structurally
express $M$  from $\mu$; however, we see no easy way to that. $\phi'$
will satisfy a relation $\phi'=\exists\ldots\phi$, but this will
be no structural expression in general.
\end{itemize}
We first show how to simulate the implication clause $\overline{u}\vee w$:
\begin{itemize}[label= ]
\item Since $\lambda$ contains no unit-clauses and no restraints, it has
both the all-0- and the all-1-vector as solution. \\
First assume that $z$ is not implied by $x$. Starting from the all-0
vector, then setting all variables implied by $x$ to 1, we see that
$\boldsymbol{a}=(1,0,0,a_{4},\ldots,a_{r})$ is a solution to $\lambda$
for some constants $a_{4},\ldots,a_{r}$. By the clause $x\vee\overline{y}\vee\overline{z}$,
the to $\boldsymbol{a}$ complementary vector $\boldsymbol{a\oplus1}$
is no solution, so identifying all variables $x_{i}$ where $a_{i}=0$
with $y$, and all where $a_{i}=1$ with $x$ results in $x\vee\overline{y}$.\\
Now assume that $z$ is implied by $x$. Since $x$ is not implied
by $z$ by simplification rule \ref{enu:red imp}, there is some solution
$\boldsymbol{a}$ with $a_{1}=0$ and $a_{3}=1$. Since $z$ is implied
by $x$, the to $\boldsymbol{a}$ complementary vector is no solution.
So again, identifying all variables $x_{i}$ with $a_{i}=0$ with
$x$, and all with $a_{i}=1$ with $z$ results in $\overline{x}\vee z$.\\
It follows that we can in both cases express $(\overline{u}\vee w)\wedge\varepsilon$
as a CNF($\{\mu\}$)-formula.
\end{itemize}
Now we construct $\phi'$, by replacing the gadgets from which the
formula $\phi$ in the proof of \prettyref{lem:m har} is built by
CNF($\{\mu\}$)-expressions; we make sure that all clauses of $\phi$
are entailed by the replacements, and no replacement contains a self-implicating
set, which will turn out useful:

We replace $\overline{q}_{p}\vee a_{pl}$ by $\left(\overline{q}_{p}\vee a_{pl}\right)\wedge\varepsilon$,
and $\overline{b}_{pl}\vee q_{(p+1)\,\mathrm{mod}\,m}$ by $\left(\overline{b}_{pl}\vee q_{(p+1)\,\mathrm{mod}\,m}\right)\wedge\varepsilon$.

We replace $\overline{x}_{i}\vee\overline{x}_{j}$ by $\overline{x_{i}}\vee v_{0}$
if $x_{i}=x_{j}$, and otherwise by 
\begin{equation}
\lambda(v_{0},x_{i}x_{j},w_{ij4},\ldots,w_{ijr})\wedge\varepsilon,\label{eq:r1}
\end{equation}
where for each $i,j,$ $w_{ij4},\ldots,w_{ijr}$ are new variables.
Clearly, $\overline{x}_{i}\vee\overline{x}_{j}$ is entailed by the
replacement. Since no variables are identified in $\lambda$, there
are no self-implicating sets in \eqref{eq:r1}.

Finally, we replace $\left(\overline{x}_{l}\vee\overline{a}_{pl}\vee b_{pl}\right)\wedge\left(\overline{b}_{pl}\vee x_{l}\right)$
by 
\begin{equation}
\left(\lambda(b_{pl},a_{pl},x_{l},z_{pl4},\ldots,z_{plr})\wedge\varepsilon\right)\wedge\left(\left(\overline{b}_{pl}\vee x_{l}\right)\wedge\varepsilon\right),\label{eq:r2}
\end{equation}
where for each $p,l$, $z_{pl4},\ldots,z_{plr}$ are new variables.
By definition of $\lambda$, \prettyref{eq:r2} contains the clause
$\overline{x}_{l}\vee\overline{a}_{pl}\vee b_{pl}$. We show that
\prettyref{eq:r2} contains no self-implicating set:
\begin{itemize}[label=]
\item Clearly, we can ignore $\varepsilon$. Since no variables are identified
in $\lambda(b_{pl},a_{pl},x_{l},z_{pl4},\ldots,z_{plr})$, this expression
contains no self-implicating set. Since $a_{pl}$ is not implied in
\prettyref{eq:r2}, and $b_{pl}$ is implied only via $\overline{x}_{l}\vee\overline{a}_{pl}\vee b_{pl}$,
$b_{pl}$ cannot belong to a self-implicating set in \prettyref{eq:r2}.
Thus proceeding from $\lambda(b_{pl},a_{pl},x_{l},z_{pl4},\ldots,z_{plr})$
to the whole expression \prettyref{eq:r2}, it is easy to see that
the clause $\overline{b}_{pl}\vee x_{l}$ cannot have produced a self-implicating
set.
\end{itemize}
Now note that not only each clause $\overline{x}_{i}\vee\overline{x}_{j}$
of $\phi$ is entailed by its replacement, but
\begin{itemize}[label= ({*})]
\item the only restraint set of $\phi'$ is $\{v_{0}\}$, and a subset
$U$ of $\mathrm{Var}(\phi)$ implies $v_{0}$ in $\phi'$ exactly
if $U$ implies a restraint set $\{x_{i},x_{j}\}$ in $\phi$.
\end{itemize}
With this, we are ready to show that $\phi'$ is connected iff $\phi$
is connected, i.e., that $\phi'$ contains a maximal self-implicating
set $U$ containing no restraint set iff $\phi$ does:

Since $\phi'$ contains all implication clauses of $\phi$, $\phi'$
also contains all self-implicating sets of $\phi$. By ({*}), each
maximal self-implicating set containing no restraint set of $\phi$
can be extended to a maximal self-implicating set containing no restraint
set of $\phi'$.

For the converse, first recall that all used gadgets are guaranteed
to contain no self-implicating sets, and note that all additional
variables of $\phi'$ except $v_{0}$ (which implies no other variable)
each appear in only one gadget, so that no ``shortcuts'' are introduced.
With this, a consideration analogously to the one for $\phi$ in the
original proof shows that any maximal self-implicating set $U'$ of
$\phi'$ must contain all $q_{p}$, all $a_{pl}$, and for every $p$
for at least one $l\in\{i_{p},j_{p},k_{p}\}$ both $b_{pl}$ and $x_{l}$,
so that $U'$ must contain some maximal self-implicating set $U$
of $\phi$ as subset. By ({*}), if $U'$ contains no restraint set,
the same holds for $U$.
\end{proof}

\subsection{\label{sub:Towards}Reductions for Connectivity}

The CNF($\mathcal{S}$)-formula $\phi'$ constructed from a CNF\textsubscript{C}($\mathcal{S}$)-formula
$\phi$ in the proof of \prettyref{lem:red} using 0- and 1-isolating
relations may contain multiple components even if $\phi$ has only
one component, so that construction cannot be used for the connectivity
problem; But if we use relations with unique solutions instead, the
number of components is retained, so that analogous to \prettyref{lem:red},
a reduction is possible:
\begin{defn}
A formula $\phi$ is \emph{0-unique }(\emph{1-unique}) if it has exactly
one solution $\boldsymbol{a}$ s.t. $\boldsymbol{a}\neq(1\cdots1)$
($\boldsymbol{a}\neq(0\cdots0)$).\end{defn}
\begin{lem}
\label{lem:uni}Let \emph{$\mathcal{S}$} be a finite set of logical
relations.\emph{ }If there is a 0-unique and a 1-unique \noun{CNF($\mathcal{S}$)}-formula,
then \emph{\noun{Conn\textsubscript{C}($\mathcal{S}$)}}$\leq_{m}^{p}$\noun{Conn($\mathcal{S}$).}
\end{lem}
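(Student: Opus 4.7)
The plan is to mimic the construction of \prettyref{lem:red} but to replace the \emph{isolating} relations used there by \emph{unique-solution} formulas, so that the correspondence between solutions of the source and target formulas becomes a bijection that preserves Hamming distance. Uniqueness (rather than mere isolation) is exactly what is needed to control the \emph{number} of components, not just st-connectivity inside them.

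First, from the hypothesized 0-unique CNF($\mathcal{S}$)-formula $\psi_0(y_1,\ldots,y_m)$, let $\boldsymbol{a}$ be its unique solution; since $\boldsymbol{a}\neq(1\cdots1)$, some coordinate of $\boldsymbol{a}$ is $0$, and after a harmless permutation of arguments we may assume $a_1=0$. Symmetrically, from the 1-unique CNF($\mathcal{S}$)-formula $\psi_1(z_1,\ldots,z_k)$ with unique solution $\boldsymbol{b}$, we may assume $b_1=1$. Given an arbitrary CNF\textsubscript{C}($\mathcal{S}$)-formula $\phi(x_1,\ldots,x_n)$, define the CNF($\mathcal{S}$)-formula
\[
\phi'(x_1,\ldots,x_n,y_1,\ldots,y_m,z_1,\ldots,z_k)
\;=\;
\phi[0/y_1,\,1/z_1]\ \wedge\ \psi_0(y_1,\ldots,y_m)\ \wedge\ \psi_1(z_1,\ldots,z_k),
\]
where every occurrence of the constant $0$ in $\phi$ has been replaced by the fresh variable $y_1$ and every occurrence of $1$ by $z_1$, and where $y_2,\ldots,y_m,z_2,\ldots,z_k$ are fresh. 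Since $\psi_0$ and $\psi_1$ have unique solutions, in every solution of $\phi'$ the $y$-block equals $\boldsymbol{a}$ and the $z$-block equals $\boldsymbol{b}$; in particular $y_1=0$ and $z_1=1$, so the $x$-block ranges exactly over the solutions of $\phi$. This gives a Hamming-distance-preserving bijection $\boldsymbol{s}\mapsto\boldsymbol{s}\cdot\boldsymbol{a}\cdot\boldsymbol{b}$ between $[\phi]$ and $[\phi']$.

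It then remains to verify that this bijection is a graph isomorphism between $G(\phi)$ and $G(\phi')$. In any edge of $G(\phi')$ the endpoints agree on all but one coordinate; that coordinate cannot lie in the $y$- or $z$-block, because flipping any bit of $\boldsymbol{a}$ (resp.\ $\boldsymbol{b}$) would destroy the unique satisfying assignment of $\psi_0$ (resp.\ $\psi_1$). Hence paths in $G(\phi')$ project bijectively onto paths in $G(\phi)$, and $G(\phi')$ is connected iff $G(\phi)$ is connected. Since $\psi_0$ and $\psi_1$ have constant size (independent of $\phi$), $\phi'$ is computable from $\phi$ in polynomial time, so $\phi\mapsto\phi'$ is the desired polynomial-time many-one reduction from \noun{Conn}\textsubscript{C}($\mathcal{S}$) to \noun{Conn}($\mathcal{S}$).

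There is no serious obstacle here once one has the right notion of ``unique-solution gadget'': the only point worth checking is that the $y$- and $z$-blocks are truly frozen along every neighbour-move in $G(\phi')$, which follows immediately from uniqueness of $\boldsymbol{a}$ and $\boldsymbol{b}$. The subsequent work, analogous to \prettyref{lem:iso} for the isolating case, will be to exhibit 0-unique and 1-unique CNF($\mathcal{S}$)-formulas for the classes of $\mathcal{S}$ where one wants to apply this reduction; that construction, however, is a separate matter from the lemma at hand.
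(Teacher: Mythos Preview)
Your proposal is correct and follows essentially the same approach as the paper: the paper explicitly says that the construction is ``analogous to \prettyref{lem:red}'' with unique-solution formulas in place of the isolating ones, and gives no further details. You have simply spelled out that analogy carefully, including the graph-isomorphism argument that the frozen $y$- and $z$-blocks force; this is exactly the intended proof.
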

Using a result from Creignou et al., we can determine exactly for
which \emph{$\mathcal{S}$} this is the case:
\begin{lem}
\label{lem:red-1-1}Let \emph{$\mathcal{S}$} be a finite set of logical
relations. There is a 0-unique and a 1-unique \noun{CNF($\mathcal{S}$)}-formula
exactly\emph{ }if \emph{$\mathcal{S}$} contains at least one relation
that is not 0-valid, at least one relation that is not 1-valid, and
at least one relation that is not complementive.\end{lem}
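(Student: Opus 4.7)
For the forward direction, I would exploit the closure properties of CNF($\mathcal{S}$)-solution spaces. If every $R\in\mathcal{S}$ were 1-valid, then $(1,\ldots,1)$ would satisfy every CNF($\mathcal{S}$)-formula, so a 0-unique formula $\phi_0$ would have both its unique solution $\boldsymbol{a}\neq(1,\ldots,1)$ and $(1,\ldots,1)$ in $[\phi_0]$, contradicting uniqueness. The 1-unique formula rules out the case that every $R$ is 0-valid in symmetric fashion. And if every $R$ were complementive, coordinate-wise complementation would close $[\phi_0]$ as well (the closure lifts from constraints to the CNF), so $\boldsymbol{a}$ would have to equal $\boldsymbol{a}\oplus(1,\ldots,1)$, which is impossible.

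For the backward direction, fix $R_0,R_1,R_c\in\mathcal{S}$ which are, respectively, not 0-valid, not 1-valid, and not complementive (possibly coinciding). I build the two formulas by identifying variables in these relations (and possibly conjoining), using case analysis. The easy case is when some $R\in\mathcal{S}$ is 0-valid but not 1-valid (call it \emph{type-$\alpha$}): then $R(x,\ldots,x)$ has solution set $\{0\}$, hence is 0-unique. Symmetrically, a \emph{type-$\beta$} relation (not 0-valid, 1-valid) yields a 1-unique formula via $R(x,\ldots,x)=\{1\}$.

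The harder case arises when no type-$\alpha$ relation exists, forcing $R_1$ to be neither 0- nor 1-valid (\emph{type-$\gamma$}). Here I apply an identification trick: pick $\boldsymbol{b}\in R_1$ and form $\phi(y,z)=R_1(\xi_1,\ldots,\xi_n)$ with $\xi_i=y$ if $b_i=0$ and $\xi_i=z$ if $b_i=1$. Since $R_1$ is neither 0- nor 1-valid, $[\phi]\subseteq\{01,10\}$ with $(0,1)\in[\phi]$. If $[\phi]=\{01\}$, then $\phi$ has the single solution $(0,1)$ and is simultaneously 0- and 1-unique. Otherwise $[\phi]=\{01,10\}$ is XOR-like, and I use $R_c$ to break the symmetry: if $R_c$ is itself type-$\gamma$, a non-complementive witness $\boldsymbol{c}\in R_c$ yields by the same trick a relation equal to $\{01\}$ or $\{10\}$ directly; if $R_c$ is 0-valid and 1-valid (but still not complementive), the identification produces an implication relation equivalent to $y\to z$, whose conjunction with the XOR-formula $\phi$ cuts the solution set down to $\{01\}$. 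The case where no type-$\beta$ exists is handled symmetrically.

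The main obstacle is ensuring that the case analysis is exhaustive. This rests on the observation that every type-$\alpha$ or type-$\beta$ relation is automatically non-complementive, so whenever neither type-$\alpha$ nor type-$\beta$ exists, $R_c$ must be type-$\gamma$ or both 0-valid and 1-valid, and both sub-cases are handled above. A more concise alternative, suggested by the paper's phrasing, is to invoke the classical result of Creignou et al.\ that under exactly the three stated conditions, CNF($\mathcal{S}$)-formulas with single-element solution sets $\{0\}$ and $\{1\}$ can be constructed -- directly providing the desired 0- and 1-unique CNF($\mathcal{S}$)-formulas.
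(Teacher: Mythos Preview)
Your forward direction matches the paper's exactly. For the backward direction, the paper takes the short route you mention at the end: it simply invokes Creignou and H\'ebrard's result (Lemma~4.13 of \cite{creignou1996complexity}) that under precisely these three conditions the relation $\overline{x}\wedge y$ is expressible as a CNF($\mathcal{S}$)-formula; since its unique solution $(0,1)$ is neither $(1,1)$ nor $(0,0)$, this single formula is simultaneously 0-unique and 1-unique. Your direct construction via identification of variables is essentially an unrolled proof of that cited lemma, and it is correct in substance.

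Two small points to tighten. First, your sub-case split on $R_c$ (type-$\gamma$ versus both 0- and 1-valid) is carried out under the hypothesis ``no type-$\alpha$ exists'', but your exhaustiveness justification assumes the stronger ``neither type-$\alpha$ nor type-$\beta$ exists''. In the gap---no type-$\alpha$ but some type-$\beta$---the relation $R_c$ could itself be type-$\beta$, which you do not explicitly treat; it is harmless (the identification then gives $\{01,11\}$, and conjoining with your XOR $\phi$ still cuts down to $\{01\}$), but the case analysis should say so. Second, the cited Creignou result produces the two-variable relation $\overline{x}\wedge y$ with solution set $\{01\}$, not separate one-variable formulas with solution sets $\{0\}$ and $\{1\}$ as you wrote.
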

\begin{proof}
It is easy to see that if every relation in \emph{$\mathcal{S}$ }is
0-valid, there is no 1-unique\emph{ }\noun{CNF($\mathcal{S}$)}-formula,
if every relation is 1-valid, there is no 0-unique one, and if if
every relation in \emph{$\mathcal{S}$ }is complementive, there is
neither a 0-unique nor a 1-unique \noun{CNF($\mathcal{S}$)}-formula.

Otherwise, Lemma 4.13 of \citep{creignou1996complexity} shows that
in this case, $\overline{x}\wedge y$ is expressible as a \noun{CNF($\mathcal{S}$)}-formula,
which is both 0-unique and 1-unique. \end{proof}
\begin{cor}
\label{cor:red-1}Let \emph{$\mathcal{S}$} be a finite set of logical
relations.\emph{ }If \emph{$\mathcal{S}$} contains at least one relation
that is not 0-valid, at least one relation that is not 1-valid, and
at least one relation that is not complementive, then \emph{\noun{Conn\textsubscript{C}($\mathcal{S}$)}}$\leq_{m}^{p}$\noun{Conn($\mathcal{S}$).}
\end{cor}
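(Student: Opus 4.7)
The corollary follows immediately by composing the two preceding lemmas, so the plan is just to invoke them in sequence. First I would apply \prettyref{lem:red-1-1}: since $\mathcal{S}$ contains a relation that is not 0-valid, one that is not 1-valid, and one that is not complementive, the lemma guarantees that we can construct both a 0-unique and a 1-unique \noun{CNF}$(\mathcal{S})$-formula (concretely, from the relation $\overline{x} \wedge y$, which is expressible from $\mathcal{S}$ under these hypotheses). Then I would invoke \prettyref{lem:uni} on these witnessing formulas to obtain the desired polynomial-time many-one reduction \noun{Conn\textsubscript{C}}$(\mathcal{S}) \leq_{m}^{p}$ \noun{Conn}$(\mathcal{S})$.

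Since both lemmas are already established, there is essentially no independent content to prove — the corollary is a one-line composition. The only thing worth writing is a sentence making the chain of implications explicit, and possibly remarking that the polynomial-time computability of the required isolating-style gadgets (the 0-unique and 1-unique formulas) follows because the $\mathcal{S}$-expressibility of $\overline{x} \wedge y$ in \prettyref{lem:red-1-1} is effective.

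There is no genuine obstacle here; the real work has been absorbed into \prettyref{lem:uni} (which performs the variable-substitution construction analogous to \prettyref{lem:red}, but using uniquely-solved gadgets to preserve the number of connected components) and into \prettyref{lem:red-1-1} (which uses the result of Creignou et al.\ to expressibly construct $\overline{x} \wedge y$). Accordingly, the proof in the paper can safely be a single-sentence citation of these two results.
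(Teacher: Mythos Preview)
Your proposal is correct and matches the paper's approach exactly: the paper states this as a corollary with no written proof, treating it as the immediate composition of \prettyref{lem:uni} and \prettyref{lem:red-1-1} that you describe.
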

So for such sets \emph{\noun{$\mathcal{S}$,}} we can transfer the
hardness results for formulas with constants (\prettyref{thm:trich})
to the no-constants case:
\begin{cor}
\label{cor:ns}If \emph{$\mathcal{S}$} is a finite set of relations
that is  safely tight but not Schaefer and contains at least one relation
that is not 0-valid, at least one relation that is not 1-valid, and
at least one relation that is not complementive,\noun{ Conn($\mathcal{S}$)}
is \noun{coNP}-complete.
\end{cor}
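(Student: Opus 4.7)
The plan is to combine the two ingredients immediately preceding the statement, namely the trichotomy theorem for the with-constants case (\prettyref{thm:trich}) and the reduction \prettyref{cor:red-1}, and to observe that the coNP upper bound carries over without effort.

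First I would establish membership in coNP. Since $\mathcal{S}$ is safely tight, \prettyref{lem:c47} gives \noun{Conn}\textsubscript{C}$(\mathcal{S})\in\mathrm{coNP}$. Every \noun{CNF}$(\mathcal{S})$-formula is a special \noun{CNF}\textsubscript{C}$(\mathcal{S})$-formula (one in which no $\xi_j$ happens to be a constant), so \noun{Conn}$(\mathcal{S})\leq_{m}^{p}$ \noun{Conn}\textsubscript{C}$(\mathcal{S})$ trivially and membership in coNP transfers.

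For coNP-hardness I would invoke the case~2 of \prettyref{thm:trich}: since $\mathcal{S}$ is safely tight but not Schaefer, \noun{Conn}\textsubscript{C}$(\mathcal{S})$ is coNP-complete, and in particular coNP-hard. The hypothesis that $\mathcal{S}$ contains at least one relation that is not 0-valid, one that is not 1-valid, and one that is not complementive is exactly the hypothesis of \prettyref{cor:red-1}, which yields \noun{Conn}\textsubscript{C}$(\mathcal{S})\leq_{m}^{p}$\noun{Conn}$(\mathcal{S})$. Composing the two reductions gives coNP-hardness of \noun{Conn}$(\mathcal{S})$, and together with the upper bound this completes the proof.

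There is essentially no obstacle here: all the heavy lifting has already been done. The reduction in \prettyref{cor:red-1} relies on the existence of both a 0-unique and a 1-unique \noun{CNF}$(\mathcal{S})$-formula, which the three non-triviality assumptions guarantee via \prettyref{lem:red-1-1} (Lemma~4.13 of \citep{creignou1996complexity} supplies $\overline{x}\wedge y$). The only thing to double-check when writing the proof out is that plugging a \noun{CNF}\textsubscript{C}$(\mathcal{S})$-formula $\phi$ into that reduction preserves connectedness of $G(\phi)$ rather than merely $st$-connectivity, which is the content of \prettyref{lem:uni}; this is why 0/1-unique formulas (rather than 0/1-isolating ones as in the $st$-case) are used, and it is the reason the present corollary requires the extra three hypotheses that the $st$-analogue \prettyref{lem:red} does not need.
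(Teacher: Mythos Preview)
Your proposal is correct and matches the paper's approach exactly: the paper states this corollary immediately after \prettyref{cor:red-1} with only the remark that the hardness results from \prettyref{thm:trich} transfer to the no-constants case, leaving the coNP upper bound (via \prettyref{lem:c47} and the trivial inclusion of CNF$(\mathcal{S})$-formulas in CNF\textsubscript{C}$(\mathcal{S})$-formulas) implicit. Your write-up is in fact more explicit than the paper's, including the observation that ``not Schaefer'' implies ``not CPSS'' so that case~2 of \prettyref{thm:trich} indeed applies.
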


\begin{cor}
\label{cor:nr}If \emph{$\mathcal{S}$} is a finite set of relations
that is not  safely tight and contains at least one relation that
is not 0-valid, at least one relation that is not 1-valid, and at
least one relation that is not complementive,\noun{ Conn($\mathcal{S}$)}
is \emph{PSPACE}-complete.
\end{cor}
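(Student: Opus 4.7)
The plan is to combine two results already established in the paper. For PSPACE-hardness, I would appeal directly to \prettyref{cor:red-1}, which gives a polynomial-time many-one reduction \noun{Conn}\textsubscript{C}($\mathcal{S}$)$\leq_m^p$\noun{Conn}($\mathcal{S}$) for exactly the sets $\mathcal{S}$ satisfying the 0-valid/1-valid/complementive hypothesis of this corollary. By \prettyref{thm:trich} (the trichotomy theorem), since $\mathcal{S}$ is not safely tight, \noun{Conn}\textsubscript{C}($\mathcal{S}$) is PSPACE-complete, in particular PSPACE-hard. Composing the reduction with this hardness gives PSPACE-hardness of \noun{Conn}($\mathcal{S}$).

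For the PSPACE upper bound, I would invoke the general membership argument already noted in the discussion following \prettyref{lem:l36}: for any CNF-formula $\phi$ on $n$ variables, connectivity of $G(\phi)$ can be decided in polynomial space by iterating over all pairs of candidate satisfying assignments and, for each such pair, checking reachability in $G(\phi)$ by guessing a path of length at most $2^n$ (which puts $st$-connectivity into NPSPACE$=$PSPACE by Savitch's theorem). Since every CNF($\mathcal{S}$)-formula is in particular a CNF-formula, this argument transfers verbatim and shows \noun{Conn}($\mathcal{S}$)$\in$PSPACE for any finite $\mathcal{S}$.

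Putting the two parts together yields the corollary. The main technical work has already been done elsewhere in the paper; the only point requiring care is checking that the hypothesis of \prettyref{cor:red-1} is exactly the one assumed here, which it is by construction. There is no genuine obstacle: this is a bookkeeping combination of \prettyref{cor:red-1}, \prettyref{thm:trich}, and the standard PSPACE upper bound for connectivity in implicitly defined subgraphs of the hypercube.
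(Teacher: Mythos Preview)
Your proposal is correct and matches the paper's approach: the corollary is stated immediately after \prettyref{cor:red-1} with the remark that the hardness results from \prettyref{thm:trich} transfer to the no-constants case via that reduction, and the PSPACE upper bound is inherited from the case with constants (ultimately the argument following \prettyref{lem:l36}). Your write-up simply makes these steps explicit.
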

Here we end our investigation of no-constants formulas. It is easy
to check that thus, the complexity of \noun{Conn($\mathcal{S}$) }remains
open for sets $\mathcal{S}$ that are\noun{ }0-valid, 1-valid, or
complementive, but not Schaefer, nor nc-CPSS, nor quasi disconnecting\noun{
(}See also Table \noun{\ref{tab:cl}).}
\begin{example}
The relation $R_{{\rm {NAE}}}=\{0,1\}^{3}\setminus\{000,111\}$ is
complementive, but not safely tight and not quasi disconnecting, so
we only know that \noun{Conn($\mathcal{S}$) }is in PSPACE.
\end{example}
\newpage{}

\section{Quantified Constraints}

Now we look in the other direction and examine connectivity for more
powerful versions of CNF\textsubscript{C}($\mathcal{S}$)-formulas
by allowing quantifiers. Since it is easy to transform any quantified
formula into one in prenex normal form, we will assume all formulas
to have that form, i.e.
\[
Q_{1}y_{1}\cdots Q_{m}y_{m}\phi(y_{1},\ldots,y_{m},x_{1},\ldots,x_{n}),
\]
where $\phi$ is a CNF\textsubscript{C}($\mathcal{S}$)-formula,
and $Q_{1},\ldots,Q_{m}\in\{\exists,\forall\}$ are quantifiers. We
call these expressions \emph{Q-CNF\textsubscript{C}($\mathcal{S}$)-formulas}
and denote the corresponding connectivity resp. $st$-connectivity
problems by \noun{Q-Conn}\textsubscript{C}($\mathcal{S}$) resp.
\noun{st-Q-Conn}\textsubscript{C}($\mathcal{S}$); the solution graph
only involves the free variables$x_{1},\ldots,x_{n}$.

Here, we can present a complete classification for both connectivity
problems and the diameter, stated in the following two theorems and
summarized in the table below:

\begin{table}[!h]
\begin{tabular}{|c||c|c||c|c|}
\hline 
{\footnotesize{}$\mathcal{S}$} & \noun{\footnotesize{}Q-Conn\textsubscript{C}}{\footnotesize{}($\mathcal{S}$) } & \noun{\footnotesize{}Conn\textsubscript{C}}{\footnotesize{}($\mathcal{S}$) } & \noun{\footnotesize{}st-Q-Conn\textsubscript{C}}{\footnotesize{}($\mathcal{S}$) } & \noun{\footnotesize{}st-Conn\textsubscript{C}}{\footnotesize{}($\mathcal{S}$) }\tabularnewline
\hline 
\hline 
{\footnotesize{}not safely tight} & \multirow{2}{*}{{\footnotesize{}PSPACE-c.}} & {\footnotesize{}PSPACE-c.} & \multirow{2}{*}{{\footnotesize{}PSPACE-c.}} & {\footnotesize{}PSPACE-c.}\tabularnewline
\cline{1-1} \cline{3-3} \cline{5-5} 
{\footnotesize{}safely tight, not Schaefer} &  & \multirow{3}{*}{{\footnotesize{}coNP-c.}} &  & \multirow{6}{*}{{\footnotesize{}in P}}\tabularnewline
\cline{1-2} \cline{4-4} 
{\footnotesize{}Horn, not c. I$-$ /} & \multirow{4}{*}{{\footnotesize{}coNP-c.}} &  & \multirow{5}{*}{{\footnotesize{}in P}} & \tabularnewline
{\footnotesize{}dual Horn, not c. I$+$} &  &  &  & \tabularnewline
\cline{1-1} \cline{3-3} 
{\footnotesize{}Horn, c. I$-$, not I$-$ /} &  & \multirow{3}{*}{{\footnotesize{}in P}} &  & \tabularnewline
{\footnotesize{}dual Horn, c. I$+$, not I$+$} &  &  &  & \tabularnewline
\cline{1-2} 
{\footnotesize{}bijunctive / affine / I$-$ / I$+$} & {\footnotesize{}in P} &  &  & \tabularnewline
\hline 
\end{tabular}

\protect\caption[The classifications for Q-CNF\protect\textsubscript{C}($\mathcal{S}$)-formulas]{\emph{}The classifications for Q-CNF\protect\textsubscript{C}($\mathcal{S}$)-formulas,
in comparison to the case without quantifiers.\protect \\
c. = componentwise $\quad$I$-$ = IHSB$-$ $\quad$I$+$ = IHSB$+$}
\end{table}

\begin{thm}[Dichotomy theorem for\noun{ Q-CNF\textsubscript{C}}($\mathcal{S}$)\emph{-}formulas]
\label{thm:qdich} Let $\mathcal{S}$ be a finite set of logical
relations.
\begin{enumerate}
\item If $\mathcal{S}$ is Schaefer,\noun{ st-Q-Conn\textsubscript{C}($\mathcal{S}$)}
is in \noun{P, Q-Conn}\textsubscript{\noun{C}}\noun{($\mathcal{S}$)}
is in \noun{coNP}, and for every Q-CNF\textsubscript{C}($\mathcal{S}$)-formula
$\phi$, the diameter of $G(\phi)$ is linear in the number of free
variables.
\item Otherwise, both \noun{st-Q-Conn\textsubscript{C}($\mathcal{S}$)}
and \noun{Q-Conn\textsubscript{C}($\mathcal{S}$)} are \noun{PSPACE}-complete,
and there are Q-CNF\textsubscript{C}($\mathcal{S}$)-formulas $\phi$,
such that the diameter of $G(\phi)$ is exponential in the number
of free variables.
\end{enumerate}
\end{thm}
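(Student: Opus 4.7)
The plan splits along the Schaefer / non-Schaefer dichotomy: the Schaefer side is handled by a closure-under-quantification argument that reduces the quantified problems to their unquantified Schaefer analogues from Chapter~2, and the non-Schaefer side is handled by reducing from the general-CNF hardness results of Section~\ref{sec:ge} via Schaefer's plain expressibility.

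For the Schaefer case, the structural step to prove first is: if $\mathcal{S}$ is Schaefer and $\psi(\boldsymbol{x})$ is a Q-CNF\textsubscript{C}($\mathcal{S}$)-formula, then the relation $[\psi]$ on the free variables is again bijunctive, Horn, dual Horn, or affine, matching the Schaefer case of $\mathcal{S}$. By \prettyref{lem:clos} each of these classes is characterized by closure under a specific idempotent operation $f \in \{\mathrm{maj},\wedge,\vee,\oplus\}$. Closure under $f$ is obviously preserved under existential projection, and it is preserved under universal quantification too: if $\boldsymbol{x}^1,\ldots,\boldsymbol{x}^k$ all satisfy $\forall\boldsymbol{y}\,\phi(\boldsymbol{x},\boldsymbol{y})$ and $\boldsymbol{y}$ is arbitrary, then $(\boldsymbol{x}^i,\boldsymbol{y})\in[\phi]$ for each $i$, and by idempotency the coordinate-wise $f$-combination of $(\boldsymbol{x}^1,\boldsymbol{y}),\ldots,(\boldsymbol{x}^k,\boldsymbol{y})$ equals $(f(\boldsymbol{x}^1,\ldots,\boldsymbol{x}^k),\boldsymbol{y}) \in [\phi]$. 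Iterating along the quantifier prefix proves the claim, and linear diameter in the number of free variables is then an immediate consequence of \prettyref{lem:c47}.

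For the algorithmic part of the Schaefer case I would exploit the fact that the quantified CSP QCSP($\mathcal{S}$) is polynomial-time solvable for every Schaefer $\mathcal{S}$ (Kleine~B\"uning, Karpinski and Fl\"ogel for quantified Horn, with analogous algorithms for 2-CNF, dual Horn and affine). This yields an efficient membership oracle for $[\psi]$. Depending on the Schaefer subclass I would then either (i) extract a polynomially-sized defining formula of $[\psi]$ --- for bijunctive, probing all $O(n^2)$ 2-clauses via the oracle; for affine, determining the affine hull via $O(n)$ queries plus Gaussian elimination --- and invoke the corresponding Chapter~2 algorithm, or (ii) for Horn / dual Horn, walk each of $\boldsymbol{s}$ and $\boldsymbol{t}$ monotonically down (respectively up) by oracle-assisted greedy bit flips to the unique locally minimal (maximal) solution of its component, which is guaranteed to exist by \prettyref{lem:l45} and its dual, and compare. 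This places \noun{st-Q-Conn}\textsubscript{\noun{C}}($\mathcal{S}$) in P. Membership of \noun{Q-Conn}\textsubscript{\noun{C}}($\mathcal{S}$) in coNP then follows routinely: a disconnection witness is a pair of solutions certified to be in distinct components via the P-algorithm just described.

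For the non-Schaefer case I would invoke Schaefer's expressibility theorem (as used after \prettyref{thm:sch}): for non-Schaefer $\mathcal{S}$, every Boolean relation $R$ admits a representation $R(\boldsymbol{x}) = \exists\boldsymbol{y}\,\phi(\boldsymbol{x},\boldsymbol{y})$ with $\phi$ a CNF\textsubscript{C}($\mathcal{S}$)-formula. In contrast to Section~\ref{sub:struc}, the \emph{structural} strengthening is unnecessary here, because in the quantified setting we measure connectivity only on the free variables. So, given a 3-CNF formula $\phi'(\boldsymbol{x})$ we replace every constraint $R_i(\boldsymbol{x}_i)$ by its existential expression $\phi_i(\boldsymbol{x}_i,\boldsymbol{y}_i)$ with fresh auxiliary variables, and set $\psi'(\boldsymbol{x}) := \exists\boldsymbol{y}_1\cdots\boldsymbol{y}_m\,\bigwedge_i \phi_i$; then $[\psi']=[\phi']$ and hence $G(\psi')=G(\phi')$ on the nose. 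Composing this polynomial-time reduction with \prettyref{lem:l36} yields PSPACE-hardness of both \noun{st-Q-Conn}\textsubscript{\noun{C}}($\mathcal{S}$) and \noun{Q-Conn}\textsubscript{\noun{C}}($\mathcal{S}$); composing it with \prettyref{lem:l37} yields Q-CNF\textsubscript{C}($\mathcal{S}$)-formulas whose diameter is exponential in the number of free variables. Membership in PSPACE is standard (guess-and-verify a path of length at most $2^n$ over the free variables, using the general QBF-in-PSPACE to test membership in $[\psi]$ at each step). The main obstacle I expect is the clean uniform statement of the closure-under-universal-quantification argument across all four Schaefer subclasses, together with verifying rigorously that the oracle-assisted greedy walk in the Horn/dual-Horn cases really terminates in $O(n)$ steps even though $[\psi]$ is only implicitly specified.
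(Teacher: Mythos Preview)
Your proposal is correct and mirrors the paper's strategy: for Schaefer $\mathcal{S}$, show that the relevant polymorphism ($\mathrm{maj}$, $\wedge$, $\vee$, or ternary $\oplus$) survives both quantifier types and then inherit the structural and algorithmic results of Chapter~2; for non-Schaefer $\mathcal{S}$, reduce from 3-CNF via Schaefer's plain (not structural) expressibility, exactly as the paper does, noting that in the quantified setting the auxiliary $\boldsymbol{y}$-variables can simply be absorbed into the prefix.

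Two points of comparison. First, your idempotency argument for the $\forall$-case is slightly slicker than the paper's: the paper proves a more general preservation lemma (closure under any non-constant operation survives both quantifiers), and then---somewhat unnecessarily---singles out the affine case and handles it via a separate quantifier-elimination algorithm from Bauland's thesis. Your uniform treatment via idempotency covers all four Schaefer classes at once and makes that detour superfluous for the diameter bound. Second, on the algorithmic side the paper is more economical: it simply reruns the path-following algorithms behind Corollaries~\ref{lem:c44} and~\ref{lem:c46}, using Schaefer's result that quantified evaluation over Schaefer $\mathcal{S}$ is in P as the membership test at each step. Your plan of extracting an explicit $2$-CNF or affine representation via $O(n^2)$ satisfiability probes also works, but note that those probes are satisfiability-with-two-variables-fixed queries, not pure membership queries, so you are implicitly using the stronger fact that QCSP$(\mathcal{S})$ (not just evaluation) is in P for Schaefer $\mathcal{S}$. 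The Horn/dual-Horn greedy walk you describe \emph{is} the paper's algorithm, and the termination worry you flag is not real: each accepted flip strictly decreases (resp.\ increases) Hamming weight, so at most $n$ steps suffice.
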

\begin{proof}
See \prettyref{sub:di}.\end{proof}
\begin{thm}[Trichotomy theorem for \noun{Q-Conn}\textsubscript{C}($\mathcal{S}$)]
\label{thm:qtrich} Let $\mathcal{S}$ be a finite set of logical
relations.
\begin{enumerate}
\item If $\mathcal{S}$ is bijunctive, IHSB$-$, IHSB$+$ or affine, \noun{Q-Conn\textsubscript{C}($\mathcal{S}$)}
is in \noun{P}.
\item Else if $\mathcal{S}$ is Schaefer, \noun{Q-Conn}\textsubscript{\noun{C}}\noun{($\mathcal{S}$)}
is \noun{coNP}-complete.
\item Else,\emph{ }\noun{Q-Conn}\textsubscript{\noun{C}}\noun{($\mathcal{S}$)}
is \noun{PSPACE}-complete.
\end{enumerate}
\end{thm}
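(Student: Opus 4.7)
The plan is to extend the proofs of \prettyref{thm:trich} to the quantified case.

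\textbf{Part 1 (P).} For $\mathcal{S}$ bijunctive, IHSB$-$, IHSB$+$, or affine, the key observation is that each of these four classes is defined by closure under a single idempotent polymorphism (by \prettyref{lem:clos}: majority, $x\wedge(y\vee z)$, $x\vee(y\wedge z)$, and $x\oplus y\oplus z$, respectively). Closure under an idempotent polymorphism is preserved by substitution of constants, conjunction, and existential projection, and, since $\forall y\,\phi\equiv\phi[y/0]\wedge\phi[y/1]$, also by universal quantification. Hence the relation $[\phi]$ defined by any Q-CNF\textsubscript{C}($\mathcal{S}$)-formula lies in the same class. I would then eliminate quantifiers innermost-to-outermost, handling $\exists y$ via resolution (bijunctive, IHSB$\pm$) or Gaussian elimination (affine), and $\forall y$ via the substitution equivalence above, keeping the intermediate formula in a polynomially bounded canonical form. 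The resulting CNF\textsubscript{C}($\mathcal{S}'$)-formula has $\mathcal{S}'$ in the same class, in particular CPSS, so the polynomial-time algorithm of \prettyref{cor:alg-1} decides connectivity.

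\textbf{Part 3 (PSPACE-completeness).} The upper bound is standard: a path in $G(\phi)$ can be guessed step-by-step, the validity of each vertex as a solution being a QBF query answerable in PSPACE, so \noun{Q-Conn}\textsubscript{C}($\mathcal{S}$) and \noun{st-Q-Conn}\textsubscript{C}($\mathcal{S}$) lie in $\mathrm{NPSPACE}=\mathrm{PSPACE}$. For hardness when $\mathcal{S}$ is not safely tight, the reduction of \prettyref{cor:pc} already suffices, since every CNF\textsubscript{C}($\mathcal{S}$)-formula is itself a Q-CNF\textsubscript{C}($\mathcal{S}$)-formula. For $\mathcal{S}$ safely tight but not Schaefer, I would reduce from quantified satisfiability over $\mathcal{S}$, which is PSPACE-complete for any non-Schaefer $\mathcal{S}$ by the quantified Schaefer dichotomy, using that $x\neq y$ (and hence $x=y$) is expressible from $\mathcal{S}$ by Juban's result invoked in \prettyref{lem:l48}. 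Concretely, given a QBF $\psi=Q_1y_1\cdots Q_my_m\,\bigwedge_i C_i$, I would construct $\phi(x_1,x_2)=Q_1y_1\cdots Q_my_m\,\bigwedge_i\bigl(C_i\vee(x_1=x_2)\bigr)$, so that $[\phi]=\{0,1\}^2$ if $\psi$ is true and $[\phi]=\{00,11\}$ if $\psi$ is false. Each disjunction $C_i\vee(x_1=x_2)$ is expressible as a CNF\textsubscript{C}($\mathcal{S}$)-formula (with auxiliary existential variables that are absorbed into the prefix) by Schaefer's expressibility theorem, since $\mathcal{S}$ is not Schaefer.

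\textbf{Part 2 (coNP-completeness).} The upper bound again uses the Part-1 closure argument: for Schaefer $\mathcal{S}$, $[\phi]$ is Schaefer and hence safely tight by \prettyref{lem:l42}, so connectivity lies in coNP via \prettyref{lem:c47}. For the lower bound, coNP-hardness is inherited from \prettyref{cor:co} whenever $\mathcal{S}$ is Schaefer but not CPSS; the genuinely new cases are those CPSS sets that are Horn and safely componentwise IHSB$-$ but not themselves IHSB$-$ (and, dually, dual Horn and safely componentwise IHSB$+$ but not IHSB$+$). For these I would adapt \prettyref{lem:exp m} by using a quantifier prefix to structurally express the relation $M$ from $\mathcal{S}$, and then invoke \prettyref{lem:m har} to finish.

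The main obstacle lies in Part 1: arranging the quantifier elimination so that the size of the intermediate representation stays polynomial despite deeply alternating quantifier prefixes. For bijunctive this is controlled by there being only $O(n^2)$ distinct 2-clauses over $n$ variables, and for affine by Gaussian elimination keeping the equation system in $O(n)$-sized row-echelon form; but for IHSB$-$ and IHSB$+$, where a natural CNF representation may contain exponentially many restraint clauses, designing a polynomial-size canonical form that is preserved under every $\forall$-expansion is the most delicate step.
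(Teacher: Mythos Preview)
Your Part~3 argument is correct but works harder than necessary. The paper's observation is that once existential quantifiers are available, plain expressibility (\prettyref{def:expr}) already preserves the solution graph exactly: if $R(\boldsymbol{x})\equiv\exists\boldsymbol{y}\,\varphi(\boldsymbol{x},\boldsymbol{y})$ with $\varphi$ a CNF\textsubscript{C}($\mathcal{S}$)-formula, then the Q-CNF\textsubscript{C}($\mathcal{S}$)-formula $\exists\boldsymbol{y}\,\varphi$ has $G(\exists\boldsymbol{y}\,\varphi)=G(R)$ by definition. Since Schaefer's expressibility theorem gives every relation from any non-Schaefer $\mathcal{S}$, any 3-CNF instance of \noun{Conn} or \noun{st-Conn} becomes, after replacing each clause by its expression and pulling the existentials to the front, a Q-CNF\textsubscript{C}($\mathcal{S}$)-instance with \emph{identical} solution graph. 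This handles all non-Schaefer $\mathcal{S}$ uniformly (see the proof of \prettyref{thm:qdich}); your case split and the separate QSAT reduction are superfluous.

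Your Part~2 plan matches the paper's; the ``adapt \prettyref{lem:exp m} with a quantifier prefix'' step is carried out as \prettyref{lem:M2}.

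The genuine gap is in Part~1, and you have correctly located it. The paper explicitly remarks (proof of \prettyref{lem:ihsb}) that quantifier elimination for Horn formulas can blow up exponentially, so a polynomial-size canonical form stable under alternating $\forall$-expansions is not available for IHSB$\pm$. The paper's route is different and two-stage. First, it invokes the Bubeck--Kleine~B\"uning algorithm that removes \emph{universal} quantifiers from quantified Horn formulas in polynomial time, leaving a purely existential prefix $\exists x_1\cdots\exists x_m\,\phi_0$. Second, rather than eliminating the existentials, it proves a direct connectivity criterion on $\phi_0$: the formula is disconnected iff two \emph{free} variables $x,y$ satisfy $x\in\mathrm{Imp}_{\phi_0}(y)$, $y\in\mathrm{Imp}_{\phi_0}(x)$, and $\mathrm{Imp}_{\phi_0}(x)$ contains no restraint set---a condition checkable in polynomial time, and shown (by an induction over one-variable projections) to be invariant under the remaining existentials. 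The bijunctive case is then reduced to IHSB$-$ via the Makino--Tamaki--Yamamoto variable-flipping trick (\prettyref{lem:bijunc}), and the affine case uses Bauland's polynomial quantifier elimination for linear systems (\prettyref{lem:affin}). Your $O(n^2)$-clause bound for bijunctive and Gaussian elimination for affine would work, but for IHSB$\pm$ you need this detour rather than a canonical-form argument.
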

\begin{proof}
1. See Lemmas \ref{lem:ihsb}, \ref{lem:bijunc}, and \ref{lem:affin}.

2. See \prettyref{cor:coq}.

3. This follows from \prettyref{thm:qdich}.
\end{proof}

\subsection{\label{sub:di}Properties that Persist}

Recall from \prettyref{lem:clos} that bijunctive, Horn / dual Horn,
affine, and IHSB$-$ / IHSB$+$ relations are characterized by closure
properties. We begin by showing that these properties are retained
when we quantify over some variables.
\begin{lem}
Let $R$ be a logical relation that is closed under the coordinate-wise
application of some operation $f$.
\begin{enumerate}
\item The relation obtained by quantifying existentially over some variable
of $R$ is also closed under $f$.
\item If $f$ is not constant, the relation obtained by quantifying universally
over some variable of $R$ is also closed under $f$.
\end{enumerate}
\end{lem}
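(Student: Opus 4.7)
The plan is to chase definitions through the quantifiers and transfer closure of $R$ under $f$ to the quantified relation by exhibiting suitable witnesses (for $\exists$) or by picking suitable assignments to the quantified variable (for $\forall$). Without loss of generality, I would assume we quantify over the last variable of an $n$-ary relation $R$ of arity $n\ge 2$; the general case follows by permuting coordinates.

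For part 1, let $R'(x_1,\ldots,x_{n-1}) := \exists y\, R(x_1,\ldots,x_{n-1},y)$. Given $\boldsymbol{a}^1,\ldots,\boldsymbol{a}^m \in R'$, I would pick for each $i$ a witness $y^i\in\{0,1\}$ with the extended tuple $(\boldsymbol{a}^i,y^i)\in R$. Closure of $R$ under $f$ applied coordinate-wise to these $m$ extended tuples then yields $(f(\boldsymbol{a}^1,\ldots,\boldsymbol{a}^m),\,f(y^1,\ldots,y^m)) \in R$, so $f(y^1,\ldots,y^m)$ is a valid witness for $f(\boldsymbol{a}^1,\ldots,\boldsymbol{a}^m)\in R'$.

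For part 2, let $R''(x_1,\ldots,x_{n-1}) := \forall y\, R(x_1,\ldots,x_{n-1},y)$, i.e.\ $\boldsymbol{a}\in R''$ iff both $(\boldsymbol{a},0)\in R$ and $(\boldsymbol{a},1)\in R$. Given $\boldsymbol{a}^1,\ldots,\boldsymbol{a}^m\in R''$, I must verify that $(f(\boldsymbol{a}^1,\ldots,\boldsymbol{a}^m),y)\in R$ for \emph{both} $y=0$ and $y=1$. The key idea is that since $f$ is not constant, it is surjective onto $\{0,1\}$, so for any target $y\in\{0,1\}$ I can choose $b_1,\ldots,b_m\in\{0,1\}$ with $f(b_1,\ldots,b_m)=y$. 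Each extended tuple $(\boldsymbol{a}^i,b_i)$ lies in $R$ by definition of $R''$, and applying closure of $R$ under $f$ to these tuples yields $(f(\boldsymbol{a}^1,\ldots,\boldsymbol{a}^m),\,y)\in R$, as required.

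The only delicate point, which I would emphasise explicitly, is the ``not constant'' hypothesis in part 2: if $f$ were, say, constantly $0$, then the coordinate-wise construction could only produce extended tuples with last coordinate $0$, and there would be no way to witness $(f(\boldsymbol{a}^1,\ldots,\boldsymbol{a}^m),1)\in R$; indeed, $R = \{0,1\}^{n-1}\times\{0\}$ is a trivial counterexample. Apart from this observation, both parts are essentially routine bookkeeping, and I do not anticipate any genuine obstacle.
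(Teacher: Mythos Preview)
Your proposal is correct and follows essentially the same argument as the paper: pick witnesses for the existential case and exploit surjectivity of a non-constant $f$ to hit both values of the quantified variable in the universal case. The only cosmetic differences are that the paper quantifies over the first coordinate rather than the last, and does not include your explicit counterexample showing the necessity of the ``not constant'' hypothesis.
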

\begin{proof}
1. Let $R$ be a $n+1$-ary relation, consisting of $m$ vectors $(a^{i},b_{1}^{i},\ldots,b_{n}^{i})$,
$i=1,\ldots,m$, that is closed under the coordinate-wise application
of the $k$-ary relation $f$, i.e. 
\[
\left(f(a^{i_{1}},\ldots,a^{i_{k}}),\,f(b_{1}^{i_{1}},\ldots,b_{1}^{i_{k}}),\ldots,\,f(b_{n}^{i_{1}},\ldots,b_{n}^{i_{k}})\right)\in R
\]
 for all $1\leq i_{1},\ldots,i_{k}\leq m$. Let the relation $R'=\exists x\,R(x,\boldsymbol{y})$
be obtained, w.l.o.g., by quantifying existentially over the first
variable. If then $\boldsymbol{b}^{i_{1}},\ldots,\boldsymbol{b}^{i_{k}}\in R'$,
also $\left(f(b_{1}^{i_{1}},\ldots,b_{1}^{i_{k}}),\ldots,\,f(b_{n}^{i_{1}},\ldots,b_{n}^{i_{k}})\right)\in R'$
since for each $\boldsymbol{b}^{i}\in R'$, $(0,\boldsymbol{b}^{i})\in R$
or $(1,\boldsymbol{b}^{i})\in R$. Thus $R$ also contains
\[
\left(0,\,f(b_{1}^{i_{1}},\ldots,b_{1}^{i_{k}}),\ldots,\,f(b_{n}^{i_{1}},\ldots,b_{n}^{i_{k}})\right)
\]
 or 
\[
\left(1,\,f(b_{1}^{i_{1}},\ldots,b_{1}^{i_{k}}),\ldots,\,f(b_{n}^{i_{1}},\ldots,b_{n}^{i_{k}})\right).
\]

2. Let $R$ and $f$ be as in 1., but $f$ not constant, and let $R'=\forall x\,R(x,\boldsymbol{y})$.
If then $\boldsymbol{b}^{i_{1}},\ldots,\boldsymbol{b}^{i_{k}}\in R'$,
also $\left(f(b_{1}^{i_{1}},\ldots,b_{1}^{i_{k}}),\ldots,\,f(b_{n}^{i_{1}},\ldots,b_{n}^{i_{k}})\right)\in R'$:
Since $f$ is not constant, we can chose values $a_{0}^{i_{1}},\ldots,a_{0}^{i_{k}}\in\{0,1\}$
such that $f(a_{0}^{i_{1}},\ldots,a_{0}^{i_{k}})=0$, and $a{}_{1}^{i_{1}},\ldots,a{}_{1}^{i_{k}}\in\{0,1\}$
such that $f(a{}_{1}^{i_{1}},\ldots,a{}_{1}^{i_{k}})=1$. Since for
each $\boldsymbol{b}^{i}\in R'$ both $(0,\boldsymbol{b}^{i})\in R$
and $(1,\boldsymbol{b}^{i})\in R$, $R$ also contains both 
\[
\left(f(a_{0}^{i_{1}},...,a_{0}^{i_{k}}),\,f(b_{1}^{i_{1}},...,b_{1}^{i_{k}}),...,\,f(b_{n}^{i_{1}},...,b_{n}^{i_{k}})\right)=\left(0,\,f(b_{1}^{i_{1}},...,b_{1}^{i_{k}}),...,\,f(b_{n}^{i_{1}},...,b_{n}^{i_{k}})\right)
\]
 and 
\[
\left(f(a{}_{1}^{i_{1}},...,a{}_{1}^{i_{k}}),\,f(b_{1}^{i_{1}},...,b_{1}^{i_{k}}),...,\,f(b_{n}^{i_{1}},...,b_{n}^{i_{k}})\right)=\left(1,\,f(b_{1}^{i_{1}},...,b_{1}^{i_{k}}),...,\,f(b_{n}^{i_{1}},...,b_{n}^{i_{k}})\right).
\]
\end{proof}
\begin{cor}
\label{cor:quanti closed}Any relation obtained by arbitrarily quantifying
over some variables of a bijunctive (Horn, dual Horn, affine, IHSB$-$,
IHSB$+$) relation is itself bijunctive (Horn, dual Horn, affine,
IHSB$-$, IHSB$+$).
\end{cor}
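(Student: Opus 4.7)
The plan is to combine the two parts of the preceding lemma with the closure characterisations from \prettyref{lem:clos}. The key observation is that each of the six classes in question is described by closure under some fixed Boolean operation, and that every such operation is non-constant, so that both parts of the preceding lemma apply.

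First, I would recall from \prettyref{lem:clos} the operations witnessing each property: bijunctive corresponds to closure under $\mathrm{maj}(x,y,z)$, Horn to closure under $x \wedge y$, dual Horn to closure under $x \vee y$, affine to closure under $x \oplus y \oplus z$, IHSB$-$ to closure under $x \wedge (y \vee z)$, and IHSB$+$ to closure under $x \vee (y \wedge z)$. Each of these operations takes both values $0$ and $1$ on appropriate inputs (e.g.\ $\mathrm{maj}(0,0,0)=0$ and $\mathrm{maj}(1,1,1)=1$, and similarly for the others), so each is non-constant in the sense required by part~2 of the preceding lemma.

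Next, let $R$ be an $n$-ary relation in one of the listed classes, with corresponding operation $f$, and let $R'$ be obtained from $R$ by prefixing a block of quantifiers $Q_1 y_1 \cdots Q_k y_k$ with $Q_i \in \{\exists, \forall\}$. I would proceed by induction on $k$. For $k=0$ there is nothing to show. For the inductive step, write $R' = Q_1 y_1\, R''$, where $R''$ is obtained from $R$ by the remaining $k-1$ quantifications and hence, by the induction hypothesis, is still closed under $f$. If $Q_1 = \exists$, then part~1 of the preceding lemma immediately yields that $R'$ is closed under $f$; if $Q_1 = \forall$, then since $f$ is non-constant, part~2 applies and again $R'$ is closed under $f$. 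Invoking \prettyref{lem:clos} once more, $R'$ belongs to the same class as $R$.

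There is no real obstacle here: the whole content is the observation that the six defining operations are non-constant, so that universal quantification behaves as well as existential quantification. I would only note that the argument uses the fact that the preceding lemma is stated for single-variable quantification, which is why the induction on the number of quantifiers is needed; but since every prenex block decomposes into such single steps, this is routine.
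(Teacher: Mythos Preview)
Your proof is correct and takes essentially the same approach as the paper: the corollary is stated without an explicit proof, following immediately from the preceding lemma together with the closure characterisations of \prettyref{lem:clos}, and your write-up simply spells out that implicit argument (noting the non-constancy of the six operations and inducting on the number of quantifiers).
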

We are now ready to prove the dichotomy theorem:
\begin{proof}[Proof of \prettyref{thm:qdich}]
 1. We first show that the structural properties proved for \noun{CNF\textsubscript{C}($\mathcal{S}$)}-formulas
with safely tight sets $\mathcal{S}$ in \prettyref{sec:stru} also
hold for Q-CNF\textsubscript{C}($\mathcal{S}$)-formulas if $\mathcal{S}$
is Schaefer.

If $\mathcal{S}$ is bijunctive or Horn (dual Horn), any CNF\textsubscript{C}($\mathcal{S}$)-formula
$\phi$ is itself bijunctive, resp. Horn (dual Horn), so by \prettyref{cor:quanti closed},
this also holds for any Q-CNF\textsubscript{C}($\mathcal{S}$)-formula.
Now since by \prettyref{lem:l42} every bijunctive relation is safely
componentwise bijunctive, and every Horn resp. dual Horn relation
is safely OR-free resp. safely NAND-free, the structural properties
stated in \prettyref{lem:l43} and \prettyref{lem:l45} apply to Q-CNF\textsubscript{C}($\mathcal{S}$)-formulas
also, and the statement for the diameter follows as in \prettyref{lem:c44}
and \prettyref{lem:c46}.

The above reasoning does not work for affine sets $\mathcal{S}$ since
for such $\mathcal{S}$, CNF\textsubscript{C}($\mathcal{S}$)-formula
are not necessarily affine itself. But nevertheless, the relations
obtained by arbitrarily quantifying over CNF\textsubscript{C}($\mathcal{S}$)-formulas
are expressible as CNF\textsubscript{C}($\mathcal{S}'$)-formulas
for some affine set $\mathcal{S}'$: Figure 4.2 of \citep{DBLP:phd/de/Bauland2007}
shows an algorithm to transform any Q-CNF\textsubscript{C}($\mathcal{S}$)
formula $\phi$ into an equivalent system of linear equations, i.e.
conjunction of affine expressions. Now since by \prettyref{lem:l42},
affine relations are safely componentwise bijunctive, safely OR-free,
and safely NAND-free\emph{, }the structural properties from \prettyref{lem:l43}
and \prettyref{lem:l45} apply to Q-CNF\textsubscript{C}($\mathcal{S}$)-formulas
also for affine $\mathcal{S}$.

It remains to prove that \noun{st-Q-Conn\textsubscript{C}($\mathcal{S}$)}
is in \noun{P }and\noun{ Q-Conn}\textsubscript{\noun{C}}\noun{($\mathcal{S}$)}
is in \noun{coNP}. The algorithms given in the proofs of \prettyref{lem:c44}
and \prettyref{lem:c46} for showing that \noun{st-Conn\textsubscript{C}($\mathcal{S}$)}
is in \noun{P }and\noun{ Conn}\textsubscript{\noun{C}}\noun{($\mathcal{S}$)}
is in \noun{coNP }are by following paths in the solution graph in
a given direction. In doing so, the formula has to be evaluated for
a certain vector at each step; for Q-CNF\textsubscript{C}($\mathcal{S}$)-formulas,
this means assigning the free variables and then evaluating the fully
quantified formula. Now by Theorem 6.1 in \citep{Schaefer:1978:CSP:800133.804350},
the evaluation problem for quantified formulas is in P, so that the
algorithms from proofs of \prettyref{lem:c44} and \prettyref{lem:c46}
can also be used to solve the problems for Q-CNF\textsubscript{C}($\mathcal{S}$)-formulas
in polynomial time.

2. By Schaefer's ``expressibility theorem'' (Theorem 3.0 of \citep{Schaefer:1978:CSP:800133.804350}),
if $\mathcal{S}$ is not Schaefer, every Boolean relation is expressible
from $\mathcal{S}$ by existentially quantifying over some CNF\textsubscript{C}($\mathcal{S}$)
formula, and thus the statements follow from \prettyref{lem:l36}
and \prettyref{lem:l37}.
\end{proof}

\subsection{coNP-Completeness for Connectivity}

It remains to determine the complexity of \noun{Q-Conn\textsubscript{C}}
for Schaefer sets of relations. We begin by showing that with quantifiers,
we can extend the coNP-complete class. Since by \prettyref{lem:m har},
connectivity for $\mathcal{S}=\left\{ \left(x\vee\overline{y}\vee\overline{z}\right)\wedge\left(\overline{x}\vee z\right)\right\} $
is coNP-hard already for \noun{CNF\textsubscript{C}($\mathcal{S}$)}-formulas,
the following lemma shows that \noun{Q-Conn}\textsubscript{\noun{C}}($\mathcal{S}$)
is coNP-hard for all sets of Horn relations that are not IHSB$-$.
\begin{lem}
\label{lem:M2}The relation $M=\left(x\vee\overline{y}\vee\overline{z}\right)\wedge\left(\overline{x}\vee z\right)$
is expressible as an existentially quantified \noun{CNF\textsubscript{C}($\{R\}$)}-formula
for every Horn relation $R$ that is not IHSB$-$.\end{lem}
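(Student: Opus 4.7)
The proof splits naturally into two cases. If $R$ is Horn but not safely componentwise IHSB$-$, then Lemma~\ref{lem:exp m} already expresses $M$ as a quantifier-free \emph{CNF\textsubscript{C}}($\{R\}$)-formula, which is trivially an existentially quantified one; so the substantive case is when $R$ is Horn and safely componentwise IHSB$-$ but not IHSB$-$, and it is here that existential quantification is essential.

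For this substantive case I plan to follow the seven-step structure of Lemma~\ref{lem:exp m}'s proof closely, modifying only Step~1. Originally, Step~1 obtains, by identification of variables in $R$, a formula whose associated relation is not componentwise IHSB$-$; the ``safely'' hypothesis guaranteed that identification alone suffices. In our weaker setting every identification of $R$ is still componentwise IHSB$-$, so identification is no longer enough. However, the failure of $R$ to be IHSB$-$ yields vectors $a,b,c\in R$ with $a\wedge(b\vee c)\notin R$, which by componentwise IHSB$-$ must lie in at least two distinct components of $G(R)$. The plan is to use several copies of $R$ linked through existentially quantified auxiliary variables so as to merge these components in the projected relation, producing an existentially quantified \emph{CNF\textsubscript{C}}($\{R\}$)-formula whose relation is not componentwise IHSB$-$. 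Steps~2--7 of Lemma~\ref{lem:exp m}'s proof then go through unchanged---they use only substitution of constants, identification, and normalization---yielding $K$, $L$, or $M$, from which $M$ is built via the combinations given at the start of that proof.

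As a guiding special case, for $R=\{000,100,010,111\}$, which equals the AND-gate relation $z\leftrightarrow(x\wedge y)$ and is readily verified to be Horn and safely componentwise IHSB$-$ but not IHSB$-$, one checks directly that
\[
M(x,y,z)\;\equiv\;R(x,z,x)\;\wedge\;\exists a\bigl[R(y,z,a)\wedge R(a,x,a)\bigr],
\]
where $R(x,z,x)$ is the $2$-clause $\overline{x}\vee z$, $R(y,z,a)$ forces $a=y\wedge z$, and $R(a,x,a)$ is the $2$-clause $\overline{a}\vee x$; projecting out $a$ yields the $3$-clause $x\vee\overline{y}\vee\overline{z}$, and conjoining with the first clause gives $M$. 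The main obstacle of the proof is generalising this ``wire up the failing triple'' template to an arbitrary Horn, not-IHSB$-$ relation $R$.

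A more abstract alternative goes through the Galois correspondence: the relations expressible from $\{R\}$ by conjunction, identification, substitution of constants, and existential quantification form exactly $\mathrm{INV}(\mathrm{Pol}(\{R\}))$. Since $R$ is Horn, $\mathrm{Pol}(\{R\})$ contains the clone $\CloneE$ generated by $\wedge$ and the constants; since $R$ is not IHSB$-$, $\mathrm{Pol}(\{R\})$ does not contain $x\wedge(y\vee z)$. A short inspection of Post's lattice then forces $\mathrm{Pol}(\{R\})=\CloneE$, so the co-clone of $\{R\}$ is exactly $\mathrm{INV}(\CloneE)$, the class of all Horn relations, and $M$---being Horn---is expressible; making this effective, i.e., producing an actual formula, then reduces to the generic effective Galois construction. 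Either route suffices, but the constructive one matches the style of Lemma~\ref{lem:exp m}.
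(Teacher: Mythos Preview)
Your Galois-correspondence alternative is essentially a complete and correct proof, and is a genuinely different route from the paper's. One small correction: it is not true that $\mathrm{Pol}(\{R\})$ must contain the constant functions, so it need not equal $\mathsf{E}$. What you actually need is $\mathrm{Pol}(\{R,\{0\},\{1\}\})=\mathrm{Pol}(\{R\})\cap\mathsf{R}_2$; since $R$ is Horn this clone contains $\wedge$, and since $R$ is not IHSB$-$ it omits $x\wedge(y\vee z)$, whence by Post's lattice it equals $\mathsf{E}_2$. Then $\mathrm{INV}(\mathsf{E}_2)$ is the class of all Horn relations, which contains $M$. (In fact $\mathrm{Pol}(\{R\})$ itself is already one of $\mathsf{E}_2,\mathsf{E}_0,\mathsf{E}_1,\mathsf{E}$, and since $M$ is both $0$- and $1$-valid it lies in $\mathrm{INV}(\mathsf{E})$ and hence in every $\mathrm{INV}(\mathsf{E}_i)$; so constants are not even needed.)

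The paper's proof, by contrast, is fully constructive and does \emph{not} case-split on ``safely componentwise IHSB$-$'' at all. It reruns an analogue of the identification/substitution procedure of Lemma~\ref{lem:exp m} directly on $\nu(R)$, but with a weakened invariant: instead of maintaining ``$y$ is not implied'', it only maintains $y\notin\mathrm{Imp}(z)$ (together with $x\notin\mathrm{Imp}(y)$, $x\notin\mathrm{Imp}(z)$, $z\notin\mathrm{Imp}(y)$). This weaker invariant is what allows the procedure to go through for arbitrary Horn, not-IHSB$-$ relations; the price is that the final $3$-variable relation may be one of six possibilities $K,L,M,M',S,T$ rather than just $K,L,M$. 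The two new cases $S=(x\vee\overline{y}\vee\overline{z})\wedge(\overline{x}\vee y)\wedge(\overline{x}\vee z)$ and $T=(x\vee\overline{y}\vee\overline{z})\wedge(y\vee\overline{x}\vee\overline{z})\wedge(z\vee\overline{x}\vee\overline{y})$ are then reduced to $L$ via a single existential quantifier, $L=\exists w\,S(w,y,x)\wedge S(w,y,z)=\exists w\,T(w,y,x)\wedge T(w,y,z)$, and $L$ to $M$ as before.

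Your first, constructive approach---merge components via existentially quantified auxiliary variables and then feed the result into Lemma~\ref{lem:exp m} verbatim---is left incomplete (you flag the general gadget construction as the ``main obstacle''), and it is not clear it can be finished cleanly; the paper sidesteps this entirely by modifying the invariant rather than the input. Your Galois argument is shorter and conceptually cleaner; the paper's argument has the advantage of producing an explicit formula and of staying within the same hands-on style as Lemma~\ref{lem:exp m}.
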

\begin{proof}
We will use quantification only at the very end. As in the proof of
\prettyref{lem:exp m}, in the following numbered transformation steps,
we use identification and substitution to obtain one of a few simple
formulas from which we can then express $M$ as an existentially quantified
formula.

We again argue with formulas in normal form $\nu$; let $\phi_{0}=\nu(R)$.
Since $R$ is not IHSB$-$, $\phi_{0}$ contains a multi-implication
$c=x\vee\overline{y}\vee\overline{z}_{1}\vee\cdots\vee\overline{z}_{k}$
($k\geq1$).
\begin{enumerate}[resume]
\item \emph{Identify $z_{1},\ldots,x_{k}$, call the resulting variable
$z$.}
\end{enumerate}
This produces the clause $x\vee\overline{y}\vee\overline{z}$ from
$c$. Since by simplicity condition \ref{enu:red imp}, $x$ was not
implied by any set $U\subsetneq\{y,z_{1},\ldots,x_{k}\}$, and by
\ref{enu:red bra}, no $z_{i}$ was implied by $y$, and $y$ was
implied by no set $U\subseteq\{z_{1},\ldots,x_{k}\}$, it follows
that
\begin{itemize}[label= ({*})]
\item  $x\notin\mathrm{Imp}(y)$, $x\notin\mathrm{Imp}(z)$, $z\notin\mathrm{Imp}(y)$,
$y\notin\mathrm{Imp}(z)$.
\end{itemize}
In the following steps, we eliminate all variables other than $x,y,z,$
s.t. ({*}) is maintained. 
\begin{enumerate}[resume]
\item \emph{Substitute 1 for every variable from $\mathrm{Imp}(y)\cap\mathrm{Imp}(z)$.}
\end{enumerate}
By ({*}), the set of these variables cannot have implied $x,y,$ or
$z$, thus there can emerge no unit clause on $x,y$ or $z$; since
by simplicity condition \prettyref{enu:imp imp} the set of the substituted
variables contains no restraint set, $\phi$ cannot become unsatisfiable.
\begin{enumerate}[resume]
\item \emph{Identify all remaining variables from $\mathrm{Imp}(y)\setminus\{y\}$
with $y$.}
\end{enumerate}
Since none of these variables was implied by $z$, still $y\notin\mathrm{Imp}(z)$
and it is easy to see that the other conditions of ({*}) are also
maintained.
\begin{enumerate}[resume]
\item \emph{Identify all remaining variables from $\mathrm{Imp}(z)\setminus\{z\}$
with $z$.}
\end{enumerate}
Analogous to step 3, ({*}) is maintained. Now $\mathrm{Imp}(y)\setminus\{y\}$
and $\mathrm{Imp}(z)\setminus\{z\}$ are empty, so the last step is
easy:
\begin{enumerate}[resume]
\item \emph{Identify all remaining variables other than $x,y,z$ with $x$.} 
\end{enumerate}
The formula now contains only the variables $x,y,z$, and all clauses
satisfy ({*}). If follows that all clauses besides $c$ are from 
\[
\{z\vee\overline{x},\:z\vee\overline{x}\vee\overline{y},\:y\vee\overline{x},\:y\vee\overline{x}\vee\overline{z}\}.
\]
Considering all possible combinations of these clauses, we find that
the the formula is equivalent to $K=x\vee\overline{y}\vee\overline{z}$,
$L=\left(x\vee\overline{y}\vee\overline{z}\right)\wedge\left(\overline{x}\vee\overline{y}\vee z\right)$,
$M=\left(x\vee\overline{y}\vee\overline{z}\right)\wedge\left(\overline{x}\vee z\right)$,
$M'=\left(x\vee\overline{y}\vee\overline{z}\right)\wedge\left(\overline{x}\vee y\right)$,
$S=(x\vee\overline{y}\vee\overline{z})\wedge(y\vee\overline{x})\wedge(z\vee\overline{x})$
or $T=(x\vee\overline{y}\vee\overline{z})\wedge(y\vee\overline{z}\vee\overline{x})\wedge(z\vee\overline{x}\vee\overline{y})$.

We express $M$ from $M'$ by permutation, and $L$ from $S$ or $T$
as 
\[
L=\exists wS(w,y,x)\wedge S(w,y,z)=\exists wT(w,y,x)\wedge T(w,y,z).
\]
 Finally, we express $M$ from $K$ or $L$ as in the proof of \prettyref{lem:exp m}.\end{proof}
\begin{cor}
\label{cor:coq}If a finite set $\mathcal{S}$ of relations is Schaefer,
but not bijunctive, IHSB$-$, IHSB$+$ or affine, \noun{Q-Conn}\textsubscript{\noun{C}}\noun{($\mathcal{S}$)}
is \noun{coNP}-complete.\end{cor}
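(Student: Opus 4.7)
The plan is to decompose the statement into the membership part and the hardness part. Membership in coNP for all Schaefer $\mathcal{S}$ follows directly from \prettyref{thm:qdich}, part 1, which already gives that \noun{Q-Conn}\textsubscript{\noun{C}}($\mathcal{S}$) lies in coNP whenever $\mathcal{S}$ is Schaefer. So the real work is the coNP-hardness.

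For hardness, I would first observe that the hypothesis forces $\mathcal{S}$ into one of two symmetric cases: since $\mathcal{S}$ is Schaefer it is bijunctive, Horn, dual Horn, or affine; excluding bijunctive, affine, IHSB$-$ and IHSB$+$ leaves exactly ``Horn but not IHSB$-$'' or ``dual Horn but not IHSB$+$''. Handling the first case and invoking symmetry for the second is standard.

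Assume $\mathcal{S}$ is Horn with some relation $R \in \mathcal{S}$ that is not IHSB$-$. By \prettyref{lem:M2}, the relation $M = (x\vee\overline{y}\vee\overline{z})\wedge(\overline{x}\vee z)$ is expressible as an existentially quantified CNF\textsubscript{C}($\{R\}$)-formula; write this expression as $M(\boldsymbol{\xi}) \equiv \exists \boldsymbol{y}\, \varphi(\boldsymbol{\xi},\boldsymbol{y})$ with $\varphi$ a CNF\textsubscript{C}($\{R\}$)-formula. I would then reduce \noun{Conn}\textsubscript{C}($\{M\}$), which is coNP-hard by \prettyref{lem:m har}, to \noun{Q-Conn}\textsubscript{C}($\mathcal{S}$) as follows: given a CNF\textsubscript{C}($\{M\}$)-formula $\psi = \bigwedge_{i=1}^m M(\boldsymbol{\xi}^i)$ on free variables $\boldsymbol{x}$, replace each constraint $M(\boldsymbol{\xi}^i)$ by $\varphi(\boldsymbol{\xi}^i,\boldsymbol{y}^i)$ with a fresh auxiliary vector $\boldsymbol{y}^i$, and prefix the whole conjunction with existential quantifiers binding all the $\boldsymbol{y}^i$. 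This produces a Q-CNF\textsubscript{C}($\mathcal{S}$)-formula $\psi'$ whose relation on the free variables $\boldsymbol{x}$ is exactly the relation defined by $\psi$, so $G(\psi')=G(\psi)$ and connectivity is preserved. The transformation is clearly polynomial.

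The dual Horn / not IHSB$+$ case follows by the standard duality between Horn and dual Horn (negating all literals swaps the two and maps IHSB$-$ to IHSB$+$ and $M$ to its dual, which is also coNP-hard by the same argument). I do not anticipate a serious obstacle here, since the heavy lifting — the quantified expression of $M$ from any Horn but non-IHSB$-$ relation, and the coNP-hardness of \noun{Conn}\textsubscript{C}($\{M\}$) — has already been done in \prettyref{lem:M2} and \prettyref{lem:m har}. The only point that requires a moment's care is verifying that existentially quantifying out the auxiliary variables $\boldsymbol{y}^i$, which are fresh and distinct across the constraints, yields exactly the relation defined by $\psi$ on the $\boldsymbol{x}$ variables; this is immediate from the semantics of $\exists$ over disjoint variable blocks.
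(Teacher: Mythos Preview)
Your proposal is correct and follows essentially the same approach as the paper: membership from \prettyref{thm:qdich}, the case split into Horn/not IHSB$-$ versus dual Horn/not IHSB$+$, and hardness via \prettyref{lem:m har} combined with \prettyref{lem:M2}, with the dual case by symmetry. You are simply more explicit than the paper about how the two lemmas yield the reduction---spelling out that each $M$-constraint is replaced by its existentially quantified CNF\textsubscript{C}($\{R\}$)-expression with fresh auxiliary variables, so that the resulting Q-CNF\textsubscript{C}($\mathcal{S}$)-formula defines the same relation on the free variables---whereas the paper leaves this step implicit.
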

\begin{proof}
By \prettyref{thm:qdich}, the problem is in coNP. From the definitions
we find that $\mathcal{S}$ must be Horn and contain at least one
relation that is not IHSB$-$, or dual Horn and contain at least one
relation that is not IHSB$+$. In the first case, coNP-hardness follows
from \prettyref{lem:m har} with \prettyref{lem:M2}. The second case
is symmetric. 
\end{proof}

\subsection{Deciding Connectivity in Polynomial Time}

We are now left with\noun{ }bijunctive, IHSB$-$ / IHSB$+$, and affine
sets of relations. We will devise a polynomial-time algorithm for
connectivity in each case.
\begin{lem}
\label{lem:ihsb}If $\mathcal{S}$ is a set of IHSB$-$ or IHSB$+$
relations, there is a polynomial-time algorithm for \noun{Q-Conn\textsubscript{C}($\mathcal{S}$)}.\end{lem}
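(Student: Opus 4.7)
The plan is to combine \prettyref{cor:quanti closed} with the constraint-projection separation analysis of \prettyref{lem:prjs i} to reduce connectivity to checking every pair of free variables. By \prettyref{cor:quanti closed}, the relation $R(x_1,\dots,x_n)$ that the Q-CNF\textsubscript{C}($\mathcal{S}$)-formula $\Phi$ defines on its free variables is itself IHSB$-$. The key structural claim I would then establish is that such an $R$ is disconnected if and only if there exist indices $i\neq j$ for which the $2$-variable projection $\{(s_i,s_j):\boldsymbol{s}\in R\}$ equals $\{00,11\}$.

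The ``if'' direction is immediate: if every solution satisfies $(s_i,s_j)\in\{00,11\}$, no edge of $G(R)$ can flip only bit $i$ or only bit $j$, and the solutions split into two non-empty classes that cannot be joined by $1$-bit flips. For the converse I would represent $R$ by some (purely analytic, possibly exponentially large) IHSB$-$ CNF formula $\phi_R$ and apply \prettyref{lem:prjs i}. The proof of that lemma, run between the minimum solutions $\boldsymbol{a},\boldsymbol{b}$ of two distinct components, traces a cycle of implication $2$-clauses $x\vee\bar y$ inside $\phi_R$ that forces two variables $x_i,x_j$ to satisfy $(s_i,s_j)\in\{00,11\}$ on \emph{all} of $R$; since $\boldsymbol{a}$ realises $(1,1)$ and $\boldsymbol{b}$ realises $(0,0)$, the projection is exactly $\{00,11\}$. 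A short enumeration of the $2$-variable relations closed under $\wedge$ and under $x\wedge(y\vee z)$ further confirms that $\{00,11\}$ is the only disconnected $2$-variable IHSB$-$ relation, so the characterisation is sharp.

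Given this characterisation, the algorithm loops over all $O(n^{2})$ pairs $(i,j)$ and all four values $(a,b)\in\{0,1\}^{2}$ and asks whether
\[
\exists x_{1}\cdots x_{i-1}\,x_{i+1}\cdots x_{j-1}\,x_{j+1}\cdots x_{n}\;\Phi[x_{i}/a,\,x_{j}/b]
\]
holds, thereby recovering the projection of $R$ onto $\{x_{i},x_{j}\}$. Each such query is a closed Q-CNF\textsubscript{C}($\mathcal{S}$)-formula whose quantifier-free part is Horn, so by Theorem~6.1 of \citep{Schaefer:1978:CSP:800133.804350} (already invoked in the proof of \prettyref{thm:qdich}) it can be evaluated in polynomial time. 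Output ``connected'' iff no projection equals $\{00,11\}$; the total running time is polynomial.

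The IHSB$+$ case is handled symmetrically: \prettyref{cor:quanti closed} again furnishes an IHSB$+$ relation on the free variables, and the dual analysis of \prettyref{lem:prjs i} (swapping minima for maxima and using the dual Horn clause structure) again forces any disconnected $2$-variable projection to be $\{00,11\}$, so the identical algorithm applies. The principal point I expect to require care is invoking \prettyref{lem:prjs i} through the unconstructed, potentially huge CNF representation $\phi_R$: the argument only uses the \emph{existence} of the implication cycle inside $\phi_R$ and never enumerates its clauses, so the polynomial running time of the algorithm is unaffected and its correctness does not depend on the size of $\phi_R$.
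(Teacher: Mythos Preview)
Your argument is correct and takes a genuinely different route from the paper's proof. The paper first applies the Bubeck--B\"uning algorithm to eliminate universal quantifiers from the Horn matrix, obtaining a purely existential formula $\exists x_1\cdots\exists x_m\phi_0$, and then shows by an induction over quantifier elimination that disconnectedness is witnessed by two \emph{free} variables that mutually imply each other in $\phi_0$ while their implication closure avoids all restraint sets; this is checked by scanning the implication structure of $\phi_0$ directly.

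You bypass both the external quantifier-elimination result and the inductive argument: you extract from the proof of \prettyref{lem:prjs i} the sharper statement that an IHSB$-$ relation is disconnected iff some two-variable projection equals $\{00,11\}$, and then recover each such projection by four closed quantified-Horn evaluations via Schaefer's Theorem~6.1. This is more self-contained and the algorithm is simpler; the paper's approach, on the other hand, keeps the analysis at the syntactic level of the (polynomial-size) formula $\phi_0$ and ties the decision procedure explicitly to the self-implicating-set picture developed earlier. Both are valid; yours trades the dependency on \citep{DBLP:journals/dam/BubeckB08} for a black-box use of polynomial-time quantified-Horn evaluation.
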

\begin{proof}
The algorithm is essentially a modified version of Gopalan et al.'s
algorithm from the proof of Lemma 4.13 in \citep{gop}.

Assume $\mathcal{S}$ is IHSB$-$; the IHSB$+$ case is symmetric.
Let $\phi$ be any Q-CNF\textsubscript{C}($S$)-formula.

Since any CNF\textsubscript{C}($S$)-formula is itself IHSB$-$,
also $[\phi]$ can be expressed as an IHSB$-$ formula $\psi$ without
quantifiers due to \prettyref{cor:quanti closed}. So if we could
transform $\phi$ into $\psi$ in polynomial time, we could then simply
use the constraint-projection algorithm from \prettyref{lem:alg}
to decide connectivity. However, quantifier-elimination can lead to
an exponential increase of the formula size, even for Horn formulas
\citep{DBLP:journals/dam/BubeckB08}, and this seems to apply to IHSB$-$
formulas also. Thus we need another strategy.

Fortunately, there at least exists a polynomial time algorithm to
transform a quantified Horn formula into an equivalent Horn formula
with only existential quantifiers, described by Bubeck et al.\ in
Definition 8 of \citep{DBLP:journals/dam/BubeckB08}; we apply this
algorithm to $\phi$ to obtain an equivalent formula $\phi^{\exists}$
with only existential quantifiers.

Next we assign all variables in positive unit clauses (and remove
the corresponding quantifiers in the case of bound variables) to obtain
a connectivity-equivalent formula $\phi{}^{\exists}\mathsf{'}$ without
positive unit clauses; let $\phi{}^{\exists}\text{\textasciiacute}=\exists x_{1}\cdots\exists x_{m}\phi_{0}(x_{1},\ldots,x_{n})$,
where $\phi_{0}$ contains no quantifiers.

We show that
\begin{itemize}[label= ({*})]
\item $\phi{}^{\exists}\mathsf{'}$ (and thus $\phi$) is disconnected
iff

\begin{itemize}[label= (1)]
\item there are two free variables $x$ and $y$ of $\phi{}^{\exists}\mathsf{'}$
s.t. $x\in$$\mathrm{Imp}_{\phi_{0}}$($y$)\footnote{$\mathrm{Imp}_{\psi}$ denotes implication in formula $\psi$},
$y\in$$\mathrm{Imp}_{\phi_{0}}$($x$), and $\mathrm{Imp}_{\phi_{0}}$($x$)
contains no restraint set of $\phi_{0}$.
\end{itemize}
\end{itemize}
It is easy to see that condition (1) can be checked in polynomial
time as follows:
\begin{itemize}[label=]
\item For every free variable $x$ of $\phi'^{\exists}$:

\begin{itemize}[label=]
\item If $\mathrm{Imp}_{\phi_{0}}$($x$) contains no restraint set of
$\phi_{0}$,

\begin{itemize}[label=]
\item For each free variable $y$ from $\mathrm{Imp}_{\phi_{0}}$($x$):

\begin{itemize}[label=]
\item If $x\in$$\mathrm{Imp}_{\phi_{0}}$($y$), return ``disconnected''.
\end{itemize}
\end{itemize}
\end{itemize}
\item Return ``connected''.
\end{itemize}
To prove ({*}), consider the formula $\phi^{\sim}$ we would obtain
from $\phi{}^{\exists}\mathsf{'}$ by eliminating all quantifiers,
and then transforming into conjunctive normal form. It is clear that
also $\phi^{\sim}$ contains no positive unit-clauses. Further, $\phi^{\sim}$
is again IHSB$-$ by \prettyref{cor:quanti closed}, and thus Horn.
So by \prettyref{cor:horn conn}, $\phi^{\sim}$ (and therefore $\phi$)
is disconnected iff
\begin{itemize}[label= (2)]
\item $\phi^{\sim}$ has a non-empty maximal self-implicating set containing
no restraint set.
\end{itemize}
Since $\phi^{\sim}$ is IHSB$-$, it contains no multi-implication
clauses; therefore, in the hypergraph-representation, all implication
clauses of $\phi^{\sim}$ correspond to simple edges, not hyperedges.
Thus, the implication clauses of $\phi^{\sim}$ can be represented
as an ordinary digraph, and implication corresponds to reachability
in that digraph. With this, it is easy to see that (2) is equivalent
to
\begin{itemize}[label= (3)]
\item in $\phi^{\sim}$, there are two variables $x$ and $y$ with $x\in$Imp($y$),
$y\in$Imp($x$), and s.t. Imp($x$) contains no restraint set.
\end{itemize}
It remains to show that (1) is equivalent to (3). The proof will be
by induction. Therefor first note that $\phi^{\sim}$ can be obtained
from $\phi{}^{\exists}\mathsf{'}$ by eliminating the quantifiers
one by one, and always writing the resulting formula in conjunctive
normal form. It follows that if we define $\phi_{k}=\mathrm{CNF}(\phi_{k-1}[x_{k}/0]\vee\phi_{k-1}[x_{k}/1])$,
where CNF($\psi$) denotes a CNF-formula equivalent to $\psi$, then
$\phi^{\sim}=\phi_{m}$. Let

\[
\phi_{k}=c_{1}\wedge\cdots\wedge c_{p}\wedge d_{1}\wedge\cdots\wedge d_{q},
\]
where $c_{1},\ldots,c_{p}$ resp. $d_{1},\ldots,d_{q}$ are the clauses
containing $x_{k}$ resp. not containing $x_{k}$. Then 
\[
\phi_{k-1}[x_{k}/0]\vee\phi_{k-1}[x_{k}/1]\equiv\left(\bigwedge_{i,j}\,c_{i}[y_{k}/0]\vee c_{j}[y_{k}/1]\right)\wedge d_{1}\wedge\cdots\wedge d_{q}.
\]
The clauses $c_{1},\ldots,c_{p}$ are of the form $\overline{x}_{i}\vee x_{k}$,
$x_{i}\vee\overline{x_{k}}$, or $\overline{x}_{k}\vee\overline{x}_{i_{1}}\vee\cdots\vee\overline{x}_{i_{r}}$;
considering all combinations to the disjunctions $c_{i}[y_{k}/0]\vee c_{j}[y_{k}/1]$,
and discarding tautological clauses, it is then easy to see that $\phi_{k}$
may\footnote{a CNF representation is not unique} consist of exactly
the following clauses:
\begin{enumerate}
\item $d_{1},\ldots d_{q}$,
\item for each pair of clauses $\overline{x}_{i}\vee x_{k}$ and $x_{j}\vee\overline{x_{k}}$
of $\phi_{k-1}$ with $x_{i}\neq x_{j}$, we have $\overline{x}_{i}\vee x_{j}$
in $\phi_{k}$,
\item for each pair of clauses $\overline{x}_{i}\vee x_{k}$ and $\overline{x}_{k}\vee\overline{x}_{i_{1}}\vee\cdots\vee\overline{x}_{i_{r}}$
of $\phi_{k-1}$, we have $\overline{x}_{i}\vee\overline{x}_{i_{1}}\vee\cdots\vee\overline{x}_{i_{r}}$
in $\phi_{k}$.
\end{enumerate}
We can now show that for any two variables $x$ and $y$ with $x\neq x_{k}$
and $y\neq x_{k}$,
\begin{enumerate}[label=(\alph{enumi})]
\item \label{enu:-iff-}$x\in\mathrm{Imp}_{\phi_{k}}(y)$ iff $x\in\mathrm{Imp}_{\phi_{k-1}}(y)$,

\begin{itemize}[label=]
\item Proof: ``$\Longleftarrow$'': If $x\in\mathrm{Imp}_{\phi_{k-1}}(y)$,
there was a chain of clauses 
\[
\overline{y}\vee x_{w_{1}},\;\overline{x}_{w_{1}}\vee x_{w_{2}},\;\ldots,\;\overline{x}_{w_{m}}\vee x
\]
 in $\phi_{k-1}$; now if $x_{k}\notin\{x_{w_{1}},\ldots,x_{w_{k}}\}$,
all clauses of the chain are also in $\phi_{k}$ by 1., else a clause
``bridging'' $x_{k}$ is added by 2.
\item ``$\Longrightarrow$'': This is clear since all implication clauses
of $\phi_{k}$ are entailed by $\phi_{k-1}$.
\end{itemize}
\item $\mathrm{Imp}_{\phi_{k}}(x)$ contains a restraint set $U'$ of $\phi_{k}$
iff $\mathrm{Imp}_{\phi_{k-1}}(x)$ contains a restraint set $U$
of $\phi_{k-1}$.

\begin{itemize}[label=]
\item Proof: ``$\Longleftarrow$'': If $U$ did not contain $x_{k}$,
$U$ is also a restraint set of $\phi_{k}$, and by 1., $U\subseteq\mathrm{Imp}_{\phi_{k}}(x)$.
Otherwise, $U=\{x_{k},x_{i_{1}},\ldots,x_{i_{r}}\}$ with each $x_{i_{j}}\in\mathrm{Imp}_{\phi_{k-1}}(x)$,
and there was some clause $\overline{x}_{l}\vee x_{k}$ with $x_{l}\in\mathrm{Imp}_{\phi_{k-1}}(x)$,
but then $\phi_{k}$ contains the clause $\overline{x}_{l}\vee\overline{x}_{i_{1}}\vee\cdots\vee\overline{x}_{i_{r}}$
by 3., and since also all $x_{i_{j}}\in\mathrm{Imp}_{\phi_{k}}(x)$
and $x_{l}\in\mathrm{Imp}_{\phi_{k}}(x)$ by 1., $U'=\{x_{l},x_{i_{1}},\ldots,x_{i_{r}}\}\subseteq\mathrm{Imp}_{\phi_{k}}(x)$.
\item ``$\Longrightarrow$'': If $U'$ corresponds to one of the clauses
$d_{1},\ldots d_{q}$, it was also a restraint set of $\phi_{k-1}$,
and $U'\subseteq\mathrm{Imp}_{\phi_{k-1}}(x)$ by 1. Otherwise, it
corresponds to a clause $\overline{x}_{i}\vee\overline{x}_{i_{1}}\vee\cdots\vee\overline{x}_{i_{r}}$
from 3.; but then $\phi_{k-1}$ must have contained the clauses $\overline{x}_{i}\vee x_{k}$
and $\overline{x}_{k}\vee\overline{x}_{i_{1}}\vee\cdots\vee\overline{x}_{i_{r}}$,
and $U=\{x_{k},x_{i_{1}},\ldots,x_{i_{r}}\}\subseteq\mathrm{Imp}_{\phi_{k-1}}(x)$
by 1.
\end{itemize}
\end{enumerate}
Now (1)$\Longleftrightarrow$(3) follows from (a) and (b) by induction.
\end{proof}
We can reduce the bijunctive case to the IHSB$-$ one:
\begin{lem}
\label{lem:bijunc}If $\mathcal{S}$ is a bijunctive set of relations,
\noun{Q-Conn\textsubscript{C}($\mathcal{S}$)} is in \noun{P}.\end{lem}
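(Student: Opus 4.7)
The plan is to eliminate all quantifiers in polynomial time, producing an equivalent quantifier-free $2$-CNF formula on the free variables, and then invoke Corollary~\ref{cor:alg-1}: every bijunctive set is CPSS by Definition~\ref{def:cpss}, so connectivity for the resulting formula is in \noun{P}. (The preceding Lemma~\ref{lem:ihsb} could instead be fed the Horn subfragment of the resulting $2$-CNF, which explains the ``reduction to the IHSB$-$ case'', but invoking Corollary~\ref{cor:alg-1} directly is cleaner.)

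Given $\varphi = Q_1 y_1\cdots Q_m y_m\, \varphi_0(y_1,\ldots,y_m,x_1,\ldots,x_n)$ with $\mathcal{S}$ bijunctive, first rewrite the matrix $\varphi_0$ as a plain $2$-CNF formula $\psi_0$: each constraint application $R(\boldsymbol{\xi})$ with $R\in\mathcal{S}$ is bijunctive by hypothesis, hence equivalent to a conjunction of $2$-clauses over $\mathrm{Var}(R(\boldsymbol{\xi}))$ (with constants propagated in the obvious way), and this rewriting is polynomial. Then eliminate the bound variables one at a time, from the innermost quantifier outward:
\begin{itemize}
\item For a universal quantifier $\forall y$, replace the current formula $\psi$ by $\psi[y/0]\wedge\psi[y/1]$; since each clause of $\psi$ contains at most two literals, each clause containing $y$ becomes a unit clause or is removed, so the result is again $2$-CNF.
\item For an existential quantifier $\exists y$, perform full resolution on $y$: for every pair of clauses $(y\vee \ell_1)$ and $(\neg y\vee \ell_2)$ include the resolvent $(\ell_1\vee \ell_2)$, then drop all clauses that mention $y$. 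Each resolvent has at most two literals, so the result remains $2$-CNF.
\end{itemize}

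The key observation for the running time is that over any vocabulary of $k$ variables there are only $O(k^2)$ distinct $2$-clauses (counting unit and empty clauses). After each elimination step I compact by removing duplicate, subsumed and tautological clauses, keeping the formula size polynomial in $k$ throughout. Hence after $m$ eliminations I obtain in polynomial time a quantifier-free $2$-CNF formula $\psi(x_1,\ldots,x_n)$ with $[\psi]=[\varphi]$, and therefore $G(\psi)=G(\varphi)$.

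Since $\psi$ is a \noun{CNF\textsubscript{C}($\mathcal{S}'$)}-formula with $\mathcal{S}'$ bijunctive, and bijunctive sets of relations are CPSS by Definition~\ref{def:cpss}, Corollary~\ref{cor:alg-1} decides connectivity of $G(\psi)$ in polynomial time; this solves \noun{Q-Conn}\textsubscript{C}($\mathcal{S}$). The main obstacle is ensuring that the quantifier elimination does not blow up, which is handled by exploiting both the closure of $2$-CNF under the substitution used for $\forall$ and under the single-variable resolution used for $\exists$, together with the $O(k^2)$ bound on distinct $2$-clauses used to compact after each step.
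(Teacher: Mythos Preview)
Your argument is correct. The quantifier elimination you describe is the classical observation that $2$-CNF is closed under both $\forall$-elimination (substitute and conjoin) and $\exists$-elimination (Davis--Putnam resolution on one variable), with the $O(k^{2})$ bound on distinct $2$-clauses keeping the intermediate formulas polynomial; the final appeal to Corollary~\ref{cor:alg-1} is legitimate since any $2$-CNF is a CNF\textsubscript{C}($\mathcal{S}'$)-formula for a bijunctive (hence CPSS) $\mathcal{S}'$.

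This is, however, a genuinely different route from the paper. The paper does \emph{not} eliminate the quantifiers in the bijunctive case. Instead it first finds a satisfying assignment $\boldsymbol{a}$ of the unquantified matrix $\phi$ (linear time for $2$-CNF), then flips variables according to $\boldsymbol{a}$ so that the matrix becomes Horn $2$-CNF; flipping bound variables commutes with the quantifiers and flipping free variables is an isometry of the solution graph, so connectivity is preserved. Since Horn $2$-CNF is IHSB$-$, the paper then hands the \emph{still quantified} formula to the algorithm of Lemma~\ref{lem:ihsb}, which internally uses Bubeck et al.'s Horn quantifier elimination plus an implication-graph analysis. Your approach is more self-contained and arguably more elementary: it bypasses the variable-renaming trick and the entire IHSB$-$ machinery by exploiting directly that $2$-CNF admits polynomial quantifier elimination. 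The paper's approach, on the other hand, maximally reuses the infrastructure already built for Lemma~\ref{lem:ihsb} and avoids reproving a quantifier-elimination result. Your parenthetical remark about ``feeding the Horn subfragment of the resulting $2$-CNF'' to Lemma~\ref{lem:ihsb} is not what the paper actually does; the renaming happens \emph{before} any quantifiers are touched.
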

\begin{proof}
Makino, Tamaki and Yamamoto show in \citep{Makino:2007:BCP:1768142.1768162}
below Proposition 2 that any bijunctive formula can be transformed
in a connectivity-equivalent Horn 2-CNF formula by ``renaming''
variables: We can calculate a solution $\boldsymbol{a}$ in linear
time \citep{DBLP:journals/ipl/AspvallPT79} (w.l.o.g. we may assume
that a solution exists) and then take 
\[
\psi(\boldsymbol{x})=\phi(x_{1}\oplus a_{1},x_{2}\oplus a_{2},\ldots).
\]
Now $\psi$ is clearly Horn since $\psi(0,\ldots0)=1$, and the connectivity
is retained since 
\[
|(x_{1}\oplus a_{1},x_{2}\oplus a_{2},\ldots)-(y_{1}\oplus a_{1},y_{2}\oplus a_{2},\ldots)|=|(x_{1},x_{2},\ldots)-(y_{1},y_{2},\ldots)|.
\]

Since any CNF\textsubscript{C}($\mathcal{S}$) formula $\phi$ is
itself bijunctive, given $Q_{1}y_{1}\cdots Q_{m}y_{m}\phi(\boldsymbol{x},\boldsymbol{y})$,
we can instead take $Q_{1}y_{1}\cdots Q_{m}y_{m}\psi(\boldsymbol{x},\boldsymbol{y})$,
where $\psi$ is the Horn 2-CNF formula obtained from $\phi$ as described,
and then apply the algorithm for the IHSB$-$ case, since any Horn
2-CNF formula is also IHSB$-$.
\end{proof}
Finally, for affine sets of relations, we again make use of the quantifier-elimination
algorithm from \citep{DBLP:phd/de/Bauland2007}, and then use an algorithm
from Gopalan et al.:
\begin{lem}
\label{lem:affin}If $\mathcal{S}$ is an affine set of relations,
\noun{Q-Conn\textsubscript{C}($\mathcal{S}$)} is in \noun{P}.\end{lem}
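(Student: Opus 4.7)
The plan is to reduce the quantified case to the quantifier-free affine case, which is already solved by \prettyref{cor:alg-1}. The essential tool is the quantifier-elimination algorithm of \citep{DBLP:phd/de/Bauland2007} (Figure 4.2) already invoked in the proof of \prettyref{thm:qdich}: given a Q-CNF\textsubscript{C}($\mathcal{S}$)-formula $\phi$ with $\mathcal{S}$ affine, it produces in polynomial time an equivalent quantifier-free formula $\phi'(\boldsymbol{x})$ that is a system of $\mathbb{F}_{2}$-linear equations in the free variables $\boldsymbol{x}$. Each such linear equation is itself an affine relation, so $\phi'$ can be viewed as a CNF\textsubscript{C}($\mathcal{S}'$)-formula for some affine $\mathcal{S}'$, and clearly $G(\phi') = G(\phi)$.

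Since every affine set of relations is CPSS by \prettyref{def:cpss}, applying \prettyref{cor:alg-1} to $\phi'$ decides connectivity of $G(\phi')$ in polynomial time. Composing the two polynomial-time procedures yields the desired algorithm.

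If one prefers a direct analysis of $\phi'$ in place of the generic CPSS algorithm, note that the solution set of $\phi'$ is an affine subspace $\boldsymbol{a}_{0} \oplus V \subseteq \{0,1\}^{n}$, and that two solutions are adjacent in $G(\phi')$ precisely when their difference is a unit vector lying in $V$. A short argument then shows that $G(\phi')$ is connected if and only if $V$ is spanned by the unit vectors it contains, i.e., $V$ is a coordinate subspace; this condition can be checked in polynomial time by computing a reduced basis of $V$ via Gaussian elimination.

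The main conceptual obstacle is already resolved in \citep{DBLP:phd/de/Bauland2007}: the potential exponential blowup one encounters when eliminating quantifiers from Horn or IHSB$-$ formulas (as was the issue forcing the detour in \prettyref{lem:ihsb}) does not occur in the affine case, because an equivalent system of linear equations on $n$ variables can always be stored in $O(n^{2})$ bits after row reduction, and Bauland's procedure keeps the intermediate representations compact throughout. Everything else in the argument is routine.
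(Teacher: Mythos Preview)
Your overall strategy matches the paper's: eliminate quantifiers via Bauland's algorithm, then decide connectivity of the resulting affine system. However, your primary route---invoking \prettyref{cor:alg-1} on the resulting $\phi'$---has a gap. That corollary gives a polynomial-time algorithm for \noun{Conn}\textsubscript{C}($\mathcal{S}'$) only for each \emph{fixed} finite $\mathcal{S}'$; the running time of the algorithm in \prettyref{lem:alg} depends on the maximum arity of the relations in $\mathcal{S}'$, since the projection step enumerates all $2^{|\mathrm{Var}(C_i)|}$ assignments to each constraint's variables. After quantifier elimination the linear equations in $\phi'$ may involve $\Theta(n)$ free variables each, so $\mathcal{S}'$ is not fixed and that bound is exponential in the worst case. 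The paper is aware of exactly this point: it invokes instead the algorithm from Lemma~4.10 of \citep{gop}, explicitly noting that it remains polynomial even though the clause size is unbounded with respect to $\mathcal{S}$.

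Your alternative direct argument---reduce the linear system and check whether the homogeneous part $V$ is spanned by the unit vectors it contains---is correct and fills this gap; it is essentially a self-contained version of what Gopalan et al.'s Lemma~4.10 does for affine formulas. So the proposal is ultimately sound, but only via the second route you offer, not the first.
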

\begin{proof}
We can use the polynomial-time algorithm from Figure 4.2 in \citep{DBLP:phd/de/Bauland2007}
to transform any quantified CNF\textsubscript{C}($\mathcal{S}$)
formula into an equivalent one without any quantifiers, and then apply
the algorithm from the proof of Lemma 4.10 of \citep{gop}. Note that
although the clause-size of the formulas produced by the algorithm
from \citep{DBLP:phd/de/Bauland2007} is unbounded (with regard to
$\mathcal{S}$), the algorithm from \citep{gop} decides connectivity
for these formulas in polynomial time, as is easy to see.
\end{proof}

\chapter{\label{chap:4}Connectivity of~Nested~Formulas~and Circuits}

We now turn to a quite different type of representation for Boolean
relations. First observe that we can naturally identify Boolean relations
with \emph{Boolean function}s, i.e.\ functions $f:\{0,1\}^{n}\rightarrow\{0,1\}$.
To build \emph{$B$-}formulas, we again use a fixed finite set of
Boolean relations, resp. functions, as source material, but instead
of connecting them with $\wedge$'s, we now allow inserting them into
each other arbitrarily.

In a \emph{$B$-}formula, if some function is used repeatedly with
the same arguments, it has to be duplicated, what can make the formula
large and evaluation slow. This is remedied by \emph{$B$-}circuits,
that allow to use the result of a function any number of times, which
can lead to exponential savings of space and time.

\section{\label{sec:Cir}Preliminaries:\emph{ B-}Circuits,~\emph{B-}Formulas,~and~Post's~Lattice}

We begin with formal definitions of \emph{$B$-}circuits and \emph{$B$-}formulas.
Let $B$ be a finite set of Boolean functions.
\begin{defn}
\label{def:b}A \emph{$B$-circuit} $\mathcal{C}$ with input variables
$x_{1},\ldots,x_{n}$ is a directed acyclic graph, augmented as follows:
Each node (here also called \emph{gate}) with indegree 0 is labeled
with an $x_{i}$ or a 0-ary function from $B$, each node with indegree
$k>0$ is labeled with a $k$-ary function from $B$. The edges (here
also called \emph{wires}) pointing into a gate are ordered. One node
is designated the output gate. Given values $a_{1},\ldots,a_{n}\in\{0,1\}$
to $x_{1},\ldots,x_{n}$, $\mathcal{C}$ computes an $n$-ary function
$f_{\mathcal{C}}$ as follows: A gate $v$ labeled with a variable
$x_{i}$ returns $a_{i}$, a gate $v$ labeled with a function $f$
computes the value $f(b_{1},\ldots,b_{k})$, where $b_{1},\ldots,b_{k}$
are the values computed by the predecessor gates of $v$, ordered
according to the order of the wires.
\end{defn}
For a detailed introduction to Boolean circuits and circuit complexity,
see e.g. \citep{Vollmer:1999:ICC:520668}.
\begin{defn}
\label{def:bc}A \emph{$B$-formula} is defined inductively: A variable
$x$ is a $B$-formula. If $\phi_{1},\ldots,\phi_{m}$ are $B$-formulas,
and $f$ is an $n$-ary function from $B$, then $f(\phi_{1},\ldots,\phi_{n})$
is a $B$-formula. In turn, any $B$-formula defines a Boolean function
in the obvious way, and we will identify $B$-formulas with the function
they define.
\end{defn}
It is easy to see that the functions computable by a $B$-circuit,
as well as the functions definable by a $B$-formula, are exactly
those that can be obtained from $B$ by \emph{superposition}, together
with all projections \citep{bloc}. By superposition, we mean substitution
(that is, composition of functions), permutation and identification
of variables, and introduction of \emph{fictive variables} (variables
on which the value of the function does not depend). This class of
functions is denoted $[B]$. $B$ is \emph{closed} (or said to be
a \emph{clone}) if $[B]=B$. A \emph{base} of a clone $F$ is any
set $B$ with $[B]=F$.

Already in the early 1920s, Emil Post extensively studied Boolean
functions \citep{post1941two}. He identified all closed classes,
found a finite base for each of them, and detected their inclusion
structure: The closed classes form a lattice, called \emph{Post's
lattice}, depicted in Figure \ref{fig:Post's-lattice}. We do not
use Post's original names for the closed classes, but the modern terminology
developed by Reith and Wagner in \citep{Reith99}; the layout of the
lattice is also from \citep{Reith99}.

The following clones are defined by properties of the functions they
contain, all other ones are intersections of these. Let $f$ be an
$n$-ary Boolean function.
\begin{itemize}
\item $\mathsf{BF}$ is the class of all Boolean functions.
\item $\mathsf{R}_{0}$ ($\mathsf{R}_{1}$) is the class of all 0-reproducing
(1-reproducing) functions,\\
$f$\emph{ }is\emph{ $c$-reproducing}, if $f(c,\ldots,c)=c$, where
$c\in\{0,1\}$.
\item $\mathsf{M}$ is is the class of all monotone functions,\emph{}\\
$f$\emph{ }is\emph{ monotone}, if $a_{1}\leq b_{1},\ldots,a_{n}\leq b_{n}$
implies $f(a_{1},\ldots,a_{n})\leq f(b_{1},\ldots,b_{n})$.
\item $\mathsf{D}$ is the class of all self-dual functions,\emph{}\\
$f$\emph{ }is\emph{ self-dual}, if $f(x_{1},\ldots,x_{n})=\overline{f(\overline{x_{1}},\ldots,\overline{x_{n}})}$.
\item $\mathsf{L}$ is the class of all affine (on \emph{linear}) functions,\emph{}\\
$f$\emph{ }is\emph{ affine}, if $f(x_{1},\ldots,x_{n})=x_{i_{1}}\oplus\cdots\oplus x_{i_{m}}\oplus c$
with $i_{1},\ldots,i_{m}\in\{1,\ldots,n\}$ and $c\in\{0,1\}$.
\item $\mathsf{S}_{0}$ ($\mathsf{S}_{1}$) is the class of all 0-separating
(1-separating) functions,\emph{}\\
$f$\emph{ }is\emph{ $c$-separating}, if there exists an $i\in\{1,\ldots,n\}$
s.t. $a_{i}=c$ for all $\boldsymbol{a}\in f^{-1}(c)$, where $c\in\{0,1\}$.
\item $\mathsf{S}_{0}^{m}$ ($\mathsf{S}_{1}^{m}$) is the class of all
functions 0-separating (1-separating) of degree $m$,\\
$f$\emph{ }is\emph{ $c$-separating of degree $m$}, if for all $U\subseteq f^{-1}(c)$
of size $|U|=m$ there exists an $i\in\{1,\ldots,n\}$ s.t. $a_{i}=c$
for all $\boldsymbol{a}\in U$ ($c\in\{0,1\}$, $m\geq2$).
\end{itemize}
The definitions and bases of all classes are given in Table \ref{tab:List-of-all}.
For an introduction to Post's lattice and further references see e.g.
\citep{bloc}.

The complexity of numerous problems for $B$-circuits and $B$-formulas
has been classified by the types of functions allowed in $B$ with
help of Post's lattice (see e.g. \citep{reith2000,schnoor2007}),
starting with satisfiability: Analogously to Schaefer's dichotomy
for CNF($\mathcal{S}$)-formulas\emph{ }from 1978\emph{,} Harry R.
Lewis shortly thereafter found a dichotomy for $B$-formulas \citep{lewis1979}:
If $[B]$ contains the function $x\wedge\overline{y}$, \noun{Sat}
is NP-complete, else it is in P.

While for $B$-circuits the complexity of every decision problem solely
depends on $[B]$ (up to AC$^{0}$ isomorphisms), for $B$-formulas
this need not be the case (though it usually is, as for satisfiability
and our connectivity problems, as we will see): The transformation
of a $B$-formula into a $B'$-formula might require an exponential
increase in the formula size even if $[B]=[B']$, as the $B'$-representation
of some function from $B$ may need to use some input variable more
than once \citep{michael2012applicability}. For example, let $h(x,y)=x\wedge\overline{y}$;
then $(\text{x\ensuremath{\wedge}y)}\in[\{h\}]$ since $x\wedge y=h(x,h(x,y))$,
but it is easy to see that there is no shorter $\{h\}$-representation
of $x\wedge y$.

We denote the $st$-connectivity and connectivity problems for $B$-formulas
by\noun{ st-BF-Conn($B$)} and \noun{BF-Conn($B$)}, respectively,
and the problems for circuits by \noun{st-Circ-Conn($B$) }resp. \noun{Circ-Conn($B$)}.

Also for a Boolean function $f$, we will speak of the solution graph
$G(f)$ and denote the shortest-path distance in $G(f)$ by $d_{f}$.

\begin{figure}[p]
\begin{centering}
\begin{tikzpicture}[clone/.style={circle,inner sep=0,minimum width=6mm,fill=white,font=\small},x=0.95cm,y=0.75cm,
edge/.style={draw=black,thin,-},
clonefill/.style={draw=black,circle,minimum width=6.1mm,inner sep=0,font=\scriptsize},
clonefil/.style={draw=black,very thick,circle,minimum width=6.1mm,inner sep=0,font=\scriptsize},scale=1, transform shape]

\input{lattice}

	\node[clonefil,fill=WSAT] at	 (BF) 		{$\CloneBF$};
	\node[clonefill,fill=WSAT] at  (R1)  	{$\CloneR_1$};
	\node[clonefil,fill=WSAT] at  (R0) 		{$\CloneR_0$};
	\node[clonefill,fill=WSAT] at  (R2)  	{$\CloneR_2$};
	
	\node[clonefill,fill=white] at  (M)  		{$\CloneM$};
	\node[clonefill,fill=white] at  (M1)  	{$\CloneM_1$};
	\node[clonefill,fill=white] at  (M0) 		{$\CloneM_0$};
	\node[clonefill,fill=white] at  (M2)  	{$\CloneM_2$};

	\node[clonefil,fill=WSAT] at  (S21) 	{$\CloneS_{1}^2$};
	\node[clonefil,fill=WSAT] at  (S31)		{$\CloneS_{1}^3$};
	\node[clonefil,fill=WSAT] at  (Sn1)		{$\CloneS_{1}^n$};
	\node[clonefil,fill=WSAT] at  (S1)		{$\CloneS_{1}$};

	\node[clonefill,fill=WSAT] at  (S212) 	{$\CloneS_{12}^2$};
	\node[clonefill,fill=WSAT] at  (S312) 	{$\CloneS_{12}^3$};
	\node[clonefill,fill=WSAT] at  (Sn12) 	{$\CloneS_{12}^n$};
	\node[clonefill,fill=WSAT] at  (S12)	  	{$\CloneS_{12}$};

	\node[clonefill,fill=white] at  (S211) 	{$\CloneS_{11}^2$};
	\node[clonefill,fill=white] at  (S311) 	{$\CloneS_{11}^3$};
	\node[clonefill,fill=white] at  (Sn11) 	{$\CloneS_{11}^n$};
	\node[clonefill,fill=white] at  (S11)	  	{$\CloneS_{11}$};

	\node[clonefill,fill=white] at  (S210) 	{$\CloneS_{10}^2$};
	\node[clonefill,fill=white] at  (S310) 	{$\CloneS_{10}^3$};
	\node[clonefill,fill=white] at  (Sn10) 	{$\CloneS_{10}^n$};
	\node[clonefill,fill=white] at  (S10)	  	{$\CloneS_{10}$};

	\node[clonefill,fill=WSAT] at  (S20) 	{$\CloneS_{0}^2$};
	\node[clonefill,fill=WSAT] at  (S30) 	{$\CloneS_{0}^3$};
	\node[clonefill,fill=WSAT] at  (Sn0) 	{$\CloneS_{0}^n$};
	\node[clonefill,fill=nWSAT] at  (S0)	  	{$\CloneS_{0}$};

	\node[clonefill,fill=WSAT] at  (S202) 	{$\CloneS_{02}^2$};
	\node[clonefill,fill=WSAT] at  (S302) 	{$\CloneS_{02}^3$};
	\node[clonefill,fill=WSAT] at  (Sn02) 	{$\CloneS_{02}^n$};
	\node[clonefill,fill=nWSAT] at  (S02)	  	{$\CloneS_{02}$};

	\node[clonefill,fill=white] at  (S201) 	{$\CloneS_{01}^2$};
	\node[clonefill,fill=white] at  (S301) 	{$\CloneS_{01}^3$};
	\node[clonefill,fill=white] at  (Sn01) 	{$\CloneS_{01}^n$};
	\node[clonefill,fill=white] at  (S01)	  	{$\CloneS_{01}$};
	
	\node[clonefill,fill=white] at  (S200) 	{$\CloneS_{00}^2$};
	\node[clonefill,fill=white] at  (S300) 	{$\CloneS_{00}^3$};
	\node[clonefill,fill=white] at  (Sn00) 	{$\CloneS_{00}^n$};
	\node[clonefill,fill=white] at  (S00)	  	{$\CloneS_{00}$};
	
	\node[clonefill,fill=WSAT] at  (D)  		{$\CloneD$};
	\node[clonefill,fill=WSAT] at  (D1)  	{$\CloneD_1$};
	\node[clonefill,fill=white] at  (D2)  	{$\CloneD_2$};
	
	\node[clonefill,fill=white] at  (E)  		{$\CloneE$};
	\node[clonefill,fill=white] at  (E1)  	{$\CloneE_1$};
	\node[clonefill,fill=white] at  (E0) 		{$\CloneE_0$};
	\node[clonefill,fill=white] at  (E2)  	{$\CloneE_2$};
	
	\node[clonefill,fill=white] at  (V)  		{$\CloneV$};
	\node[clonefill,fill=white] at  (V0) 		{$\CloneV_0$};
	\node[clonefill,fill=white] at  (V1)  	{$\CloneV_1$};
	\node[clonefill,fill=white] at  (V2)  	{$\CloneV_2$};
	
	\node[clonefill,fill=white] at  (L)  		{$\CloneL$};
	\node[clonefill,fill=white] at  (L0)  	{$\CloneL_0$};
	\node[clonefill,fill=white] at  (L1)  	{$\CloneL_1$};
	\node[clonefill,fill=white] at  (L3) 		{$\CloneL_3$};
	\node[clonefill,fill=white] at  (L2)  	{$\CloneL_2$};
	
	\node[clonefill,fill=white] at  (N)  		{$\CloneN$};
	\node[clonefill,fill=white] at  (N2)  	{$\CloneN_2$};
	
	\node[clonefill,fill=white] at  (I)  		{$\CloneI$};
	\node[clonefill,fill=white] at  (I0)  	{$\CloneI_0$};
	\node[clonefill,fill=white] at  (I1)  	{$\CloneI_1$};
	\node[clonefill,fill=white] at  (I2)  	{$\CloneI_2$};
	
\end{tikzpicture}
    \bigskip
\par\end{centering}

\protect\caption[Post's lattice with our results]{\emph{}\label{fig:Post's-lattice}Post's lattice with our results.\protect \\
The classes on the hard side of the dichotomy for the connectivity
problems and the diameter are shaded; the light shaded ones are on
the hard side only for formulas with quantifiers.\protect \\
{\small{}\hspace*{3ex}}For comparison, the classes for which SAT
(without quantifiers) is NP-complete are circled bold.}
\end{figure}

\begin{table}[p]

  \fontsize{11}{11}\selectfont
    \setlength{\tabcolsep}{6pt}
    \rowcolors{2}{gray!10}{}
    \centering
    \begin{tabular}{llll}
    \toprule
    Class  & Definition & Base\\
    \midrule
    $\CloneBF$    & All Boolean functions & $\{x \land y ,\neg x\}$ &\\
    $\CloneR_0$   & $\{f \in \CloneBF \mid f \mbox{ is 0-reproducing}\}$ & $\{ x \land y , x \oplus y \}$ &\\
    $\CloneR_1$   & $\{f \in \CloneBF \mid f \mbox{ is 1-reproducing}\}$ & $\{x \lor y, x \leftrightarrow y \}$ &\\
    $\CloneR_2$   & $\CloneR_0 \cap \CloneR_1$ & $\{x \lor y, x \land (y \leftrightarrow z) \}$ &\\
    $\CloneM$     & $\{f \in \CloneBF \mid f \mbox{ is monotone}\}$ & $\{x \land y, x\lor y,0,1\}$ &\\
    $\CloneM_0$   & $\CloneM \cap \CloneR_0$ & $\{x \land y,x\lor y,0\}$ &\\
    $\CloneM_1$   & $\CloneM \cap \CloneR_1$ & $\{x \land y,x\lor y,1\}$ &\\  
    $\CloneM_2$   & $\CloneM \cap \CloneR_2$ & $\{x \land y,x\lor y\}$ &\\
    $\CloneS_0$   & $\{f \in \CloneBF \mid f \mbox{ is 0-separating}\}$ & $\{x \rightarrow y\}$ &\\
    $\CloneS_0^n$ & $\{f \in \CloneBF \mid f \mbox{ is 0-separating of degree $n$}\}$ & $\{x \rightarrow y,\mbox{dual}({\mathrm{T}^{n+1}_n})\}$ &\\
    $\CloneS_1$   & $\{f \in \CloneBF \mid f \mbox{ is 1-separating}\}$ & $\{x \nrightarrow y\}$ &\\
    $\CloneS_1^n$ & $\{f \in \CloneBF \mid f \mbox{ is 1-separating of degree $n$}\}$ & $\{x \nrightarrow y,\mathrm{T}^{n+1}_n\}$ &\\
    $\CloneS_{02}^n$& $\CloneS_0^n \cap \CloneR_2$ & $\{x \lor (y \land \neg z), \mbox{dual}({\mathrm{T}^{n+1}_n})\}$ &\\
    $\CloneS_{02}$  & $\CloneS_0 \cap \CloneR_2$ & $\{x \lor (y \land \neg z)\}$ &\\
    $\CloneS_{01}^n$& $\CloneS_0^n \cap \CloneM$ & $\{\mbox{dual}({\mathrm{T}^{n+1}_n}),1\}$ &\\
    $\CloneS_{01}$  & $\CloneS_0 \cap \CloneM$ & $\{x \lor (y \land z),1\}$ &\\
    $\CloneS_{00}^n$& $\CloneS_0^n \cap \CloneR_2 \cap \CloneM$ 
    	&$\{x \lor (y \land z),\mbox{dual}({\mathrm{T}^{n+1}_n})\}$& \\
    $\CloneS_{00}$  & $\CloneS_0 \cap \CloneR_2 \cap \CloneM$ & $\{x \lor (y \land z)\}$ &\\
    $\CloneS_{12}^n$& $\CloneS_1^n \cap \CloneR_2$ & $\{x\land (y\lor \neg z),\mathrm{T}^{n+1}_n\}$ &\\
    $\CloneS_{12}$  & $\CloneS_1 \cap \CloneR_2$ & $\{x\land (y\lor \neg z)\}$ &\\
    $\CloneS_{11}^n$& $\CloneS_1^n \cap \CloneM$ & $\{\mathrm{T}^{n+1}_n,0\}$ &\\
    $\CloneS_{11}$  & $\CloneS_1 \cap \CloneM$ & $\{x\land (y\lor z),0\}$ &\\
    $\CloneS_{10}^n$& $\CloneS_1^n \cap \CloneR_2 \cap \CloneM$ 
    & $\{x \land (y \lor z),{\mathrm{T}^{n+1}_n}\}$ &\\
    $\CloneS_{10}$  & $\CloneS_1 \cap \CloneR_2 \cap \CloneM$ & $\{x\land (y\lor z)\}$ &\\
    
    $\CloneD$     & $\{f \in \CloneBF \mid f \mbox{ is self-dual}\}$ & $\{\mbox{maj}(x,\neg y,\neg z)\}$ &\\
    $\CloneD_1$   & $\CloneD \cap \CloneR_2$ & $\{\mbox{maj}(x, y,\neg z)\}$ &\\
    $\CloneD_2$   & $\CloneD \cap \CloneM$ & $\{\mbox{maj}(x,y,z)\}$ &\\  
    
    $\CloneL$     & $\{f \in \CloneBF \mid f \mbox{ is linear}\}$ & $\{x \oplus y,1\}$ &\\
    $\CloneL_0$   & $\CloneL \cap \CloneR_0$ & $\{x \oplus y\}$ &\\
    $\CloneL_1$   & $\CloneL \cap \CloneR_1$ & $\{x \leftrightarrow y\}$ &\\
    $\CloneL_2$   & $\CloneL \cap \CloneR_2$ & $\{x \oplus y \oplus z\}$ &\\
    $\CloneL_3$   & $\CloneL \cap \CloneD$ & $\{x \oplus y \oplus z \oplus 1 \}$ &\\
    
    $\CloneE$     & $\{f \in \CloneBF \mid f \mbox{ is constant or a conjunction}\}$ & $\{x \land y,0, 1\}$ &\\
    $\CloneE_0$   & $\CloneE \cap \CloneR_0$ & $\{x \land y,0\}$ &\\
    $\CloneE_1$   & $\CloneE \cap \CloneR_1$ & $\{x \land y,1\}$ &\\
    $\CloneE_2$   & $\CloneE \cap \CloneR_2$ & $\{x \land y\}$ &\\
    
    $\CloneV$     & $\{f \in \CloneBF \mid f \mbox{ is constant or a disjunction}\}$ & $\{x \lor y,0,1\}$ &\\
    $\CloneV_0$   & $\CloneV \cap \CloneR_0$ & $\{x \lor y,0\}$ &\\
    $\CloneV_1$   & $\CloneV \cap \CloneR_1$ & $\{x \lor y,1\}$ &\\
    $\CloneV_2$   & $\CloneV \cap \CloneR_2$ & $\{x \lor y\}$ &\\
    
    $\CloneN$     & $\{f \in \CloneBF \mid f \mbox{ is essentially unary}\}$ & $\{\neg x,0,1\}$ &\\ 
    $\CloneN_2$   & $\CloneN \cap \CloneD$ & $\{\neg x\}$ &\\

    $\CloneI$     & $\{f \in \CloneBF \mid f \mbox{ is constant or a projection}\}$ & $\{x,0,1\}$ &\\ 
    $\CloneI_0$   & $\CloneI \cap \CloneR_0$ & $\{x,0\}$ &\\
    $\CloneI_1$   & $\CloneI \cap \CloneR_1$ & $\{x,1\}$ &\\
    $\CloneI_2$   & $\CloneI \cap \CloneR_2$ & $\{x\}$ &\\
    \bottomrule
    \end{tabular}
    \bigskip

\protect\caption[List of all closed classes of Boolean functions with definitions and
bases]{\emph{}\label{tab:List-of-all}List of all closed classes of Boolean
functions with definitions and bases.\protect \\
$T_{k}^{n}$ denotes the threshold function, i.e., $T_{k}^{n}(x_{1},\ldots,x_{n})=1\protect\Longleftrightarrow\sum_{i=1}^{n}x_{i}\geq k$,\protect \\
(dual($f$))$(x_{1},\ldots,x_{n})=\overline{f(\overline{x_{1}},\ldots,\overline{x_{n}})}$) }
\end{table}
\newpage{}

\section{Results}

The following two theorems give complete classifications up to polynomial-time
isomorphisms. See also Figure \ref{fig:Post's-lattice}.
\begin{thm}
\label{thm:func}Let $B$ be a finite set of Boolean functions.
\begin{enumerate}
\item If $B\subseteq\mathsf{M}$, $B\subseteq\mathsf{L}$, or $B\subseteq\mathsf{\mathsf{S}_{0}}$,
then

\begin{enumerate}
\item \noun{st-Circ-Conn(}$B$\emph{)} and \noun{Circ-Conn(}$B$\emph{)}
are in \noun{P},
\item \noun{st-BF-Conn(}$B$\emph{)} and \noun{BF-Conn(}$B$\emph{)} are
in \noun{P},
\item for every $B$-formula $\phi,$ the diameter of $G(\phi)$ is linear
in the number of variables.
\end{enumerate}
\item Otherwise,

\begin{enumerate}
\item \noun{st-Circ-Conn(}$B$\emph{)} and \noun{Circ-Conn(}$B$\emph{)}
are $\mathrm{PSPACE}$-complete,
\item \noun{st-BF-Conn(}$B$\emph{)} and \noun{BF-Conn(}$B$\emph{)} are
$\mathrm{PSPACE}$-complete,
\item there are $B$-formulas $\phi$ such that the diameter of $G(\phi)$
is exponential in the number of variables.
\end{enumerate}
\end{enumerate}
\end{thm}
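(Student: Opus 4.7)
The plan is to exploit Post's lattice: since $\mathsf{M}$, $\mathsf{L}$, and $\mathsf{S}_0$ are themselves clones, the condition $B\subseteq X$ is equivalent to $[B]\subseteq X$, and by inspection of Post's lattice the family of clones not contained in any of $\mathsf{M}, \mathsf{L}, \mathsf{S}_0$ is generated from finitely many minimal hard clones (essentially those lying above $\mathsf{S}_1^2$ together with $\mathsf{D}_1$). So I would argue the dichotomy by (i) proving tractability and linear diameter whenever $[B]$ lies in one of the three easy clones, and (ii) proving PSPACE-hardness and exponential diameter for each of the minimal hard clones, then lifting upward via the obvious simulation.

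For the tractable side, each of the three cases has a clean structural handle. If $[B]\subseteq\mathsf{L}$, every $B$-circuit and every $B$-formula computes an affine function, so its solution set is either empty or an affine subspace of $\{0,1\}^n$; this is a subcube after a linear change of coordinates, hence its solution graph is connected with diameter at most $n$, and both connectivity problems reduce to solving a linear system in polynomial time. If $[B]\subseteq\mathsf{M}$, the computed function is monotone, so the set of solutions is an up-set; every solution can be connected to the unique maximum $1^n$ (assuming $1^n$ satisfies) by flipping $0$s to $1$s one at a time, giving linear diameter and an easy polynomial-time connectivity algorithm after compacting disjoint monotone components. If $[B]\subseteq\mathsf{S}_0$, then by the Galois correspondence between clones and co-clones every $B$-circuit/formula defines a Horn relation; the tools from Chapter~2 for Horn relations (\prettyref{lem:l45} in particular) then give linear diameter, a polynomial \noun{st}-connectivity algorithm via monotone paths to the minimum solution, and a coNP-style connectivity check that actually collapses to P because evaluation on $B$-circuits is in P (so the locally-minimal test can be carried out without an oracle).

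For the hard side, I would reduce from the PSPACE-complete \emph{\noun{st-}}\noun{Conn}\textsubscript{C} / \noun{Conn}\textsubscript{C} on 3-CNF-formulas (\prettyref{lem:l36} and \prettyref{lem:l37}) to the corresponding problems on $B$-circuits and $B$-formulas. For circuits this is straightforward: when $[B]$ escapes each of $\mathsf{M},\mathsf{L},\mathsf{S}_0$, $[B]$ contains $\{\wedge,\vee,\neg\}$ or at least enough functions to express every Boolean function; the 3-CNF can then be hard-wired into a $B$-circuit of polynomial size, preserving the solution graph exactly, and \prettyref{lem:l37}'s path-shaped formula even transfers the exponential diameter verbatim. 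The minimal hard clones at the bottom (those built from $\mathsf{S}_1^2$, $\mathsf{D}_1$, etc.) each have a small explicit base from Table~\ref{tab:List-of-all}, so I would check them case by case to build the reduction gadgets.

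The main obstacle is the $B$-formula case. Unlike circuits, translating a function in $[B]$ into a $B$-formula may incur an exponential blow-up, so the lifting of PSPACE-hardness from circuits to formulas is not automatic. My plan to overcome this is to verify, clone by clone among the minimal hard cases, that one can simulate the gadgetry of \prettyref{lem:l36} and \prettyref{lem:l37} using $B$-formulas of polynomial size --- i.e., that the relevant functions (conjunction, disjunction, negation, or a self-dual substitute) admit short $B$-formula representations, even if general $[B]$-membership does not. Once this is done for each minimal hard clone, the upward closure is free, and combining with the tractable side yields the stated dichotomy for both \noun{st-BF-Conn}/\noun{BF-Conn} and \noun{st-Circ-Conn}/\noun{Circ-Conn}, together with the diameter statement.
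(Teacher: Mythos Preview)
Your proposal has two genuine gaps.

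First, the easy case $B\subseteq\mathsf{S}_0$ is handled incorrectly. The Galois correspondence does \emph{not} tell you that a $0$-separating function defines a Horn relation. A function $f\in\mathsf{S}_0$ has some coordinate $i$ with $a_i=0$ for all $\boldsymbol{a}\in f^{-1}(0)$, but $f^{-1}(1)$ need not be closed under $\wedge$: take $f$ with $f^{-1}(0)=\{000,010\}$; then $001,110\in f^{-1}(1)$ but $001\wedge 110=000\notin f^{-1}(1)$. The paper instead uses the $0$-separating property directly: since every vector with $b_i=1$ is a solution, any two solutions are connected through that half-cube, giving diameter at most $|\boldsymbol{a}-\boldsymbol{b}|+2$ and putting both problems trivially in P.

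Second, and this is the critical gap, your hard-side argument assumes that when $[B]$ escapes $\mathsf{M},\mathsf{L},\mathsf{S}_0$, it ``contains $\{\wedge,\vee,\neg\}$ or at least enough functions to express every Boolean function''. This is false. The minimal hard clones here are $\mathsf{S}_{12}$, $\mathsf{D}_1$, and the $\mathsf{S}_{02}^k$, none of which is $\mathsf{BF}$. A $3$-CNF-formula $\phi$ is in general not self-dual, not $1$-separating, etc., so it is simply not computable by any $B$-circuit when $[B]=\mathsf{D}_1$ or $[B]=\mathsf{S}_{12}$; ``hard-wiring'' it is impossible, and your plan to simulate $\wedge,\vee,\neg$ clone by clone cannot succeed (you cannot simulate $\neg$ inside $\mathsf{S}_{12}$ at all). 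The paper's actual idea is different: for each minimal hard clone one designs a transformation $T_\psi$ that takes a $1$-reproducing formula $\psi(\boldsymbol{x})$ and adds a few auxiliary variables $\boldsymbol{y}$ so that (i) $T_\psi(\boldsymbol{x},\boldsymbol{y})$ lies in the target clone, and (ii) $G(T_\psi)$ is connectivity-equivalent to $G(\psi)$. For $\mathsf{D}_1$, for instance, one adds three variables and symmetrizes to force self-duality, then patches the extra component this creates. You correctly flag the $B$-formula size issue, but the resolution is not short representations of $\wedge,\vee,\neg$; it is the compositional identity $T_\wedge(T_{\psi_1},T_{\psi_2})\equiv T_{\psi_1\wedge\psi_2}$, which lets one apply $T$ to each clause and each $\wedge$ separately and assemble the result along a balanced binary tree of logarithmic depth, keeping the $B$-formula polynomial.
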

The proof follows from the lemmas in the next two sections.

\section{\label{sub:The-easy-cases}The Easy Side of the Dichotomy}

We will often use the following proposition to relate the complexity
of $B$-formulas and $B$-circuits:
\begin{prop}
\label{pro:trf}Every $B$-formula $\phi$ can be transformed into
an equivalent $B$-circuit $\mathcal{C}$ in polynomial time.\end{prop}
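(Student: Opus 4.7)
The plan is to proceed by structural induction on the $B$-formula $\phi$, building the circuit $\mathcal{C}$ bottom-up so that every subformula corresponds to a gate whose output wire carries the value of that subformula.

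Concretely, I would define a recursive procedure $\texttt{build}(\phi)$ that returns a $B$-circuit together with a designated output gate. In the base case, if $\phi$ is a variable $x$, the procedure returns a single gate labeled $x$ (indegree $0$), which by \prettyref{def:b} is a legal input gate. In the inductive step, if $\phi = f(\phi_1,\ldots,\phi_n)$ with $f \in B$ of arity $n$, the procedure first recursively calls $\texttt{build}(\phi_i)$ for each $i = 1,\ldots,n$ to obtain subcircuits $\mathcal{C}_i$ with output gates $v_i$; it then takes the disjoint union of the $\mathcal{C}_i$, adds a new gate $v$ labeled $f$, and inserts wires from $v_1,\ldots,v_n$ to $v$ in that order. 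The gate $v$ is designated as the output.

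Equivalence $f_{\mathcal{C}} = \phi$ follows by a routine induction on the construction: the value computed at the output gate of $\texttt{build}(\phi)$ on inputs $a_1,\ldots,a_n$ coincides with the value of $\phi$ under the same assignment, using the semantics from \prettyref{def:b} and \prettyref{def:bc}. For the complexity bound, I would observe that each recursive call introduces exactly one new gate, so the number of gates equals the number of subformula occurrences in $\phi$, which is bounded by $|\phi|$; the number of wires introduced per step is at most the arity of some function in $B$, a fixed constant. Thus $\mathcal{C}$ has size $O(|\phi|)$ and can be produced in linear time by a single pass over the syntax tree of $\phi$.

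There is no real obstacle here: the construction is essentially a verbatim translation of the formula's parse tree into a circuit, and the only subtlety is the trivial observation that in a formula each function application is syntactically separate, so no gate needs to be ``shared'' (which is exactly the feature that distinguishes circuits from formulas and that makes the reverse direction nontrivial). The proposition can therefore be proved with a short paragraph citing the recursive construction and noting the linear size bound.
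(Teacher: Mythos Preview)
Your proposal is correct and takes essentially the same approach as the paper: both translate the parse tree of the $B$-formula directly into a circuit, with one gate per function occurrence. The paper's proof is a one-sentence version of your construction, additionally noting that the resulting circuit has the characteristic ``formula'' shape where every function-gate has outdegree at most one; the only cosmetic difference is that the paper uses a single input gate per variable rather than one per variable occurrence, which is immaterial for correctness and the polynomial bound.
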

\begin{proof}
Any $B$-formula is equivalent to a special $B$-circuit where all
function-gates have outdegree at most one: For every variable $x$
of $\phi$ and for every occurrence of a function $f$ in $\phi$
there is a gate in $\mathcal{C}$, labeled with $x$ resp. $f$. It
is clear how to connect the gates.\end{proof}
\begin{lem}
\label{lem:M}If $B\subseteq\mathsf{M}$, the solution graph of any
$n$-ary function $f\in[B]$ is connected, and $d_{f}(\boldsymbol{a},\boldsymbol{b})=|\boldsymbol{a}-\boldsymbol{b}|\leq n$
for any two solutions $\boldsymbol{a}$ and $\boldsymbol{b}$.\end{lem}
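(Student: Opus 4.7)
The plan is to use the monotonicity of $f$ directly. Since $B\subseteq\mathsf{M}$ and $\mathsf{M}$ is a clone, every function $f\in[B]$ is itself monotone, so it suffices to prove the statement for an arbitrary monotone $n$-ary function $f$.

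The key observation is that the solution set $[f]=\{\boldsymbol{x}\in\{0,1\}^n : f(\boldsymbol{x})=1\}$ is upward closed under the coordinate-wise order $\leq$ from \prettyref{def:lm}: if $f(\boldsymbol{x})=1$ and $\boldsymbol{x}\leq\boldsymbol{y}$, then $f(\boldsymbol{y})\geq f(\boldsymbol{x})=1$. Given two solutions $\boldsymbol{a},\boldsymbol{b}\in[f]$, I would build an explicit path through $\boldsymbol{a}\vee\boldsymbol{b}$ (coordinate-wise join), in two phases. In the first phase, starting from $\boldsymbol{a}$, flip the coordinates $i$ with $a_i=0$, $b_i=1$ one at a time, in any order; each intermediate vector lies above $\boldsymbol{a}$, hence is a solution by upward-closedness. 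In the second phase, starting from $\boldsymbol{a}\vee\boldsymbol{b}$, flip the coordinates $i$ with $a_i=1$, $b_i=0$ one at a time; each intermediate vector lies above $\boldsymbol{b}$, hence is again a solution.

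The total number of flips in this path is
\[
|\{i : a_i=0,b_i=1\}|+|\{i : a_i=1,b_i=0\}|=|\boldsymbol{a}-\boldsymbol{b}|,
\]
so $d_f(\boldsymbol{a},\boldsymbol{b})\leq|\boldsymbol{a}-\boldsymbol{b}|$. The reverse inequality is trivial because each edge in the hypercube graph changes exactly one coordinate, so any path between $\boldsymbol{a}$ and $\boldsymbol{b}$ must contain at least $|\boldsymbol{a}-\boldsymbol{b}|$ edges. Hence $d_f(\boldsymbol{a},\boldsymbol{b})=|\boldsymbol{a}-\boldsymbol{b}|\leq n$, and in particular $G(f)$ is connected.

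There is essentially no obstacle here: the whole argument rests on the single fact that the set of satisfying assignments of a monotone function is an up-set of the Boolean lattice, which makes the geodesic through $\boldsymbol{a}\vee\boldsymbol{b}$ feasible. The only minor point worth stating explicitly is that $[B]\subseteq\mathsf{M}$, which is immediate from $\mathsf{M}$ being a closed class in Post's lattice (so superposition preserves monotonicity).
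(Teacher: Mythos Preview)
Your proof is correct and follows essentially the same approach as the paper: both use monotonicity to route a path from $\boldsymbol{a}$ to $\boldsymbol{b}$ through $\boldsymbol{a}\vee\boldsymbol{b}$ by first flipping up the coordinates where $a_i=0,b_i=1$ and then flipping down those where $a_i=1,b_i=0$. Your version is slightly more explicit in invoking upward-closedness and in noting the trivial lower bound $d_f(\boldsymbol{a},\boldsymbol{b})\geq|\boldsymbol{a}-\boldsymbol{b}|$, but the argument is the same.
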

\begin{proof}
Table \ref{tab:List-of-all} shows that $f$ is monotone in this case.
Thus, either $f=0$, or $(1,\ldots,1)$ must be a solution, and every
other solution $\boldsymbol{a}$ is connected to $(1,\ldots,1)$ in
$G(f)$ since $(1,\ldots,1)$ can be reached by flipping the variables
assigned 0 in $\boldsymbol{a}$ one at a time to 1. Further, if $\boldsymbol{a}$
and $\boldsymbol{b}$ are solutions, $\boldsymbol{b}$ can be reached
from $\boldsymbol{a}$ in $|\boldsymbol{a}-\boldsymbol{b}|$ steps
by first flipping all variables that are assigned 0 in $\boldsymbol{a}$
and 1 in $\boldsymbol{b}$, and then flipping all variables that are
assigned 1 in $\boldsymbol{a}$ and 0 in $\boldsymbol{b}$.\end{proof}
\begin{lem}
If $B\subseteq\mathsf{S}_{0}$, the solution graph of any function
$f\in[B]$ is connected, and $d_{f}(\boldsymbol{a},\boldsymbol{b})\leq|\boldsymbol{a}-\boldsymbol{b}|+2$
for any two solutions $\boldsymbol{a}$ and $\boldsymbol{b}$.\end{lem}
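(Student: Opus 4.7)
The plan is to exploit the 0-separating property directly. Since $\mathsf{S}_{0}$ is a clone and $B\subseteq\mathsf{S}_{0}$, we have $[B]\subseteq\mathsf{S}_{0}$, so every $f\in[B]$ is 0-separating. Unpacking the definition from Table~\ref{tab:List-of-all}, this yields a ``safe'' coordinate $i^{*}\in\{1,\ldots,n\}$ with the property that $a_{i^{*}}=0$ for every non-solution $\boldsymbol{a}\in f^{-1}(0)$; equivalently, every vector $\boldsymbol{a}$ with $a_{i^{*}}=1$ is a solution of $f$. I would use this coordinate as a ``rail'' along which we may freely flip the remaining variables.

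Given solutions $\boldsymbol{a}$ and $\boldsymbol{b}$, define $\boldsymbol{a}'$ to be $\boldsymbol{a}$ with the $i^{*}$-th coordinate set to $1$, and similarly $\boldsymbol{b}'$ from $\boldsymbol{b}$. Both $\boldsymbol{a}'$ and $\boldsymbol{b}'$ are solutions because their $i^{*}$-th coordinate is $1$. Reaching $\boldsymbol{a}'$ from $\boldsymbol{a}$ costs at most one step (it costs nothing if $a_{i^{*}}=1$ already, and that single step lands in a solution), and similarly for $\boldsymbol{b}\to\boldsymbol{b}'$. Now I would walk from $\boldsymbol{a}'$ to $\boldsymbol{b}'$ by flipping, one at a time and in arbitrary order, exactly those variables on which they disagree; since $a'_{i^{*}}=b'_{i^{*}}=1$, the coordinate $i^{*}$ is never touched, and every intermediate vector along the walk still has its $i^{*}$-th coordinate equal to $1$, hence is a solution. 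The length of this middle segment is $|\boldsymbol{a}'-\boldsymbol{b}'|$, which is either $|\boldsymbol{a}-\boldsymbol{b}|$ (if $a_{i^{*}}=b_{i^{*}}$) or $|\boldsymbol{a}-\boldsymbol{b}|-1$ (if they disagree on $i^{*}$).

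Summing the three segments yields a path in $G(f)$ from $\boldsymbol{a}$ to $\boldsymbol{b}$ of length at most $1+|\boldsymbol{a}-\boldsymbol{b}|+1=|\boldsymbol{a}-\boldsymbol{b}|+2$, proving both connectedness and the claimed distance bound. There is no real obstacle here; the only thing to be careful about is book-keeping the three cases according to the values of $a_{i^{*}}$ and $b_{i^{*}}$ so that the ``$+2$'' does not become a ``$+1$'' accident (the worst case is when both $a_{i^{*}}=b_{i^{*}}=0$, which is precisely where the tight bound $|\boldsymbol{a}-\boldsymbol{b}|+2$ is attained).
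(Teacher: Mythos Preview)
Your proposal is correct and follows essentially the same approach as the paper: exploit the 0-separating property to find a coordinate $i^{*}$ such that every vector with $i^{*}$-th entry equal to $1$ is a solution, then route any path through this ``rail'' by flipping coordinate $i^{*}$ to $1$ first if necessary, walking freely among the remaining disagreeing coordinates, and flipping $i^{*}$ back at the end. Your write-up is slightly more explicit in the case analysis than the paper's, but the argument is the same.
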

\begin{proof}
Since $f$ is 0-separating, there is an $i$ such that $a_{i}=0$
for every vector $\boldsymbol{a}$ with $f(\boldsymbol{a})=0$, thus
every $\boldsymbol{b}$ with $b_{i}=1$ is a solution. It follows
that every solution $\boldsymbol{t}$ can be reached from any solution
$\boldsymbol{s}$ in at most $|\boldsymbol{s}-\boldsymbol{t}|+2$
steps by first flipping the $i$-th variable from 0 to 1 if necessary,
then flipping all other variables in which $\boldsymbol{s}$ and $\boldsymbol{t}$
differ, and finally flipping back the $i$-th variable if necessary.\end{proof}
\begin{lem}
\label{lem:L}If $B\subseteq\mathsf{L}$,
\begin{enumerate}
\item \noun{st-Circ-Conn(}$B$\emph{)} and \noun{Circ-Conn(}$B$\emph{)}
are in P,
\item \noun{st-BF-Conn(}$B$\emph{)} and \noun{BF-Conn(}$B$\emph{)} are
in P,
\item for any function $f\in[B]$, $d_{f}(\boldsymbol{a},\boldsymbol{b})=|\boldsymbol{a}-\boldsymbol{b}|$
for any two solutions $\boldsymbol{a}$ and $\boldsymbol{b}$ that
lie in the same connected component of $G(\phi)$.
\end{enumerate}
\end{lem}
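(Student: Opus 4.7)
The plan is to exploit the fact that if $B \subseteq \mathsf{L}$, then every function in $[B]$ is affine, i.e.\ representable as $f(x_1,\ldots,x_n) = \bigoplus_{i \in S} x_i \oplus c$ for some $S \subseteq \{1,\ldots,n\}$ and $c \in \{0,1\}$. From this single representation, all three claims will drop out.

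First I would describe the structure of $G(f)$ for an affine $f$. The solution set is an affine subspace: $\{\boldsymbol{a} : \bigoplus_{i \in S} a_i = 1 \oplus c\}$. Flipping any single variable $x_i$ with $i \in S$ changes the parity and therefore leads out of the solution set, while flipping any $x_j$ with $j \notin S$ preserves it. Hence the connected components of $G(f)$ are exactly the cosets obtained by fixing the values of all variables in $S$ to an assignment satisfying the equation; each component is an $(n-|S|)$-dimensional subcube. Two solutions $\boldsymbol{a},\boldsymbol{b}$ lie in the same component iff they agree on every coordinate in $S$, and in that case a shortest path is obtained by flipping, one at a time, exactly the coordinates outside $S$ on which they differ; this gives claim 3, $d_f(\boldsymbol{a},\boldsymbol{b}) = |\boldsymbol{a}-\boldsymbol{b}|$, and also yields a linear diameter bound within every component.

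Next I would give the algorithms. The key subroutine computes, given a $B$-circuit (hence by Proposition~\ref{pro:trf} also given a $B$-formula), the pair $(S,c)$ of the affine function it defines. This is done by a bottom-up pass: for each gate $v$, store a subset $S_v \subseteq \{1,\ldots,n\}$ and a bit $c_v$ such that the function computed at $v$ equals $\bigoplus_{i \in S_v} x_i \oplus c_v$. The base cases (input variables and $0$-ary functions) are immediate, and for a gate labelled with an affine $f \in B$ of arity $k$, whose inputs currently carry $(S_{v_1},c_{v_1}),\ldots,(S_{v_k},c_{v_k})$, the XOR-normal form of $f$ is known and fixed, so one simply symmetric-differences the $S_{v_j}$ with matching coefficient and XORs the corresponding constants with $c$. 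This runs in polynomial time and space in the circuit size.

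Once $(S,c)$ is known, connectivity of $G(f)$ is decided by the criterion $|S| \le 1$, with unsatisfiable $f$ (the case $S=\emptyset$, $c=0$) counted as connected by convention. For $st$-connectivity, given $\boldsymbol{s}$ and $\boldsymbol{t}$, one first checks that both satisfy $f$, and then verifies $s_i = t_i$ for every $i \in S$; both tests are clearly polynomial. The main thing to watch out for is that the representation $(S,c)$ is correctly maintained through composition — in particular that identified or fictive input variables of the inner functions are handled by summing coefficients modulo 2 — but since each function in $B$ is affine of bounded arity, this is routine bookkeeping rather than a real obstacle.
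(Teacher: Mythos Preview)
Your proof is correct and follows essentially the same approach as the paper: both exploit the affine normal form $\bigoplus_{i\in S}x_i\oplus c$, observe that components are subcubes obtained by fixing the non-fictive variables (those in $S$), and decide connectivity via $|S|\le 1$ and $st$-connectivity via agreement on $S$. The only cosmetic difference is in how $S$ is computed: you propagate the pair $(S_v,c_v)$ bottom-up through the circuit, whereas the paper first rewrites the circuit over the base $\{x\oplus y,1\}$ and then tests, for each variable, whether the number of backward paths from the output to that variable is odd---both are polynomial-time computations of the same set.
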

\begin{proof}
Since every function $f\in\mathsf{L}$ is linear, $f(x_{1},\ldots,x_{n})=x_{i_{1}}\oplus\ldots\oplus x_{i_{m}}\oplus c$,
and any two solutions $\boldsymbol{s}$ and $\boldsymbol{t}$ are
connected iff they differ only in fictive variables: If $\boldsymbol{s}$
and $\boldsymbol{t}$ differ in at least one non-fictive variable
(i.e., an $x_{i}\in\{x_{i_{1}},\ldots,x_{i_{m}}\}$), to reach $\mathbf{t}$
from $\mathbf{s}$, $x_{i}$ must be flipped eventually, but for every
solution $\boldsymbol{a}$, any vector $\boldsymbol{b}$ that differs
from $\boldsymbol{a}$ in exactly one non-fictive variable is no solution.
If $\boldsymbol{s}$ and $\boldsymbol{t}$ differ only in fictive
variables, $\boldsymbol{t}$ can be reached from $\boldsymbol{s}$
in $|\boldsymbol{s}-\boldsymbol{t}|$ steps by flipping one by one
the variables in which they differ.

Since $\{x\oplus y,1\}$ is a base of $\mathsf{L}$, every $B$-circuit
$\mathcal{C}$ can be transformed in polynomial time into an equivalent
$\{x\oplus y,1\}$-circuit $\mathcal{C}'$ by replacing each gate
of $\mathcal{C}$ with an equivalent $\{x\oplus y,1\}$-circuit. Now
one can decide in polynomial time whether a variable $x_{i}$ is fictive
by checking for $\mathcal{C}'$ whether the number of ``backward
paths'' from the output gate to gates labeled with $x_{i}$ is odd,
so \noun{st-Circ-Conn(}$B$) is in P.

$G(\mathcal{C})$ is connected iff at most one variable is non-fictive,
thus \noun{Circ-Conn(}$B$) is in P.

By Proposition \ref{pro:trf}, \noun{st-BF-Conn(}$B$) and \noun{BF-Conn(}$B$)
are in P also.
\end{proof}
This completes the proof of the easy side of the dichotomy.

\section{\label{sub:The-hard-cases}The Hard Side of the Dichotomy}

Clearly, we can transfer the upper bounds for general CNF-formulas
to $B$-formulas and $B$-circuits:
\begin{prop}
\noun{st-Circ-Conn(}$B$\emph{)} and \noun{Circ-Conn(}$B$\emph{)},
as well as \noun{st-BF-Conn(}$B$\emph{)} and \noun{BF-Conn(}$B$\emph{)},
are in $\mathrm{PSPACE}$ for any finite set \noun{$B$} of Boolean
functions\noun{.}\end{prop}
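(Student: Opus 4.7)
The plan is to mimic exactly the argument used for general CNF-formulas at the beginning of Section~\ref{sec:ge}, adapted to the two new representation formats. The key observation is that for both $B$-formulas and $B$-circuits, evaluation on a given input assignment is tractable: a $B$-formula can be evaluated in linear time by structural recursion on its syntax tree, and a $B$-circuit can be evaluated in polynomial time by processing the gates in topological order. This gives us an efficient membership test for the solution graph, which is all we need.

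For \noun{st-Circ-Conn($B$)} and \noun{st-BF-Conn($B$)}, I would give a nondeterministic polynomial-space algorithm: on input $(\phi,\boldsymbol{s},\boldsymbol{t})$ with $n$ variables, guess a sequence of vectors $\boldsymbol{s}=\boldsymbol{u}^0,\boldsymbol{u}^1,\ldots,\boldsymbol{u}^r=\boldsymbol{t}$ of length $r\le 2^n$, one vector at a time, verifying at each step that $|\boldsymbol{u}^i-\boldsymbol{u}^{i+1}|=1$ and that $\boldsymbol{u}^{i+1}$ is a solution of $\phi$ (using the polynomial-time evaluation procedure). Only the current vector, the previous vector, and a binary step-counter of $n$ bits need be stored, so the whole computation uses polynomial space. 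Hence both problems lie in $\mathrm{NPSPACE}$, which equals $\mathrm{PSPACE}$ by Savitch's theorem.

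For \noun{Circ-Conn($B$)} and \noun{BF-Conn($B$)}, I would reduce to the $st$-version. We need to verify that every pair of solutions is connected; equivalently, the complement asks whether there exist two solutions $\boldsymbol{a},\boldsymbol{b}$ that are not $st$-connected. By reusing space, iterate over all pairs $(\boldsymbol{a},\boldsymbol{b})\in\{0,1\}^n\times\{0,1\}^n$; for each such pair, evaluate $\phi$ on $\boldsymbol{a}$ and $\boldsymbol{b}$, and if both are solutions, invoke the above $\mathrm{PSPACE}$ procedure for $st$-connectivity (negated). Since $\mathrm{PSPACE}$ is closed under complement and under polynomially many sequential subroutine calls that reuse space, the overall procedure runs in polynomial space.

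The proof is entirely routine, with no real obstacle: the only points to check are that evaluation of $B$-formulas and $B$-circuits is in $\mathrm{P}$ (immediate from Definitions~\ref{def:b} and~\ref{def:bc}) and that counters and intermediate vectors fit in polynomial space. There is no dependence on the structure of $B$ beyond its being finite, which is why the upper bound is uniform across all of Post's lattice.
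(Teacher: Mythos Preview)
Your argument is correct and is exactly the approach the paper uses: the paper's proof simply says ``This follows as in Lemma~3.6 of \cite{gop} (see \prettyref{lem:l36}),'' referring to the same guess-a-path-in-NPSPACE argument for $st$-connectivity and the iterate-over-all-pairs argument for connectivity that you have spelled out in detail.
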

\begin{proof}
This follows as in Lemma 3.6 of \citep{gop} (see \prettyref{lem:l36}).
\end{proof}
All hardness proofs will be by reductions from the problems for 1-reproducing
3-CNF-formulas, which are $\mathrm{PSPACE}$-complete by the following
proposition.
\begin{prop}
For 1-reproducing 3-CNF-formulas, the problems \noun{st-Conn }and
\noun{Conn} are $\mathrm{PSPACE}$-hard.\end{prop}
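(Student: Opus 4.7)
The plan is to reduce from the \noun{PSPACE}-hard problems \noun{st-Conn} and \noun{Conn} for general 3-CNF-formulas (\prettyref{lem:l36}) via a simple variable-flipping transformation. Given any 3-CNF-formula $\phi$ on variables $x_1,\ldots,x_n$ together with a known solution $\boldsymbol{s}$, define $\phi'(\boldsymbol{x}) := \phi(\boldsymbol{x} \oplus \overline{\boldsymbol{s}})$; syntactically, $\phi'$ is obtained from $\phi$ by negating, in each clause, every literal on a variable $x_i$ with $s_i = 0$. Since this is done literal-wise, $\phi'$ is again a 3-CNF-formula. Every clause of $\phi$ contains at least one literal satisfied by $\boldsymbol{s}$, and that literal becomes a positive literal in $\phi'$; hence every clause of $\phi'$ contains a positive literal, which is exactly the syntactic characterisation of $\phi'$ being 1-reproducing. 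Since $\boldsymbol{a} \mapsto \boldsymbol{a} \oplus \overline{\boldsymbol{s}}$ is a Hamming-isometry of $\{0,1\}^n$, it induces a graph isomorphism $G(\phi) \cong G(\phi')$ that sends $\boldsymbol{s}$ to $\boldsymbol{1}$.

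For \noun{st-Conn} this immediately yields the reduction: given a 3-CNF instance $(\phi,\boldsymbol{s},\boldsymbol{t})$, output the 1-reproducing 3-CNF instance $(\phi',\boldsymbol{1},\boldsymbol{t} \oplus \overline{\boldsymbol{s}})$, and by the isomorphism the two $st$-connectivity questions are equivalent. For \noun{Conn} we have no $\boldsymbol{s}$ in the input, so we instead apply the same transformation to the concrete instance produced by Gopalan et al.'s proof of \prettyref{lem:l36}: from a space-bounded Turing machine $M$ and input $w$, they construct a 3-CNF-formula $\phi$ together with an explicit ``initial configuration'' solution $\boldsymbol{s}$ such that $G(\phi)$ is connected iff $M$ accepts $w$. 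Flipping by $\overline{\boldsymbol{s}}$ yields a 1-reproducing 3-CNF-formula $\phi'$ whose solution graph is isomorphic to $G(\phi)$, so $G(\phi')$ is connected iff $M$ accepts $w$. There is essentially no obstacle: once one spots the bit-flip transformation, the three required properties (3-CNF shape, 1-reproducibility, and isomorphism of solution graphs) follow directly, so the hardness of the restricted problems is inherited from the unrestricted ones.
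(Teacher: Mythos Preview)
Your proof is correct and follows essentially the same approach as the paper: the paper also applies the bit-flip transformation $\psi(\boldsymbol{x})=\phi(x_1\oplus s_1\oplus 1,\ldots,x_n\oplus s_n\oplus 1)$ (which is your $\phi(\boldsymbol{x}\oplus\overline{\boldsymbol{s}})$) using the explicitly known solution $\boldsymbol{s}$ from the Turing-machine reduction of \prettyref{lem:l36}. You are simply more explicit than the paper in verifying that the result is again a 3-CNF and in spelling out the syntactic characterisation of 1-reproducibility, but the underlying idea is identical.
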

\begin{proof}
In the $\mathrm{PSPACE}$-hardness proof for general 3-CNF-formulas
(Lemma 3.6 of \citep{gop}, see \prettyref{lem:l36}), two satisfying
assignments $\boldsymbol{s}$ and $\boldsymbol{t}$ to the constructed
formula $\phi$ are known, so we can construct a connectivity-equivalent
1-reproducing 3-CNF-formula $\psi$, e.g. as $\psi(\boldsymbol{x})=\phi(x_{1}\oplus s_{1}\oplus1,\ldots,x_{n}\oplus s_{n}\oplus1)$,
and then check connectivity for $\psi$ instead of $\phi$.
\end{proof}
An inspection of Post's lattice shows that if $B\nsubseteq\mathsf{M}$,
$B\nsubseteq\mathsf{L}$, and $B\nsubseteq\mathsf{S}_{0}$, then $[B]\supseteq\mathsf{S}_{12}$,
$[B]\supseteq\mathsf{D}_{1}$, or $[B]\supseteq\mathsf{S}_{02}^{k}\,\forall k\geq2$,
so we have to prove $\mathrm{PSPACE}$-completeness and show the existence
of $B$-formulas with an exponential diameter in these cases.
\begin{defn}
We write $\boldsymbol{x}=\boldsymbol{c}$ or $\boldsymbol{x}=c_{1}\cdots c_{n}$
for $(x_{1}=c_{1})\wedge\cdots\wedge(x_{n}=c_{n})$, where $\boldsymbol{c}=(c_{1},\ldots,c_{n})$
is a vector of constants; e.g.. $\boldsymbol{x}=\boldsymbol{0}$ means
$\overline{x}_{1}\wedge\cdots\wedge\overline{x}_{n}$, and $\boldsymbol{x}=101$
means $x_{1}\wedge\overline{x}_{2}\wedge x_{3}$. Further, we use
$\boldsymbol{x}\in\left\{ \boldsymbol{a},\boldsymbol{b},\ldots\right\} $
for $(\boldsymbol{x}=\boldsymbol{a})\vee(\boldsymbol{x}=\boldsymbol{b})\vee\ldots$.
Also, we write $\psi(\boldsymbol{\overline{x}})$ for $\psi(\overline{x}_{1},\ldots,\overline{x}_{n})$.
If we have two vectors of Boolean values $\boldsymbol{a}$ and $\boldsymbol{b}$
of length $n$ and $m$ resp., we write $\boldsymbol{a}\cdot\boldsymbol{b}$
for their concatenation $(a_{1},\ldots,a_{n},b_{1},\ldots b_{m})$.\end{defn}
\begin{lem}
\label{lem:s12}If $[B]\supseteq\mathsf{S}_{12}$,
\begin{enumerate}
\item \noun{st-BF-Conn(}$B$\emph{)} and \noun{BF-Conn(}$B$\emph{)} are
$\mathrm{PSPACE}$-complete,
\item \noun{st-Circ-Conn(}$B$\emph{)} and \noun{Circ-Conn(}$B$\emph{)}
are $\mathrm{PSPACE}$-complete,
\item for $n\geq3$, there is an $n$-ary function $f\in[B]$ with diameter
of at least $2^{\left\lfloor \frac{n-1}{2}\right\rfloor }$.
\end{enumerate}
\end{lem}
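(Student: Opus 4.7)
My plan is to reduce from the \noun{st-Conn} and \noun{Conn} problems on 1-reproducing 3-CNF-formulas, which were just shown to be $\mathrm{PSPACE}$-hard. Given such a formula $\psi(\boldsymbol{x}) = c_1 \wedge \cdots \wedge c_m$, I introduce a fresh variable $y$ and consider $\varphi(\boldsymbol{x}, y) := y \wedge \psi(\boldsymbol{x})$. Because $\psi(\boldsymbol{1}) = 1$, one has $\varphi(\boldsymbol{1}, 1) = 1$; clearly $\varphi(\boldsymbol{0}, 0) = 0$; and $\varphi = 1$ forces $y = 1$. Hence $\varphi \in \mathsf{S}_1 \cap \mathsf{R}_2 = \mathsf{S}_{12} \subseteq [B]$, so $\varphi$ is computable by some $B$-circuit and some $B$-formula.

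The main technical step is realising $\varphi$ by a \emph{polynomial-sized} $B$-formula (and $B$-circuit). The 1-reproducing property of $\psi$ forces every clause $c_i$ to contain at least one positive literal, so each $y \wedge c_i$ is a 4-ary function lying in $\mathsf{S}_{12}$ (the verification is exactly as for $\varphi$). Since $\mathsf{S}_{12} \subseteq [B]$, each of the finitely many such 4-ary functions admits a $B$-formula of size bounded by a constant $k_B$ depending only on $B$. Binary conjunction $a \wedge b$ likewise lies in $\mathsf{S}_{12}$ and hence has a fixed $B$-formula $\mathrm{AND}_B(a,b)$. Assembling $\varphi = \mathrm{AND}_B(y \wedge c_1,\; \mathrm{AND}_B(y \wedge c_2,\; \ldots\;, \mathrm{AND}_B(y \wedge c_{m-1},\, y \wedge c_m) \ldots))$ in right-associative order, every level of $\mathrm{AND}_B$ duplicates only the constant-sized subformula $y \wedge c_i$, giving a total $B$-formula size of $O_B(m)$. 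For $B$-circuits the construction is even simpler, since subcircuits can be shared outright.

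The correctness of the reduction is immediate: no solution of $\varphi$ has $y = 0$, and the layer $y = 1$ of $G(\varphi)$ is isomorphic to $G(\psi)$, because flipping $y$ at any solution exits the solution set. Thus $(\boldsymbol{s}, 1)$ and $(\boldsymbol{t}, 1)$ are connected in $G(\varphi)$ iff $\boldsymbol{s}, \boldsymbol{t}$ are connected in $G(\psi)$, and $G(\varphi)$ is connected iff $G(\psi)$ is. This establishes claims (a) and (b) together.

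For the diameter bound (c), I will feed \prettyref{lem:l37}'s formula $\phi_n$ through the same construction, after first making it 1-reproducing via the variable-renaming $\boldsymbol{x} \mapsto \boldsymbol{x} \oplus \boldsymbol{s} \oplus \boldsymbol{1}$ using an endpoint $\boldsymbol{s}$ of its Hamiltonian-path solution graph. The resulting $(n+1)$-ary function inherits the diameter of $\phi_n$, which exceeds $2^{n/2}$; choosing $n$ of the appropriate parity (and padding with a fictive variable if necessary) yields, for every $n' \geq 3$, an $n'$-ary function in $[B]$ with diameter at least $2^{\lfloor (n'-1)/2 \rfloor}$. The main obstacle I foresee is controlling the formula size in the presence of the inherent duplication of $\mathsf{S}_{12}$-style conjunctions (for example $a \wedge b$ realised as $g(a,b,a)$ where $g = x \wedge (y \vee \neg z)$); the right-associative assembly with constant-sized clause gadgets is precisely what avoids any exponential blow-up.
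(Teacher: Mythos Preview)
Your transformation $\varphi = y \wedge \psi$, the verification that $\varphi \in \mathsf{S}_{12}$, and the connectivity-equivalence argument are exactly what the paper does, and your treatment of parts~2 and~3 is fine. The gap is in the size bound for the $B$-formula in part~1. You assert that in the right-associative assembly ``every level of $\mathrm{AND}_B$ duplicates only the constant-sized subformula $y \wedge c_i$'', but this holds only if the fixed $B$-formula $\mathrm{AND}_B(a,b)$ uses its second argument $b$ exactly once. Your example $g(a,b,a)$ with $g = x \wedge (y \vee \neg z)$ happens to have this property, but $B$ is an \emph{arbitrary} finite set with $[B] \supseteq \mathsf{S}_{12}$; nothing you have said rules out that every $B$-formula computing $a \wedge b$ uses both $a$ and $b$ at least twice. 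In that case your linear-depth chain blows up to size $\Theta(q^m)$ for some $q \ge 2$.

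The paper's fix is simple and robust: parenthesize the conjunction as a \emph{balanced} binary tree of depth $O(\log m)$ rather than a right-associative chain. Then even if $\mathrm{AND}_B$ duplicates each of its two inputs some constant number $c$ of times, the total size is bounded by $c^{O(\log m)} \cdot O(m) = m^{O(1)}$. The paper packages this as a recursive algorithm $\textsc{Tr}$ and verifies the compositional identity $T_\wedge(T_{\psi_1},T_{\psi_2}) \equiv T_{\psi_1 \wedge \psi_2}$, which here is just $(y \wedge \psi_1)\wedge(y \wedge \psi_2)\wedge y \equiv y \wedge(\psi_1\wedge\psi_2)$; once you switch to the balanced tree, your argument goes through without further change.
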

\begin{proof}
1. We reduce the problems for 1-reproducing 3-CNF-formulas to the
ones for $B$-formulas: We map a 1-reproducing 3-CNF-formula $\phi$
and two solutions $\boldsymbol{s}$ and $\boldsymbol{t}$ of $\phi$
to a $B$-formula $\phi'$ and two solutions $\boldsymbol{s'}$ and
$\boldsymbol{t'}$ of $\phi'$ such that $\boldsymbol{s'}$ and $\boldsymbol{t'}$
are connected in $G(\phi')$ iff $\boldsymbol{s}$ and $\boldsymbol{t}$
are connected in $G(\phi)$, and such that $G(\phi')$ is connected
iff $G(\phi)$ is connected.

While the construction of $\phi'$ is quite easy for this lemma, the
construction for the next two lemmas is analogous but more intricate,
so we proceed carefully in two steps, which we will adapt in the next
two proofs: In the first step, we give a transformation $T$ that
transforms any 1-reproducing formula $\psi$ into a connectivity-equivalent
formula $T_{\psi}\in\mathsf{S}_{12}$ built from the standard connectives.
Since $\mathsf{S}_{12}\subseteq[B]$, we can express $T_{\psi}$ as
a $B$-formula $T_{\psi}^{*}$. Now if we would apply $T$ to $\phi$
directly, we would know that $T_{\phi}$ can be expressed as a $B$-formula.
However, this could lead to an exponential increase in the formula
size (see \prettyref{sec:Cir}), so we have to show how to construct
the $B$-formula in polynomial time. For this, in the second step,
we construct a $B$-formula $\phi'$ directly from $\phi$ (by applying
$T$ to the clauses and the $\wedge$'s individually), and then show
that $\phi'$ is equivalent to $T_{\phi}$; thus we know that $\phi'$
is connectivity-equivalent to $\phi$.

\emph{Step 1.} From Table \ref{tab:List-of-all}, we find that $\mathsf{S}_{12}=\mathsf{S}_{1}\cap\mathsf{R}_{2}=\mathsf{S}_{1}\cap\mathsf{R}_{0}\cap\mathsf{R}_{1}$,
so we have to make sure that $T_{\psi}$ is 1-seperating, 0-reproducing,
and 1-reproducing. Let 
\[
T_{\psi}=\psi\wedge y,
\]
where $y$ is a new variable.

All solutions $\boldsymbol{a}$ of $T_{\psi}(\boldsymbol{x},y)$ have
$a_{n+1}=1$, so $T_{\psi}$ is 1-seperating and 0-reproducing; also,
$T_{\psi}$ is still 1-reproducing. Further, for any two solutions
$\boldsymbol{s}$ and $\boldsymbol{t}$ of $\psi(\boldsymbol{x})$,
$\boldsymbol{s}'=\boldsymbol{s}\cdot1$ and $\boldsymbol{t}'=\boldsymbol{t}\cdot1$
are solutions of $T_{\psi}(\boldsymbol{x},y)$, and it is easy to
see that they are connected in $G(T_{\psi})$ iff $\boldsymbol{s}$
and $\boldsymbol{t}$ are connected in $G(\psi)$, and that $G(T_{\psi})$
is connected iff $G(\psi)$ is connected.

\emph{Step 2.} The idea is to parenthesize the conjunctions of $\phi$
such that we get a tree of $\wedge$'s of depth logarithmic in the
size of $\phi$, and then to replace each clause and each $\wedge$
with an equivalent $B$-formula. This can increase the formula size
by only a polynomial in the original size even if the $B$-formula
equivalent to $\wedge$ uses some input variable more than once.

Let $\phi=C_{1}\wedge\cdots\wedge C_{n}$ be a 1-reproducing 3-CNF-formula.
Since $\phi$ is 1-reproducing, every clause $C_{i}$ of $\phi$ is
itself 1-reproducing, and we can express $T_{C_{i}}$ through a $B$-formula
$T_{C_{i}}^{*}$. Also, we can express $T_{u\wedge v}$ through a
$B$-formula $T_{u\wedge v}^{*}$ since $\wedge$ is 1-reproducing;
we write $T_{\wedge}(\psi_{1},\psi_{2})$ for the formula obtained
from $T_{u\wedge v}$ by substituting the formula $\psi_{1}$ for
$u$ and $\psi_{2}$ for $v$, and similarly write $T_{\wedge}^{*}(\psi_{1},\psi_{2})$
for the formula obtained from $T_{u\wedge v}^{*}$ in this way. We
let $\phi'=$\noun{Tr$(\phi)$}, where \noun{Tr} is the following
recursive algorithm that takes a CNF-formula as input:\\
\\
Algorithm \noun{Tr}$\left(\psi_{1}\wedge\cdots\wedge\psi_{m}\right)$
\begin{itemize}[label= ]
\item If $m=1$, return $T_{\psi_{1}}^{*}$.
\item Else return $T_{\wedge}^{*}\left(\textsc{Tr}(\psi_{1}\wedge\cdots\wedge\psi_{\left\lfloor m/2\right\rfloor }),\textsc{Tr}(\psi_{\left\lfloor m/2\right\rfloor +1}\wedge\cdots\wedge\psi_{m})\right)$.
\end{itemize}
Since the recursion terminates after a number of steps logarithmic
in the number of clauses of $\phi$, and every step increases the
total formula size by only a constant factor, the algorithm runs in
polynomial time. We show $\phi'\equiv T_{\phi}$ by induction on $m$.
For $m=1$ this is clear. For the induction step, we have to show
$T_{\wedge}^{*}(T_{\psi_{1}},T_{\psi_{2}})\equiv T_{\psi_{1}\wedge\psi_{2}}$,
but since $T_{\wedge}(\psi_{1},\psi_{2})\equiv T_{\wedge}^{*}(\psi_{1},\psi_{2})$,
it suffices to show that $T_{\wedge}(T_{\psi_{1}},T_{\psi_{2}})\equiv T_{\psi_{1}\wedge\psi_{2}}$:

\[
T_{\wedge}(T_{\psi_{1}},T_{\psi_{2}})=(\psi_{1}\wedge y)\wedge(\psi_{2}\wedge y)\wedge y\equiv\psi_{1}\wedge\psi_{2}\wedge y=T_{\psi_{1}\wedge\psi_{2}}.
\]

2. This follows from 1.\ by Proposition \ref{pro:trf}.

3. By \prettyref{lem:l37}, there is an 1-reproducing $(n-1)$-ary
function $f$ with diameter of at least $2^{\left\lfloor \frac{n-1}{2}\right\rfloor }$.
Let $f$ be represented by a formula $\phi$; then, $T_{\phi}$ represents
an $n$-ary function of the same diameter in $\mathsf{S}_{12}$.\end{proof}
\begin{lem}
\label{lem:If}If $[B]\supseteq\mathsf{D}_{1}$,
\begin{enumerate}
\item \noun{st-BF-Conn(}$B$\emph{)} and \noun{BF-Conn(}$B$\emph{)} are
$\mathrm{PSPACE}$-complete,
\item \noun{st-Circ-Conn(}$B$\emph{)} and \noun{Circ-Conn(}$B$\emph{)}
are $\mathrm{PSPACE}$-complete,
\item for $n\geq5$, there is an $n$-ary function $f\in[B]$ with diameter
of at least $2^{\left\lfloor \frac{n-3}{2}\right\rfloor }$.
\end{enumerate}
\end{lem}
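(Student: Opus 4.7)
The plan is to closely mirror the two-step strategy of \prettyref{lem:s12}. Since $\CloneD_1 = \CloneD \cap \CloneR_2$ consists of self-dual functions that are both $0$- and $1$-reproducing, the first task is to design a transformation $T$ that maps any $1$-reproducing formula $\psi(\boldsymbol{x})$ to a connectivity-equivalent formula $T_\psi \in \CloneD_1$ using only a constant number of auxiliary variables. The natural starting point is the self-dualization
\[
T_\psi(\boldsymbol{x}, y) \;=\; \bigl(\psi(\boldsymbol{x}) \wedge y\bigr) \vee \bigl(\overline{\psi(\overline{\boldsymbol{x}})} \wedge \overline{y}\bigr),
\]
which a direct case analysis shows to be self-dual; combined with $\psi \in \CloneR_1$ it also lies in $\CloneR_2$, hence in $\CloneD_1$. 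Its solution graph has two ``layers'' $y=1$ and $y=0$ that mirror $G(\psi)$ and its dual via the involution $\boldsymbol{x} \mapsto \overline{\boldsymbol{x}}$. To make $T_\psi$ connectivity-equivalent to $\psi$ for both \noun{st-Conn} and \noun{Conn}, I expect the construction to need one or two further control variables (matching the shift to $\lfloor(n-3)/2\rfloor$ in the diameter exponent) whose role is to (i) block the unintended ``bridges'' between the two layers that would shortcut $st$-paths, and (ii) provide one controlled bridge so that $G(T_\psi)$ is connected precisely when $G(\psi)$ is.

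Assuming such a gadget $T$ is in place, the second step is the polynomial-time construction of a $B$-formula $\phi'$ from a $1$-reproducing 3-CNF formula $\phi = C_1 \wedge \cdots \wedge C_m$. Just as in the proof of \prettyref{lem:s12}, I would parenthesize the conjunctions as a balanced binary tree of depth $O(\log m)$, express each $T_{C_i}$ and the fixed binary-conjunction gadget $T_{\wedge}$ as $B$-formulas $T_{C_i}^*$ and $T_{\wedge}^*$ (possible because $T_{C_i}, T_{\wedge} \in \CloneD_1 \subseteq [B]$), and assemble $\phi'$ by the recursive substitution algorithm \noun{Tr}. The compositional identity
\[
T_{\wedge}\bigl(T_{\psi_1}, T_{\psi_2}\bigr) \;\equiv\; T_{\psi_1 \wedge \psi_2}
\]
on the shared auxiliary variables will then yield $\phi' \equiv T_\phi$ by induction on tree depth, with only a constant blow-up per level and hence polynomial overall size.

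With $\phi'$ in hand, statement 1 follows as a reduction from \noun{st-Conn} and \noun{Conn} on $1$-reproducing 3-CNF formulas to \noun{st-BF-Conn}$(B)$ and \noun{BF-Conn}$(B)$, giving $\mathrm{PSPACE}$-hardness; the matching upper bound is the preceding proposition. Statement 2 follows from statement 1 via \prettyref{pro:trf}. For statement 3, applying $T$ to the $1$-reproducing exponential-diameter formula obtained from \prettyref{lem:l37} yields an $n$-ary function in $\CloneD_1 \subseteq [B]$ whose diameter is at least $2^{\lfloor(n-3)/2\rfloor}$, the shift in the exponent reflecting the number of auxiliary variables introduced by $T$.

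The hard part will be the design and verification of the gadget $T_\psi$ itself: self-duality forces the presence of a ``mirror'' evaluation $\psi(\overline{\boldsymbol{x}})$, and the interaction of this mirror with the control variables must be calibrated so that neither spurious edges between the two copies create unintended connections (which would spoil the hardness of $st$-connectivity) nor the graph is forced to split into components independent of $\psi$ (which would trivialize \noun{Conn}). Producing a concrete construction that satisfies both constraints together with the compositional identity $T_{\wedge}(T_{\psi_1}, T_{\psi_2}) \equiv T_{\psi_1 \wedge \psi_2}$ needed for the polynomial encoding is the technical core of the lemma; once this gadget is fixed, the remaining steps are essentially a rerun of the argument for \prettyref{lem:s12}.
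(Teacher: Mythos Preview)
Your high-level plan is exactly the paper's: a transformation $T_\psi\in\mathsf{D}_1$ on a few auxiliary variables, the balanced-tree algorithm \textsc{Tr}, the compositional identity $T_\wedge(T_{\psi_1},T_{\psi_2})\equiv T_{\psi_1\wedge\psi_2}$, then \prettyref{pro:trf} for circuits and \prettyref{lem:l37} for the diameter. But you explicitly leave the gadget $T_\psi$ unspecified, and that is the entire content of the lemma; everything else is indeed a rerun of \prettyref{lem:s12}. So as it stands the proposal is a correct outline with the central step missing.

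Your one-variable self-dualization $(\psi(\boldsymbol{x})\wedge y)\vee(\overline{\psi(\overline{\boldsymbol{x}})}\wedge\overline{y})$ is a natural first thought, but it is not the path the paper takes, and it is not clear it can be repaired with ``one or two further control variables'' while keeping the compositional identity. The difficulty you flag is real: whenever $\boldsymbol{a}$ satisfies $\psi$ but $\overline{\boldsymbol{a}}$ does not, both $\boldsymbol{a}\cdot1$ and $\boldsymbol{a}\cdot0$ are solutions, so the two layers are riddled with uncontrolled bridges. The paper avoids this by using \emph{three} auxiliary variables $\boldsymbol{y}=(y_1,y_2,y_3)$ from the start and a different layout: the copy of $\psi$ lives at $\boldsymbol{y}=111$, the dual copy at $\boldsymbol{y}=000$, and the ``middle'' levels $\boldsymbol{y}\in\{100,010,001\}$ are made tautological while $\{110,101,011\}$ are empty. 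This guarantees that the $111$-layer is isolated from everything else by Hamming distance~$2$, so $st$-connectivity is preserved; the $000$-layer together with the $\{100,010,001\}$-layer forms a single extra component. That extra component is then glued to the $111$-layer by adding the single vector $\boldsymbol{1}\cdot110$ and, to preserve self-duality, deleting $\boldsymbol{0}\cdot001$. The resulting $T_\psi$ is in $\mathsf{D}_1$, is connectivity-equivalent to $\psi$ for both problems, and (after a somewhat lengthy case analysis) satisfies the compositional identity needed for \textsc{Tr}. The three auxiliaries are what produce the shift to $\lfloor(n-3)/2\rfloor$.
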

\begin{proof}
1.$\;$As noted, we adapt the two steps from the previous proof.

\emph{Step 1. }Since $\mathsf{D}_{1}=\mathsf{D}\cap\mathsf{R}_{0}\cap\mathsf{R}_{1}$,
$T_{\psi}$ must be self-dual, 0-reproducing, and 1-reproducing. For
clarity, we first construct an intermediate formula $T_{\psi}^{\sim}\in\mathsf{D}_{1}$
whose solution graph has an additional component, then we eliminate
that component.

For $\psi(\boldsymbol{x})$, let
\[
T_{\psi}^{\sim}=\left(\psi(\boldsymbol{x})\wedge(\boldsymbol{y}=\boldsymbol{1})\right)\vee\left(\overline{\psi(\boldsymbol{\overline{x}})}\wedge(\boldsymbol{y}=\boldsymbol{0})\right)\vee\left(\boldsymbol{y}\in\left\{ 100,010,001\right\} \right),
\]
where $\boldsymbol{y}=(y_{1},y_{2},y_{3})$ are three new variables.

$T_{\psi}^{\sim}$ is self-dual: for any solution ending with 111
(satisfying the first disjunct), the inverse vector is no solution;
similarly, for any solution ending with 000 (satisfying the second
disjunct), the inverse vector is no solution; finally, all vectors
ending with 100, 010, or 001 are solutions and their inverses are
no solutions. Also, $T_{\psi}^{\sim}$ is still 1-reproducing, and
it is 0-reproducing (for the second disjunct note that $\overline{\psi(\overline{0\cdots0})}\equiv\overline{\psi(1\cdots1)}\equiv0$).

Further, every solution $\boldsymbol{a}$ of $\psi$ corresponds to
a solution $\boldsymbol{a}\cdot111$ of $T_{\psi}^{\sim}$, and for
any two solutions $\boldsymbol{s}$ and $\boldsymbol{t}$ of $\psi$,
$\boldsymbol{s}'=\boldsymbol{s}\cdot111$ and $\boldsymbol{t}'=\boldsymbol{t}\cdot111$
are connected in $G(T_{\psi}^{\sim})$ iff $\boldsymbol{s}$ and $\boldsymbol{t}$
are connected in $G(\psi)$: The ``if'' is clear, for the ``only
if'' note that since there are no solutions of $T_{\psi}^{\sim}$
ending with 110, 101, or 011, every solution of $T_{\psi}^{\sim}$
not ending with 111 differs in at least two variables from the solutions
that do.

Observe that exactly one connected component is added in $G(T_{\psi}^{\sim})$
to the components corresponding to those of $G(\psi)$: It consists
of all solutions ending with 000, 100, 010, or 001 (any two vectors
ending with 000 are connected e.g. via those ending with 100). It
follows that $G(T_{\psi}^{\sim})$ is always unconnected. To fix this,
we modify $T_{\psi}^{\sim}$ to $T_{\psi}$ by adding $1\cdots1\cdot110$
as a solution, thereby connecting $1\cdots1\cdot111$ (which is always
a solution since $T_{\psi}^{\sim}$ is 1-reproducing) with $1\cdots1\cdot100$,
and thereby with the additional component of $T_{\psi}$. To keep
the function self-dual, we must in turn remove $0\cdots0\cdot001$,
which does not alter the connectivity. Formally,

\begin{eqnarray}
T_{\psi} & = & \left(T_{\psi}^{\sim}\vee\left((\boldsymbol{x}=\boldsymbol{1})\wedge(\boldsymbol{y}=110)\right)\right)\wedge\neg\left((\boldsymbol{x}=\boldsymbol{0})\wedge(\boldsymbol{y}=001)\right)\label{eq:q1}\\
 & = & \left(\psi(\boldsymbol{x})\wedge(\boldsymbol{y}=\boldsymbol{1})\right)\vee\left(\overline{\psi(\boldsymbol{\overline{x}})}\wedge(\boldsymbol{y}=\boldsymbol{0})\right)\nonumber \\
 &  & \vee\left(\boldsymbol{y}\in\left\{ 100,010,001\right\} \wedge\neg((\boldsymbol{x}=\boldsymbol{0})\wedge(\boldsymbol{y}=001))\right)\nonumber \\
 &  & \vee((\boldsymbol{x}=\boldsymbol{1})\wedge(\boldsymbol{y}=110)).\nonumber 
\end{eqnarray}

\begin{figure}[!h]
\begin{centering}
\includegraphics[scale=0.42]{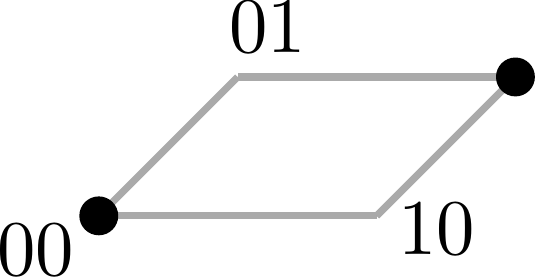}$\qquad$\includegraphics[scale=0.82]{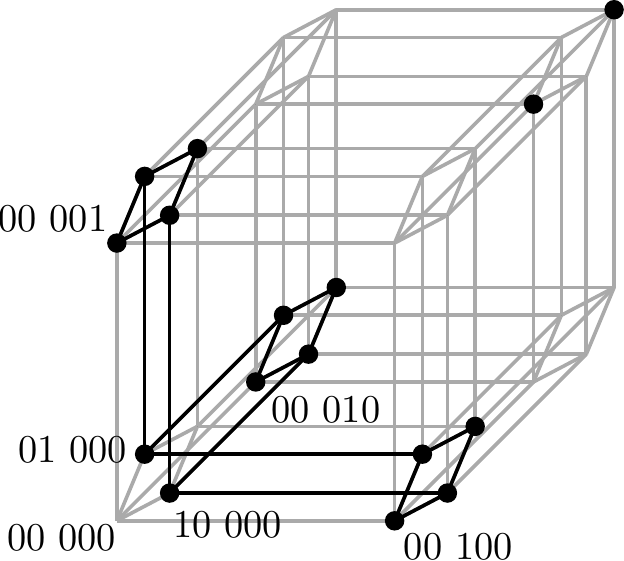}$\qquad$\includegraphics[scale=0.82]{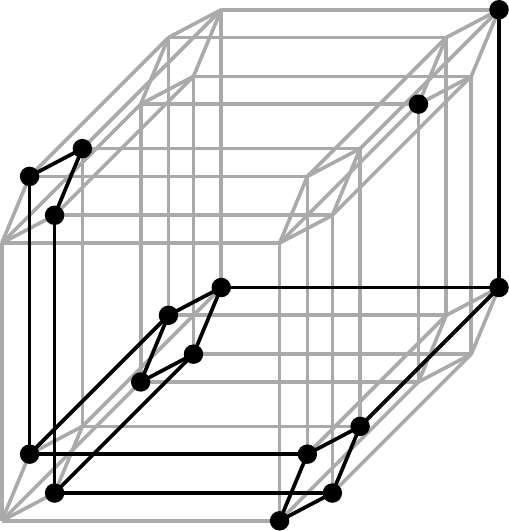}
\par\end{centering}

\protect\caption[An example for the transformation in the proof of \prettyref{lem:If}]{\emph{}An example for the transformation. Left: $\psi=\left(x_{1}\vee\overline{x_{2}}\right)\wedge\left(\overline{x_{1}}\vee x_{2}\right)$,
center: $T_{\psi}^{\sim}$, right: $T_{\psi}$. The \textquotedblleft axis
vertices\textquotedblright{} are labeled in the first two graphs.}
\end{figure}

Now $G(T_{\psi})$ is connected iff $G(\psi)$ is connected.

\emph{Step 2.} Again, we use the algorithm \noun{Tr} from the previous
proof to transform any 1-reproducing 3-CNF-formula $\phi$ into a
$B$-formula $\phi'$ equivalent to $T_{\phi}$, but with the definition
\prettyref{eq:q1} of $T$. Again, we have to show $T_{\wedge}(T_{\psi_{1}},T_{\psi_{2}})\equiv T_{\psi_{1}\wedge\psi_{2}}$.
Here, 
\begin{eqnarray*}
T_{\wedge}(T_{\psi_{1}},T_{\psi_{2}}) & = & \left(T_{\psi_{1}}\wedge T_{\psi_{2}}\wedge(\boldsymbol{y}=\boldsymbol{1})\right)\vee\left(\overline{\overline{T_{\psi_{1}}}\wedge\overline{T_{\psi_{2}}}}\wedge(\boldsymbol{y}=\boldsymbol{0})\right)\\
 &  & \vee\left(\boldsymbol{y}\in\left\{ 100,010,001\right\} \wedge\neg\left(\overline{T_{\psi_{1}}}\wedge\overline{T_{\psi_{2}}}\wedge(\boldsymbol{y}=001)\right)\right)\\
 &  & \vee\left(T_{\psi_{1}}\wedge T_{\psi_{2}}\wedge(\boldsymbol{y}=110)\right).
\end{eqnarray*}
We consider the parts of the formula in turn: For any formula $\xi$
we have $T_{\xi}(\boldsymbol{x}_{\xi})\wedge(\boldsymbol{y}=\boldsymbol{1})\equiv\xi(\boldsymbol{x}_{\xi})\wedge(\boldsymbol{y}=\boldsymbol{1})$
and $T_{\xi}(\boldsymbol{x}_{\xi})\wedge(\boldsymbol{y}=\boldsymbol{0})\equiv\overline{\psi(\overline{\boldsymbol{x}_{\xi}})}\wedge(\boldsymbol{y}=\boldsymbol{0})$,
where $\boldsymbol{x}_{\xi}$ denotes the variables of $\xi$. Using
$\overline{\overline{T_{\psi_{1}}(\boldsymbol{x}_{\psi_{1}})}\wedge\overline{T_{\psi_{2}}(\boldsymbol{x}_{\psi_{2}})}}\wedge(\boldsymbol{y}=\boldsymbol{0})=\left(T_{\psi_{1}}(\boldsymbol{x}_{\psi_{1}})\vee T_{\psi_{2}}(\boldsymbol{x}_{\psi_{2}})\right)\wedge(\boldsymbol{y}=\boldsymbol{0})$,
the first line becomes 
\[
\left(\psi_{1}(\boldsymbol{x}_{\psi_{1}})\wedge\psi_{2}(\boldsymbol{x}_{\psi_{2}})\wedge(\boldsymbol{y}=\boldsymbol{1})\right)\vee\left(\left(\overline{\psi_{1}(\overline{\boldsymbol{x}_{\psi_{1}}})\wedge\psi_{2}(\overline{\boldsymbol{x}_{\psi_{2}}})}\right)\wedge(\boldsymbol{y}=\boldsymbol{0})\right).
\]
For the second line, we observe 
\begin{eqnarray*}
\overline{T_{\psi}(\boldsymbol{x}_{\psi})} & \equiv & \left(\overline{\psi(\boldsymbol{x}_{\psi})}\vee\neg(\boldsymbol{y}=\boldsymbol{1})\right)\wedge\left(\psi(\boldsymbol{\overline{x}_{\psi}})\vee\neg(\boldsymbol{y}=\boldsymbol{0})\right)\\
 &  & \wedge\left(\boldsymbol{y}\notin\left\{ 100,010,001\right\} \vee\left((\boldsymbol{x}_{\psi}=\boldsymbol{0})\wedge(\boldsymbol{y}=001)\right)\right)\\
 &  & \wedge(\neg(\boldsymbol{x}_{\psi}=\boldsymbol{1})\vee\overline{(\boldsymbol{y}=110)}),
\end{eqnarray*}
thus $\overline{T_{\psi}(\boldsymbol{x}_{\psi})}\wedge(\boldsymbol{y}=001)\equiv(\boldsymbol{x}_{\psi}=\boldsymbol{0})\wedge(\boldsymbol{y}=001)$,
and the second line becomes
\[
\vee\left(\boldsymbol{y}\in\left\{ 100,010,001\right\} \wedge\neg\left((\boldsymbol{x}_{\psi_{1}}=\boldsymbol{0})\wedge(\boldsymbol{x}_{\psi_{2}}=\boldsymbol{0})\wedge(\boldsymbol{y}=001)\right)\right).
\]

Since $T_{\psi}(\boldsymbol{x}_{\psi})\wedge(\boldsymbol{y}=110)\equiv(\boldsymbol{x}_{\psi}=\boldsymbol{1})\wedge(\boldsymbol{y}=110)$
for any $\psi$, the third line becomes
\[
\vee\left((\boldsymbol{x}_{\psi_{1}}=\boldsymbol{1})\wedge(\boldsymbol{x}_{\psi_{2}}=\boldsymbol{1})\wedge(\boldsymbol{y}=110)\right).
\]
Now $T_{\wedge}(T_{\psi_{1}},T_{\psi_{2}})$ equals
\begin{eqnarray*}
T_{\psi_{1}\wedge\psi_{2}} & = & \left(\psi_{1}(\boldsymbol{x}_{\psi_{1}})\wedge\psi_{2}(\boldsymbol{x}_{\psi_{2}})\wedge(\boldsymbol{y}=\boldsymbol{1})\right)\vee\left(\overline{\psi_{1}(\overline{\boldsymbol{x}_{\psi_{1}}})\wedge\psi_{2}(\overline{\boldsymbol{x}_{\psi_{2}}})}\wedge(\boldsymbol{y}=\boldsymbol{0})\right)\\
 &  & \vee\left(\boldsymbol{y}\in\left\{ 100,010,001\right\} \wedge\neg\left((\boldsymbol{x}_{\psi_{1}}=\boldsymbol{0})\wedge(\boldsymbol{x}_{\psi_{2}}=\boldsymbol{0})\wedge(\boldsymbol{y}=001)\right)\right)\\
 &  & \vee\left((\boldsymbol{x}_{\psi_{1}}=\boldsymbol{1})\wedge(\boldsymbol{x}_{\psi_{2}}=\boldsymbol{1})\wedge(\boldsymbol{y}=110)\right).
\end{eqnarray*}

2. This follows from 1.\ by Proposition \ref{pro:trf}.

3. By \prettyref{lem:l37} there is an 1-reproducing $(n-3)$-ary
function $f$ with diameter of at least $2^{\left\lfloor \frac{n-3}{2}\right\rfloor }$.
Let $f$ be represented by a formula $\phi$; then, $T_{\phi}$ represents
an $n$-ary function of the same diameter in $\mathsf{D}_{1}$.\end{proof}
\begin{lem}
If $[B]\supseteq\mathsf{S}_{02}^{k}$ for any $k\geq2$,
\begin{enumerate}
\item \noun{st-BF-Conn(}$B$\emph{)} and \noun{BF-Conn(}$B$\emph{)} are
$\mathrm{PSPACE}$-complete,
\item \noun{st-Circ-Conn(}$B$\emph{)} and \noun{Circ-Conn(}$B$\emph{)}
are $\mathrm{PSPACE}$-complete,
\item for $n\geq k+4$, there is an $n$-ary function $f\in[B]$ with diameter
of at least $2^{\left\lfloor \frac{n-k-2}{2}\right\rfloor }$.
\end{enumerate}
\end{lem}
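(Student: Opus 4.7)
The plan is to follow the two-step template of \prettyref{lem:s12} and \prettyref{lem:If}. In Step~1, I will construct a transformation $T$ that maps an arbitrary 1-reproducing formula $\psi$ on $\boldsymbol{x}$ to a connectivity-equivalent formula $T_\psi$ lying in $\mathsf{S}_{02}^k = \mathsf{S}_0^k\cap\mathsf{R}_0\cap\mathsf{R}_1$, introducing $k+2$ new variables (matching the claimed diameter bound of $2^{\lfloor(n-k-2)/2\rfloor}$). Since $\mathsf{S}_{02}^k\subseteq[B]$, every such $T_\psi$ is expressible as a $B$-formula $T_\psi^*$. In Step~2, I will apply $T$ clause-by-clause and conjunction-by-conjunction to a 1-reproducing 3-CNF-formula $\phi$ using the recursive algorithm \noun{Tr} from the proof of \prettyref{lem:s12}, yielding a $B$-formula $\phi'$ of polynomial size; correctness is reduced to verifying the algebraic identity $T_\wedge(T_{\psi_1},T_{\psi_2})\equiv T_{\psi_1\wedge\psi_2}$, since $\wedge$ is 1-reproducing.

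For the design of $T_\psi$, the natural approach uses $k+1$ ``pigeonhole'' variables $y_1,\ldots,y_{k+1}$ together with one auxiliary variable $y_0$. The guiding principle is to force every non-solution of $T_\psi$ to have at most one $y_i=1$ among $y_1,\ldots,y_{k+1}$: then among any $k$ non-solutions, the $k+1$ positions receive at most $k$ ones in total, so by pigeonhole some $y_i$-position is $0$ in all of them, witnessing 0-separation of degree $k$. The 0-reproducing and 1-reproducing properties can be secured by a suitable disjunctive/conjunctive wrapper, and $y_0$ serves as a ``mode selector'' that hooks the $\psi$-structured solutions to an $\mathsf{S}_{02}^k$-shaped gadget in the spirit of the ring used in the $\mathsf{D}_1$-proof.

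The main obstacle is the inherent tension between 0-separation of degree $k$ and nontrivial component structure. A naive pigeonhole gadget makes every vector with $\geq 2$ of $y_1,\ldots,y_{k+1}$ equal to $1$ automatically a solution, creating a fully connected ``high layer'' that is reachable from every middle-layer $\psi$-solution by a single $y_j$-flip; this would collapse $G(T_\psi)$ into a single component independent of $\psi$, defeating the reduction for \noun{BF-Conn} and \noun{st-BF-Conn}. The construction must therefore either gate the transitions into the high layer through $y_0$ and/or a specific $\psi$-witness (so that only designated ``top'' $\psi$-solutions can cross), in analogy with how the $\mathsf{D}_1$-proof attaches its gadget to the main component only via the unique vertex $\boldsymbol{1}\cdot 110$, while still preserving the pigeonhole structure of non-solutions. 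Fixing this interface, and then checking that connectivity of two designated preimage vectors in $G(T_\psi)$ is equivalent to connectivity of $\boldsymbol{s},\boldsymbol{t}$ in $G(\psi)$, is the technical heart of the proof.

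Once $T_\psi$ is in place, the three statements follow exactly as in the previous two lemmas: statement~1 is the polynomial-time reduction from \noun{st-Conn} and \noun{Conn} on 1-reproducing 3-CNF-formulas to \noun{st-BF-Conn(}$B$\emph{)} and \noun{BF-Conn(}$B$\emph{)}; statement~2 is obtained by passing from $B$-formulas to $B$-circuits through \prettyref{pro:trf}; and statement~3 follows by applying $T$ to the exponential-diameter formula of \prettyref{lem:l37} on $n-k-2$ variables, since $T_\psi$ preserves the $\psi$-diameter while adding exactly $k+2$ auxiliary variables.
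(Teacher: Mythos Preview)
Your plan is essentially the paper's approach, and the ingredients you list (the $k+1$ pigeonhole variables, one auxiliary, the two-step \textsc{Tr} template, and deriving parts~2 and~3 from part~1 via \prettyref{pro:trf} and \prettyref{lem:l37}) are all correct.

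One remark on your ``main obstacle'': you are making it harder than it is. The paper does not need a sophisticated $y_0$-gated interface. It simply places the $\psi$-layer at the \emph{bottom} of the pigeonhole space, taking
\[
T_\psi^\sim \;=\; \bigl(\psi\wedge y\wedge(\boldsymbol{z}=\boldsymbol{0})\bigr)\;\vee\;(|\boldsymbol{z}|>1),
\]
with one auxiliary $y$ and $\boldsymbol{z}=(z_1,\ldots,z_{k+1})$. Then every non-solution has $|\boldsymbol{z}|\le 1$, giving $0$-separation of degree $k$ by pigeonhole, and the $\psi$-layer ($\boldsymbol{z}=\boldsymbol{0}$) is automatically at Hamming distance~$2$ from the high layer ($|\boldsymbol{z}|\ge 2$); your feared single-flip leak cannot occur. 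The variable $y$ is not a mode selector but just ensures $0$-reproducibility. The high layer is then attached to the $\psi$-component containing $\boldsymbol{1}$ by adding the single bridge vertex $(\boldsymbol{x}=\boldsymbol{1})\wedge y\wedge(\boldsymbol{z}=10\cdots0)$ at level $|\boldsymbol{z}|=1$, exactly analogous to the $\boldsymbol{1}\cdot110$ bridge in the $\mathsf{D}_1$ proof. With this $T_\psi$ in hand, the Step~2 identity $T_\wedge(T_{\psi_1},T_{\psi_2})\equiv T_{\psi_1\wedge\psi_2}$ is a two-line calculation, and the rest goes exactly as you outlined.
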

\begin{proof}
1. \emph{Step 1. }Since $\mathsf{S}_{02}^{k}=\mathsf{S}_{0}^{k}\cap\mathsf{R}_{0}\cap\mathsf{R}_{1}$,
$T_{\psi}$ must be 0-separating of degree $k$, 0-reproducing, and
1-reproducing. As in the previous proof, we construct an intermediate
formula $T_{\psi}^{\sim}$. For $\psi(\boldsymbol{x})$, let 
\[
T_{\psi}^{\sim}=\left(\psi\wedge y\wedge(\boldsymbol{z}=\boldsymbol{0})\right)\vee(|\boldsymbol{z}|>1),
\]
where $y$ and $\boldsymbol{z}=(z_{1},\ldots,z_{k+1})$ are new variables.

$T_{\psi}^{\sim}(\boldsymbol{x},y,\boldsymbol{z})$ is 0-separating
of degree $k$, since all vectors that are no solutions of $T_{\psi}^{\sim}$
have $|\boldsymbol{z}|\leq1$, i.e. $\boldsymbol{z}\in\left\{ 0\cdots0,10\cdots0,010\cdots0,\ldots,0\cdots01\right\} \subset\{0,1\}^{k+1}$,
and thus any $k$ of them have at least one common variable assigned
0. Also, $T_{\psi}^{\sim}$ is 0-reproducing and still 1-reproducing.

Further, for any two solutions $\boldsymbol{s}$ and $\boldsymbol{t}$
of $\psi(\boldsymbol{x})$, $\boldsymbol{s}'=\boldsymbol{s}\cdot1\cdot0\cdots0$
and $\boldsymbol{t}'=\boldsymbol{t}\cdot1\cdot0\cdots0$ are solutions
of $T_{\psi}^{\sim}(\boldsymbol{x},y,\boldsymbol{z})$ and are connected
in $G(T_{\psi}^{\sim})$ iff $\boldsymbol{s}$ and $\boldsymbol{t}$
are connected in $G(\psi)$.

But again, we have produced an additional connected component (consisting
of all solutions with $|\boldsymbol{z}|>1$). To connect it to a component
corresponding to one of $\psi$, we add $1\cdots1\cdot1\cdot10\cdots0$
as a solution,
\begin{eqnarray*}
T_{\psi} & = & \left(\psi\wedge y\wedge(\boldsymbol{z}=\boldsymbol{0})\right)\vee(|\boldsymbol{z}|>1)\vee\left((\boldsymbol{x}=\boldsymbol{1})\wedge y\wedge(\boldsymbol{z}=10\cdots0)\right).
\end{eqnarray*}
Now $G(T_{\psi})$ is connected iff $G(\psi)$ is connected.

\emph{Step 2.} Again we show that the algorithm \noun{Tr} works in
this case. Here,
\begin{eqnarray*}
T_{\wedge}(T_{\psi_{1}},T_{\psi_{2}}) & = & \left(T_{\psi_{1}}(\boldsymbol{x}_{\psi_{1}})\wedge T_{\psi_{2}}(\boldsymbol{x}_{\psi_{2}})\wedge y\wedge(\boldsymbol{z}=\boldsymbol{0})\right)\vee(|\boldsymbol{z}|>1)\\
 &  & \vee\left(T_{\psi_{1}}(\boldsymbol{x}_{\psi_{1}})\wedge T_{\psi_{2}}(\boldsymbol{x}_{\psi_{2}})\wedge y\wedge(\boldsymbol{z}=10\cdots0)\right).
\end{eqnarray*}
Since $T_{\psi}(\boldsymbol{x}_{\psi})\wedge y\wedge(\boldsymbol{z}=\boldsymbol{0})\equiv\psi(\boldsymbol{x}_{\psi})\wedge y\wedge(\boldsymbol{z}=\boldsymbol{0})$
and $T_{\psi}(\boldsymbol{x}_{\psi})\wedge y\wedge(\boldsymbol{z}=10\cdots0)\equiv(\boldsymbol{x}_{\psi}=1)\wedge y\wedge(\boldsymbol{z}=10\cdots0)$
for any $\psi$, this is equivalent to
\begin{eqnarray*}
T_{\psi_{1}\wedge\psi_{2}} & = & \left(\psi_{1}(\boldsymbol{x}_{\psi_{1}})\wedge\psi_{2}(\boldsymbol{x}_{\psi_{2}})\wedge y\wedge(\boldsymbol{z}=\boldsymbol{0})\right)\vee(|\boldsymbol{z}|>1)\\
 &  & \vee\left(\boldsymbol{x}_{\psi_{1}}\wedge\boldsymbol{x}_{\psi_{2}}\wedge y\wedge(\boldsymbol{z}=10\cdots0)\right).
\end{eqnarray*}

2. This follows from 1.\ by Proposition \ref{pro:trf}.

3. By \prettyref{lem:l37} there is an 1-reproducing $(n-k-2)$-ary
function $f$ with diameter of at least $2^{\left\lfloor \frac{n-k-2}{2}\right\rfloor }$.
Let $f$ be represented by a formula $\phi$; then, $T_{\phi}$ represents
an $n$-ary function of the same diameter in $\mathsf{S}_{02}^{k}$.
\end{proof}
This completes the proof of \prettyref{thm:func}.

\section{\label{sub:Quantified-case}Quantified Formulas}
\begin{defn}
A \emph{quantified $B$-formula} $\phi$ (in prenex normal form) is
an expression of the form 
\[
Q_{1}y_{1}\cdots Q_{m}y_{m}\varphi(y_{1},\ldots,y_{m},x_{1},\ldots,x_{n}),
\]
where $\varphi$ is a $B$-formula, and $Q_{1},\ldots,Q_{m}\in\{\exists,\forall\}$
are quantifiers. We denote the corresponding connectivity resp. $st$-connectivity
problems by \noun{QBF-Conn($B$) }resp.\noun{ st-QBF-Conn($B$).}\end{defn}
\begin{thm}
\label{thm:quan}Let $B$ be a finite set of Boolean functions.
\begin{enumerate}
\item If $B\subseteq\mathsf{M}$ or $B\subseteq\mathsf{L}$, then

\begin{enumerate}
\item \noun{st-QBF-Conn(}$B$\emph{)} and \noun{QBF-Conn(}$B$\emph{)} are
in \noun{P},
\item the diameter of every quantified $B$-formula is linear in the number
of free variables.
\end{enumerate}
\item Otherwise,

\begin{enumerate}
\item \noun{st-QBF-Conn(}$B$\emph{)} and \noun{QBF-Conn(}$B$\emph{)} are
$\mathrm{PSPACE}$-complete,
\item there are quantified $B$-formulas with at most one quantifier such
that their diameter is exponential in the number of free variables.
\end{enumerate}
\end{enumerate}
\end{thm}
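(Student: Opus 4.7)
The plan for \prettyref{thm:quan} is to extend the approach of \prettyref{thm:func} to quantified formulas, using the closure-based arguments of the previous chapter on the easy side and an adaptation of the \textsc{Tr}-style reduction on the hard side. For the easy side, observe that the clones $\mathsf{M}$ and $\mathsf{L}$ are each characterized by closure under idempotent operations: $\wedge$ and $\vee$ for $\mathsf{M}$, and the ternary $\oplus$ for $\mathsf{L}$. By \prettyref{cor:quanti closed}, these closure properties are preserved under arbitrary quantification, so any quantified $B$-formula with $B \subseteq \mathsf{M}$ defines a monotone relation and any with $B \subseteq \mathsf{L}$ defines an affine relation. The linear-diameter and P-algorithm statements then carry over from \prettyref{lem:M} and \prettyref{lem:L} once one can evaluate a quantified monotone or linear formula in polynomial time on any assignment to the free variables, which is standard in the monotone case and follows in the linear case by first invoking the quantifier-elimination algorithm of Figure~4.2 in \citep{DBLP:phd/de/Bauland2007} used in \prettyref{lem:affin}.

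For the hard side, cases with $[B] \not\subseteq \mathsf{M} \cup \mathsf{L} \cup \mathsf{S}_0$ are already $\mathrm{PSPACE}$-complete without any quantifier by \prettyref{thm:func}, which also supplies the exponential-diameter examples. The new cases are $[B] \subseteq \mathsf{S}_0$ with $[B] \not\subseteq \mathsf{M}$ and $[B] \not\subseteq \mathsf{L}$; Post's lattice shows these are exactly $[B] \in \{\mathsf{S}_0, \mathsf{S}_{02}\} \cup \{\mathsf{S}_0^k, \mathsf{S}_{02}^k : k \geq 2\}$. The plan is to show that a single block of quantifiers already escapes $\mathsf{S}_0$: for $[B] \supseteq \mathsf{S}_0$ I use $\forall y\,(x \rightarrow y) \equiv \neg x$, and for $[B] \supseteq \mathsf{S}_{02}$ (contained in $\mathsf{R}_2$) I combine $\forall x\,[x \vee (y \wedge \neg z)] \equiv y \wedge \neg z$ with $\exists y\,\exists z\,[x \vee (y \wedge \neg z)] \equiv 1$ to derive negation and the constants. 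These added functions together with $B$ generate $\mathsf{BF}$, so every Boolean relation is expressible as a quantified $B$-formula of polynomial size.

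With this expressibility in hand, $\mathrm{PSPACE}$-hardness of st-QBF-Conn($B$) and QBF-Conn($B$) reduces from the corresponding problems for 1-reproducing 3-CNF-formulas (which are $\mathrm{PSPACE}$-hard as shown in \prettyref{sub:The-hard-cases}) via an adaptation of the \textsc{Tr}-reduction of \prettyref{lem:s12}: each clause and each conjunction is replaced by its quantified $B$-formula representation, and all introduced quantifier occurrences are collected into a single outer block. Applying the same construction to the exponential-diameter family of \prettyref{lem:l37} then yields quantified $B$-formulas with at most one quantifier whose diameter is exponential in the number of free variables.

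The principal obstacle will be compressing the reduction into a \emph{single} quantifier block as demanded by the theorem. A naive translation introduces a fresh bound variable for every negation or every constant appearing in the underlying 3-CNF formula, so to reach a single quantifier one must share the gadget variables across the whole formula without altering the defined relation. The key observation behind this sharing is that, for a universally quantified auxiliary $y$ appearing in each subformula only in the ``trivializing'' pattern $\alpha \rightarrow y$ (or the corresponding patterns for the $\mathsf{S}_{02}$-sub-cases, which force the formula value to $1$ under the opposite assignment of $y$), pulling $\forall y$ to the outside of the $B$-formula tree preserves equivalence. Verifying this rigorously for each of the four sub-cases $\mathsf{S}_0$, $\mathsf{S}_{02}$, $\mathsf{S}_0^k$, $\mathsf{S}_{02}^k$ will require a careful inductive argument analogous to, but more delicate than, the one underlying \textsc{Tr}.
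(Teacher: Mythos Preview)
Your easy-side argument for $\mathsf{M}$ has a gap: monotone functions are \emph{not} characterized by closure of the solution relation under coordinate-wise $\wedge$ and $\vee$ (for instance $\{00,11\}$ is closed under both operations but is not the $1$-set of a monotone function), so \prettyref{cor:quanti closed} does not apply here. The paper instead argues directly: since $\exists y\,\psi(y,\boldsymbol{x}) = \psi(0,\boldsymbol{x}) \vee \psi(1,\boldsymbol{x})$ and $\forall y\,\psi(y,\boldsymbol{x}) = \psi(0,\boldsymbol{x}) \wedge \psi(1,\boldsymbol{x})$, and $\wedge,\vee \in \mathsf{M}$, every quantified $\mathsf{M}$-formula is equivalent to an unquantified $\mathsf{M}$-formula and hence monotone. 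For $\mathsf{L}$ the paper also avoids the CSP machinery: after stripping quantifiers over fictive variables, any remaining innermost quantifier over a non-fictive variable makes the formula tautological (if $\exists$) or unsatisfiable (if $\forall$), so one is immediately back to \prettyref{lem:L}.

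On the hard side you take a detour that creates the very obstacle you then struggle with. First, your lattice bookkeeping is off: $\mathsf{S}_0^k$ and $\mathsf{S}_{02}^k$ are \emph{above} $\mathsf{S}_0$ and $\mathsf{S}_{02}$, not below, and they already contain $\mathsf{S}_{02}^k$, so they are covered by \prettyref{thm:func}; and since $\mathsf{S}_0 \supseteq \mathsf{S}_{02}$, the single genuinely new case is $[B] \supseteq \mathsf{S}_{02}$. Rather than expressing $\neg$ and the constants via quantifiers and then worrying about how many quantified variables accumulate, the paper observes that $T_\psi = (\psi \wedge y) \vee z$ lies in $\mathsf{S}_{02}$ for \emph{every} $\psi$: it is $0$-separating at $z$ and both $0$- and $1$-reproducing. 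The \textsc{Tr} algorithm therefore turns any 3-CNF $\phi$ into a quantifier-free $B$-formula $\varphi'$ equivalent to $(\phi \wedge y) \vee z$, the verification $T_\wedge(T_{\psi_1},T_{\psi_2}) \equiv T_{\psi_1 \wedge \psi_2}$ being a one-line computation. A single outer $\forall z$ then kills the $\vee z$ disjunct and yields $\phi \wedge y$, which is connectivity-equivalent to $\phi$ exactly as in \prettyref{lem:s12}. Your ``trivializing pattern $\alpha \rightarrow y$'' is essentially this same $\psi \vee z$ idea, but by arriving at it through explicitly building negation you split the argument into four sub-cases and a sharing induction that the direct construction renders unnecessary.
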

\begin{proof}
1. For $B\subseteq\mathsf{M}$, any quantified $B$-formula $\phi$
represents a monotone function: Using $\exists y\psi(y,\boldsymbol{x})=\psi(0,\boldsymbol{x})\vee\psi(1,\boldsymbol{x})$
and $\forall y\psi(y,\boldsymbol{x})=\psi(0,\boldsymbol{x})\wedge\psi(1,\boldsymbol{x})$
recursively, we can transform $\phi$ into an equivalent $\mathsf{M}$-formula
since $\wedge$ and $\vee$ are monotone. Thus as in \prettyref{lem:M},
\noun{st-QBF-Conn(}$B$) and \noun{QBF-Conn(}$B$) are trivial, and
$d_{f}(\boldsymbol{a},\boldsymbol{b})=|\boldsymbol{a}-\boldsymbol{b}|$
for any two solutions $\boldsymbol{a}$ and $\boldsymbol{b}$.

For a quantified $B$-formula $\phi=Q_{1}y_{1}\cdots Q_{m}y_{m}\varphi$
with $B\subseteq\mathsf{L}$, we first remove the quantifications
over all fictive variables of $\varphi$ (and eliminate the fictive
variables if necessary). If quantifiers remain, $\phi$ is either
tautological (if the rightmost quantifier is $\exists$) or unsatisfiable
(if the rightmost quantifier is $\forall$), so the problems are trivial,
and $d_{f}(\boldsymbol{a},\boldsymbol{b})=|\boldsymbol{a}-\boldsymbol{b}|$
for any two solutions $\boldsymbol{a}$ and $\boldsymbol{b}$. Otherwise,
we have a quantifier-free formula and the statements follow from \prettyref{lem:L}.

2. Again as in \prettyref{lem:l36}, it follows that \noun{st-QBF-Conn(}$B$)
and \noun{QBF-Conn(}$B$) are in $\mathrm{PSPACE}$, since the evaluation
problem for quantified $B$-formulas is in $\mathrm{PSPACE}$ \citep{Schaefer:1978:CSP:800133.804350}.

An inspection of Post's lattice shows that if $B\nsubseteq\mathsf{M}$
and $B\nsubseteq\mathsf{L}$, then $[B]\supseteq\mathsf{S}_{12}$,
$[B]\supseteq\mathsf{D}_{1}$, or $[B]\supseteq\mathsf{S}_{02}$,
so we have to prove $\mathrm{PSPACE}$-completeness and show the existence
of $B$-formulas with an exponential diameter in these cases.

For $[B]\supseteq\mathsf{S}_{12}$ and $[B]\supseteq\mathsf{D}_{1}$,
the statements for the $\mathrm{PSPACE}$-hardness and the diameter
obviously carry over from \prettyref{thm:func}.

For $B\supseteq\mathsf{S}_{02}$, we give a reduction from the problems
for (unquantified) 3-CNF-formulas; we proceeded again similar as in
the proof of \prettyref{lem:s12}. We give a transformation $T_{\psi}$
s.t. $T_{\psi}\in\mathsf{S}_{02}$ for all formulas $\psi$. Since
$\mathsf{S}_{02}=\mathsf{S}_{0}\cap\mathsf{R}_{0}\cap\mathsf{R}_{1}$,
$T_{\psi}$ must be self-dual, 0-reproducing, and 1-reproducing. For
$\psi(\boldsymbol{x})$ let 
\[
T_{\psi}=(\psi\wedge y)\vee z,
\]
with the two new variables $y$ and $z$.

$T_{\psi}$ is 0-separating since all vectors that are no solutions
have $z=0$. Also, $T_{\psi}$ is 0-reproducing and 1-reproducing.
Again, we use the algorithm \noun{Tr} from the proof of \prettyref{lem:s12}
to transform any 3-CNF-formula $\phi$ into a $B$-formula $\varphi'$
equivalent to $T_{\phi}$. Again, we show

\begin{eqnarray*}
T_{\wedge}(T_{\psi_{1}},T_{\psi_{2}}) & = & \left(\left((\psi_{1}\wedge y)\vee z\right)\wedge\left((\psi_{2}\wedge y)\vee z\right)\wedge y\right)\vee z\\
 & \equiv & \left(\left(\psi_{1}\wedge y\right)\wedge\left(\psi_{2}\wedge y\right)\wedge y\right)\vee z\\
 & \equiv & \left(\psi_{1}\wedge\psi_{2}\wedge y\right)\vee z=T_{\psi_{1}\wedge\psi_{2}}.
\end{eqnarray*}

Now let 
\[
\phi'=\forall z\varphi'.
\]
Then, for any two solutions $\boldsymbol{s}$ and $\boldsymbol{t}$
of $\phi(\boldsymbol{x})$, $\boldsymbol{s}'=\boldsymbol{s}\cdot1$
and $\boldsymbol{t}'=\boldsymbol{t}\cdot1$ are solutions of $\phi'(\boldsymbol{x},y)$,
and they are connected in $G(\phi')$ iff $\boldsymbol{s}$ and $\boldsymbol{t}$
are connected in $G(\phi)$, and $G(\phi')$ is connected iff $G(\phi)$
is connected.

The proof of \prettyref{lem:l37} shows that there is an $(n-1)$-ary
function $f$ with diameter of at least $2^{\left\lfloor \frac{n-1}{2}\right\rfloor }$.
Let $f$ be represented by a formula $\phi$; then $\phi'$ as defined
above is a quantified $B$-formula with $n$ free variables and one
quantifier with the same diameter.\end{proof}
\begin{rem}
An analog to \prettyref{thm:quan} also holds for quantified circuits
as defined in \citep[Section 7]{reith2000}.
\end{rem}

\chapter{Future Directions}

We have proved classifications in two quite different settings: In
Schaefer's framework for constraint satisfaction problems, specifically
CNF\textsubscript{C}($\mathcal{S}$)-formulas, CNF($\mathcal{S}$)-formulas
and Q-CNF\textsubscript{C}($\mathcal{S}$)-formulas, and in Post's
framework for nested formulas and circuits, specifically $B$-formulas,
$B$-circuits, and quantified $B$-formulas.

While we now have a quite complete picture for nested formulas and
circuits, and also for CNF\textsubscript{C}($\mathcal{S}$)-formulas
and Q-CNF\textsubscript{C}($\mathcal{S}$)-formulas, the complexity
of the connectivity problem for CNF($\mathcal{S}$)-formulas without
constants is still open for sets $\mathcal{S}$ that are\noun{ }0-valid,
1-valid, or complementive, but not Schaefer, nor nc-CPSS, nor quasi
disconnecting.

Recently, Scharpfenecker refined some of our complexity results for
CNF\textsubscript{C}($\mathcal{S}$)-formulas up to logarithmic-space
isomorphisms and investigated the realizable solution-graphs in more
detail \citep{scharpfenecker2015structure}.

It seems very likely that connectivity is not for all these sets in
P, since there are even Schaefer sets that are 0-valid or 1-valid,
but have a coNP-complete connectivity problem.  On the other hand,
it seems promising to search for more nc-CPS sets, for which we have
a P or $\mathrm{P^{NP}}$ algorithm.

Also, one might further explore the connection between the connectivity
of CNF-formulas and properties of the corresponding hypergraphs (see
Remark \ref{hyp}); while we used this connection only for Horn formulas
and thus dealt with hypergraphs of head-size 1, there might be useful
reductions between the connectivity problems for more general CNF-formulas
and problems for the corresponding hypergraphs.\\

There is a multitude of interesting variations of the problems investigated
in this thesis, in different directions; we close with a quick survey
of some such variations.
\begin{itemize}
\item \textbf{Other representations of Boolean relations} ~Disjunctive
normal forms with special connectivity properties were studied by
Ekin et al.\ already in 1997 for their ``important role in problems
appearing in various areas including in particular discrete optimization,
machine learning, automated reasoning, etc.'' \citep{ekin1999connected}.\\
{\small{}\hspace*{3ex}}There are yet more kinds of representations
for Boolean relations, such as binary decision diagrams and Boolean
neural networks, and investigating the connectivity in these settings
might be worthwhile as well.
\item \textbf{Related problems} ~Other connectivity-related problems already
mentioned by Gopalan et al.\ are counting the number of components
and approximating the diameter. For counting the number of components,
Gopalan et al.\ mentioned in \citep{Gopalan:2006} that they could
show that the problem is in P for affine, monotone and dual-monotone
relations, and \#P-complete otherwise.\\
{\small{}\hspace*{3ex}}Further, especially with regard to reconfiguration
problems, it is interesting to find the shortest path between two
solutions; this was recently investigated by Mouawad et al.\ \citep{mouawad2014shortest},
who proved a computational trichotomy for this problem. In this direction,
one could also consider the optimal path according to some other measure.
\item \textbf{Other definitions of connectivity} ~Our definition of connectivity
is not the only sensible one: One could regard two solutions connected
whenever their Hamming distance is at most $d$, for some $d\geq1$;
this was already considered related to random satisfiability, see
\citep{achlioptas2006solution}. This generalization seems meaningful
as well as challenging.
\item \textbf{Higher domains} ~Finally, a most interesting subject are
CSPs over larger domains; in 1993, Feder and Vardi conjectured a dichotomy
for the satisfiability problem over arbitrary finite domains \citep{feder1998computational},
and while the conjecture was proved for domains of size three in 2002
by Bulatov \citep{bulatov2002dichotomy}, it remains open to date
for the general case. Close investigation of the solution space might
lead to valuable insights here.\\
{\small{}\hspace*{3ex}}For $k$-colorability, which is a special
case of the general CSP over a $k$-element set, the connectivity
problems and the diameter were already studied by Bonsma and Cereceda
\citep{bonsma2009finding}, and Cereceda, van den Heuvel, and Johnson
\citep{cereceda2011finding}. They showed that for $k=3$ the diameter
is at most quadratic in the number of vertices and the $st$-connectivity
problem is in P, while for $k\geq4$, the diameter can be exponential
and $st$-connectivity is PSPACE-complete in general.
\end{itemize}
\bibliographystyle{amsalpha}
\phantomsection\addcontentsline{toc}{chapter}{\bibname}\bibliography{jib}

\newcommand{\etalchar}[1]{$^{#1}$}
\providecommand{\bysame}{\leavevmode\hbox to3em{\hrulefill}\thinspace}
\providecommand{\MR}{\relax\ifhmode\unskip\space\fi MR }
\providecommand{\MRhref}[2]{%
  \href{http://www.ams.org/mathscinet-getitem?mr=#1}{#2}
}
\providecommand{\href}[2]{#2}
\begin{thebibliography}{GKMP09}

\bibitem[AP01]{andersen2001easy}
Kim~Allan Andersen and Daniele Pretolani, \emph{Easy cases of probabilistic
  satisfiability}, Annals of Mathematics and Artificial Intelligence
  \textbf{33} (2001), no.~1, 69--91.

\bibitem[APT79]{DBLP:journals/ipl/AspvallPT79}
Bengt Aspvall, Michael~F. Plass, and Robert~Endre Tarjan, \emph{A linear-time
  algorithm for testing the truth of certain quantified boolean formulas}, Inf.
  Process. Lett. \textbf{8} (1979), no.~3, 121--123.

\bibitem[ART06]{achlioptas2006solution}
Dimitris Achlioptas and Federico Ricci-Tersenghi, \emph{On the solution-space
  geometry of random constraint satisfaction problems}, Proceedings of the
  thirty-eighth annual ACM symposium on Theory of computing, ACM, 2006,
  pp.~130--139.

\bibitem[Bau07]{DBLP:phd/de/Bauland2007}
Michael Bauland, \emph{Complexity results for boolean constraint satisfaction
  problems}, Ph.D. thesis, 2007, pp.~1--91.

\bibitem[BB08]{DBLP:journals/dam/BubeckB08}
Uwe Bubeck and Hans~Kleine B{\"u}ning, \emph{Models and quantifier elimination
  for quantified horn formulas}, Discrete Applied Mathematics \textbf{156}
  (2008), no.~10, 1606--1622.

\bibitem[BC09]{bonsma2009finding}
Paul Bonsma and Luis Cereceda, \emph{Finding paths between graph colourings:
  Pspace-completeness and superpolynomial distances}, Theoretical Computer
  Science \textbf{410} (2009), no.~50, 5215--5226.

\bibitem[BCRV03]{bloc}
Elmar B{\"o}hler, Nadia Creignou, Steffen Reith, and Heribert Vollmer,
  \emph{Playing with boolean blocks, part i: Posts lattice with applications to
  complexity theory}, SIGACT News, 2003.

\bibitem[BRSV05]{bohler2005bases}
Elmar B{\"o}hler, Steffen Reith, Henning Schnoor, and Heribert Vollmer,
  \emph{Bases for boolean co-clones}, Information Processing Letters
  \textbf{96} (2005), no.~2, 59--66.

\bibitem[Bul02]{bulatov2002dichotomy}
Andrei~A Bulatov, \emph{A dichotomy theorem for constraints on a three-element
  set}, Foundations of Computer Science, 2002. Proceedings. The 43rd Annual
  IEEE Symposium on, IEEE, 2002, pp.~649--658.

\bibitem[CH96]{creignou1996complexity}
Nadia Creignou and Miki Hermann, \emph{Complexity of generalized satisfiability
  counting problems}, Information and Computation \textbf{125} (1996), no.~1,
  1--12.

\bibitem[CKS01]{Creignou:2001:CCB:377810}
Nadia Creignou, Sanjeev Khanna, and Madhu Sudan, \emph{Complexity
  classifications of boolean constraint satisfaction problems}, Society for
  Industrial and Applied Mathematics, Philadelphia, PA, USA, 2001.

\bibitem[CKV08]{kolaitis5250complexity}
Nadia Creignou, Phokion~G. Kolaitis, and Heribert Vollmer, \emph{Complexity of
  constraints}, 2008.

\bibitem[CKZ08]{creignou2008structure}
Nadia Creignou, Phokion Kolaitis, and Bruno Zanuttini, \emph{Structure
  identification of boolean relations and plain bases for co-clones}, Journal
  of Computer and System Sciences \textbf{74} (2008), no.~7, 1103--1115.

\bibitem[CvdHJ11]{cereceda2011finding}
Luis Cereceda, Jan van~den Heuvel, and Matthew Johnson, \emph{Finding paths
  between 3-colorings}, Journal of graph theory \textbf{67} (2011), no.~1,
  69--82.

\bibitem[EHK99]{ekin1999connected}
Oya Ekin, Peter~L Hammer, and Alexander Kogan, \emph{On connected boolean
  functions}, Discrete Applied Mathematics \textbf{96} (1999), 337--362.

\bibitem[FM07]{extracting}
Zhaohui Fu and Sharad Malik, \emph{Extracting logic circuit structure from
  conjunctive normal form descriptions}, VLSI Design, 2007. Held jointly with
  6th International Conference on Embedded Systems., 20th International
  Conference on, IEEE, 2007, pp.~37--42.

\bibitem[FV98]{feder1998computational}
Tom{\'a}s Feder and Moshe~Y Vardi, \emph{The computational structure of
  monotone monadic snp and constraint satisfaction: A study through datalog and
  group theory}, SIAM Journal on Computing \textbf{28} (1998), no.~1, 57--104.

\bibitem[GKMP06]{Gopalan:2006}
Parikshit Gopalan, Phokion~G. Kolaitis, Elitza~N. Maneva, and Christos~H.
  Papadimitriou, \emph{The connectivity of boolean satisfiability:
  Computational and structural dichotomies}, ICALP'06, 2006, pp.~346--357.

\bibitem[GKMP09]{gop}
Parikshit Gopalan, Phokion~G. Kolaitis, Elitza Maneva, and Christos~H.
  Papadimitriou, \emph{The connectivity of boolean satisfiability:
  Computational and structural dichotomies}, SIAM J. Comput. \textbf{38}
  (2009), no.~6, 2330--2355.

\bibitem[GLPN93]{gallo1993directed}
Giorgio Gallo, Giustino Longo, Stefano Pallottino, and Sang Nguyen,
  \emph{Directed hypergraphs and applications}, Discrete applied mathematics
  \textbf{42} (1993), no.~2, 177--201.

\bibitem[HD02]{HD02}
Robert~A Hearn and Erik~D Demaine, \emph{The nondeterministic constraint logic
  model of computation: Reductions and applications}, Automata, Languages and
  Programming, Springer, 2002, pp.~401--413.

\bibitem[IDH{\etalchar{+}}11]{recon}
Takehiro Ito, Erik~D. Demaine, Nicholas J.~A. Harvey, Christos~H.
  Papadimitriou, Martha Sideri, Ryuhei Uehara, and Yushi Uno, \emph{On the
  complexity of reconfiguration problems}, Theor. Comput. Sci. \textbf{412}
  (2011), no.~12-14, 1054--1065.

\bibitem[KMM11]{short}
Marcin Kami{\'n}ski, Paul Medvedev, and Martin Milani{\v{c}}, \emph{Shortest
  paths between shortest paths and independent sets}, Combinatorial Algorithms,
  Springer, 2011, pp.~56--67.

\bibitem[Lew79]{lewis1979}
Harry~R Lewis, \emph{Satisfiability problems for propositional calculi},
  Mathematical Systems Theory \textbf{13} (1979), no.~1, 45--53.

\bibitem[MMW07]{maneva2007new}
Elitza Maneva, Elchanan Mossel, and Martin~J Wainwright, \emph{A new look at
  survey propagation and its generalizations}, Journal of the ACM (JACM)
  \textbf{54} (2007), no.~4, 17.

\bibitem[MMZ05]{mezard2005clustering}
Marc M{\'e}zard, Thierry Mora, and Riccardo Zecchina, \emph{Clustering of
  solutions in the random satisfiability problem}, Physical Review Letters
  \textbf{94} (2005), no.~19, 197205.

\bibitem[MNPR14]{mouawad2014shortest}
Amer~E Mouawad, Naomi Nishimura, Vinayak Pathak, and Venkatesh Raman,
  \emph{Shortest reconfiguration paths in the solution space of boolean
  formulas}, arXiv preprint (2014).

\bibitem[MTY07]{Makino:2007:BCP:1768142.1768162}
Kazuhisa Makino, Suguru Tamaki, and Masaki Yamamoto, \emph{On the boolean
  connectivity problem for horn relations}, Proceedings of the 10th
  international conference on Theory and applications of satisfiability
  testing, SAT'07, 2007, pp.~187--200.

\bibitem[Pos41]{post1941two}
Emil~L Post, \emph{The two-valued iterative systems of mathematical
  logic.(am-5)}, vol.~5, Princeton University Press, 1941.

\bibitem[RW99]{Reith99}
Steffen Reith and Klaus~W. Wagner, \emph{The complexity of problems defined by
  subclasses of boolean functions}, Tech. report, 1999.

\bibitem[RW00]{reith2000}
Steffen Reith and Klaus~W Wagner, \emph{The complexity of problems defined by
  boolean circuits}, 2000.

\bibitem[Sch78]{Schaefer:1978:CSP:800133.804350}
Thomas~J. Schaefer, \emph{The complexity of satisfiability problems}, STOC '78,
  1978, pp.~216--226.

\bibitem[Sch07]{schnoor2007}
Henning Schnoor, \emph{Algebraic techniques for satisfiability problems}, Ph.D.
  thesis, Universität Hannover, 2007.

\bibitem[Sch13]{csp}
Konrad~W Schwerdtfeger, \emph{A computational trichotomy for connectivity of
  boolean satisfiability}, Journal on Satisfiability, Boolean Modeling and
  Computation \textbf{8} (2013), 173--195, Corrected version at
  \url{http://arxiv.org/abs/1312.4524}.

\bibitem[Sch14a]{csr}
\bysame, \emph{The connectivity of boolean satisfiability: Dichotomies for
  formulas and circuits}, Proceedings of the 9th International Computer Science
  Symposium in Russia (CSR 2014) (2014), 351--364,
  \url{http://arxiv.org/abs/1312.6679}.

\bibitem[Sch14b]{schwerdtfeger2014connectivity}
\bysame, \emph{The connectivity of boolean satisfiability: No-constants and
  quantified variants}, arXiv CoRR (2014),
  \url{http://arxiv.org/abs/1403.6165}.

\bibitem[Sch15]{scharpfenecker2015structure}
Patrick Scharpfenecker, \emph{On the structure of solution-graphs for boolean
  formulas}, Fundamentals of Computation Theory, Springer International
  Publishing, 2015, pp.~118--130.

\bibitem[Tho12]{michael2012applicability}
Michael Thomas, \emph{On the applicability of post's lattice}, Information
  Processing Letters \textbf{112} (2012), no.~10, 386--391.

\bibitem[Vol99]{Vollmer:1999:ICC:520668}
Heribert Vollmer, \emph{Introduction to circuit complexity: A uniform
  approach}, Springer-Verlag New York, Inc., 1999.

\bibitem[WLLH07]{csat2}
Chi-An Wu, Ting-Hao Lin, Chih-Chun Lee, and Chung-Yang~Ric Huang,
  \emph{Qutesat: a robust circuit-based sat solver for complex circuit
  structure}, Proceedings of the conference on Design, automation and test in
  Europe, EDA Consortium, 2007, pp.~1313--1318.

\end{thebibliography}

\end{document}